\documentclass[journal,onecolumn,dvipdfmx]{IEEEtran} %  for arXiv
\usepackage[dvipdfmx]{graphicx}
\usepackage{xcolor} 
\usepackage{amsmath, amssymb}
\usepackage{amsthm}
\usepackage{fullpage}
\usepackage{lmodern}
\usepackage{bbm}

\theoremstyle{definition}
\newtheorem{definition}{Definition}
\newtheorem{theorem}{Theorem}
\newtheorem{proposition}{Proposition}
\newtheorem{corollary}{Corollary}
\newtheorem{lemma}{Lemma}
\newtheorem{remark}{Remark}

\newcommand{\midd}{\mathrel{\!\vert\!}}

\title{Information Spectrum Methods for $\varepsilon$-Capacity Problems in the Theory of Mixed Multiple-Access Channels with Cost Constraint}

\author{Te~Sun~Han,~\IEEEmembership{Life Fellow,~IEEE} and Hideki~Yagi,~\IEEEmembership{Member,~IEEE}
  \thanks{T.~S. Han is with The University of Electro-Communications, Chofu, Tokyo 182-8585, Japan (email: han@is.uec.ac.jp).}
  \thanks{H. Yagi is with The University of Electro-Communications, Chofu, Tokyo 182-8585, Japan (email: h.yagi@uec.ac.jp).}
  }
\date{}

\begin{document}

\maketitle

\begin{abstract}
We study the $\varepsilon$-capacity regions of mixed multiple-access channels (MACs) with general mixture where the channel inputs are subject to a cost constraint from the information spectrum unified perspective. We first determine the $\varepsilon$-capacity results for additive MACs. We next give a single-letterized inner bound on the $\varepsilon$-capacity region for mixed memoryless MACs, and furthermore establish a single-letterized $0$-capacity region of the mixed memoryless MAC with finite alphabets in terms of the essential infimum of mutual informations for component channels. We also show that the MAC with additive Gaussian noise satisfies the strong converse property, which is stronger than the traditional strong converse theorem.
We then focus on the quasi-static fading Gaussian MAC, for which we derive the $\varepsilon$-capacity region and show that it, in the case specialized to single-users, exactly coincides with the traditional $\varepsilon$-outage capacity region, thereby providing a Shannon-theoretic operational interpretation of the latter. We further verify that the $\varepsilon$-capacity regions coincide across four scenarios of CSI availability (no-CSI, CSIR, CSIT, and CSIRT). We also demonstrate that this coincidence generally fails for mixed MACs with general (not necessarily stationary or ergodic) components, and give a counter example to prove it. Finally, we extend these results to the $K$-user quasi-static fading Gaussian MAC.
\end{abstract}
\begin{IEEEkeywords}
Information spectrum, mixed channel, multiple-access channel, capacity region, quasi-static fading channel 
\end{IEEEkeywords}

%========================================================
%===================== Section 1 ========================
%========================================================
\section{Introduction}
\subsection{Contributions and Paper Organization}

A multiple-access channel (MAC) is one of the fundamental models in channel coding for multiuser communications, consisting of multiple encoders and a single decoder. The main object of interest in the coding problem for a MAC is the set of rate pairs for which there exist codes whose probability of decoding error vanishes asymptotically as the blocklength tends to infinity. This set is referred to as the \textit{channel capacity region} or simply \textit{capacity region}.
When a nonzero error probability up to fixed constant $\varepsilon$ is allowed, the set of achievable rate pairs is called the $\varepsilon$-\textit{capacity region}.
An important class of MACs, of particular interest to us, are those called \textit{mixed} MACs, which are characterized as a weighted mixture of parameterized MACs over a general parameter space (state space).
As is well-known, mixed MACs are all \textit{nonergodic}, so that, they look simply structured but are still far more intractable in contrast to ergodic channels.
To overcome this difficulty, we resort to the information spectrum method.

In this paper, we study the $\varepsilon$-capacity regions of mixed MACs.
First, we analyze a structured class of two-user mixed MACs, namely, additive MACs, in which the channel inputs are summed and corrupted by an additive noise sequence.
In particular, Sec.~\ref{sec:P2} establishes the $\varepsilon$-capacity theorem for additive MACs with general component channels (Theorem \ref{thm:P2}), whereas Sec.~\ref{sec:P3} focuses on the case where the component channels are stationary and memoryless (Theorem \ref{thm:P3}).

Next, we consider mixed memoryless MACs whose component channels are stationary and memoryless with finite inputs and output alphabets.
We derive a single-letter inner bound on the $\varepsilon$-capacity region in Sec.\ \ref{sec:inner_bound} (Theorem \ref{thm:inner_bound}). 
Subsequently, in Sec.\ \ref{sec:P1}, we establish the formula for the $0$-capacity region (Theorem \ref{thm:0-capacity_region}). 
In particular, for the coding problem of a MAC under cost constraint, we show that the $0$-capacity region can be described by the essential infimum of mutual informations associated with the component channels.
The proof of the direct part (achievability) is immediately obtained by the inner bound, whereas to prove the converse part, we combine the method of types with a newly exploited analysis tailored to mixed channels.
We also show that, in the single-user case (i.e., with a single encoder), the proposed capacity formula reduces to Ahlswede's formula \cite{Ahlswede68} in the absence of cost constraints and to Han's formula \cite{Han2003} with cost constraints.

While the above analysis yields a single-letter characterization of the $0$-capacity region for general mixed memoryless MACs with finite alphabets, it is in general difficult to obtain a tractable characterization of the $\varepsilon$-capacity region.
To overcome this limitation and to gain a deeper understanding of the $\varepsilon$-capacity region, we subsequently restrict our attention to a structured class of channels, namely, the stationary memoryless Gaussian case in 
Sec.\ \ref{sec:P4}.
We show that the Gaussian MAC with additive Gaussian noise satisfies the strong converse property \cite{Han2003}, meaning that for any sequence of codes with rate pairs outside the capacity region, the average probability of decoding error  necessarily tends to one (Theorem \ref{thm:StrongC_Gaussian_MAC}).
This property is stronger than the traditional strong converse theorem established by Fong and Tan \cite{Fong-Tan2016}, which tentatively we call the $\varepsilon$-strong converse. 
The proof is based on the information spectrum method combined with the ``diffuse-exceptional subspace decomposition" recently developed by Tan \cite{Tan2026a}, \cite{Tan2026b}. 

In Sec.~\ref{sec:P5}, we analyze the quasi-static fading MAC with stationary memoryless Gaussian components, referred to as the quasi-static fading Gaussian MAC. This class of MACs, as a special case of mixed memoryless MACs, serves as an important model for wireless communications. In particular, the considered channel model generalizes the ordinary quasi-static fading MAC. Specifically, not only the fading coefficients but also the variance of the additive Gaussian noise is allowed to vary randomly prior to encoding, while remaining fixed during the transmission of each codeword.
 We establish the formula for the $\varepsilon$-capacity region of the quasi-static fading Gaussian MAC (Theorem \ref{thm:GQS-fading_MAC}), from which the 0-capacity region follows immediately (Theorem \ref{thm:0-cap_mixed_Gaussian_MAC}), and show that the former indeed coincides with the $\varepsilon$-outage capacity region (Remark \ref{rem:outage-capacity}). This result provides a Shannon-theoretic operational interpretation of the $\varepsilon$-outage capacity region, which was earlier suggested by Ozarow et al.\ \cite{OSW94}. 

While, in Sec.~\ref{sec:P5}, the $\varepsilon$-capacity region is derived for the mixed memoryless MAC in which the channel state information (CSI) is not observable at either the encoders or the decoder, in Sec.~\ref{sec:P7}, we instead consider four scenarios depending on availability of CSI, namely, no-CSI, CSIR (CSI available at the receiver), CSIT (CSI available at the transmitters), and CSIRT (CSI available at both the receiver and the transmitters). We show that, for the quasi-static fading Gaussian MAC, the $\varepsilon$-capacity regions under all these four scenarios coincide (Theorem \ref{thm:QS-Gaussian-noCSI-CSIRT}). Such a coincidence is known for the single-user case (e.g.\ Yang et al.\ \cite{YDKP2014}, though we establish it for an extended model in which the noise variance also depends on the CSI), but the remarkable novelty of this study lies in extending this result to the MAC setting and doing so via an information spectrum approach.

In Sec.~\ref{sec:P8}, we consider a mixed MAC with general (not necessarily stationary or ergodic) countably infinite components, rather than the memoryless Gaussian components treated in Sec.~\ref{sec:P7}, and investigate how availability of CSI affects the $\varepsilon$-capacity region. We show that the $\varepsilon$-capacity regions always coincide between the no-CSI and CSIR scenarios, and also always coincide between the CSIT and CSIRT scenarios (Theorem \ref{thm:noCSI-CSIRT}). Unlike the memoryless Gaussian component case as in Sec.~\ref{sec:P7}, however, these two pairs of regions do not coincide with each other. This is roughly because, while the optimal input distributions achieving the capacity region are the same for all memoryless Gaussian components, they generally differ across mixed MACs with general components, and this discrepancy is what causes the difference between the two cases.

Finally, in Sec.~\ref{sec:P9}, we extend the results of Secs.~\ref{sec:P5} and \ref{sec:P7} to the $K$-user quasi-static fading Gaussian MAC. Here again, we show that the $\varepsilon$-capacity regions coincide across all four scenarios irrespective of availability of CSI (Theorem \ref{thm:K-user-MAC}).

\subsection{A Unified Perspective via Information Spectrum Methods} \label{subsec:perspective}

This paper is intended to present a coherent narrative on a series of general mixed MACs from the unified perspective of the information spectrum method.
The narrative starts with the very general $\varepsilon$-capacity theorem (Theorem \ref{thm:general_formula}) for general MACs with cost constraints, which is a slight modification of the information spectrum formula previously
established by Han \cite[Theorem 5]{Han98}. This is followed by a total of 13 Theorems and 11 Lemmas.

At a glance, these topics may appear to be diverse and separate, leading the reader to think that distinct techniques are required to establish each of them. Fortunately, however, it is shown that the proofs are organized into a single logical framework that forms a unified story.

In other words, Theorem \ref{thm:general_formula} serves as the foundational theorem from which all
subsequent results (Theorems \ref{thm:P2}--\ref{thm:well-ordered} and 11 Lemmas) naturally follow in the
information spectrum regime. For example, Theorem \ref{thm:StrongC_Gaussian_MAC}, which is formulated and
rigorously proved for the first time in this paper, is indispensable for establishing the converse part of Theorems \ref{thm:GQS-fading_MAC} and \ref{thm:QS-Gaussian-noCSI-CSIRT} (for quasi-static fading Gaussian MACs).

In particular, it should be emphasized that beyond Theorem \ref{thm:general_formula}, Lemma \ref{lemma:finite_length_LB2} (CSIRT-converse) and Lemma \ref{lem:CSIT-error-bound} (CSIT-achievability) play crucial roles in completing this information spectrum story by providing entirely novel insights into new problems on general mixed MACs with arbitrary nonstationary and/or nonergodic component MACs.

Finally, let us note that information spectrum theorems are quite general, which makes direct computation for specific cases challenging. This process of specializing the general formula usually consists of a series of lengthy and rather tedious steps of computing relevant tail probabilities (the so-called information spectrum calculus). Although every step in the process is quite elementary without resorting to advanced techniques, the information spectrum method, while conceptually straightforward, requires great patience.

%========================================================
%===================== Section 2 ========================
%========================================================
\section{General MAC with Cost Constraint}

\subsection{Coding System}

\begin{figure}[ht] 
\begin{center}
    \includegraphics[height=0.20\textheight]{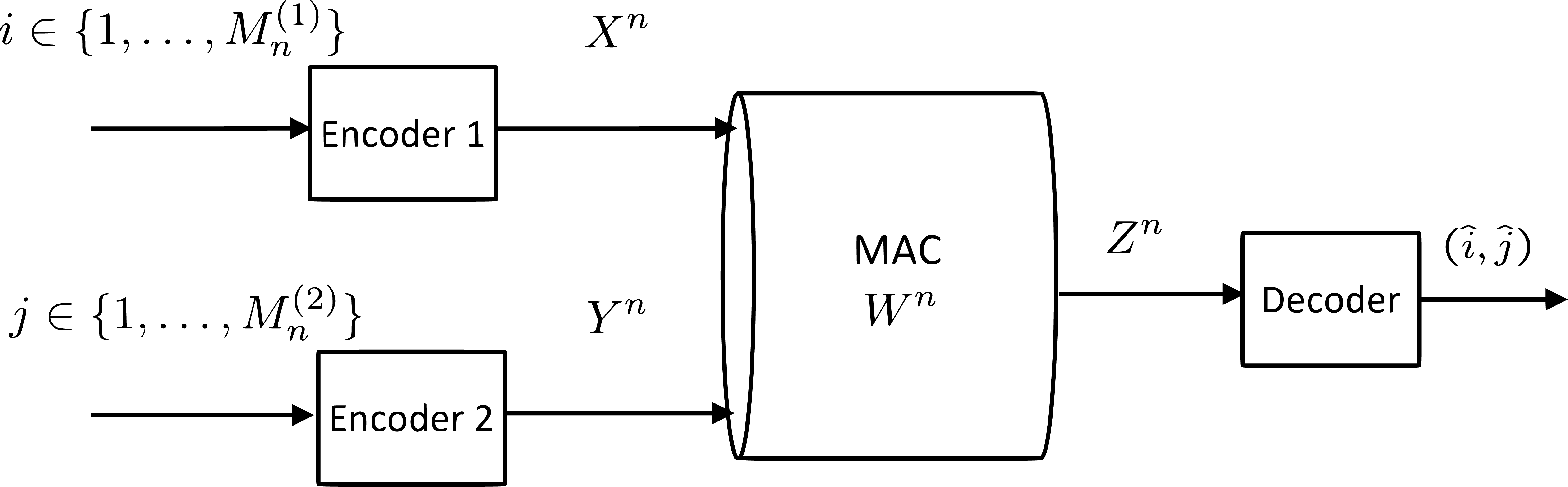}
    \caption{The coding system over a two-user MAC.} \label{fig:MAC}
\end{center}
\end{figure} 

Let $\mathcal{X} $ and $\mathcal{Y}$ denote input alphabets for user 1 and user 2, respectively, and let $\mathcal{Z}$ denote an output alphabet, where $\mathcal{X}$, $\mathcal{Y}$ and $\mathcal{Z}$ may be arbitrary (finite, countably infinite, continuous or abstract). 
We denote by $\mathcal{P}(\mathcal{X})$ (resp.\ $\mathcal{P}(\mathcal{Y})$) the set of all probability distributions on $\mathcal{X}$ (resp.\ $\mathcal{Y}$).
Consider a two-user \emph{multiple-access channel} (MAC) $W^n: \mathcal{X}^n \times \mathcal{Y}^n \rightarrow \mathcal{Z}^n$, where $n = 1, 2, \cdots$ denotes the blocklength, and the transition probability $\boldsymbol{W} = \{W^n \}_{n=1}^\infty$ is general, i.e., it may be nonstationary and/or nonergodic.
The coding system over the two-user MAC is illustrated in Fig.\ \ref{fig:MAC}.

Let $C_n^{(1)}$ and $C_n^{(2)}$ be codebooks in $\mathcal{X}^n$ and $\mathcal{Y}^n$, respectively, with the number of codewords $|C_n^{(\alpha)}| = M_n^{(\alpha)}$ ($\alpha = 1, 2$).
As usual, assume that $M_n^{(\alpha)}$ messages are uniformly generated and input to the encoding function 
\begin{align}
    f_n^{(\alpha)}: [1: M_n^{(\alpha)}] \to C_n^{(\alpha)} ~~(\alpha \in \{1,2\}),
\end{align}
where $[1:m]$ denotes the set of integers $\{1, 2, \cdots, m\}$.
Let $\boldsymbol{u}_j = f_n^{(1)}(j)$ and $\boldsymbol{v}_k = f_n^{(2)}(k)$ denote the codeword in $C_n^{(1)}$ for message $j$ and the codeword in $C_n^{(2)}$ for message $k$, respectively.
Let 
\begin{align}
    g_n : \mathcal{Z}^n \to [1:M_n^{(1)}] \times [1:M_n^{(2)}] 
\end{align}
be the decoding function with (mutually disjoint) \emph{decoding region} $D_{jk} \subseteq \mathcal{Z}^n$ for message $(j,k) \in [1: M_n^{(1)} ] \times  [1: M_n^{(2)} ]$, that is, $D_{jk} = g_n^{-1} (j,k)$.
Then, the \emph{average}  probability of decoding error over message pairs is defined as
\begin{eqnarray}
\varepsilon_n \equiv  \frac{1}{M_n^{(1)} M_n^{(2)}}\sum_{j=1}^{M_n^{(1)}} \sum_{k=1}^{M_n^{(2)}} W^n(D_{jk}^c|\boldsymbol{u}_j, \boldsymbol{v}_k), \label{eq:ave_prob}
\end{eqnarray}
where $D_{jk}^c$ denotes the complement of $D_{jk}$.
Such a code pair  $C_n = C_n^{(1)} \times C_n^{(2)}$ is referred to as an $(n,M_n^{(1)}, M_n^{(2)}, \varepsilon_n )$ MAC code.

\subsection{$\varepsilon$-Capacity Region for General MAC with Cost Constraint}

Let us consider cost functions $c_1: \mathcal{X}^n \to \mathbb{R}_+$ and $c_2: \mathcal{Y}^n \to \mathbb{R}_+$, where $\mathbb{R}_+$ denotes the set of nonnegative real numbers, and let $\boldsymbol{X} = \{ X^n\}_{n = 1}^\infty$, $\boldsymbol{Y} = \{ Y^n\}_{n=1}^\infty$ be channel inputs such that $X^n$ and $Y^n$ are independent and
\begin{align}
    \Pr \left\{ \frac{1}{n} c_1(X^n) \le \Gamma_1 \right\} &= 1, \label{eq:cost_constraint1} \\
    \Pr \left\{ \frac{1}{n} c_2(Y^n) \le \Gamma_2 \right\} &= 1 \label{eq:cost_constraint2}
\end{align}
for all $n = 1, 2, \cdots$. If \eqref{eq:cost_constraint1} and \eqref{eq:cost_constraint2} hold, we say that the channel input $(\boldsymbol{X}, \boldsymbol{Y})$ satisfy cost constraint $(\Gamma_1, \Gamma_2)$.
We denote by $\mathcal{S}_{\Gamma_1, \Gamma_2}$ the set of all those independent input pairs $(\boldsymbol{X}, \boldsymbol{Y})$ satisfying cost constraint $(\Gamma_1, \Gamma_2)$.

\begin{definition}\label{def:1st_achievable}
{\rm
A pair $(R_1, R_2)$ is said to be {$\varepsilon$}-\emph{achievable} ($0 \le \varepsilon < 1$) with cost constraint $(\Gamma_1, \Gamma_2)$ if there exists a sequence of $(n,M_n^{(1)}, M_n^{(2)}, \varepsilon_n )$ MAC codes $C_n = C_n^{(1)} \times C_n^{(2)}$ satisfying
\begin{align}
\limsup_{n \rightarrow \infty} {\varepsilon_n} \le \varepsilon, ~~~\liminf_{n \rightarrow \infty} \frac{1}{n} \log M_n^{(1)} \ge R_1,~~~\liminf_{n \rightarrow \infty} \frac{1}{n} \log M_n^{(2)} \ge R_2 \label{eq:1st_order_achievable} 
\end{align}
in addition to \eqref{eq:cost_constraint1} and \eqref{eq:cost_constraint2} with $X^n$ and $Y^n$ uniformly distributed on $C_n^{(1)}$ and $C_n^{(2)}$, respectively. 
The closure of the set of all $\varepsilon$-achievable rate pairs is called the {$\varepsilon$}-\emph{capacity region} with cost constraint $(\Gamma_1, \Gamma_2)$, and is denoted by $C_{\Gamma_1, \Gamma_2}(\varepsilon \midd \boldsymbol{W})$. In particular, when $\varepsilon=0$, the set $C_{\Gamma_1, \Gamma_2}(0 \midd \boldsymbol{W})$ is called simply the capacity region with cost constraint $(\Gamma_1, \Gamma_2)$.
} \qed
\end{definition}

First, we present a very general formula for the $\varepsilon$-capacity region with cost constraint $(\Gamma_1, \Gamma_2)$ by extending the established formulas for the $\varepsilon$-capacity region \emph{without} cost constraints. 
Let $\boldsymbol{W} = \{ W^n\}_{n=1}^\infty$ with $W^n : \mathcal{X}^n \times \mathcal{Y}^n \to \mathcal{Z}^n$ be a general MAC, where alphabets $\mathcal{X}, \mathcal{Y}$ and $\mathcal{Z}$ are arbitrary. 
With output $Z^n$ via channel $W^n$ due to input $(X^n, Y^n)$, define
\begin{align}
    J_{\boldsymbol{W}} (R_1, R_2 \midd \boldsymbol{X}, \boldsymbol{Y}) \equiv \limsup_{n \to \infty} \Pr & \left\{   \frac{1}{n} \log \frac{W^n(Z^n | X^n, Y^n)}{P_{Z^n | Y^n} (Z^n | Y^n) } \le R_1 \right. \nonumber \\
     & ~~~\text{  or }  \frac{1}{n} \log \frac{W^n(Z^n | X^n, Y^n)}{P_{Z^n | X^n} (Z^n | X^n) } \le R_2  \nonumber \\
    & \left. ~~\text{ or }    \frac{1}{n} \log \frac{W^n(Z^n | X^n, Y^n)}{P_{Z^n}(Z^n)} \le R_1 + R_2 \right\}.  \label{eq:func_J}
\end{align}
When the channel is obvious from the context, we simply denote this function as $J (R_1, R_2 \midd \boldsymbol{X}, \boldsymbol{Y})$.
Then, the general formula for the $\varepsilon$-capacity region with cost constraints immediately follows by literally paralleling the argument as developed to prove the case without cost constraints in \cite[Theorem 5]{Han98}, \cite[Theorem 7.11.1]{Han2003}:
\begin{theorem} \label{thm:general_formula}
 For a general MAC $\boldsymbol{W}$, the $\varepsilon$-capacity region for $0 \le  \varepsilon < 1$ is given by
\begin{align}
    C_{\Gamma_1, \Gamma_2}(\varepsilon \midd  \boldsymbol{W}) &= \bigcup_{(\boldsymbol{X}, \boldsymbol{Y}) \in \mathcal{S}_{\Gamma_1, \Gamma_2}} \mathrm{Cl} \{ (R_1, R_2) \, | \,  R_1 \ge 0, R_2 \ge 0, J (R_1, R_2 \midd \boldsymbol{X}, \boldsymbol{Y}) \le \varepsilon\}, \label{eq:general_formula}
\end{align}
where $\mathrm{Cl}$ denotes the closure operation\footnote{We use the convention that formula \eqref{eq:general_formula} is interpreted to denote
\begin{align} C_{\Gamma_1, \Gamma_2}(\varepsilon \midd  \boldsymbol{W}) &= \bigcup_{(\boldsymbol{X}, \boldsymbol{Y}) \in \mathcal{S}_{\Gamma_1, \Gamma_2}} \mathrm{Cl} \{ (R_1, R_2) \, | J (R_1, R_2 \midd \boldsymbol{X}, \boldsymbol{Y}) \le \varepsilon\} \cap \mathbb{R}_+^2. \nonumber
\end{align}}. 
\qed
\end{theorem}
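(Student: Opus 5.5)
We would prove the two inclusions separately, following the Verdú--Han information spectrum arguments of \cite[Theorem 5]{Han98} and \cite[Theorem 7.11.1]{Han2003}. The only new ingredient is the cost constraint. It enters at exactly two places: in the random codebook generation on the direct side, and in the choice of input processes on the converse side.

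\emph{Direct part.} Fix $(\boldsymbol{X},\boldsymbol{Y})\in\mathcal{S}_{\Gamma_1,\Gamma_2}$ and a pair $(R_1,R_2)$ with $J(R_1,R_2\midd\boldsymbol{X},\boldsymbol{Y})\le\varepsilon$. Since the region is closed, it suffices to show that $(R_1-2\gamma,R_2-2\gamma)$ is $\varepsilon$-achievable for every $\gamma>0$.

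1. Set $M_n^{(1)}=e^{n(R_1-2\gamma)}$ and $M_n^{(2)}=e^{n(R_2-2\gamma)}$. Draw the codewords $\boldsymbol{u}_j$ i.i.d.\ from $P_{X^n}$ and the codewords $\boldsymbol{v}_k$ i.i.d.\ from $P_{Y^n}$, independently of each other. Because \eqref{eq:cost_constraint1} and \eqref{eq:cost_constraint2} hold with probability one, every realized codeword satisfies the cost constraint. No expurgation or cost-violation event is needed, which is why the argument parallels the unconstrained one "literally."

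2. Decode by joint information-density thresholding. The decoder outputs the unique $(j,k)$ for which all three densities in \eqref{eq:func_J} exceed $R_1-\gamma$, $R_2-\gamma$ and $R_1+R_2-\gamma$, respectively.

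3. Apply Han's Feinstein-type bound for the MAC. It gives an average error probability at most
\begin{align}
\Pr\{\text{some density falls below its threshold}\}+e^{-n\gamma}+e^{-n\gamma}+e^{-n\gamma},
\end{align}
where the three exponential terms come from the error events in which $j$ alone, $k$ alone, or both are wrong.

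4. The first term is at most the probability inside $J$ evaluated at $(R_1-\gamma,R_2-\gamma)$, which is no larger than at $(R_1,R_2)$. Taking $\limsup$ gives $\varepsilon$. Hence some deterministic code satisfies $\limsup\varepsilon_n\le\varepsilon$, and it meets the cost constraint because every codeword does.

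\emph{Converse part.} Let $(R_1,R_2)$ be $\varepsilon$-achievable via codes $C_n$.

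1. Take $X^n$ and $Y^n$ uniform on $C_n^{(1)}$ and $C_n^{(2)}$. They are independent, and by Definition~\ref{def:1st_achievable} they satisfy the cost constraint, so $(\boldsymbol{X},\boldsymbol{Y})\in\mathcal{S}_{\Gamma_1,\Gamma_2}$.

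2. Apply the Verdú--Han type converse lemma for the MAC \cite{Han98}. For every $\gamma>0$ it gives
\begin{align}
\varepsilon_n\ge \Pr\Bigl\{\tfrac1n i_1\le \tfrac1n\log M_n^{(1)}-\gamma \text{ or } \tfrac1n i_2\le \tfrac1n\log M_n^{(2)}-\gamma \text{ or } \tfrac1n i_{12}\le \tfrac1n\log M_n^{(1)}M_n^{(2)}-\gamma\Bigr\}-3e^{-n\gamma},
\end{align}
where $i_1$, $i_2$ and $i_{12}$ denote the three information densities in \eqref{eq:func_J}.

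3. Since $\frac1n\log M_n^{(\alpha)}\ge R_\alpha-\gamma$ for large $n$, taking $\limsup$ yields $J(R_1-2\gamma,R_2-2\gamma\midd\boldsymbol{X},\boldsymbol{Y})\le\varepsilon$. Thus $(R_1-2\gamma,R_2-2\gamma)$ lies in the set on the right-hand side of \eqref{eq:general_formula} for this $(\boldsymbol{X},\boldsymbol{Y})$.

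4. Letting $\gamma\to0$ places $(R_1,R_2)$ in the closure, and the closure of the achievable set is contained in the union by the same argument.

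\emph{Main obstacle.} The union in \eqref{eq:general_formula} is of closures indexed by an input pair that depends on $\gamma$. In the converse, step 4 therefore has to be handled carefully. We would first try a diagonal construction: pick $\gamma_n\downarrow0$ slowly and use the single code-induced pair $(\boldsymbol{X},\boldsymbol{Y})$, which does not depend on $\gamma$ at all. Then $J(R_1-2\gamma,R_2-2\gamma)\le\varepsilon$ holds for all $\gamma>0$ with the same input pair, so $(R_1,R_2)$ is in the closure for that single pair. The Feinstein and converse lemmas themselves are standard and carry over unchanged, since the cost constraint plays no role in them.
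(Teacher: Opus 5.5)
\textbf{There is a gap in converse step~4.} Your overall route is the one the paper intends. The paper gives no separate proof; it defers to Han's argument. That argument is Han's Feinstein-type bound (Lemma~\ref{lem:finite_length_UB}) with codewords drawn from a pair in $\mathcal{S}_{\Gamma_1,\Gamma_2}$, whose almost-sure cost constraint makes the random code cost-compliant. On the converse side it is the Verd\'u--Han bound (Lemma~\ref{lemma:finite_length_LB} with a single component) applied to the code-induced uniform inputs.

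Your ``main obstacle'' paragraph puts the difficulty in the wrong place. For a genuinely $\varepsilon$-achievable pair there is nothing to handle. The code-induced pair is fixed, and $J(R_1-2\gamma,R_2-2\gamma\midd\boldsymbol{X},\boldsymbol{Y})\le\varepsilon$ for all $\gamma>0$ already puts $(R_1,R_2)$ in that pair's closure; no $\gamma_n$ is needed.

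The real issue is that $C_{\Gamma_1,\Gamma_2}(\varepsilon\midd\boldsymbol{W})$ is defined as the \emph{closure} of the achievable set. A limit point $(R_1,R_2)$ of achievable pairs $(R_1^{(m)},R_2^{(m)})$ need not have a code sequence of its own. Each approximant carries its own codes, hence its own input pair $(\boldsymbol{X}^{(m)},\boldsymbol{Y}^{(m)})$. Since an arbitrary union of closed sets need not be closed, ``the same argument'' does not put $(R_1,R_2)$ in the right-hand side.

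A diagonal argument is the right tool, but it must run over $m$, not over $\gamma$. For instance, prove that the achievable set is closed. Pick $N_1<N_2<\cdots$ such that for $n\ge N_m$ the $m$-th code sequence has $\frac1n\log M_n^{(\alpha,m)}\ge R_\alpha^{(m)}-\frac1m$ ($\alpha=1,2$) and $\varepsilon_n^{(m)}\le\varepsilon+\frac1m$. Then use the $m$-th code for $N_m\le n<N_{m+1}$. The cost constraint is imposed blockwise, so it survives the splicing, and the spliced sequence achieves $(R_1,R_2)$. Alternatively, splice the input processes themselves to show that the right-hand union is closed.

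\textbf{Three smaller repairs on the direct side.} First, with sum threshold $R_1+R_2-\gamma$, your first term is not bounded by the probability inside $J$ at $(R_1-\gamma,R_2-\gamma)$, whose sum threshold is $R_1+R_2-2\gamma$. It is bounded by the probability at $(R_1,R_2)$, which is all you actually use.

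Second, points of $\mathrm{Cl}\{J\le\varepsilon\}\cap\mathbb{R}_+^2$ on an axis need separate treatment. If $X^n$ is deterministic, then $\frac1n\log\frac{W^n(Z^n|X^n,Y^n)}{P_{Z^n|Y^n}(Z^n|Y^n)}\equiv0$. So $J(0,R_2)=1$, while $J(-\delta,R_2)$ may still be at most $\varepsilon$. There $R_1-2\gamma<0$, and Lemma~\ref{lem:finite_length_UB} used verbatim with $M_n^{(1)}=1$ leaves the term $\Pr\{\frac1n i_1\le\gamma\}=1$. Instead, take $M_n^{(1)}=1$ and drop the tests that only guard against a wrong $j$; their error terms vanish because there is no competing $j$.

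Third, for continuous alphabets the random codewords meet the cost constraint only almost surely. You should therefore select the good code inside that probability-one set of codebooks.
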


The capacity region ($\varepsilon=0$) can be obtained from Theorem \ref{thm:general_formula} as follows.
To do so, we introduce quantities used in the information spectrum method \cite{Han2003}. 
Given an arbitrary sequence of random variables $V^n$, we define the \emph{liminf in probability} and the \emph{limsup in probability} (\cite{Han-Verdu93}, \cite{Han2003}) as
\begin{align}
\text{p-}\!\liminf_{n \to \infty} V^n &\equiv \sup \left\{\beta \, | \, \lim_{n \to \infty}\Pr\{ V^n < \beta \} = 0 \right\}, \nonumber\\
\text{p-}\!\limsup_{n \to \infty} V^n &\equiv \inf \left\{\alpha \, | \, \lim_{n \to \infty}\Pr\{ V^n > \alpha \} = 0 \right\},
\end{align}
respectively\footnote{As in the case of the ordinary $\liminf$ and $\limsup$,
$\text{p-}\!\liminf$ and $\text{p-}\!\limsup$ satisfy several properties
analogous to those of their ordinary counterparts. In particular, for any
sequences of random variables $\{V^n\}$ and $\{U^n\}$,
\[
\text{p-}\!\liminf_{n\to\infty}(V^n+U^n)
\;\ge\; \text{p-}\!\liminf_{n\to\infty}V^n + \text{p-}\!\liminf_{n\to\infty}U^n,
\]
\[
\text{p-}\!\limsup_{n\to\infty}(V^n+U^n)
\;\le\; \text{p-}\!\limsup_{n\to\infty}V^n + \text{p-}\!\limsup_{n\to\infty}U^n,
\]
that is, $\text{p-}\!\liminf$ (resp.\ $\text{p-}\!\limsup$) is superadditive
(resp.\ subadditive). These and other related properties are established
in \cite[Sec.~1.3]{Han2003}.}.
Let $\boldsymbol{W} = \{W^n \}_{n=1}^\infty$ with $ W^n : \mathcal{X}^n \times \mathcal{Y}^n \to \mathcal{Z}^n$ be a general MAC. 
Given an input pair $(\boldsymbol{X}, \boldsymbol{Y}) \in \mathcal{S}_{\Gamma_1, \Gamma_2}$, let us define the (conditional) \emph{information spectral-inf} by
\begin{align}
 \underline{I}(\boldsymbol{X}; \boldsymbol{Z} \mid \boldsymbol{Y}) &\equiv \text{p-}\!\liminf_{n \to \infty} \frac{1}{n} \log \frac{W^n(Z^n | X^n, Y^n)}{P_{Z^n | Y^n}(Z^n | Y^n)}, \\
 \underline{I}(\boldsymbol{Y}; \boldsymbol{Z} \mid \boldsymbol{X}) &\equiv \text{p-}\!\liminf_{n \to \infty} \frac{1}{n} \log \frac{W^n (Z^n | X^n, Y^n)}{P_{Z^n | X^n}(Z^n | X^n)}, \\
 \underline{I}(\boldsymbol{X} \boldsymbol{Y}; \boldsymbol{Z}) &\equiv \text{p-}\!\liminf_{n \to \infty} \frac{1}{n} \log \frac{W^n (Z^n | X^n, Y^n)}{P_{Z^n }(Z^n)}.
\end{align}
In terms of these quantities, we also define the set $\mathcal{R}_{\boldsymbol{W}}(\boldsymbol{X}, \boldsymbol{Y})$ as
\begin{align}
 \mathcal{R}_{\boldsymbol{W}}(\boldsymbol{X}, \boldsymbol{Y}) \equiv \{(R_1, R_2) \, | \, & 0 \le R_1 \le \underline{I}(\boldsymbol{X}; \boldsymbol{Z} \mid \boldsymbol{Y}) , \nonumber \\
 & 0 \le R_2 \le \underline{I}(\boldsymbol{Y}; \boldsymbol{Z} \mid \boldsymbol{X}), \nonumber \\
 & R_1 + R_2 \le \underline{I}(\boldsymbol{X} \boldsymbol{Y}; \boldsymbol{Z}) \}.
\end{align}
Obviously, the set $\mathcal{R}_{\boldsymbol{W}}(\boldsymbol{X}, \boldsymbol{Y})$ is closed for any given general sources $(\boldsymbol{X}, \boldsymbol{Y})$.
Then, the 0-capacity region for general MAC $\boldsymbol{W}$ can be formulated as
 \begin{corollary}[{Han \cite[Theorem 7.13.1]{Han2003}}] \label{cor:general_formula2}
  For a general MAC $\boldsymbol{W}$, the 0-capacity region is given by
 \begin{align}
     C_{\Gamma_1, \Gamma_2}(0 \midd  \boldsymbol{W}) &= \bigcup_{(\boldsymbol{X}, \boldsymbol{Y}) \in \mathcal{S}_{\Gamma_1, \Gamma_2}} \mathcal{R}_{\boldsymbol{W}} (\boldsymbol{X}, \boldsymbol{Y}).
 \end{align}
 \end{corollary}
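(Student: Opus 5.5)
We derive Corollary \ref{cor:general_formula2} from Theorem \ref{thm:general_formula} with $\varepsilon=0$. It suffices to show that, for each fixed $(\boldsymbol{X},\boldsymbol{Y})\in\mathcal{S}_{\Gamma_1,\Gamma_2}$,
\begin{align}
\mathrm{Cl}\{(R_1,R_2)\,|\,R_1\ge 0,R_2\ge 0,\ J(R_1,R_2\midd\boldsymbol{X},\boldsymbol{Y})=0\} = \mathcal{R}_{\boldsymbol{W}}(\boldsymbol{X},\boldsymbol{Y}), \nonumber
\end{align}
since then the unions over $\mathcal{S}_{\Gamma_1,\Gamma_2}$ agree. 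Write $A_n,B_n,C_n$ for the three normalized information densities appearing in \eqref{eq:func_J}, and set $\underline{I}_1,\underline{I}_2,\underline{I}_{12}$ for the corresponding spectral-inf quantities.

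\emph{Inclusion ``$\subseteq$''.} Suppose $J(R_1,R_2)=0$. Then $\lim_n\Pr\{A_n\le R_1\}=0$, because $\Pr\{A_n\le R_1\}$ is at most the probability of the union. By the definition of p-liminf this gives $R_1\le\underline{I}_1$. The same argument gives $R_2\le\underline{I}_2$ and $R_1+R_2\le\underline{I}_{12}$, so $(R_1,R_2)\in\mathcal{R}_{\boldsymbol{W}}(\boldsymbol{X},\boldsymbol{Y})$. This set is closed, so the closure is contained in it as well.

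\emph{Inclusion ``$\supseteq$''.} Take $(R_1,R_2)\in\mathcal{R}_{\boldsymbol{W}}(\boldsymbol{X},\boldsymbol{Y})$ and fix $\gamma>0$. Consider $(R_1',R_2')=((R_1-\gamma)^+,(R_2-\gamma)^+)$. Then $R_1'<\underline{I}_1$ whenever $R_1>0$. If $R_1=0$, we use $R_1'=0$ and need $\Pr\{A_n\le 0\}\to 0$; this is the main obstacle. To avoid it, I would instead shrink toward an interior point. We may assume $\underline{I}_1,\underline{I}_2,\underline{I}_{12}$ are finite and the region has nonempty interior. If they are infinite, we truncate. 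If the region is degenerate, for example $\underline{I}_1=0$, then every point has $R_1=0$. In that case $J(0,R_2)$ involves $\Pr\{A_n\le 0\}$, which need not vanish, and closure alone cannot rescue this. So I would read the footnote convention literally: the set is $\mathrm{Cl}\{(R_1,R_2)\,|\,J\le\varepsilon\}\cap\mathbb{R}_+^2$, taken without the nonnegativity restriction inside the closure. Then points $(R_1-\gamma,R_2-\gamma)$ with possibly negative coordinates are allowed inside the closure. For such a point, each of the three events $\{A_n\le R_1-\gamma\}$, $\{B_n\le R_2-\gamma\}$, $\{C_n\le R_1+R_2-2\gamma\}$ lies strictly below the corresponding p-liminf. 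Each therefore has probability tending to $0$, and by the union bound $J=0$. Letting $\gamma\downarrow 0$ shows that $(R_1,R_2)$ lies in the closure, and intersecting with $\mathbb{R}_+^2$ is harmless since $R_1,R_2\ge 0$.

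The only delicate point is the boundary issue just discussed. It is handled by the footnote convention that the closure is taken before intersecting with $\mathbb{R}_+^2$. The rest is a direct unwinding of the definitions of p-liminf combined with a union bound.
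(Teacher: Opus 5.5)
Your argument is correct and is the specialization of Theorem~\ref{thm:general_formula} to $\varepsilon=0$ that the paper (citing Han) intends; you rightly use the footnote convention to handle $\underline{I}(\boldsymbol{X};\boldsymbol{Z}\mid\boldsymbol{Y})=0$, since it allows negative rates inside the closure before intersecting with $\mathbb{R}_+^2$. State your target identity in that footnote form from the outset, and drop the detour about truncation and nonempty interior, which your final argument does not need: your first display, with $R_1,R_2\ge 0$ inside the closure, is false for a constant $X^n$, where the first information density vanishes identically and $J\equiv 1$ on $\mathbb{R}_+^2$.
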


%========================================================
%===================== Section 3 ========================
%========================================================
\section{$\varepsilon$-Capacity Region for Additive MAC} \label{sec:P2-3}
\subsection{Case of General MAC} \label{sec:P2}

In this section, in order to illustrate the potentiality of the general Theorem \ref{thm:general_formula}, we give its two simple but interesting consequences with cost $c_1(\boldsymbol{x})=n, c_2(\boldsymbol{y})=n$ with $\Gamma_1 = \Gamma_2 = 1$ (equivalently, without cost constraints).

We first  consider the \emph{additive MAC} $W^n$ whose output $Z^n$ is given by 
\begin{align}
    Z^n = X^n + Y^n + V^n,
\end{align}
where $V^n$ denotes the general additive (independent) noise sequence and the addition is performed componentwise modulo $m$ with $\mathcal{X} = \mathcal{Y} = \mathcal{Z} = \mathcal{V} = \{0, 1, \cdots, m-1 \}$.
We define
\begin{align}
    J_{\rm ad}(R_1, R_2) \equiv \limsup_{n \rightarrow \infty}  \Pr \Big\{ \log m - \frac{1}{n} \log \frac{1}{P_{V^n} (V^n) } \le R_1 + R_2 \Big\}. \label{eq:P2-0}
\end{align}

Then, the $\varepsilon$-capacity region, denoted by $C_{\rm ad}(\varepsilon | \boldsymbol{W})$, is characterized as
\begin{theorem} \label{thm:P2}
    For any $\varepsilon \in [0,1)$, the $\varepsilon$-capacity region is given by
    \begin{align}
        C_{\rm ad}(\varepsilon|\boldsymbol{W}) = \mathrm{Cl} \,  \{ (R_1, R_2) \,|\,  &  R_1 \ge 0, \, R_2 \ge 0, \, J_{\rm ad}(R_1, R_2) \le \varepsilon \}. \label{eq:P2-0b}
\end{align}
\end{theorem}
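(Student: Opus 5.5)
Our plan is to derive Theorem \ref{thm:P2} directly from Theorem \ref{thm:general_formula} (with trivial cost, so $\mathcal{S}_{\Gamma_1,\Gamma_2}$ is the set of all independent input pairs). We prove the two inclusions separately.

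\emph{Achievability.} Take $X^n$ and $Y^n$ independent and uniform on $\mathcal{X}^n$ and $\mathcal{Y}^n$. Then $Z^n$ is uniform on $\mathcal{Z}^n$, even conditionally on $X^n$ or on $Y^n$, because adding an independent uniform vector modulo $m$ produces a uniform vector. Hence
\begin{align}
P_{Z^n|Y^n}=P_{Z^n|X^n}=P_{Z^n}=m^{-n}. \nonumber
\end{align}
Moreover $W^n(Z^n|X^n,Y^n)=P_{V^n}(Z^n-X^n-Y^n)=P_{V^n}(V^n)$. So each of the three normalized information densities equals
\begin{align}
\log m-\tfrac1n\log\frac{1}{P_{V^n}(V^n)}, \nonumber
\end{align}
and the three events in \eqref{eq:func_J} are $\{\cdot\le R_1\}$, $\{\cdot\le R_2\}$ and $\{\cdot\le R_1+R_2\}$. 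For $R_1,R_2\ge 0$ the first two events are contained in the third, so the union is the third event and $J(R_1,R_2\midd\boldsymbol X,\boldsymbol Y)=J_{\rm ad}(R_1,R_2)$. Theorem \ref{thm:general_formula} then shows that the right-hand side of \eqref{eq:P2-0b} is contained in $C_{\rm ad}(\varepsilon|\boldsymbol W)$.

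\emph{Converse.} Fix an arbitrary independent pair $(\boldsymbol X,\boldsymbol Y)$. The union in \eqref{eq:func_J} contains the sum-rate event, so
\begin{align}
J(R_1,R_2\midd\boldsymbol X,\boldsymbol Y)\ge\limsup_n\Pr\left\{\tfrac1n\log\frac{P_{V^n}(V^n)}{P_{Z^n}(Z^n)}\le R_1+R_2\right\}. \nonumber
\end{align}
The task is to compare this with $J_{\rm ad}$, i.e.\ to replace $P_{Z^n}(Z^n)$ by $m^{-n}$ at an asymptotically negligible cost. We use the standard information-spectrum lemma: for any probability distribution $Q$ on $\mathcal{Z}^n$ and any $\gamma>0$,
\begin{align}
\Pr\{P_{Z^n}(Z^n)\le e^{-n\gamma}Q(Z^n)\}\le e^{-n\gamma}. \nonumber
\end{align}
Taking $Q$ uniform, with probability at least $1-e^{-n\gamma}$ we have $P_{Z^n}(Z^n)> e^{-n\gamma}m^{-n}$. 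On that event,
\begin{align}
\tfrac1n\log\frac{P_{V^n}(V^n)}{P_{Z^n}(Z^n)}< \tfrac1n\log P_{V^n}(V^n)+\log m+\gamma. \nonumber
\end{align}
Therefore
\begin{align}
\Pr\{\text{sum event at }R_1+R_2\}\ge \Pr\left\{\log m-\tfrac1n\log\frac{1}{P_{V^n}(V^n)}\le R_1+R_2-\gamma\right\}-e^{-n\gamma}. \nonumber
\end{align}
Taking $\limsup_n$ gives $J(R_1,R_2\midd\boldsymbol X,\boldsymbol Y)\ge J_{\rm ad}(R_1,R_2-\gamma)$. Here we use that the distribution of $V^n$ does not depend on the inputs, since the noise is independent of them.

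Consequently $\{J\le\varepsilon\}\subseteq\{(R_1,R_2): J_{\rm ad}(R_1,R_2-\gamma)\le\varepsilon\}$, i.e.\ the set $\{J_{\rm ad}\le\varepsilon\}$ shifted by $\gamma$ in the $R_2$ direction. Since $\gamma>0$ is arbitrary and $J_{\rm ad}$ is monotone nondecreasing in $R_1+R_2$, after closure every point of $\{J\le\varepsilon\}$ lies in $\mathrm{Cl}\{J_{\rm ad}\le\varepsilon\}$. Taking the union over $(\boldsymbol X,\boldsymbol Y)$ in Theorem \ref{thm:general_formula} yields the reverse inclusion.

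\emph{Main obstacle.} The delicate step is the closure and $\gamma$-handling at the boundary. $J_{\rm ad}$ is only a $\limsup$ of a distribution-function-like quantity, so it may jump, and a point satisfying $J_{\rm ad}(R_1,R_2-\gamma)\le\varepsilon$ for every $\gamma>0$ need not satisfy $J_{\rm ad}(R_1,R_2)\le\varepsilon$. We handle this by working with the region $\{(R_1,R_2): R_1+R_2\le T\}$, where
\begin{align}
T=\sup\{t: J_{\rm ad}\text{ evaluated at sum rate }t\text{ is}\le\varepsilon\}. \nonumber
\end{align}
Both the right-hand side of \eqref{eq:P2-0b} and the converse bound then reduce to the closed set $\{R_1,R_2\ge0,\ R_1+R_2\le T\}$, since the closure absorbs the strict-versus-nonstrict boundary.
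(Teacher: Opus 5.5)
Your proposal is correct and follows essentially the same route as the paper. For the direct part you plug uniform inputs into Theorem~\ref{thm:general_formula}; for the converse you keep only the sum-rate event and apply Lemma~\ref{lem:div_spectrum} with a uniform reference on $\mathcal{Z}^n$, which replaces $P_{Z^n}(Z^n)$ by $m^{-n}$ at a cost of $\gamma$ in the threshold and $e^{-n\gamma}$ in probability. Two of your steps are things the paper leaves implicit: that the three events collapse to the sum-rate event when $R_1,R_2\ge 0$, and the threshold-$T$ treatment of the closure; you simply make these precise.
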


\medskip
\noindent
\emph{Proof:}

\smallskip
\noindent
~(i) \textit{Direct Part:}

 Let $X^n$ and $Y^n$ be independent random variables uniformly distributed on $\mathcal{X}^n$ and $\mathcal{Y}^n$, respectively.
The probability mass function (pmf) of $V^n$ is given by
\begin{align}
P_{V^n} (\boldsymbol{v}) = W^n(\boldsymbol{z} | \boldsymbol{x}, \boldsymbol{y} ),
\end{align}
where $\boldsymbol{z} = \boldsymbol{x} + \boldsymbol{y} + \boldsymbol{v}$.
Then, it holds that
\begin{align}
    P_{Z^n | X^n}(\boldsymbol{z} | \boldsymbol{x}) &= \sum_{\boldsymbol{y} \in \mathcal{Y}^n} P_{Y^n}(\boldsymbol{y}) \, W^n(\boldsymbol{z} | \boldsymbol{x}, \boldsymbol{y} ) \nonumber\\
    &= \frac{1}{m^n} \sum_{\boldsymbol{y} \in \mathcal{Y}^n}  P_{V^n}(\boldsymbol{z} - \boldsymbol{x} - \boldsymbol{y} ) \nonumber\\
    &= \frac{1}{m^n} \sum_{\boldsymbol{v} \in \mathcal{V}^n}  P_{V^n}(\boldsymbol{v}) =  \frac{1}{m^n} \qquad (\forall (\boldsymbol{x}, \boldsymbol{z}) \in \mathcal{X}^n \times \mathcal{Z}^n),
\end{align}
implying that the conditional pmf $P_{Z^n|X^n}$ is the uniform distribution on $\mathcal{Z}^n$.
Similarly, the pmfs $P_{Z^n | Y^n}$ and $P_{Z^n}$ are also uniform.

Using these facts, we have
\begin{align}
\frac{1}{n} \log \frac{W^n(Z^n | X^n, Y^n)}{P_{Z^n|X^n}(Z^n | X^n)} &=  \log m - \frac{1}{n} \log \frac{1}{W^n(Z^n | X^n, Y^n)}, \\
\frac{1}{n} \log \frac{W^n(Z^n | X^n, Y^n)}{P_{Z^n|Y^n}(Z^n | Y^n)} &= \log m - \frac{1}{n} \log \frac{1}{W^n(Z^n | X^n, Y^n)}, \\
\frac{1}{n} \log \frac{W^n(Z^n | X^n, Y^n)}{P_{Z^n}(Z^n)} &= \log m - \frac{1}{n} \log \frac{1}{W^n(Z^n | X^n, Y^n)},
\end{align}
and therefore \eqref{eq:func_J} is computed as
\begin{align}
    J(R_1, R_2 \midd \boldsymbol{X}, \boldsymbol{Y} ) &= \limsup_{n \rightarrow \infty}  \Pr \left\{ \log m - \frac{1}{n} \log \frac{1}{W^n(Z^n | X^n, Y^n)} \le R_1 + R_2 \right\} \nonumber\\
    &=\limsup_{n \rightarrow \infty}  \Pr \left\{ \log m - \frac{1}{n} \log \frac{1}{P_{V^n} (V^n) } \le R_1 + R_2 \right\} \nonumber\\
    &= J_{\rm ad}(R_1, R_2 ).
\end{align}
From Theorem \ref{thm:general_formula}, we obtain
\begin{align}
    C_{\rm ad} (\varepsilon|\boldsymbol{W}) &\supseteq\mathrm{Cl} \{ (R_1, R_2) \,|\,  R_1 \ge 0, \, R_2 \ge 0, \, J(R_1, R_2 \midd \boldsymbol{X}, \boldsymbol{Y}) \le \varepsilon \} \nonumber\\
    & = \mathrm{Cl} \{ (R_1, R_2) \,|\,  R_1 \ge 0, \, R_2 \ge 0, \, J_{\rm ad}(R_1, R_2) \le \varepsilon \},
\end{align}
thus, the direct part has been proved.

\medskip
\noindent
~(ii) \textit{Converse Part:}

 Suppose that $(R_1, R_2)$ is $\varepsilon$-achievable, then by Theorem \ref{thm:general_formula}, there exists an input pair $(\boldsymbol{X}, \boldsymbol{Y})$ such that
 \begin{align}
     J(R_1 \textstyle - \frac{\gamma}{2}, R_2 - \frac{\gamma}{2} \midd \boldsymbol{X}, \boldsymbol{Y} ) \le \varepsilon \label{eq:P2-1}
 \end{align}
 with any small $\gamma > 0$.
 Define
\begin{align}
    \tilde{J}_{\rm ad}(R_1, R_2 \midd \boldsymbol{X}, \boldsymbol{Y}) \equiv \limsup_{n \rightarrow \infty}  \Pr \left\{ \frac{1}{n} \log \frac{W^n(Z^n | X^n, Y^n) }{P_{Z^n} (Z^n) } \le R_1 + R_2 \right\}, \label{eq:P2-1b}
\end{align}
then it is obvious that
\begin{align}
    \tilde{J}_{\rm ad}(R_1, R_2 \midd \boldsymbol{X}, \boldsymbol{Y}) \le J(R_1, R_2 \midd \boldsymbol{X}, \boldsymbol{Y} )  \label{eq:P2-1c} 
\end{align}
for all $(\boldsymbol{X}, \boldsymbol{Y}) \in \mathcal{S}_I$, where $\mathcal{S}_I$ denotes the set of all independent input pairs $(\boldsymbol{X}, \boldsymbol{Y})$.

 Noticing that
\begin{align}
\log \frac{W^n(Z^n | X^n, Y^n)}{P_{Z^n} (Z^n)} &= \log \frac{1}{P_{Z^n} (Z^n)}  - \log \frac{1} {W^n(Z^n | X^n, Y^n)} \nonumber\\
&= \log \frac{1}{P_{Z^n} (Z^n)}  - \log \frac{1} {P_{V^n}(V^n)},
\end{align}
where $V^n = Z^n - X^n - Y^n$, we have 
\begin{align}
    \tilde{J}_{\rm ad}(R_1, R_2 | \boldsymbol{X}, \boldsymbol{Y}) = \limsup_{n \rightarrow \infty}  \Pr \Big\{ \log \frac{1}{P_{Z^n} (Z^n)}  - \log \frac{1} {P_{V^n}(V^n)} \le R_1 + R_2 \Big\}. \label{eq:P2-0d} 
\end{align}

 We use the following lemma \cite[Lemma 3.2.1]{Han2003}, whose proof is provided in Appendix \ref{append:proof_lem_div} for completeness:
\begin{lemma} \label{lem:div_spectrum}
    Let $U^n$ be an arbitrary random variable taking values in $\mathcal{Z}^n$.
    Then, we have
    \begin{align}
        \Pr \left\{  \frac{1}{n} \log \frac{1} {P_{U^n}(Z^n)} \ge \frac{1}{n} \log \frac{1}{P_{Z^n} (Z^n)}   - \gamma \right\} \ge 1 - e^{- n \gamma},
    \end{align}
    where $\gamma > 0$ is an arbitrary constant. \qed
\end{lemma}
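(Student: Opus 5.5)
The plan is a direct change-of-measure (Markov-type) argument on the exceptional set. We fix $n$ and $\gamma>0$ and consider the event
\begin{align}
\mathcal{B}_n \equiv \left\{ \boldsymbol{z} \in \mathcal{Z}^n \,\Big|\, \frac{1}{n} \log \frac{1}{P_{U^n}(\boldsymbol{z})} < \frac{1}{n} \log \frac{1}{P_{Z^n}(\boldsymbol{z})} - \gamma \right\}. \nonumber
\end{align}
It suffices to show that $\Pr\{Z^n \in \mathcal{B}_n\} \le e^{-n\gamma}$, since the event in the statement is exactly the complement $\{Z^n \notin \mathcal{B}_n\}$.

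Rewriting the defining inequality, $\boldsymbol{z} \in \mathcal{B}_n$ means $\log P_{U^n}(\boldsymbol{z}) > \log P_{Z^n}(\boldsymbol{z}) + n\gamma$, that is,
\begin{align}
P_{Z^n}(\boldsymbol{z}) < e^{-n\gamma} P_{U^n}(\boldsymbol{z}). \nonumber
\end{align}
We take logarithms in base $e$ to match the factor $e^{-n\gamma}$; for another base, $\gamma$ is simply rescaled. Summing this bound over $\boldsymbol{z} \in \mathcal{B}_n$ gives
\begin{align}
\Pr\{Z^n \in \mathcal{B}_n\} = \sum_{\boldsymbol{z} \in \mathcal{B}_n} P_{Z^n}(\boldsymbol{z}) \le e^{-n\gamma} \sum_{\boldsymbol{z} \in \mathcal{B}_n} P_{U^n}(\boldsymbol{z}) \le e^{-n\gamma}. \nonumber
\end{align}
The last step uses only that $P_{U^n}$ is a probability distribution, so its total mass on $\mathcal{B}_n$ is at most one. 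Taking complements then yields the claim.

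The only point requiring care is the set of $\boldsymbol{z}$ at which the quantities are not finite, and this is bookkeeping rather than a real obstacle. Where $P_{U^n}(\boldsymbol{z})=0$, the left-hand side of the event in the statement is $+\infty$, so such $\boldsymbol{z}$ lie outside $\mathcal{B}_n$ and satisfy the event. Points with $P_{Z^n}(\boldsymbol{z})=0$ carry no probability under $P_{Z^n}$ and may be discarded. If $\mathcal{Z}$ is continuous, the same argument goes through verbatim with pmfs replaced by densities with respect to a common dominating measure and sums replaced by integrals.
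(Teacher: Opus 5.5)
Your proposal is correct and is essentially the paper's own proof: the paper defines the set $T_n$ on which the event holds, notes that $P_{Z^n}(\boldsymbol{z}) < e^{-n\gamma}P_{U^n}(\boldsymbol{z})$ on $T_n^c$, and sums over $T_n^c$ to get $\Pr\{Z^n \in T_n^c\} \le e^{-n\gamma}$. Your treatment of zero-probability points and of the continuous (density) case is a harmless addition that the paper leaves implicit.
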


 Letting $r(V^n) \equiv R_1 + R_2 + \frac{1}{n} \log \frac{1} {P_{V^n}(V^n)}$ and $U^n$ be a random variable uniformly distributed on $\mathcal{Z}^n$, it holds that
 \begin{align}
        &\Pr \left\{ \frac{1}{n} \log \frac{1} {P_{Z^n}(Z^n)}  \le r(V^n)  \right\}  \nonumber\\
        & ~~\ge \Pr \left\{ \frac{1}{n} \log \frac{1} {P_{Z^n}(Z^n)}  \le r(V^n) , \, \frac{1}{n} \log \frac{1} {P_{U^n}(Z^n)} \ge \frac{1}{n} \log \frac{1} {P_{Z^n}(Z^n)} -\gamma   \right\} \nonumber\\
        &~~\ge \Pr \left\{\frac{1}{n} \log \frac{1} {P_{U^n}(Z^n)} + \gamma \le r(V^n) , \, \frac{1}{n} \log \frac{1} {P_{U^n}(Z^n)} \ge \frac{1}{n} \log \frac{1} {P_{Z^n}(Z^n)} -\gamma   \right\} \nonumber\\
        &~~\ge \Pr \left\{\frac{1}{n} \log \frac{1} {P_{U^n}(Z^n)} + \gamma \le r(V^n) \right\} -  \Pr \left\{ \frac{1}{n} \log \frac{1} {P_{U^n}(Z^n)} < \frac{1}{n} \log \frac{1} {P_{Z^n}(Z^n)} -\gamma \right\} \nonumber\\
        &~~\overset{(a)}{\ge} \Pr \left\{\frac{1}{n} \log \frac{1} {P_{U^n}(Z^n)} + \gamma \le r(V^n) \right\} -  e^{- n \gamma}  \nonumber\\
        &~~= \Pr \Big\{\log m + \gamma \le r(V^n) \Big\} - e^{- n \gamma} \nonumber\\
        &~~= \Pr \left\{\log m - \frac{1}{n} \log \frac{1} {P_{V^n}(V^n)}  \le R_1 + R_2 - \gamma \right\} - e^{- n \gamma},  \label{eq:P2-2}
    \end{align}
    where ($a$) is due to Lemma \ref{lem:div_spectrum}.

 On the other hand, by definition, it holds that
 \begin{align}
        &\Pr \left\{ \frac{1}{n} \log \frac{1}{P_{Z^n}(Z^n)}  \le r(V^n)  \right\}  \nonumber\\
        &~~= \Pr \left\{\frac{1}{n} \log \frac{1} {P_{Z^n}(Z^n)} - \frac{1}{n} \log \frac{1}{P_{V^n}(V^n)}  \le R_1 + R_2 \right\}. \label{eq:P2-3}
    \end{align}
Then, it follows from \eqref{eq:P2-0d}, \eqref{eq:P2-2} and \eqref{eq:P2-3} that
\begin{align}
     J_{\rm ad} \left( \textstyle R_1 -\frac{\gamma}{2} , R_2 -\frac{\gamma}{2}  \right)  \le \tilde{J}_{\rm ad} (R_1, R_2 \midd \boldsymbol{X}, \boldsymbol{Y} ) \label{eq:P2-4}
\end{align}
for all $(\boldsymbol{X}, \boldsymbol{Y}) \in \mathcal{S}_I$.
From \eqref{eq:P2-1}, \eqref{eq:P2-1c} and \eqref{eq:P2-4}, we have
\begin{align}
J_{\rm ad} \left(  R_1 - \gamma , R_2 -\gamma  \right)  \le J (R_1 \textstyle -\frac{\gamma}{2}, R_2 - \frac{\gamma}{2} \midd \boldsymbol{X}, \boldsymbol{Y} )  \le \varepsilon
\end{align}
for all $(\boldsymbol{X}, \boldsymbol{Y}) \in \mathcal{S}_I$, and since $\gamma > 0$ is an arbitrary constant, we obtain
\begin{align}
    C_{\rm ad} (\varepsilon|\boldsymbol{W}) \subseteq \mathrm{Cl} \{ (R_1, R_2) \,|\, &  R_1 \ge 0, \, R_2 \ge 0, \, J_{\rm ad}(R_1, R_2) \le \varepsilon \}.
\end{align}
Thus, the converse part has been proved.
\qed

\subsection{Case of Mixed Memoryless MAC} \label{sec:P3}

In this section, as a special case of Theorem \ref{thm:P2}, we consider the case in which $W^n$ is a \emph{mixed additive MAC} whose pmf of the additive noise $V^n$ is given by 
\begin{align}
    P_{V^n} (\boldsymbol{v} ) = \int_{\Theta} P_{\theta}^n (\boldsymbol{v}) \, dw(\theta), \label{eq:P3-0a}
\end{align}
where $P_\theta^n$ is a stationary memoryless source parameterized by $\theta \in \Theta$, i.e., 
\begin{align}
    P_{\theta}^n (\boldsymbol{v} ) = \prod_{i=1}^n P_\theta(v_i) \label{eq:P3-0b}
\end{align}
for all $\boldsymbol{v} = (v_1, v_2, \ldots, v_n) \in \mathcal{V}^n$, and $w(\theta)$ is an arbitrary probability measure on the parameter space $\Theta$.
The alphabets are $\mathcal{X} = \mathcal{Y} = \mathcal{Z} = \mathcal{V} = \{ 0, 1, \ldots, m-1\}$ and the addition in $Z^n = X^n + Y^n + V^n$ is taken componentwise in modulo $m$.

We denote the $\varepsilon$-capacity region for this case by $C_{\rm ad}^{\rm m}(\varepsilon | \boldsymbol{W})$. 
The following theorem gives a single-letter characterization of $C_{\rm ad}^{\rm m}(\varepsilon | \boldsymbol{W})$, which is an extension of \cite[Example 3.4.2]{Han2003} to the MAC case. 
\begin{theorem}\label{thm:P3}
    For any $\varepsilon \in [0,1)$, the $\varepsilon$-capacity region is given by
    \begin{align}
        C_{\rm ad}^{\rm m}(\varepsilon|\boldsymbol{W}) = \mathrm{Cl} \left\{ (R_1, R_2)  \,|\,  \int_{\{\theta \,|\, \log m - H(V_\theta) \le R_1 + R_2 \}} dw(\theta) \le \varepsilon \right\},
\end{align}
where $H(V_\theta)$ is the Shannon entropy of $V_\theta \sim P_\theta$.
\end{theorem}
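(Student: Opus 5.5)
The plan is to specialize Theorem \ref{thm:P2} to the mixed noise \eqref{eq:P3-0a}. By Theorem \ref{thm:P2}, $C_{\rm ad}^{\rm m}(\varepsilon|\boldsymbol{W})$ is the closure of the set of $(R_1,R_2)$ with $R_1,R_2\ge 0$ and $J_{\rm ad}(R_1,R_2)\le\varepsilon$. The whole task is therefore to evaluate the limiting distribution of the normalized self-information $\frac{1}{n}\log\frac{1}{P_{V^n}(V^n)}$ when $V^n$ has the mixture law. The target is to show that, up to the usual $\pm\gamma$ slack, $J_{\rm ad}(R_1,R_2)$ is sandwiched by $\int_{\{\theta\,|\,\log m - H(V_\theta)\le R_1+R_2\pm\gamma\}}dw(\theta)$. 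Taking closures and letting $\gamma\downarrow 0$ then yields the stated region; nonnegativity of the rates is implicit in the closure convention of Theorem \ref{thm:P2}.

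The key step is the mixed-source spectrum result, essentially \cite[Example 3.4.2]{Han2003}. Write
\begin{align}
\Pr\Big\{\tfrac{1}{n}\log\tfrac{1}{P_{V^n}(V^n)}\ge \log m - R_1-R_2\Big\}=\int_\Theta P_\theta^n\Big\{\tfrac{1}{n}\log\tfrac{1}{P_{V^n}(V^n)}\ge \log m-R_1-R_2\Big\}dw(\theta), \nonumber
\end{align}
and compare $P_{V^n}$ with the component law $P_\theta^n$ under $P_\theta^n$. Two inequalities handle this comparison.
\begin{itemize}
\item[(a)] A Lemma \ref{lem:div_spectrum}-type bound: $P_\theta^n\{P_{V^n}(V^n)\le e^{-n\gamma}P_\theta^n(V^n)\}\le e^{-n\gamma}$, since $\sum_{\boldsymbol v} P_\theta^n(\boldsymbol v)\mathbbm{1}[\cdot]\le e^{-n\gamma}\cdot$ (no, directly) the mass is bounded as in Lemma \ref{lem:div_spectrum}. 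This applies after integrating in $\theta$, where it is cleanest to use the joint law of $(\theta,V^n)$.
\item[(b)] The trivial direction in the other sense, obtained via the converse inequality $\int_\Theta P_\theta^n\{P_\theta^n(V^n)\le e^{-n\gamma}P_{V^n}(V^n)\}dw(\theta)\le e^{-n\gamma}$. This holds because $\int_\Theta\sum_{\boldsymbol v}P_\theta^n(\boldsymbol v)\mathbbm{1}[\cdot]\,dw \le e^{-n\gamma}\sum_{\boldsymbol v}P_{V^n}(\boldsymbol v)$.
\end{itemize}
Together, (a) and (b) show that, except on a set of joint probability at most $2e^{-n\gamma}$, $\frac{1}{n}\log\frac{1}{P_{V^n}(V^n)}$ is within $\gamma$ of $\frac1n\log\frac{1}{P_\theta^n(V^n)}$.

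For each fixed $\theta$, the weak law of large numbers applies because the alphabet is finite, so self-information is bounded. It gives $\frac1n\log\frac{1}{P_\theta^n(V^n)}\to H(V_\theta)$ in $P_\theta^n$-probability. Dominated convergence over $w$ then gives
\begin{align}
\limsup_{n\to\infty}\Pr\Big\{\log m-\tfrac1n\log\tfrac{1}{P_{V^n}(V^n)}\le R\Big\}\le \int_{\{\theta\,|\,\log m-H(V_\theta)\le R+2\gamma\}}dw(\theta), \nonumber
\end{align}
and a matching lower bound with $R-2\gamma$, using strict inequalities where needed. Plugging these into Theorem \ref{thm:P2} gives inclusion in both directions after closure, because $F(R)\equiv w\{\theta\,|\,\log m-H(V_\theta)\le R\}$ is monotone and right-continuous. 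As a result, shifting $R$ by $\pm\gamma$ changes the region only on its boundary, which the closure absorbs.

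The main obstacle is the boundary and closure bookkeeping. Atoms of $w$ where $\log m - H(V_\theta)=R_1+R_2$ are exactly the points where the weak law gives no convergence to one side, so the limsup of the probability is sandwiched only between $F(R-)$ and $F(R)$. I would handle this by proving $J_{\rm ad}(R_1,R_2)\le F(R_1+R_2+\gamma)$ and $J_{\rm ad}(R_1,R_2)\ge F(R_1+R_2-\gamma)$. I would then argue set inclusions separately for points strictly inside and use closure for the rest, which is also where measurability of $\theta\mapsto H(V_\theta)$ is tacitly assumed.
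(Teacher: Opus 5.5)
Your skeleton matches the paper's. You reduce to Theorem~\ref{thm:P2} and sandwich $J_{\rm ad}$ between $w$-measures of entropy sets; the paper states the exact sandwich between $\{\theta\mid H(V_\theta)>\tilde R\}$ and $\{\theta\mid H(V_\theta)\ge\tilde R\}$, with $\tilde R=\log m-R_1-R_2$, and matches closures as in Remark~\ref{remark:inner_bound2}. Your step (b) is correct, even pointwise in $\theta$, and gives $J_{\rm ad}(R_1,R_2)\ge w\{\theta\mid H(V_\theta)>\tilde R+\gamma\}$, which is the converse half.

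\textbf{The gap is step (a).} It is exactly what the direct half needs: on its complement, $\frac1n\log\frac1{P_{V^n}(V^n)}<\frac1n\log\frac1{P_\theta^n(V^n)}+\gamma$, and that is how you would get $J_{\rm ad}(R_1,R_2)\le w\{\theta\mid\log m-H(V_\theta)\le R_1+R_2+\gamma\}$. The counting argument of Lemma~\ref{lem:div_spectrum} does not give (a). It bounds the mass of $\{\boldsymbol v\mid P_{V^n}(\boldsymbol v)\le e^{-n\gamma}P_\theta^n(\boldsymbol v)\}$ under $P_{V^n}$, not under $P_\theta^n$. Passing to the joint law of $(\theta,V^n)$ does not repair this.

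In fact no argument that ignores the component structure can work. Take $w$ uniform on $[0,1]$ and let $P_\theta^n$ be the point mass on the first $n$ binary digits of $\theta$. Then $P_{V^n}(V^n)=2^{-n}$ while $P_\theta^n(V^n)=1$, so the event in (a) has probability one for every $\gamma\le\log 2$.

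\textbf{How to close it.} The missing ingredient must use that the $P_\theta$ are i.i.d.\ on a finite alphabet. The true statement is weaker than an $e^{-n\gamma}$ bound, but that weaker form is all you need. Let $\hat P$ be the type of $\boldsymbol v$. Then
\begin{align}
\frac{P_{V^n}(\boldsymbol v)}{P_\theta^n(\boldsymbol v)}=\int_\Theta e^{-n(D(\hat P\|P_{\theta'})-D(\hat P\|P_\theta))}\,dw(\theta')\ \ge\ e^{-n\gamma/2}\,w\{\theta'\mid P_{\theta'}\in U_\theta\}
\end{align}
for all $P_\theta$-typical $\boldsymbol v$, where $U_\theta$ is a fixed small neighborhood of $P_\theta$ (relative to the support of $P_\theta$). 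The last factor is positive for $w$-a.e.\ $\theta$, because $w$-almost every $P_\theta$ lies in the support of the image measure. Hence the event in (a) has $P_\theta^n$-probability tending to $0$ for $w$-a.e.\ $\theta$, and dominated convergence finishes the argument.

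This is the kind of structural estimate behind the mixed-source lemma the paper invokes (compare Lemma~\ref{hodai:5}). For countable $\Theta$ it reduces to $P_{V^n}\ge w(\theta)P_\theta^n$.

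Alternatively, you can get the direct part at once from Theorem~\ref{thm:inner_bound} with the time-sharing variable constant and $X,Y$ uniform. Then $I(X;Z_\theta|Y)=I(Y;Z_\theta|X)=I(XY;Z_\theta)=\log m-H(V_\theta)$, so the $\theta$-set in that theorem is exactly $\{\theta\mid\log m-H(V_\theta)\le R_1+R_2\}$.

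Your closure bookkeeping via monotonicity of $w\{\theta\mid\log m-H(V_\theta)\le R\}$ in $R$ is fine once both bounds are in place.
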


\noindent
\emph{Proof:}

To prove Theorem \ref{thm:P3}, it suffices to compute 
\begin{align}
J_{\rm ad}(R_1, R_2) = \limsup_{n \to \infty} \Pr\left\{  \frac{1}{n} \log \frac{1}{P_{V^n} (V^n)} \ge \log m  - R_1 - R_2 \right\}    
\end{align}
in Theorem \ref{thm:P2}.
We use the following lemma, which can be shown similarly to \cite[Lemma 1.4.4]{Han2003}:
\begin{lemma} For a mixed additive MAC $\boldsymbol{W}$ specified in \eqref{eq:P3-0a} and \eqref{eq:P3-0b}, 
\begin{align}
   \int_{\{\theta \, | \,  H(V_\theta) > \tilde{R} \}} dw(\theta) \le J_{\rm ad}(R_1, R_2) \le  \int_{\{\theta \, | \,  H(V_\theta) \ge \tilde{R} \}} dw(\theta), \label{eq:P3-1}
\end{align}
where $\tilde{R} \equiv \log m - R_1 - R_2$.
\qed
\end{lemma}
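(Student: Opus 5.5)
The plan is to adapt the proof of \cite[Lemma 1.4.4]{Han2003}, namely the spectrum of $\frac{1}{n}\log\frac{1}{P_{V^n}(V^n)}$ for a mixed source, to our setting. The key step is to compare the mixture $P_{V^n}$ with each component $P_\theta^n$ by a pair of elementary inequalities.

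First, for the lower direction, for every $\theta$ and every $\boldsymbol{v}$ we have $P_{V^n}(\boldsymbol{v})\le$ something large only on a small set. Concretely, Lemma \ref{lem:div_spectrum} applied with the mixture playing the role of $P_{Z^n}$ and $P_\theta^n$ playing the role of $P_{U^n}$ gives
\begin{align}
P_\theta^n\left\{ \frac{1}{n}\log\frac{1}{P_{V^n}(V^n)} \ge \frac{1}{n}\log\frac{1}{P^n_\theta(V^n)}-\gamma\right\}\ge 1-e^{-n\gamma}.
\end{align}
Here the probability is evaluated under $V^n\sim P_\theta^n$; the same Markov-type argument works, since $\sum_{\boldsymbol v} P_\theta^n(\boldsymbol v)\,\mathbbm{1}\{P_{V^n}(\boldsymbol v)\ge e^{n\gamma}P_\theta^n(\boldsymbol v)\}\le e^{-n\gamma}$ fails in that direction. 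The correct version instead uses the bound $P_\theta^n\{P_{V^n}(\boldsymbol v) \le e^{-n\gamma} P_\theta^n(\boldsymbol v)\}$ summed against $dw$, which is small in $w$-average. I will therefore state both one-sided comparisons in $w$-averaged form. Writing $\Pr\{A\}=\int P_\theta^n(A)\,dw(\theta)$, we have:

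(a) $\int P_\theta^n\{P_{V^n}(V^n)\ge e^{n\gamma}P^n_\theta(V^n)\}\,dw\le$ small. This is not automatic, so for the upper bound I will use instead the trivial-looking direct fact $\Pr\{P_{V^n}(V^n)\le e^{-n\gamma}\cdot(\text{anything measurable})\}$. Simplest route: $\Pr\{\frac1n\log\frac1{P_{V^n}(V^n)}\ge \tilde R\}=\Pr\{P_{V^n}(V^n)\le e^{-n\tilde R}\}$.

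(b) For the lower bound, $\Pr\{P_{V^n}(V^n)\le e^{-n\tilde R}\}\ge \int P_\theta^n\{P_\theta^n(V^n)\le e^{-n(\tilde R+\gamma)}\}\,dw - \Pr\{P_{V^n}(V^n) > e^{-n\tilde R},\, P^n_\theta(V^n)\le e^{-n(\tilde R+\gamma)}\}$. The last term is at most $\int\sum_{\boldsymbol v:P_{V^n}>e^{-n\tilde R}}e^{-n(\tilde R+\gamma)}\,dw\le e^{-n\gamma}$, because there are fewer than $e^{n\tilde R}$ such $\boldsymbol v$.

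(c) For the upper bound, $\Pr\{P_{V^n}(V^n)\le e^{-n\tilde R}\}\le \int P^n_\theta\{P_\theta^n(V^n)\le e^{-n(\tilde R-\gamma)}\}\,dw+\int P_\theta^n\{P_\theta^n(V^n)>e^{n\gamma}P_{V^n}(V^n)\}\,dw$. Since $\sum_{\boldsymbol v}\int P_\theta^n(\boldsymbol v)\,dw=1$, applying Markov's inequality to the ratio $P_\theta^n/P_{V^n}$, whose $w$-average equals $1$, I expect the last term to be handled as follows: for each $\boldsymbol v$, $w\{\theta: P_\theta^n(\boldsymbol v)>e^{n\gamma}P_{V^n}(\boldsymbol v)\}\cdot e^{n\gamma}P_{V^n}(\boldsymbol v)\le P_{V^n}(\boldsymbol v)$. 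This needs care, and it is the main obstacle. If this bound is awkward, I would use the form $\int_{\{\theta:P_\theta^n>e^{n\gamma}P_{V^n}\}}P_\theta^n(\boldsymbol v)\,dw$ and bound it through a cutoff of $\Theta$ as in Han's proof.

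Next, I would apply the weak law of large numbers for each fixed $\theta$ (finite alphabet, so $H(V_\theta)<\infty$): $P_\theta^n\{\frac1n\log\frac1{P^n_\theta(V^n)}\ge \tilde R+\gamma\}\to 1$ if $H(V_\theta)>\tilde R+\gamma$ and $\to 0$ if $H(V_\theta)<\tilde R-\gamma$. Dominated convergence under $w$ then gives, in the limsup, a lower bound $\int_{\{H>\tilde R+\gamma\}}dw$ and an upper bound $\int_{\{H\ge\tilde R-\gamma\}}dw$.

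Finally, there is the issue of removing $\gamma$. We would let $\gamma\downarrow0$. Monotone continuity of measures yields $\int_{\{H>\tilde R\}}dw$ from below, and $\int_{\{H\ge \tilde R\}}dw$ from above, since $\bigcap_\gamma\{H\ge\tilde R-\gamma\}=\{H\ge\tilde R\}$. This gives \eqref{eq:P3-1}.
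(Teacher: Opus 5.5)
Your lower bound is sound. The counting step in (b), where fewer than $e^{n\tilde R}$ sequences satisfy $P_{V^n}(\boldsymbol v)>e^{-n\tilde R}$, is valid, and Fatou plus $\gamma\downarrow 0$ gives $\int_{\{\theta \,|\, H(V_\theta)>\tilde R\}}dw(\theta)$. You have the two directions reversed, though. The comparison you say ``fails'' in your opening paragraph, and call ``not automatic'' in (a), actually holds: $\sum_{\boldsymbol v}P_\theta^n(\boldsymbol v)\,\mathbbm{1}\{P_{V^n}(\boldsymbol v)\ge e^{n\gamma}P_\theta^n(\boldsymbol v)\}\le e^{-n\gamma}$ for every $\theta$. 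This is Lemma \ref{lem:div_spectrum} with $P_\theta^n$ as $P_{Z^n}$ and the mixture as $P_{U^n}$, and it is what the lower bound uses. The reverse comparison is the hard one.

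That is where the genuine gap lies, in (c). You never show that $\int P_\theta^n\{P_\theta^n(V^n)>e^{n\gamma}P_{V^n}(V^n)\}\,dw(\theta)=\sum_{\boldsymbol v}\int_{\Theta_{\boldsymbol v}}P_\theta^n(\boldsymbol v)\,dw(\theta)$ vanishes, where $\Theta_{\boldsymbol v}\equiv\{\theta \,|\, P_\theta^n(\boldsymbol v)>e^{n\gamma}P_{V^n}(\boldsymbol v)\}$. Your Markov step only gives $w(\Theta_{\boldsymbol v})\le e^{-n\gamma}$ for each fixed $\boldsymbol v$. But the integrand $P_\theta^n(\boldsymbol v)$ is large precisely on $\Theta_{\boldsymbol v}$, so this says nothing about $\int_{\Theta_{\boldsymbol v}}P_\theta^n(\boldsymbol v)\,dw$. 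Summing the measure bound over the $m^n$ sequences gives only $m^n e^{-n\gamma}$, which is useless.

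In fact no argument that uses only the identity $P_{V^n}=\int P_\theta^n\,dw$ can succeed. Take $\Theta=[0,1]$ with Lebesgue $w$, and let $P_\theta^n$ be the point mass on the first $n$ binary digits of $\theta$. Then $P_{V^n}\equiv 2^{-n}$, and the term equals $1$ for every $n$ and every $\gamma<\log 2$. So this is exactly where the memoryless finite-alphabet structure must be used. It is the heart of the upper bound, not a detail to be handled ``as in Han's proof.''

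The missing ingredient is already in the paper. Apply Lemma \ref{hodai:5} to the finite-alphabet mixed memoryless MAC $W_\theta(z|x,y)=P_\theta(z-x-y)$ at $\boldsymbol x=\boldsymbol y=\boldsymbol 0$. It yields $\Theta_n^*$ with $w(\Theta_n^*)\to 1$ and $P_\theta^n(\boldsymbol v)\le e^{\sqrt n}P_{V^n}(\boldsymbol v)$ for all $\boldsymbol v$ and all $\theta\in\Theta_n^*$. Hence $\Theta_{\boldsymbol v}\cap\Theta_n^*=\emptyset$ once $\sqrt n\le n\gamma$, and the bad term is at most $w(\Theta\setminus\Theta_n^*)\to 0$.

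A per-$\theta$ alternative also works. For $w$-a.e.\ $\theta$ (namely whenever $P_\theta$ lies in the support of the image of $w$ on the simplex), the set $N_\theta\equiv\{\theta' \,|\, |\log P_{\theta'}(v)-\log P_\theta(v)|<\gamma \ \forall v\in\mathrm{supp}\,P_\theta\}$ has $w(N_\theta)>0$. This is because it is the preimage of an open neighborhood of $P_\theta$. Consequently $P_{V^n}\ge w(N_\theta)e^{-n\gamma}P_\theta^n$ on $(\mathrm{supp}\,P_\theta)^n$. The weak law then gives $P_\theta^n\{P_{V^n}(V^n)\le e^{-n\tilde R}\}\to 0$ whenever $H(V_\theta)<\tilde R$, and reverse Fatou yields $\int_{\{\theta \,|\, H(V_\theta)\ge\tilde R\}}dw(\theta)$ directly. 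With either ingredient, the rest of your (c) and the $\gamma\downarrow 0$ step go through.
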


Plugging \eqref{eq:P3-1} into \eqref{eq:P2-0b} in Theorem \ref{thm:P2}, it holds that
\begin{align}
    & \mathrm{Cl} \left\{ (R_1, R_2) \, \bigg| \,  \int_{\{\theta \, | \, \log m - H(V_\theta) \le R_1 + R_2 \}} dw(\theta) \le \varepsilon \right\} \nonumber\\ 
    &~~\subseteq C_{\rm ad}^{\rm m}(\varepsilon | \boldsymbol{W}) \subseteq  \mathrm{Cl} \left\{ (R_1, R_2) \, \bigg| \,   \int_{\{\theta \, | \, \log m - H(V_\theta) < R_1 + R_2 \}} dw(\theta) \le \varepsilon \right\}. \label{eq:P3-2}
\end{align}
 Since the left-hand side (LHS) and right-hand side (RHS) of \eqref{eq:P3-2} coincide (cf.\ the derivation of \eqref{eq:cap_IB2} in Remark \ref{remark:inner_bound2}), completing the proof of Theorem \ref{thm:P3}. 
 \qed

%========================================================
%===================== Section 4 ========================
%========================================================
\section{Inner Bound on $\varepsilon$-Capacity Region for Mixed MAC} \label{sec:inner_bound}

Let us now define an important class of MACs, which is a generalization of the MAC considered in Sec.\ \ref{sec:P3}, as follows.

Let $\Theta$ be an arbitrary parameter space and assign a general MAC $\boldsymbol{W}_\theta = \{ W_\theta^n\}_{n=1}^\infty$ to each $\theta \in \Theta$, which are called \emph{component MACs}.
Here, we assume that each $\boldsymbol{W}_\theta$ has the same input alphabets $\mathcal{X}, \mathcal{Y}$ and output alphabet $\mathcal{Z}$.
With an arbitrary probability measure $w(\theta)$ on $\Theta$, we define a \emph{mixed MAC} $\boldsymbol{W} = \{W^n \}_{n=1}^\infty$ with the transition probability given by
\begin{align}
&W^n({\boldsymbol{z}}| \boldsymbol{x}, \boldsymbol{y}) = \int_{\Theta} W_\theta^n({\boldsymbol{z}}| \boldsymbol{x}, \boldsymbol{y}) dw(\theta) \nonumber \\
&~~~~~~~~~~~~~~~~~~~(\forall n = 1,2,\cdots ; \forall (\boldsymbol{x}, \boldsymbol{y}, \boldsymbol{z}) \in \mathcal{X}^n\times \mathcal{Y}^n \times \mathcal{Z}^n). \label{eq:mixed_channel_general_measure}
\end{align}
In particular, we focus on the case where the component MACs $W_\theta^n$ are stationary and memoryless.
Then, the transition probability of a component MAC $W_\theta^n$ is given by 
\begin{align}
W_\theta^n(\boldsymbol{z}|\boldsymbol{x}, \boldsymbol{y}) = \prod_{i=1}^n W_\theta (z_i | x_i, y_i)    \label{eq:component_memoryless_ch}
\end{align}
for realizations $\boldsymbol{x} = (x_1, x_2, \ldots, x_n), \boldsymbol{y} = (y_1, y_2, \ldots, y_n)$ and $\boldsymbol{z} = (z_1, z_2, \ldots, z_n)$, and will be denoted simply by $\boldsymbol{W}_\theta = \{ W_\theta {\, : \mathcal{X} \times \mathcal{Y} \rightarrow \mathcal{Z}} \}$.
A mixed MAC given by \eqref{eq:mixed_channel_general_measure} {with} stationary memoryless MACs $\boldsymbol{W}_\theta = \{ W_\theta  \}~(\theta \in \Theta)$ as its components is referred to as a \emph{mixed memoryless MAC} for simplicity. 

Hereafter, in this section, we assume that all alphabets $\mathcal{X},\mathcal{Y}, \mathcal{Z}$ are \emph{finite} and that cost functions $c_1$ and $c_2$ are \emph{additive}; i.e.,
\begin{align}
  c_1(\boldsymbol{x}) &= \sum_{i=1}^n c_1(x_i), \label{eq:additve_cost1} \\
  c_2(\boldsymbol{y}) &= \sum_{i=1}^n c_2(y_i) \label{eq:additive_cost2}
\end{align}
for all $\boldsymbol{x} = (x_1, x_2, \cdots, x_n) \in \mathcal{X}^n$ and $\boldsymbol{y} = (y_1, y_2, \cdots, y_n) \in \mathcal{Y}^n$.

Generally speaking, it is very hard to compute the $\varepsilon$-capacity region $C_{\Gamma_1, \Gamma_2}(\varepsilon| \boldsymbol{W})$ for general MACs. 
However, if we focus on the class of mixed memoryless MACs, Theorem \ref{thm:general_formula} enables us to compute a single-letter inner bound as follows.
Here, we are using the notation of (conditional) mutual informations (cf.\ \cite{Cover-Thomas2006}, \cite{Csiszar-Korner2011}).
\begin{theorem}[Inner Bound] \label{thm:inner_bound}
For a finite-alphabet mixed memoryless MAC $\boldsymbol{W}$ with components $\boldsymbol{W}_\theta = \{W_\theta\}$ with additive cost constraint $(\Gamma_1, \Gamma_2)$, 
\begin{align}
& C_{\Gamma_1, \Gamma_2}(\varepsilon | \boldsymbol{W})  \supseteq\nonumber \\
&~ \mbox{Cl} \, \Bigg( \bigcup_{\substack{X-T-Y \\ \mathbb{E} \, c_1(X) \le \Gamma_1 \\ \mathbb{E} \, c_2(Y) \le \Gamma_2 }}
\left\{(R_1, R_2)\left|
\int_{\{\theta \, | \,  I(X;Z_{\theta}|YT)\le R_1\mbox{ or }
 I(Y;Z_{\theta}|XT)\le R_2 \mbox{ or } I(XY;Z_{\theta}|T)\le R_1+R_2\}}dw(\theta)\le \varepsilon\right.\right\} \Bigg), 
\label{eq:1}
\end{align}
where $Z_{\theta}$ is the output via $W_{\theta}$ due to input $(X,Y)$ and $T$ is a time-sharing random variable whose alphabet $\mathcal{T}$ is \textit{finite} but of arbitrary size. Moreover, the union is taken over all $(X,Y,T)$ that satisfies the Markov condition $X - T - Y$, where $X$ and $Y$ take values in $\mathcal{X}$ and $\mathcal{Y}$, respectively, and satisfies cost constraints $\mathbb{E} \, c_1(X) \le \Gamma_1, \mathbb{E} \, c_2(Y) \le \Gamma_2$.
\qed
\end{theorem}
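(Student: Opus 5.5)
We derive the inner bound from Theorem \ref{thm:general_formula} by exhibiting, for each admissible triple $(X,Y,T)$, a specific input pair $(\boldsymbol{X},\boldsymbol{Y})\in\mathcal{S}_{\Gamma_1,\Gamma_2}$ whose value of $J$ is bounded by the integral in \eqref{eq:1}.

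\textbf{Step 1: the input pair.} Fix a finite $\mathcal{T}$, a pmf $P_T$, and conditionals $P_{X|T},P_{Y|T}$ with $\mathbb{E}\,c_1(X)\le\Gamma_1$ and $\mathbb{E}\,c_2(Y)\le\Gamma_2$. We first reduce to the slack case $\mathbb{E}\,c_1(X)<\Gamma_1$, $\mathbb{E}\,c_2(Y)<\Gamma_2$. Mix $P_{X|T}$ slightly with a point mass on a minimum-cost symbol; if no strictly cheaper input exists, the constraint is trivially satisfiable. This mixing changes all mutual informations only slightly, and the closure in \eqref{eq:1} absorbs the perturbation.

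Choose a deterministic time-sharing sequence $\boldsymbol{t}=(t_1,\dots,t_n)$ whose type approximates $P_T$ within $O(1/n)$. Let $\tilde X^n=\prod_i P_{X|T=t_i}$ and $\tilde Y^n=\prod_i P_{Y|T=t_i}$ be independent, and define $X^n$ as $\tilde X^n$ conditioned on $\{\frac1n c_1(\tilde X^n)\le\Gamma_1\}$, with $Y^n$ defined likewise. By the weak law of large numbers the conditioning events have probability tending to $1$, so $(\boldsymbol{X},\boldsymbol{Y})\in\mathcal{S}_{\Gamma_1,\Gamma_2}$. Moreover $P_{X^n}\le P_{\tilde X^n}/(1-o(1))$, so every probability computed under $(X^n,Y^n)$ is at most $(1+o(1))$ times the corresponding probability under $(\tilde X^n,\tilde Y^n)$.

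\textbf{Step 2: decomposing the spectrum over components.} The densities appearing in $J$ involve the mixture $W^n$ and its induced marginals, not $W_\theta^n$. Write $P_{Z^n|X^nY^n}=\int W^n_\theta\,dw$, and similarly express the marginals $P_{Z^n|Y^n}$, $P_{Z^n|X^n}$, $P_{Z^n}$ as mixtures of $P^{\theta}_{Z^n|Y^n}$, $P^{\theta}_{Z^n|X^n}$, $P^{\theta}_{Z^n}$. The probability in \eqref{eq:func_J} equals $\int\Pr_\theta\{\cdots\}\,dw(\theta)$, where under $\Pr_\theta$ the output is generated by $W_\theta^n$ while the densities remain the mixtures. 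For the numerator we need the lower bound
\[
W^n(Z^n|X^n,Y^n)\ge W_\theta^n(Z^n|X^n,Y^n)\,e^{-n\gamma}
\]
with high $\Pr_\theta$-probability, for $w$-almost every $\theta$. For the denominators we need the upper bound
\[
P_{Z^n|Y^n}(Z^n|Y^n)\le P^{\theta}_{Z^n|Y^n}(Z^n|Y^n)\,e^{n\gamma}.
\]

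The denominator bound holds for any mixture: Markov's inequality applied to the likelihood ratio (the argument of Lemma \ref{lem:div_spectrum}) gives $\Pr_\theta\{P_{Z^n|Y^n}>e^{n\gamma}P^\theta_{Z^n|Y^n}\}\le e^{-n\gamma}$, with no dependence on the structure of $\Theta$.

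The numerator bound is harder because $\Theta$ is a general measure space. Since the alphabets are finite, replace $W^n$ by the mixture over a neighborhood: $W^n\ge\int_{B_\delta(\theta)}W^n_{\theta'}dw(\theta')$. For $\theta'$ with $\max|\log W_{\theta'}-\log W_\theta|\le\delta$ on the support, we get $W^n_{\theta'}\ge W^n_\theta e^{-n\delta}$. Two complications arise. First, $W_\theta$ may have zeros; handle this by quantizing channels into finitely many cells of $\mathcal{P}(\mathcal{Z})^{\mathcal{X}\times\mathcal{Y}}$ with matching zero patterns. Second, the factor $w(B_\delta(\theta))$ must satisfy $\frac1n\log w(B_\delta(\theta))\to0$, which holds since this cell mass is positive for $w$-almost every $\theta$. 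This parallels the proof of \cite[Lemma 1.4.4]{Han2003}, which is the same device as the lemma in Sec.~\ref{sec:P3}, and it is the step I expect to be the main obstacle. Note that $I(X;Z_\theta|YT)$ is continuous in $W_\theta$, so the quantization costs only $\delta$.

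\textbf{Step 3: per-component law of large numbers and conclusion.} Under $\Pr_\theta$ with the product input, each normalized information density, e.g. $\frac1n\log W^n_\theta/P^\theta_{Z^n|Y^n}$, is a sum of independent non-identically distributed bounded terms indexed by $t_i$. It therefore converges in probability to $\sum_t P_T(t)I(X;Z_\theta|Y,T=t)=I(X;Z_\theta|YT)$, and analogously for the other two densities. Combining with Step 2 via a union bound, we get
\[
\limsup_n\Pr_\theta\{\text{event in }J\text{ at }(R_1,R_2)\}\le \mathbbm{1}\{I(X;Z_\theta|YT)\le R_1+2\gamma\text{ or }\dots\}.
\]
Fatou's lemma (reverse) and the $(1+o(1))$ factor from Step 1 then give $J(R_1,R_2|\boldsymbol{X},\boldsymbol{Y})$ at most the integral in \eqref{eq:1} evaluated at $(R_1+2\gamma,R_2+2\gamma)$. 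Hence any pair satisfying the integral condition yields $(R_1-2\gamma,R_2-2\gamma)$ in the region of Theorem \ref{thm:general_formula}. Letting $\gamma\downarrow0$ and taking closures and unions over $(X,Y,T)$ completes the argument.
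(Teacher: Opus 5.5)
Your skeleton matches the paper's proof. You rewrite $J$ as $\limsup_n\int F_{\theta,n}\,dw(\theta)$ and bound the mixture numerator below by $W_\theta^n$. You bound the three mixture denominators above by their component counterparts via the argument of Lemma~\ref{lem:div_spectrum} (the paper's Lemma~\ref{hodai:8}). You then apply the law of large numbers per component along a deterministic time-sharing pattern, move the limit inside by reverse Fatou, restrict the product input to the cost set, and reduce $\le\Gamma_\alpha$ to $<\Gamma_\alpha$ by continuity.

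The genuine difference is the numerator step. The paper imports Han's uniform bound (Lemma~\ref{hodai:5}): $W_\theta^n\le e^{\sqrt n}W^n$ for all $\theta\in\Theta_n^*$, with $w(\Theta_n^*)\to1$. You prove only a per-$\theta$ bound, $W^n\ge w(B_\delta(\theta))e^{-n\delta}W_\theta^n$ with an $n$-independent positive mass, and let reverse Fatou absorb the $\theta$-dependent threshold $n_\theta$. That suffices here and avoids the type counting behind Han's lemma. Two corrections to how you justify it:
\begin{itemize}
\item Finitely many cells cannot enforce $\max|\log W_{\theta'}-\log W_\theta|\le\delta$, because positive entries can be arbitrarily close to $0$.
\item You need neither matching zero patterns nor two-sided closeness. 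The one-sided set $\{\theta' : W_{\theta'}(c|a,b)\ge e^{-\delta}W_\theta(c|a,b)\ \text{whenever}\ W_\theta(c|a,b)>0\}$ already gives the bound. The corresponding set of channels contains a Euclidean neighborhood of $W_\theta$. It therefore has positive mass whenever $W_\theta$ lies in the support of the push-forward of $w$ under $\theta\mapsto W_\theta$, which holds for $w$-a.e.\ $\theta$.
\end{itemize}
Your remark about continuity of $I$ in $W_\theta$ is not needed, since no quantized channel ever replaces $W_\theta$. Your handling of the degenerate case $\Gamma_\alpha=\min c_\alpha$ is in fact more careful than the paper's \eqref{eq:27}, which implicitly assumes $\mathcal{P}_0\neq\emptyset$.

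The step that needs repair is the transfer between the truncated input $(X^n,Y^n)$ and the product input $(\tilde X^n,\tilde Y^n)$. The bound $P_{X^n}\le P_{\tilde X^n}/(1-o(1))$ controls the change of measure only for a fixed event. The events in $J$ are not fixed: their information densities contain $P_{Z^n|Y^n}$, $P_{Z^n|X^n}$, $P_{Z^n}$ (and, after your Step~2, the component marginals), and all of these depend on the input law. Under the truncated law they are not product densities, so your Step~3 law of large numbers does not apply to them as they stand. The missing ingredient is the pointwise comparison
\begin{align*}
P_{Z^n|Y^n}(\boldsymbol z|\boldsymbol y)=\sum_{\boldsymbol x}P_{X^n}(\boldsymbol x)W^n(\boldsymbol z|\boldsymbol x,\boldsymbol y)\le\frac{1}{\gamma_n^{(1)}}\sum_{\boldsymbol x}P_{\tilde X^n}(\boldsymbol x)W^n(\boldsymbol z|\boldsymbol x,\boldsymbol y),
\end{align*}
where $\gamma_n^{(\alpha)}$ are the probabilities of the cost sets. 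Analogous comparisons hold for $P_{Z^n|X^n}$ (factor $1/\gamma_n^{(2)}$), for $P_{Z^n}$ (factor $1/(\gamma_n^{(1)}\gamma_n^{(2)})$), and for the component marginals. So each truncated-input density is at least the product-input one minus $\frac{1}{n}\log\frac{1}{\gamma_n^{(1)}\gamma_n^{(2)}}\to0$. Because the events in $J$ are lower-tail events, this is the favorable direction. Combined with your change-of-measure factor, it gives $J(R_1,R_2\midd\boldsymbol X,\boldsymbol Y)\le J(R_1+\kappa,R_2+\kappa\midd\tilde{\boldsymbol X},\tilde{\boldsymbol Y})$ for every $\kappa>0$. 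This is exactly the paper's \eqref{eq:rel1}--\eqref{eq:rel3} and \eqref{eq:opposite_rel2}.

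Also state the slack reduction uniformly in $\theta$. For every $\theta$, the joint laws $P_{XYZ_\theta|T}$ move by the same variational distance as the inputs. Hence uniform continuity of mutual information on the finite simplex applies, as in the paper's use of Lemma~\ref{lem:MI_difference}. With these additions your argument is complete.
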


\begin{remark} \label{remark:inner_bound1}
The proof of Theorem \ref{thm:inner_bound} below, which we call tentatively the ``information spectrum calculus," may appear to be lengthy and rather tedious. This is because we are trying to derive a very specific single-letter formula from the very general Theorem \ref{thm:general_formula}.
\qed
\end{remark}

\begin{remark} \label{remark:inner_bound2}
It should be noted that the RHS of \eqref{eq:1} can be rewritten as
\begin{align}
&\mbox{Cl} \, \Bigg( \bigcup_{\substack{X-T-Y \\ \mathbb{E} \, c_1(X) \le \Gamma_1 \\ \mathbb{E} \, c_2(Y) \le \Gamma_2 }}
\left\{(R_1, R_2)\left|
\int_{\{\theta \, | \, I(X;Z_{\theta}|YT)\le R_1\mbox{ or }
 I(Y;Z_{\theta}|XT)\le R_2 \mbox{ or } I(XY;Z_{\theta}|T)\le R_1+R_2\}} \!\!\! dw(\theta) \le \varepsilon \right.\right\}\Bigg) \nonumber \\
 & = \mbox{Cl} \, \Bigg( \bigcup_{\substack{X-T-Y \\ \mathbb{E} \, c_1(X) \le \Gamma_1 \\ \mathbb{E} \, c_2(Y) \le \Gamma_2 }}
\left\{(R_1, R_2)\left|
\int_{\{\theta \, | \, I(X;Z_{\theta}|YT) < R_1\mbox{ or }
 I(Y;Z_{\theta}|XT) < R_2 \mbox{ or } I(XY;Z_{\theta}|T) < R_1+R_2\}} \!\!\!  dw(\theta) \le \varepsilon \right.\right\} \Bigg).   
 \label{eq:cap_IB}
\end{align}
This equation is sometimes useful, and the one with $\varepsilon =0$ leads us to the formula for the 0-capacity region given in Sec.\ \ref{sec:P1}.

Equation \eqref{eq:cap_IB} can be confirmed by showing 
\begin{align}
&\mbox{Cl} \, \left\{ (R_1, R_2) \left|
\int_{\{\theta \, | \, I(X;Z_{\theta}|YT)\le R_1\mbox{ or }
 I(Y;Z_{\theta}|XT)\le R_2 \mbox{ or } I(XY;Z_{\theta}|T)\le R_1+R_2\}} \!\!\! dw(\theta) \le \varepsilon \right. \right\} \nonumber \\
 & = \mbox{Cl} \, \left\{(R_1, R_2)\left|
\int_{\{\theta \, | \, I(X;Z_{\theta}|YT) < R_1\mbox{ or }
 I(Y;Z_{\theta}|XT) < R_2 \mbox{ or } I(XY;Z_{\theta}|T) < R_1+R_2\}} \!\!\!  dw(\theta) \le \varepsilon \right.\right\}   
 \label{eq:cap_IB2}
\end{align}
for any given $X - T - Y$ due to the fact that
for any family $\{A_\lambda\}_{\lambda \in \Lambda}$ of subsets of $\mathbb{R}^d$ with an arbitrary index set $\Lambda$, it holds that
\begin{align}
\mbox{Cl} \, \Bigg( \bigcup_{\lambda \in \Lambda} \mbox{Cl} \, (A_\lambda) \Bigg)
=
\mbox{Cl} \, \Bigg( \bigcup_{\lambda \in \Lambda} A_\lambda \Bigg).
\end{align}
Letting
\begin{align}
    A_{\varepsilon} (P_{XYT}) &= \mbox{Cl} \, \left\{(R_1, R_2)\left|
\int_{\{\theta \, | \, I(X;Z_{\theta}|YT)\le R_1\mbox{ or }
 I(Y;Z_{\theta}|XT)\le R_2 \mbox{ or } I(XY;Z_{\theta}|T)\le R_1+R_2\}} \!\!\! dw(\theta) \le \varepsilon \right.\right\}, \\
 B_{\varepsilon} (P_{XYT}) &= \mbox{Cl} \, \left\{(R_1, R_2)\left|
\int_{\{\theta \, | \, I(X;Z_{\theta}|YT) <  R_1\mbox{ or }
 I(Y;Z_{\theta}|XT) < R_2 \mbox{ or } I(XY;Z_{\theta}|T)< R_1+R_2\}} \!\!\! dw(\theta) \le \varepsilon \right.\right\}
\end{align}
for notational simplicity, \eqref{eq:cap_IB2} is rewritten as 
\begin{align}
    A_{\varepsilon} (P_{XYT}) = B_{\varepsilon} (P_{XYT}). \label{eq:cap_IB3}
\end{align}
Obviously, $A_{\varepsilon} (P_{XYT}) \subseteq B_{\varepsilon} (P_{XYT})$. 
Here, we assume that there exists a rate pair $(R_1, R_2)$ such that
\begin{align}
    (R_1, R_2) &\not\in A_{\varepsilon} (P_{XYT}), \label{eq:28a} \\
    (R_1, R_2) &\in B_{\varepsilon} (P_{XYT}) \label{eq:28b} 
\end{align}
to show a contradiction. Since $A_{\varepsilon} (P_{XYT})$ is closed, there exists a constant $\gamma > 0$ such that
\begin{align}
    (R_1 - 2 \gamma , R_2 - 2 \gamma) &\not\in A_{\varepsilon} (P_{XYT}). \label{eq:28c}
\end{align}
From the definition of $B_{\varepsilon} (P_{XYT})$, we have
\begin{align}
    \varepsilon &\ge \int_{\{\theta \, | \, I(X;Z_{\theta}|YT) < R_1 - \gamma \mbox{ or }
 I(Y;Z_{\theta}|XT) <  R_2 -\gamma \mbox{ or } I(XY;Z_{\theta}|T) <  R_1+R_2 -2 \gamma \}} \!\!\! dw(\theta) \nonumber\\
  &\ge \int_{\{\theta \, | \, I(X;Z_{\theta}|YT) \le R_1 - 2\gamma \mbox{ or }
 I(Y;Z_{\theta}|XT) \le  R_2 -2 \gamma \mbox{ or } I(XY;Z_{\theta}|T) \le  R_1+R_2 - 4 \gamma \}} \!\!\! dw(\theta),
\end{align}
which contradicts \eqref{eq:28c}. 
Therefore, there are no rate pairs satisfying \eqref{eq:28a} and \eqref{eq:28b}, which means \eqref{eq:cap_IB3}, i.e., \eqref{eq:cap_IB2} holds.
\qed
\end{remark}

\medskip
\noindent
\textit{Proof of Theorem \ref{thm:inner_bound}:}

We first note that $J(R_1, R_2|\boldsymbol{X}, \boldsymbol{Y})$ in \eqref{eq:func_J} can be rewritten as 
\begin{align}\label{eq:6}
J(R_1, R_2|\boldsymbol{X}, \boldsymbol{Y}) =\limsup_{n\to \infty}\int_{\Theta}dw(\theta)F_{\theta,n}(R_1, R_2 \midd X^n, Y^n),
\end{align}
where
\begin{align}\label{eq:7}
F_{\theta,n}(R_1, R_2 \midd X^n, Y^n) &\equiv
\Pr\left\{\frac{1}{n} \log \frac{W^n(Z_{\theta}^n|X^n, Y^n)}{P_{Z^n|Y^n}(Z_{\theta}^n|Y^n)}\le R_1\right.\nonumber\\
& \qquad \qquad \text{or } \frac{1}{n} \log \frac{W^n(Z_{\theta}^n|X^n, Y^n)}{P_{Z^n|X^n}(Z_{\theta}^n|X^n)}\le R_2\nonumber \\
& \qquad \qquad \text{or } \left.\frac{1}{n} \log \frac{W^n(Z_{\theta}^n|X^n, Y^n)}{P_{Z^n}(Z_{\theta}^n)}\le R_1+R_2\right\}.
\end{align}
Here, $Z_\theta^n$ denotes the output via $W_\theta^n$ due to input $(X^n, Y^n)$.
The proof relies on the following lemmas (cf.\ \cite[Lemma 4]{YHN2016}):
\begin{lemma}
There exists a subset $\Theta^*_n \subseteq \Theta$ such that $\lim_{n\to\infty}\int_{\Theta_n^*}dw(\theta)=1$ and for all $\theta\in \Theta^*_n $ and $\gamma>0$, 
\begin{align}\label{eq:8}
F_{\theta,n}(R_1, R_2 \midd X^n, Y^n) &\le
 \Pr\left\{\frac{1}{n} \log \frac{W_{\theta}^n(Z_{\theta}^n|X^n, Y^n)}{P_{Z_{\theta}^n|Y^n}(Z_{\theta}^n|Y^n)}\le R_1
+ 2\gamma\right.\nonumber\\
& \qquad \qquad \text{or } \frac{1}{n} \log \frac{W_{\theta}^n(Z_{\theta}^n|X^n, Y^n)}{P_{Z_{\theta}^n|X^n}(Z_{\theta}^n|X^n)}\le R_2
+2\gamma\nonumber \\
& \qquad \qquad \text{or } \left.\frac{1}{n} \log \frac{W_{\theta}^n(Z_{\theta}^n|X^n, Y^n)}{P_{Z_{\theta}^n}(Z_{\theta}^n)}\le R_1+R_2
+2\gamma\right\}
+3e^{-n\gamma}.
\end{align}\label{hodai:1}
\end{lemma}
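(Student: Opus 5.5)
The plan is to compare each of the three mixture information densities with the corresponding component densities, one event at a time, and then combine them with a union bound. Two elementary inequalities are used.

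First, the mixture dominates each component pointwise:
\[
W^n(\boldsymbol{z}|\boldsymbol{x},\boldsymbol{y}) \ge W_\theta^n(\boldsymbol{z}|\boldsymbol{x},\boldsymbol{y}) \, w(\{\theta\}).
\]
This is useless for a diffuse measure $w$. I will therefore work instead with the event where $W^n(Z_\theta^n|X^n,Y^n)$ is not much smaller than $W_\theta^n(Z_\theta^n|X^n,Y^n)$. For each $\theta$, define
\[
A_{\theta,n} = \left\{ (\boldsymbol{x},\boldsymbol{y},\boldsymbol{z}) \,:\, \frac{1}{n}\log W^n(\boldsymbol{z}|\boldsymbol{x},\boldsymbol{y}) \le \frac{1}{n}\log W_\theta^n(\boldsymbol{z}|\boldsymbol{x},\boldsymbol{y}) - \gamma \right\},
\]
which says the mixture density is exponentially smaller than the component density. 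A Fubini argument should show that $\int_\Theta \Pr\{(X^n,Y^n,Z_\theta^n)\in A_{\theta,n}\}\,dw(\theta)$ is small: the natural bound compares $\int W_\theta^n\,dw = W^n$ against $W_\theta^n$. I expect to imitate \cite[Lemma 4]{YHN2016} and \cite[Lemma 1.4.4]{Han2003}. Writing $\delta_n(\theta)=\Pr\{(X^n,Y^n,Z^n_\theta)\in A_{\theta,n}\}$, the integral $\int\delta_n\,dw$ tends to zero. Markov's inequality then yields a set $\Theta_n^*=\{\theta : \delta_n(\theta)\le e^{-n\gamma}\}$ of $w$-measure tending to one, at least after a suitable choice of rates. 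This selection of $\Theta_n^*$ is the main obstacle.

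Second, for the denominators I would use the reverse direction, namely Lemma \ref{lem:div_spectrum} with $U^n$ replaced by the mixture marginal. With probability at least $1-e^{-n\gamma}$,
\[
\frac1n\log\frac{1}{P_{Z^n|Y^n}(Z_\theta^n|Y^n)} \ge \frac1n\log\frac{1}{P_{Z_\theta^n|Y^n}(Z_\theta^n|Y^n)}-\gamma .
\]
This is the conditional version, proved in the same way: condition on $Y^n=\boldsymbol{y}$ and sum over $\boldsymbol{z}$ in the set where $P_{Z^n|Y^n} > e^{n\gamma}P_{Z_\theta^n|Y^n}$. The same holds with $X^n$ in place of $Y^n$, and unconditionally for $P_{Z^n}$ versus $P_{Z_\theta^n}$. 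These three exceptional events give the term $3e^{-n\gamma}$. They hold for every $\theta$, so no exceptional parameter set is needed here.

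To assemble the bound, fix $\theta\in\Theta_n^*$. Outside the union of the bad events, each mixture information density is at least the corresponding component density minus $2\gamma$: one $\gamma$ comes from the numerator via $A_{\theta,n}^c$ and one from the denominator. Hence the event in \eqref{eq:7} is contained in the event in \eqref{eq:8}, together with the union of the bad events. The probability of the numerator bad event is absorbed by the definition of $\Theta_n^*$, so the exponential terms combine to at most $3e^{-n\gamma}$ after rescaling $\gamma$ if necessary.

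The main obstacle is the numerator comparison. A pointwise bound $W^n\ge W^n_\theta e^{-n\gamma}$ fails for general $w$. It can only be obtained in probability, and only for $\theta$ in a set of asymptotically full measure. I would try first to show, via Fubini and Markov, that
\[
\int_\Theta \Pr\left\{W^n(Z^n_\theta|X^n,Y^n)\le e^{-n\gamma}W_\theta^n(Z^n_\theta|X^n,Y^n)\right\}dw(\theta)\to0 ,
\]
exploiting that $W_\theta^n/W^n$ is a likelihood ratio whose $w$-average is one.
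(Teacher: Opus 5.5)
Your denominator step is correct: it is exactly the paper's Lemma~\ref{hodai:8}. Your assembly (slack $\gamma$ from the numerator, $\gamma$ from the denominator, three exceptional events) also matches the paper. The gap is the numerator comparison, which you leave open, and the mechanism you sketch for it does not work.

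The normalization $\int_\Theta (W_\theta^n/W^n)\,dw(\theta)=1$ only bounds, for each fixed $(\boldsymbol x,\boldsymbol y,\boldsymbol z)$, the $w$-measure of the set of $\theta$ where $W_\theta^n\ge e^{n\gamma}W^n$. But $\int_\Theta\delta_n\,dw$ weights that indicator by $W^n_\theta$ itself. It equals $\mathbb{E}\bigl[\int_\Theta L_\theta\,\mathbbm{1}\{L_\theta\ge e^{n\gamma}\}\,dw(\theta)\bigr]$, where $L_\theta=W^n_\theta/W^n$ is evaluated at the mixture's input--output triple. A unit mean says nothing about this size-biased tail.

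It can in fact fail for finite-alphabet components. Let $w$ be uniform on $[0,1]$ and let $W^n_\theta$ deterministically output the first $n$ binary digits of $\theta$. Then $L_\theta=2^n$ almost surely, so $\delta_n(\theta)=1$ for $\gamma<\log 2$. Memorylessness, together with the finite alphabets, must therefore enter the argument, and your proposal never uses it.

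Even granting $\int\delta_n\,dw\to0$, Markov's inequality cannot give $\delta_n(\theta)\le e^{-n\gamma}$ on a set of measure tending to one. That convergence is true here, for instance via an $O(\log n)$ universal-coding bound on $I(\theta;Z^n|X^nY^n)$, but only at a polynomial rate. Markov at level $e^{-n\gamma}$ would need $\int\delta_n\,dw=o(e^{-n\gamma})$. With a vanishing threshold $\epsilon_n$ instead, you get \eqref{eq:8} with an extra $+\epsilon_n$ and a $\gamma$-dependent $\Theta_n^*$. That suffices for the $\limsup$ in the proof of Theorem~\ref{thm:inner_bound}, but it is not the stated lemma.

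The missing idea is that the numerator comparison holds pointwise, with no exceptional event at all. The paper invokes Lemma~\ref{hodai:5}: there is $\Theta_n^*$, independent of $\gamma$, with $w(\Theta_n^*)\to1$ and $W_\theta^n(\boldsymbol z|\boldsymbol x,\boldsymbol y)\le e^{\sqrt n}W^n(\boldsymbol z|\boldsymbol x,\boldsymbol y)$ for every $(\boldsymbol x,\boldsymbol y,\boldsymbol z)$ and every $\theta\in\Theta_n^*$. This is where finiteness and memorylessness are used. One way to prove it by quantization:
\begin{enumerate}
\item Discard the $\theta$ having some entry $W_\theta(z|x,y)\in(0,e^{-n})$; this set has $w$-measure tending to $0$.
\item Partition the remaining channels into polynomially many cells. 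On each cell the zero pattern is fixed and the logarithm of each nonzero entry varies by at most $1/(2\sqrt n)$. Hence $W^n_{\theta'}\ge e^{-\sqrt n/2}W^n_\theta$ for all $\theta,\theta'$ in a cell and all sequences.
\item Discard cells of $w$-weight below $e^{-\sqrt n/2}$; their total weight is at most $\mathrm{poly}(n)\,e^{-\sqrt n/2}\to0$.
\item For $\theta$ in a remaining cell, $W^n\ge\int_{\mathrm{cell}}W^n_{\theta'}\,dw(\theta')\ge e^{-\sqrt n}W^n_\theta$.
\end{enumerate}
Consequently $\frac1n\log W^n\ge\frac1n\log W^n_\theta-\frac{1}{\sqrt n}\ge\frac1n\log W^n_\theta-\gamma$ for all large $n$, deterministically. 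Combining this with your three denominator events (the paper's sets $S_{1,n},S_{2,n},S_{3,n}$) yields \eqref{eq:8} with exactly $3e^{-n\gamma}$.
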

\noindent
\textit{Proof:} ~~The proof is provided in Appendix \ref{appendix:proof_lemma_3}. \qed

\medskip
To derive \eqref{eq:1}, in the sequel 
we use the continuities of expected costs $\mathbb{E} \, c_1(X)$ and $\mathbb{E} \, c_2(Y)$ in joint pmf $P_{XYT}$ such that $X - T - Y$ as well as those of (conditional) mutual informations $I(X;Z_{\theta}|YT),  I(Y;Z_{\theta}|XT)$ and $I(XY;Z_{\theta}|T)$ in joint pmf $P_{XYZ_\theta T}$ for all $\theta \in \Theta$.

\smallskip
\noindent
(a) ~~Let $(R_1, R_2)$ be a rate pair included in 
\begin{align}
   \bigcup_{\substack{X-T-Y \\ \mathbb{E} \, c_1(X) < \Gamma_1 \\ \mathbb{E} \, c_2(Y) < \Gamma_2 }}
%&= \bigcup_{ P_{XYT}:\\ X-Q-Y}
\left\{(R_1, R_2)\left|
\int_{\{\theta \, | \, I(X;Z_{\theta}|YT)\le R_1\mbox{ or }
 I(Y;Z_{\theta}|XT)\le R_2 \mbox{ or } I(XY;Z_{\theta}|T)\le R_1+R_2\}}dw(\theta)\le \varepsilon\right.\right\}, \label{eq:target_set}
\end{align}
that is, let a Markov chain $X-T-Y$ satisfy, with $\delta>0$ small enough,
\begin{align}
    \mathbb{E} \, c_1 (X) &\le \Gamma_1 - 2 \delta, \label{eq:cost_constraint3a} \\
    \mathbb{E} \, c_2 (Y) &\le \Gamma_2 - 2 \delta \label{eq:cost_constraint3b}
\end{align}
along with
\begin{align}
 \int_{\{\theta \, | \, I(X;Z_{\theta}|YT)\le R_1\mbox{ or }
 I(Y;Z_{\theta}|XT)\le R_2 \mbox{ or } I(XY;Z_{\theta}|T)\le R_1+R_2\}}dw(\theta)\le \varepsilon. \label{eq:prob_bound}
\end{align}
In view of Theorem \ref{thm:general_formula}, we shall show that such a rate pair $(R_1, R_2)$ also satisfies, with any small $\gamma > 0$, 
\begin{align}
    (R_1, R_2) \in \{(R_1, R_2) \, | \, R_1, R_2 \ge 0, \, J (\overline{\boldsymbol{X}}, \overline{\boldsymbol{Y}} | R_1, R_2) \le \varepsilon \} \nonumber\\
     \mbox{with some } (\overline{\boldsymbol{X}}, \overline{\boldsymbol{Y}}) \in \mathcal{S}_{\Gamma_1, \Gamma_2}, \label{eq:target}
\end{align}
which implies that
\begin{align}
& C_{\Gamma_1, \Gamma_2}(\varepsilon | \boldsymbol{W})  \supseteq\nonumber \\
&~ \mbox{Cl} \, \Bigg( \bigcup_{\substack{X-T-Y \\ \mathbb{E} \, c_1(X) < \Gamma_1 \\ \mathbb{E} \, c_2(Y) < \Gamma_2 }}
\left\{(R_1, R_2)\left|
\int_{\{\theta \, | \, I(X;Z_{\theta}|YT) \le R_1\mbox{ or }
 I(Y;Z_{\theta}|XT) \le R_2 \mbox{ or } I(XY;Z_{\theta}|T) \le R_1+R_2\}}dw(\theta)\le \varepsilon\right.\right\} \Bigg).
\label{eq:1b}
\end{align}

From now on, we give the proof for an illustrative case in which $T$ takes values in $\mathcal{T} = \{1,2,3\}$, since other cases can be argued in a similar manner.
We generate two independent inputs, which are nonstationary but memoryless, 
\begin{align}
\boldsymbol{X}=\{X^n&=(X_1, X_2, \cdots, X_n)\}_{n=1}^{\infty},\nonumber \\
\boldsymbol{Y}=\{Y^n&=(Y_1, Y_2, \cdots, Y_n)\}_{n=1}^{\infty}
\end{align}
specified by
\begin{align}
P_{X_i}(x) &= \left\{\begin{array}{lll}
P_{X|T}(x|1)& \mbox{for} &1\le i\le n_1,\\
P_{X|T}(x|2)& \mbox{for} & n_1< i\le  n_2,\\
P_{X|T}(x|3)& \mbox{for} & n_2< i\le  n;
\end{array}\right. \label{eq:X_i} \\
P_{Y_i}(y) &= \left\{\begin{array}{lll}
P_{Y|T}(y|1)& \mbox{for} &1\le i\le n_1,\\
P_{Y|T}(y|2)& \mbox{for} & n_1< i\le  n_2,\\
P_{Y|T}(y|3)& \mbox{for} & n_2< i\le  n;
\end{array}\right. \label{eq:Y_i}
\end{align}
where  $n_1=\lfloor P_T(1) \, n\rfloor$, $n_2=\lfloor (P_T(1)+P_T(2)) \, n\rfloor$.
It should be noted that the input pair $(X^n, Y^n)$ does not necessarily satisfy cost constraints \eqref{eq:cost_constraint1} and \eqref{eq:cost_constraint2}.
Let  $\boldsymbol{Z}_{\theta}=\{Z_{\theta}^n=(Z_{\theta,1}, Z_{\theta,2}, \cdots, Z_{\theta,n})\}_{n=1}^{\infty}$ be the output due to input $(\boldsymbol{X}, \boldsymbol{Y})$ via channel 
$\boldsymbol{W} = \{ W_\theta \}$. Then, by virtue of the weak law of large numbers, we have the following convergence in probability:
\begin{align}
\frac{1}{n_1}\sum_{i=1}^{n_1}\log\frac{W_{\theta}(Z_{\theta,i}|X_i,Y_i)}{P_{Z_{\theta,i}|Y_i}(Z_{\theta,i}|Y_i)}
&\rightarrow I(X;Z_{\theta}|Y,T=1) \mbox{ in pr.},\label{eq:10-2}\\
\frac{1}{n_2-n_1}\sum_{i=n_1+1}^{n_2}\log\frac{W_{\theta}(Z_{\theta,i}|X_i,Y_i)}{P_{Z_{\theta,i}|Y_i}(Z_{\theta,i}|Y_i)}
&\rightarrow I(X;Z_{\theta}|Y,T=2) \mbox{ in pr.},\label{eq:11-2}\\
\frac{1}{n-n_2}\sum_{i=n_2+1}^{n}\log\frac{W_{\theta}(Z_{\theta,i}|X_i,Y_i)}{P_{Z_{\theta,i}|Y_i}(Z_{\theta,i}|Y_i)}
&\rightarrow I(X;Z_{\theta}|Y,T=3) \mbox{ in pr.}\label{eq:13-2};\\
\frac{1}{n_1}\sum_{i=1}^{n_1}\log\frac{W_{\theta}(Z_{\theta,i}|X_i,Y_i)}{P_{Z_{\theta,i}|X_i}(Z_{\theta,i}|X_i)}
&\rightarrow I(Y;Z_{\theta}|X,T=1) \mbox{ in pr.}\label{eq:14-2},\\
\frac{1}{n_2-n_1}\sum_{i=n_1+1}^{n_2}\log\frac{W_{\theta}(Z_{\theta,i}|X_i,Y_i)}{P_{Z_{\theta,i}|X_i}(Z_{\theta,i}|X_i)}
&\rightarrow I(Y;Z_{\theta}|X,T=2) \mbox{ in pr.}\label{eq:15-2},\\
\frac{1}{n-n_2}\sum_{i=n_2+1}^{n}\log\frac{W_{\theta}(Z_{\theta,i}|X_i,Y_i)}{P_{Z_{\theta,i}|X_i}(Z_{\theta,i}|X_i)}
&\rightarrow I(Y;Z_{\theta}|X,T=3) \mbox{ in pr.}\label{eq:16-2};\\
\frac{1}{n_1}\sum_{i=1}^{n_1}\log\frac{W_{\theta}(Z_{\theta,i}|X_i,Y_i)}{P_{Z_{\theta,i}}(Z_{\theta,i})}
&\rightarrow I(XY;Z_{\theta}|T=1) \mbox{ in pr.}\label{eq:17-2},\\
\frac{1}{n_2-n_1}\sum_{i=n_1+1}^{n_2}\log\frac{W_{\theta}(Z_{\theta,i}|X_i,Y_i)}{P_{Z_{\theta,i}}(Z_{\theta,i})}
&\rightarrow I(XY;Z_{\theta}|T=2) \mbox{ in pr.}\label{eq:18-2},\\
\frac{1}{n-n_2}\sum_{i=n_2+1}^{n}\log\frac{W_{\theta}(Z_{\theta,i}|X_i,Y_i)}{P_{Z_{\theta,i}}(Z_{\theta,i})}
&\rightarrow I(XY;Z_{\theta}|T=3) \mbox{ in pr.}\label{eq:19-2}
\end{align}
Then, (\ref{eq:10-2})--(\ref{eq:13-2}) yield
\begin{align}\label{eq:20-2}
& \frac{1}{n} \sum_{i=1}^{n}\log\frac{W_{\theta}(Z_{\theta,i}|X_i,Y_i)}{P_{Z_{\theta,i}|Y_i}(Z_{\theta,i}|Y_i)}  \nonumber\\
& \quad = \frac{1}{n} \sum_{i = 1}^{n_1} \log\frac{W_{\theta}(Z_{\theta,i}|X_i,Y_i)}{P_{Z_{\theta,i}|Y_i}(Z_{\theta,i}|Y_i)}  
+ \frac{1}{n} \sum_{i = n_1 + 1}^{n_2} \log\frac{W_{\theta}(Z_{\theta,i}|X_i,Y_i)}{P_{Z_{\theta,i}|Y_i}(Z_{\theta,i}|Y_i)}  
+ \frac{1}{n} \sum_{i = n_2 + 1}^{n} \log \frac{W_{\theta}(Z_{\theta,i}|X_i,Y_i)}{P_{Z_{\theta,i}|Y_i}(Z_{\theta,i}|Y_i)} \nonumber\\
& \quad \rightarrow 
 \sum_{t=1}^3 P_{T}(t) I(X; Z_{\theta}|Y,T=t) \mbox{ in pr.},
\end{align}
because
\begin{align}
\alpha_{1,n} \equiv \frac{n_1}{n}, ~~  \alpha_{2,n} \equiv \frac{n_2-n_1}{n}, ~~  \alpha_{3,n} \equiv \frac{n-n_2}{n}
\end{align}
obviously satisfy 
\begin{align}
\lim_{n\to\infty}\alpha_{t,n}=P_T(t)\quad (t=1,2,3).
\end{align}
Therefore, we obtain
\begin{align}\label{eq:23-2}
\frac{1}{n} \log\frac{W^n_{\theta}(Z^n_{\theta}|X^n,Y^n)}{P_{Z^n_{\theta}|Y^n}(Z^n_{\theta}|Y^n)}=
\frac{1}{n} \sum_{i=1}^{n}\log\frac{W_{\theta}(Z_{\theta,i}|X_i,Y_i)}{P_{Z_{\theta,i}|Y_i}(Z_{\theta,i}|Y_i)} \rightarrow 
I(X; Z_{\theta}|Y T) \mbox{ in pr.}
\end{align}
Similarly, from (\ref{eq:14-2})--(\ref{eq:19-2}) it follows that
\begin{align}
\frac{1}{n} \log\frac{W^n_{\theta}(Z^n_{\theta}|X^n,Y^n)}{P_{Z^n_{\theta}|X^n}(Z^n_{\theta}|X^n)}=
\frac{1}{n} \sum_{i=1}^{n}\log\frac{W_{\theta}(Z_{\theta,i}|X_i,Y_i)}{P_{Z_{\theta,i}|X_i}(Z_{\theta,i}|X_i)} &\rightarrow
I(Y; Z_{\theta}|X T) \mbox{ in pr.},\label{eq:24-2}\\
\frac{1}{n} \log\frac{W^n_{\theta}(Z^n_{\theta}|X^n,Y^n)}{P_{Z^n_{\theta}}(Z^n_{\theta})}=
\frac{1}{n} \sum_{i=1}^{n}\log\frac{W_{\theta}(Z_{\theta,i}|X_i,Y_i)}{P_{Z_{\theta,i}}(Z_{\theta,i})} &\rightarrow
I(XY; Z_{\theta}|T) \mbox{ in pr.} \label{eq:24-3}
\end{align}
Thus, with $\gamma>0$ small enough, we obtain
\begin{align}
 \limsup_{n\to\infty} \Pr\left\{\frac{1}{n} \log \frac{W_{\theta}^n(Z_{\theta}^n|X^n, Y^n)}{P_{Z_{\theta}^n|Y^n}(Z_{\theta}^n|Y^n)}\le R_1
+2\gamma\right\} & =\left\{\begin{array}{ll}1& \mbox{ for } R_1 \ge I(X;Z_{\theta}|YT),\\
0&  \mbox{ for } R_1<I(X;Z_{\theta}|YT); \end{array}\right.\label{eq:16}\\
\limsup_{n\to\infty}\Pr\left\{\frac{1}{n} \log \frac{W_{\theta}^n(Z_{\theta}^n|X^n, Y^n)}{P_{Z_{\theta}^n|X^n}(Z_{\theta}^n|X^n)}\le R_2
 +2\gamma\right\} & =\left\{\begin{array}{ll}1& \mbox{ for } R_2 \ge I(Y;Z_{\theta}|XT),\\
0&  \mbox{ for } R_2<I(Y;Z_{\theta}|XT);\end{array}\right.\label{eq:17}\\
\limsup_{n\to\infty}\Pr\left\{\frac{1}{n} \log \frac{W_{\theta}^n(Z_{\theta}^n|X^n, Y^n)}{P_{Z_{\theta}^n}(Z_{\theta}^n)}\le R_1+R_2
 +2\gamma\right\} & =\left\{\begin{array}{ll}1& \mbox{ for } R_1+R_2 \ge I(XY;Z_{\theta}|T),\\
0&  \mbox{ for } R_1+R_2<I(XY;Z_{\theta}|T).\end{array}\right.
\label{eq:18}
\end{align}

Now, set
\begin{align}
\Theta_1 &\equiv \{\theta \, | \, I(X;Z_{\theta}|YT)\le R_1\mbox{ or }
 I(Y;Z_{\theta}|XT)\le R_2 \mbox{ or } I(XY;Z_{\theta}|T)\le R_1+R_2\},\label{eq:20}\\
\Theta_2 &\equiv \Theta_1^c=\{\theta \, | \, I(X;Z_{\theta}|YT)> R_1\mbox{ and }
 I(Y;Z_{\theta}|XT)> R_2 \mbox{ and } I(XY;Z_{\theta}|T)>R_1+R_2\}.\label{eq:21}
\end{align}
By means of Lemma \ref{hodai:1},  (\ref{eq:6}), (\ref{eq:7}) and Fatou's lemma,  it follows  from (\ref{eq:16})--(\ref{eq:18}) that 
\begin{align}
J(R_1,R_2|\boldsymbol{X}, \boldsymbol{Y}) &= \limsup_{n\to \infty}\int_{\Theta}dw(\theta)F_{\theta,n}(R_1, R_2 \midd X^n, Y^n) \nonumber \\
&\le \limsup_{n\to \infty}\int_{\Theta^*_n}dw(\theta)F_{\theta,n}(R_1, R_2 \midd X^n, Y^n) 
+\limsup_{n\to \infty}\int_{\Theta-\Theta^*_n}dw(\theta)\nonumber \\
&\le \limsup_{n\to \infty}\int_{\Theta^*_n}dw(\theta)
\left[
\begin{array}{l}
\Pr\left\{\frac{1}{n} \log \frac{W_{\theta}^n(Z_{\theta}^n|X^n, Y^n)}{P_{Z_{\theta}^n|Y^n}(Z_{\theta}^n|Y^n)}\le R_1
+2\gamma\right.\nonumber\\
 \qquad  \text{or } \frac{1}{n} \log \frac{W_{\theta}^n(Z_{\theta}^n|X^n, Y^n)}{P_{Z_{\theta}^n|X^n}(Z_{\theta}^n|X^n)}\le R_2
+2\gamma\nonumber \\
  \qquad  \text{or } \left.\frac{1}{n} \log \frac{W_{\theta}^n(Z_{\theta}^n|X^n, Y^n)}{P_{Z_{\theta}^n}(Z_{\theta}^n)}\le R_1+R_2
+2\gamma\right\}
\end{array}\right]\nonumber \\
&\le \limsup_{n\to \infty}\int_{\Theta}dw(\theta)
\left[
\begin{array}{l}
\Pr\left\{\frac{1}{n} \log \frac{W_{\theta}^n(Z_{\theta}^n|X^n, Y^n)}{P_{Z_{\theta}^n|Y^n}(Z_{\theta}^n|Y^n)}\le R_1
+2\gamma\right.\nonumber\\
\qquad \text{or } \frac{1}{n} \log \frac{W_{\theta}^n(Z_{\theta}^n|X^n, Y^n)}{P_{Z_{\theta}^n|X^n}(Z_{\theta}^n|X^n)}\le R_2
+2\gamma\nonumber \\
\qquad \text{or } \left.\frac{1}{n} \log \frac{W_{\theta}^n(Z_{\theta}^n|X^n, Y^n)}{P_{Z_{\theta}^n}(Z_{\theta}^n)}\le R_1+R_2
+2\gamma\right\}
\end{array}\right]\nonumber \\
&\le \int_{\Theta}dw(\theta)\limsup_{n\to \infty}
\left[
\begin{array}{l}
\Pr\left\{\frac{1}{n} \log \frac{W_{\theta}^n(Z_{\theta}^n|X^n, Y^n)}{P_{Z_{\theta}^n|Y^n}(Z_{\theta}^n|Y^n)}\le R_1
+2\gamma\right.\nonumber\\
\qquad \text{or } \frac{1}{n} \log \frac{W_{\theta}^n(Z_{\theta}^n|X^n, Y^n)}{P_{Z_{\theta}^n|X^n}(Z_{\theta}^n|X^n)}\le R_2
+2\gamma\nonumber \\
\qquad \text{or } \left.\frac{1}{n} \log \frac{W_{\theta}^n(Z_{\theta}^n|X^n, Y^n)}{P_{Z_{\theta}^n}(Z_{\theta}^n)}\le R_1+R_2
+2\gamma\right\}
\end{array}\right]\nonumber \\
&\le \int_{\Theta_1} 1 \, dw(\theta)+ \int_{\Theta_2} 0 \, dw(\theta)
\nonumber \\
&= 
\int_{\{\theta \, | \, I(X;Z_{\theta}|YT)\le R_1\mbox{ or }
 I(Y;Z_{\theta}|XT)\le R_2 \mbox{ or } I(XY;Z_{\theta}|T)\le R_1+R_2\}}dw(\theta) \nonumber\\
& \le \varepsilon, \label{eq:22}
\end{align}
 where we noticed that $\limsup_{n\to \infty}\int_{\Theta-\Theta^*_n}dw(\theta)=0$ and the last inequality is due to \eqref{eq:prob_bound}.

Since $(\boldsymbol{X}, \boldsymbol{Y})$ does not necessarily satisfy cost constraints \eqref{eq:cost_constraint1} and \eqref{eq:cost_constraint2}, we will construct another input pair $(\overline{\boldsymbol{X}}, \overline{\boldsymbol{Y}})$ satisfying cost constraint $(\Gamma_1, \Gamma_2)$.
First, let $A_n^{(1)} \subseteq \mathcal{X}^n$ and $A_n^{(2)} \subseteq \mathcal{Y}^n$ be the subsets given by
\begin{align}
    A_n^{(1)} &\equiv \left\{ \boldsymbol{x} \in \mathcal{X}^n \, \Big|  \, \frac{1}{n} \sum_{i=1}^n c_1(x_i)  \le \Gamma_1 \right\}, \\
    A_n^{(2)} &\equiv \left\{ \boldsymbol{y} \in \mathcal{Y}^n \, \Big| \, \frac{1}{n} \sum_{i=1}^n c_2(y_i)  \le \Gamma_2 \right\}.
\end{align}
Next, we set the pmfs of $\overline{X}^n$ and $\overline{Y}^n$ as
\begin{align}
    P_{\overline{X}^n}(\boldsymbol{x}) &= \left\{ 
    \begin{array}{cl}
    \frac{1}{\gamma_n^{(1)}}P_{X^n} (\boldsymbol{x}) & \text{for~} \boldsymbol{x} \in A_n^{(1)}, \\ 
    0 & \text{otherwise}; 
    \end{array} 
    \right. \label{eq:new_X} \\
    P_{\overline{Y}^n}(\boldsymbol{y}) &= \left\{ 
    \begin{array}{cl}
    \frac{1}{\gamma_n^{(2)}}P_{Y^n} (\boldsymbol{y}) & \text{for~} \boldsymbol{y} \in A_n^{(2)}, \\ 
    0 & \text{otherwise};
    \end{array}
    \right.  \label{eq:new_Y}
\end{align}
where
\begin{align}
    \gamma_n^{(1)} &\equiv \Pr \left\{ X^n \in A_n^{(1)} \right\}, \\
    \gamma_n^{(2)} &\equiv \Pr \left\{ Y^n \in A_n^{(2)} \right\}.
\end{align}
In view of \eqref{eq:X_i} and \eqref{eq:Y_i}, cost constraints \eqref{eq:cost_constraint3a} and \eqref{eq:cost_constraint3b} imply, for all $n \ge n_0$,
\begin{align}
    \frac{1}{n} \sum_{i = 1}^n \mathbb{E} \, c_1( X_i) &\le \Gamma_1 - \delta, \\
    \frac{1}{n} \sum_{i = 1}^n \mathbb{E} \, c_2( Y_i) &\le \Gamma_2 - \delta,
\end{align}
respectively, and therefore the weak law of large numbers yields that $\gamma_n^{(\alpha)} \to 1$ as $n \to \infty$ for both $\alpha = 1, 2$.

Let $\theta \in \Theta$ be arbitrarily fixed. Let $\overline{Z}_\theta^n$ be the output via $W_\theta^n$ due to input $(\overline{X}^n, \overline{Y}^n)$.
Moreover, let $\overline{Z}^n$ be the output via $W^n$ due to $(\overline{X}^n, \overline{Y}^n)$.
Based on \eqref{eq:new_X} and \eqref{eq:new_Y}, it can be easily verified that
\begin{align}
    P_{Z^n | Y^n} (\boldsymbol{z} | \boldsymbol{y}) & = \sum_{\boldsymbol{x} \in \mathcal{X}^n} W^n(\boldsymbol{z} | \boldsymbol{x}, \boldsymbol{y}) P_{X^n}(\boldsymbol{x}) \nonumber\\
    & \ge \sum_{\boldsymbol{x} \in A_n^{(1)}} W^n(\boldsymbol{z} | \boldsymbol{x}, \boldsymbol{y}) P_{X^n}(\boldsymbol{x}) \nonumber\\
    & = \gamma_n^{(1)} \sum_{\boldsymbol{x} \in A_n^{(1)}} W^n(\boldsymbol{z} | \boldsymbol{x}, \boldsymbol{y}) P_{\overline{X}^n}(\boldsymbol{x}) \nonumber\\
    & = \gamma_n^{(1)}  P_{\overline{Z}^n | \overline{Y}^n} (\boldsymbol{z} | \boldsymbol{y}), \label{eq:rel1}
\end{align}
and analogously
\begin{align}
    P_{Z^n | X^n} (\boldsymbol{z} | \boldsymbol{x}) & \ge \gamma_n^{(2)}  P_{\overline{Z}^n | \overline{X}^n} (\boldsymbol{z} | \boldsymbol{x}),   \label{eq:rel2} \\
    P_{Z^n } (\boldsymbol{z} ) & \ge \gamma_n^{(1)}  \gamma_n^{(2)} P_{\overline{Z}^n} (\boldsymbol{z}) \label{eq:rel3}
\end{align}
for all $( \boldsymbol{x}, \boldsymbol{y}, \boldsymbol{z}) \in \mathcal{X}^n \times \mathcal{Y}^n \times \mathcal{Z}^n$. 
Equations \eqref{eq:rel1}--\eqref{eq:rel3} imply that
\begin{align}
 &F_{\theta, n}(R_1, R_2 \midd \overline{X}^n, \overline{Y}^n) \nonumber\\
 &\quad = \Pr\left\{\frac{1}{n} \log \frac{W^n(\overline{Z}_{\theta}^n|\overline{X}^n, \overline{Y}^n)}{P_{\overline{Z}^n|\overline{Y}^n}(\overline{Z}_{\theta}^n|\overline{Y}^n)}\le R_1 \right. \mbox{ or } \frac{1}{n} \log \frac{W^n(\overline{Z}_{\theta}^n|\overline{X}^n, \overline{Y}^n)}{P_{\overline{Z}^n|\overline{X}^n}(\overline{Z}_{\theta}^n|\overline{X}^n)}\le R_2 \nonumber \\
&\qquad \qquad \quad  \text{or } \left.\frac{1}{n} \log \frac{W^n(\overline{Z}_{\theta}^n|\overline{X}^n, \overline{Y}^n)}{P_{\overline{Z}^n}(\overline{Z}_{\theta}^n)}\le R_1+R_2 \right\} \nonumber\\
& \quad \le  \Pr\left\{\frac{1}{n} \log \frac{W^n(\overline{Z}_{\theta}^n|\overline{X}^n, \overline{Y}^n)}{P_{Z^n|Y^n}(\overline{Z}_{\theta}^n|\overline{Y}^n)}\le R_1 + \frac{1}{n} \log \frac{1}{\gamma_n^{(1)}} \right. \mbox{ or } \frac{1}{n} \log \frac{W^n(\overline{Z}_{\theta}^n|\overline{X}^n, \overline{Y}^n)}{P_{Z^n|X^n}(\overline{Z}_{\theta}^n|\overline{X}^n)}\le R_2 + \frac{1}{n} \log \frac{1}{\gamma_n^{(2)}} \nonumber \\
& \quad \qquad \quad \text{or } \left.\frac{1}{n} \log \frac{W^n(\overline{Z}_{\theta}^n|\overline{X}^n, \overline{Y}^n)}{P_{Z^n}(\overline{Z}_{\theta}^n)}\le R_1+R_2+ \frac{1}{n} \log \frac{1}{\gamma_n^{(1)} \gamma_n^{(2)}} \right\}. \label{eq:direct_rel}
\end{align}
On the other hand, since it holds that
\begin{align}
    \gamma_n^{(1)}\gamma_n^{(2)} P_{\overline{X}^n} (\boldsymbol{x}) P_{\overline{Y}^n} (\boldsymbol{y})  & \le P_{X^n} (\boldsymbol{x}) P_{Y^n} (\boldsymbol{y}),   \label{eq:opposite_rel}
\end{align}
we also have
\begin{align}
& \gamma_n^{(1)}\gamma_n^{(2)} \Pr\left\{\frac{1}{n} \log \frac{W^n(\overline{Z}_{\theta}^n|\overline{X}^n, \overline{Y}^n)}{P_{Z^n|Y^n}(\overline{Z}_{\theta}^n|\overline{Y}^n)}\le \beta_1 \right. \mbox{ or } \frac{1}{n} \log \frac{W^n(\overline{Z}_{\theta}^n|\overline{X}^n, \overline{Y}^n)}{P_{Z^n|X^n}(\overline{Z}_{\theta}^n|\overline{X}^n)}\le \beta_2 \nonumber \\
& \qquad \qquad \qquad \text{or } \left.\frac{1}{n} \log \frac{W^n(\overline{Z}_{\theta}^n|\overline{X}^n, \overline{Y}^n)}{P_{Z^n}(\overline{Z}_{\theta}^n)}\le \beta_3 \right\} \nonumber\\
& \quad \le \Pr\left\{\frac{1}{n} \log \frac{W^n(Z_{\theta}^n|X^n, Y^n)}{P_{Z^n|Y^n}(Z_{\theta}^n|Y^n)}\le \beta_1 \right. \mbox{ or } \frac{1}{n} \log \frac{W^n(Z_{\theta}^n|X^n, Y^n)}{P_{Z^n|X^n}(Z_{\theta}^n|X^n)}\le \beta_2 \nonumber \\
&\quad \qquad \qquad \text{or } \left.\frac{1}{n} \log \frac{W^n(Z_{\theta}^n|X^n, Y^n)}{P_{Z^n}(Z_{\theta}^n)}\le \beta_3 \right\} \label{eq:opposite_rel2}
\end{align}
for all $(\beta_1, \beta_2, \beta_3) \in \mathbb{R}^3$. 
Thus, plugging 
\begin{align}
    \beta_1 &= R_1 + \frac{1}{n} \log \frac{1}{\gamma_n^{(1)}}, \\
    \beta_2 &= R_2 + \frac{1}{n} \log \frac{1}{\gamma_n^{(2)}}, \\
    \beta_3 &= R_1 + R_2 + \frac{1}{n} \log \frac{1}{\gamma_n^{(1)}\gamma_n^{(2)}}
\end{align}
into \eqref{eq:opposite_rel2} and using \eqref{eq:direct_rel}, we obtain
\begin{align}
  &\gamma_n^{(1)} \gamma_n^{(2)} F_{\theta, n}(R_1, R_2 \midd \overline{X}^n, \overline{Y}^n) \\
& \quad \le \Pr\left\{\frac{1}{n} \log \frac{W^n(Z_{\theta}^n|X^n, Y^n)}{P_{Z^n|Y^n}(Z_{\theta}^n|Y^n)}\le R_1 + \frac{1}{n} \log \frac{1}{\gamma_n^{(1)}} \right. \mbox{ or } \frac{1}{n} \log \frac{W^n(Z_{\theta}^n|X^n, Y^n)}{P_{Z^n|X^n}(Z_{\theta}^n|X^n)}\le R_2 + \frac{1}{n} \log \frac{1}{\gamma_n^{(2)}} \nonumber \\
&\quad \qquad \qquad \text{or } \left. \frac{1}{n} \log \frac{W^n(Z_{\theta}^n|X^n, Y^n)}{P_{Z^n}(Z_{\theta}^n)}\le R_1 + R_2 + \frac{1}{n} \log \frac{1}{\gamma_n^{(1)}\gamma_n^{(2)}} \right\} \nonumber\\
&\quad = F_{\theta, n} \left(R_1 + \textstyle  \frac{1}{n} \log \frac{1}{\gamma_n^{(1)}}, R_2 + \frac{1}{n} \log \frac{1}{\gamma_n^{(2)}} \Big| X^n, Y^n \right) \qquad \qquad \qquad (\forall \theta \in \Theta),
\end{align}
which means that
\begin{align}
    J( R_1, R_2 \midd \overline{\boldsymbol{X}}, \overline{\boldsymbol{Y}}) \le J( R_1, R_2 \midd \boldsymbol{X}, \boldsymbol{Y}) \label{eq:25}
\end{align}
due to the fact that $\lim_{n \to \infty} \gamma_n^{(\alpha)} = 1$ for both $\alpha = 1, 2$.
Combining \eqref{eq:22} and \eqref{eq:25} yields
\begin{align}
    J( R_1, R_2 \midd \overline{\boldsymbol{X}}, \overline{\boldsymbol{Y}}) \le \varepsilon. \label{eq:26}
\end{align}
Since $(\overline{\boldsymbol{X}}, \overline{\boldsymbol{Y}}) \in \mathcal{S}_{\Gamma_1, \Gamma_2}$, we obtain \eqref{eq:target}.
Thus, it was shown that any pair $(R_1, R_2)$ in the set of \eqref{eq:target_set} is $\varepsilon$-achievable.

\medskip
\noindent
(b) ~~ In order to establish \eqref{eq:1}, we need to show that
\begin{align}
&  \mbox{Cl} \, \Bigg( \bigcup_{\substack{X-T-Y \\ \mathbb{E} \, c_1(X) < \Gamma_1 \\ \mathbb{E} \, c_2(Y) < \Gamma_2 }}
\left\{(R_1, R_2)\left|
\int_{\{\theta \, | \, I(X;Z_{\theta}|YT)\le R_1\mbox{ or }
 I(Y;Z_{\theta}|XT)\le R_2 \mbox{ or } I(XY;Z_{\theta}|T)\le R_1+R_2\}}dw(\theta)\le \varepsilon\right.\right\} \Bigg) \nonumber\\
 &~ \supseteq\mbox{Cl} \, \Bigg( \bigcup_{\substack{X-T-Y \\ \mathbb{E} \, c_1(X) \le \Gamma_1 \\ \mathbb{E} \, c_2(Y) \le \Gamma_2 }}
\left\{(R_1, R_2)\left|
\int_{\{\theta \, | \, I(X;Z_{\theta}|YT)\le R_1\mbox{ or }
 I(Y;Z_{\theta}|XT)\le R_2 \mbox{ or } I(XY;Z_{\theta}|T)\le R_1+R_2\}}dw(\theta)\le \varepsilon\right.\right\} \Bigg), 
\label{eq:target2}
\end{align}
where it should be noted that this implication actually means ``equality."

To this end, let $\mathcal{P}_0$ denote the set of all probability vectors $(P_{X|T}, P_{Y|T})$ (with $P_T$ fixed) such that $\mathbb{E} \, c_1(X) < \Gamma_1$ and $\mathbb{E} \, c_2(Y) < \Gamma_2$, and similarly let $\mathcal{P}_1$ denote the set of all probability vectors $(P_{X|T}, P_{Y|T})$ such that $\mathbb{E} \, c_1(X) \le \Gamma_1$ and $\mathbb{E} \, c_2(Y) \le \Gamma_2$.
It is obvious that $\mathcal{P}_0 \subseteq \mathcal{P}_1$.
Then, since $\mathbb{E} \, c_1(X) $ and $\mathbb{E} \, c_2(Y)$ are linear in $P_{X|T}$ and $P_{Y|T}$, respectively, we see that 
\begin{align}
    \mbox{Cl} \, (\mathcal{P}_0) = \mathcal{P}_1. \label{eq:27}
\end{align}
Therefore, for any small $\delta >0$ and $X - T -Y$ such that $\mathbb{E} \, c_1(X) \le \Gamma_1$ and $\mathbb{E} \, c_2(Y) \le \Gamma_2$,
there exists $X_\delta - T - Y_\delta$ such that $\mathbb{E} \, c_1(X_\delta) < \Gamma_1, \mathbb{E} \, c_2(Y_\delta) < \Gamma_2$ and
\begin{align}
    d \big(P_{X|T}P_{Y|T}, P_{X_\delta|T}P_{Y_\delta|T} \big) \le \delta\label{eq:27b}
\end{align}
for any given $T = t \in \mathcal{T}$, where $d (\cdot, \cdot)$ denotes the variational distance defined as 
\begin{equation}\label{eq:var-distance}
  d(P_U,P_{\tilde{U}}) = \frac{1}{2}\sum_{u \in \mathcal{U}}\bigl|\,P_U(u)-P_{\tilde{U}}(u)\,\bigr|
\end{equation}
for pmfs $P_U$ and $P_{\tilde{U}}$ on alphabet $\mathcal{U}$.
On the other hand, letting $Z_{\theta, \delta}$ be the output via channel $W_\theta$ due to input $(X_\delta, Y_\delta)$, we have 
\begin{align}
    P_{Z_{\theta, \delta} | X_\delta Y_\delta} = W_\theta = P_{Z_\theta | X Y}, \label{eq:27c}
\end{align}
so that, for all $\theta \in \Theta$,
\begin{align}
    d \big( P_{Z_\theta X Y |T}, P_{Z_{\theta, \delta} X_\delta Y_\delta |T} \big) &= d \big( P_{Z_\theta | X Y} P_{XY|T}, P_{Z_{\theta, \delta} | X_\delta Y_\delta} P_{X_\delta Y_\delta | T} \big) \nonumber\\
    &= d \big( P_{XY|T}, P_{X_\delta Y_\delta | T} \big) \nonumber\\
    &= d \big(P_{X|T}P_{Y|T}, P_{X_\delta|T}P_{Y_\delta|T} \big) \le \delta. \label{eq:27d}
\end{align}

Now, we need the following lemma:
\begin{lemma}[Zhang \cite{Zhang2007}] \label{lem:MI_difference}
Let $(U, V)$ and $(\tilde{U}, \tilde{V})$ be pairs of random variables with pmfs $P_{UV}$ and $P_{\tilde{U}\tilde{V}}$ on the same finite alphabet $\mathcal{U} \times \mathcal{V}$, and the variational distance between $P_{UV}$ and $P_{\tilde{U} \tilde{V}}$ be $\delta = d(P_{UV}, P_{\tilde{U}\tilde{V}})$. If $\delta \le 1 - \frac{1} {|\mathcal{U}||\mathcal{V}|}$, then the difference of their mutual informations is bounded as
\begin{align}
\bigl| I(U; V) - I(\tilde{U}; \tilde{V}) \bigr| \le 3 \delta \log (|\mathcal{U}||\mathcal{V}| -1) + 3 h(\delta),
\end{align}
where $h(\cdot)$ denotes the binary entropy function. 
\qed
\end{lemma}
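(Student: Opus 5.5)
The plan is to reduce the mutual-information bound to a sharp continuity bound for Shannon entropy, applied three times. I will write
\begin{align}
I(U;V) = H(U) + H(V) - H(UV), \nonumber
\end{align}
and similarly for $(\tilde U,\tilde V)$. The triangle inequality then gives
\begin{align}
|I(U;V)-I(\tilde U;\tilde V)| \le |H(U)-H(\tilde U)| + |H(V)-H(\tilde V)| + |H(UV)-H(\tilde U\tilde V)|. \nonumber
\end{align}
Each of the three terms will be bounded by $\delta\log(|\mathcal{U}||\mathcal{V}|-1)+h(\delta)$, which yields the factor $3$ in the statement.

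The key tool is the sharp (Fannes--Audenaert/Zhang type) entropy continuity inequality. For pmfs $P,Q$ on a finite alphabet of size $K$ with $d(P,Q)=\eta\le 1-1/K$,
\begin{align}
|H(P)-H(Q)|\le \eta\log(K-1)+h(\eta). \nonumber
\end{align}
I would prove it by a coupling argument. By the maximal coupling there is a pair $(A,B)$ with marginals $P,Q$ and $\Pr\{A\ne B\}=\eta$. Fano's inequality then gives
\begin{align}
H(A|B)\le h(\eta)+\eta\log(K-1), \nonumber
\end{align}
and the same bound holds for $H(B|A)$. Finally, $H(A)-H(B)=H(A|B)-H(B|A)$, so $|H(A)-H(B)|\le\max\{H(A|B),H(B|A)\}$, which is the claimed bound. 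Applied with $K=|\mathcal{U}||\mathcal{V}|$ and $\eta=\delta$, this bounds the joint-entropy term directly under the hypothesis $\delta\le 1-1/(|\mathcal{U}||\mathcal{V}|)$.

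For the marginal terms, I first note that marginalization does not increase variational distance, so $\delta_U\equiv d(P_U,P_{\tilde U})\le\delta$ and likewise $\delta_V\le\delta$. Next I introduce
\begin{align}
g_K(\eta)=\begin{cases}\eta\log(K-1)+h(\eta), & \eta\le 1-1/K,\\ \log K, & \eta> 1-1/K.\end{cases} \nonumber
\end{align}
The entropy bound can then be stated as $|H(P)-H(Q)|\le g_K(d(P,Q))$ for every $d(P,Q)\in[0,1]$. In the second case this holds trivially, since both entropies lie in $[0,\log K]$.

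The main obstacle is this monotonicity step: I need $g_K$ nondecreasing in $\eta$ and nondecreasing in $K$. In $\eta$, the derivative of $\eta\log(K-1)+h(\eta)$ is $\log(K-1)+\log\frac{1-\eta}{\eta}$, which is nonnegative exactly when $\eta\le 1-1/K$. At $\eta=1-1/K$ the value equals $\log K$, so $g_K$ is continuous and constant afterwards. In $K$, the first branch clearly grows with $K$ at fixed $\eta$. A point with $\eta$ past the threshold for a smaller $K$ but not for a larger $K'$ also causes no trouble: there $g_{K'}(\eta)\ge g_{K'}(1-1/K)\ge g_K(1-1/K)=\log K$. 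Hence
\begin{align}
|H(U)-H(\tilde U)|\le g_{|\mathcal{U}|}(\delta_U)\le g_{|\mathcal{U}||\mathcal{V}|}(\delta)=\delta\log(|\mathcal{U}||\mathcal{V}|-1)+h(\delta), \nonumber
\end{align}
and the same holds for $V$. Summing the three bounds gives the lemma.
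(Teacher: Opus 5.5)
Your proof is correct. The paper itself gives no argument for this lemma; it is quoted from Zhang's paper, so the comparison is between a citation and your self-contained derivation.

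Your route is the natural one. You decompose $I(U;V)=H(U)+H(V)-H(UV)$ and bound each entropy gap by the sharp continuity bound obtained from a maximal coupling plus Fano's inequality. You then push the two marginal bounds up to the joint alphabet. This accounts exactly for the factor $3$ and for the joint alphabet size $|\mathcal{U}||\mathcal{V}|$ appearing in all three terms.

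The only non-routine step is the one you isolated. The coupling/Fano bound $\eta\log(K-1)+h(\eta)$ in fact holds for every $\eta\in[0,1]$, but it decreases beyond $1-1/K$. So it is the capped envelope $g_K$, nondecreasing in both $\eta$ and $K$, that legitimately converts $\delta_U\le\delta$ on $\mathcal{U}$ into a bound in terms of $\delta$ on $\mathcal{U}\times\mathcal{V}$. This matters because $\delta_U$ may well exceed $1-1/|\mathcal{U}|$ even when $\delta\le 1-1/(|\mathcal{U}||\mathcal{V}|)$. Your three-case check of the monotonicity is complete.

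One thing you left implicit is the degenerate case $|\mathcal{U}|=1$ or $|\mathcal{V}|=1$, where $\log(|\mathcal{U}|-1)$ is undefined. There the corresponding marginal entropies vanish and that term is zero, so nothing breaks.

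What the citation buys the paper is brevity. What your derivation adds is a self-contained proof. It also shows that the hypothesis $\delta\le 1-1/(|\mathcal{U}||\mathcal{V}|)$ enters only through identifying $g_{|\mathcal{U}||\mathcal{V}|}(\delta)$ with $\delta\log(|\mathcal{U}||\mathcal{V}|-1)+h(\delta)$.
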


Therefore, by virtue of Lemma \ref{lem:MI_difference}, with some $\tau(\delta) > 0$ and for any $\theta \in \Theta$,
\begin{align}
    I(X; Z_\theta|Y T) & \le I(X_\delta; Z_{\theta, \delta} |Y_\delta T) + \tau(\delta), \label{eq:27e} \\
    I(Y; Z_\theta|X T) & \le I(Y_\delta; Z_{\theta, \delta} |X_\delta T) + \tau(\delta), \\
    I(X Y ; Z_\theta| T) & \le I(X_\delta Y_\delta; Z_{\theta, \delta} | T) + 2 \tau(\delta), \label{eq:27f}
\end{align}
where $\tau(\delta) \to 0$ as $\delta \to 0$, and it should be noted that $\tau(\delta)$ can be chosen so as to be independent of $\theta \in \Theta$, owing to \eqref{eq:27d}.

On the other hand, suppose that a rate pair $(R_1, R_2)$ is included in the set
\begin{align}
   \left\{(R_1, R_2)\left|
\int_{\{\theta \, | \, I(X;Z_{\theta}|YT)\le R_1\mbox{ or }
 I(Y;Z_{\theta}|XT)\le R_2 \mbox{ or } I(XY;Z_{\theta}|T)\le R_1+R_2\}}dw(\theta) \le \varepsilon\right.\right\},
\end{align}
then it follows that
\begin{align}
    \varepsilon &\ge \int_{\{\theta \, | \, I(X;Z_{\theta}|YT)\le R_1\mbox{ or }
 I(Y;Z_{\theta}|XT)\le R_2 \mbox{ or } I(XY;Z_{\theta}|T)\le R_1+R_2\}}dw(\theta) \nonumber\\
 &\ge \int_{\{\theta \, | \, I(X_\delta; Z_{\theta, \delta} |Y_\delta T) + \tau(\delta) \le R_1 \mbox{ or }
 I(Y_\delta; Z_{\theta, \delta} |X_\delta T) + \tau(\delta) \le R_2 \mbox{ or } I(X_\delta Y_\delta; Z_{\theta, \delta} | T) + 2 \tau(\delta) \le R_1 + R_2 \}}dw(\theta) \nonumber\\
 &=  \int_{\{\theta \, | \, I(X_\delta; Z_{\theta, \delta} |Y_\delta T) \le R_1 - \tau(\delta) \mbox{ or }
 I(Y_\delta; Z_{\theta, \delta} |X_\delta T) \le R_2 - \tau(\delta) \mbox{ or } I(X_\delta Y_\delta; Z_{\theta, \delta} | T)  \le R_1 + R_2 - 2 \tau(\delta) \}}dw(\theta),
\end{align}
which implies that the rate pair $\big( R_1 - \tau(\delta), R_2 - \tau(\delta) \big)$ is included in the set as in \eqref{eq:target_set}. 
Thus, taking account that $\tau(\delta)$ can be arbitrarily small, we conclude that the implication in \eqref{eq:target2} holds, which together with \eqref{eq:1b} establishes \eqref{eq:1}.
\qed

\medskip
Thus far, we have derived the single-letter $\varepsilon$-inner bound in Theorem \ref{thm:inner_bound} from the general Theorem \ref{thm:general_formula}.
We are now interested in deriving a single-letter $\varepsilon$-outer region from Theorem \ref{thm:general_formula}. However, it seems to be very hard (see Remark \ref{rem:epsilon-converse}).
On the other hand, we can show Theorem \ref{thm:inner_bound} is indeed tight if we confine ourselves to within MACs with $\varepsilon=0$, which is given in the next section.

%========================================================
%===================== Section 5 ========================
%========================================================
\section{Capacity Region for Mixed MAC} \label{sec:P1}

We continue to assume that all alphabets $(\mathcal{X},\mathcal{Y}, \mathcal{Z})$ are \emph{finite} and that cost functions $c_1$ and $c_2$ are additive.
We establish the formula for the 0-capacity region for the mixed memoryless MAC as
\begin{theorem} \label{thm:0-capacity_region}
    For a finite-alphabet mixed memoryless MAC $\boldsymbol{W}$ with components $\boldsymbol{W}_\theta = \{W_\theta\}$, the 0-capacity region with additive cost constraint $(\Gamma_1, \Gamma_2)$ is given by
    \begin{align} \label{eq:0-capacity_region}
        C_{\Gamma_1, \Gamma_2}(0 \midd \boldsymbol{W}) = \mathrm{Cl} \Bigg( \bigcup_{\substack{X-T-Y \\ \mathbb{E} \, c_1(X) \le \Gamma_1 \\ \mathbb{E} \, c_2(Y) \le \Gamma_2 }}  \{ (R_1, R_2) \, | ~ &0 \le R_1 \le w\text{-}\mathrm{ess.inf} \, I(X; Z_\theta \midd Y \, T) \nonumber \\[-14pt]
& 0 \le R_2 \le  w\text{-}\mathrm{ess.inf} \, I(Y; Z_\theta \midd X \,T) \nonumber \\
&R_1 + R_2 \le  w\text{-}\mathrm{ess.inf} \, I(X \, Y; Z_\theta \midd T) \} \Bigg),
\end{align}
where $w\text{-}\mathrm{ess.inf}$ stands for the essential infimum with respect to measure $w(\theta)$; $T$ is a time-sharing random variable whose alphabet $\mathcal{T}$ is \emph{finite} but of arbitrary size. Moreover, the union is taken over all $(X,Y,T)$ that satisfies the Markov condition $X - T - Y$, where $X$ and $Y$ take values in $\mathcal{X}$ and $\mathcal{Y}$, respectively, and satisfies cost constraints $\mathbb{E} \, c_1(X) \le \Gamma_1, \mathbb{E} \, c_2(Y) \le \Gamma_2$.
\qed
\end{theorem}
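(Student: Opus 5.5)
The proof splits into a direct part, which follows from Theorem \ref{thm:inner_bound} with $\varepsilon=0$, and a converse part, which goes through Corollary \ref{cor:general_formula2}, a Fano-type bound on a sub-mixture of components, and a compactness argument over finitely supported time-sharing distributions.

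\emph{Direct part.} I would apply Theorem \ref{thm:inner_bound} with $\varepsilon=0$ in the equivalent strict-inequality form \eqref{eq:cap_IB} of Remark \ref{remark:inner_bound2}. Fix $X-T-Y$ with the cost constraints. Any $(R_1,R_2)$ with $R_1< w\text{-ess.inf}\, I(X;Z_\theta|YT)$, $R_2< w\text{-ess.inf}\, I(Y;Z_\theta|XT)$ and $R_1+R_2< w\text{-ess.inf}\, I(XY;Z_\theta|T)$ makes the set $\{\theta \mid I(X;Z_\theta|YT)<R_1 \text{ or } \cdots\}$ $w$-null, so its integral is $0\le\varepsilon$. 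Taking closures gives the inclusion ``$\supseteq$'' in \eqref{eq:0-capacity_region}.

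\emph{Converse part.} Let $(R_1,R_2)$ be $0$-achievable by codes $C_n$ with $\varepsilon_n\to 0$. Let $X^n,Y^n$ be uniform on the codebooks; they satisfy the cost constraints. Fix any measurable $A\subseteq\Theta$ with $w(A)>0$, and let $W_A^n$ be the mixture of $W_\theta^n$ over $A$ with the normalized measure $w|_A/w(A)$. The same code has error at most $\varepsilon_n/w(A)\to 0$ on $W_A^n$. By Fano's inequality for the MAC and the independence of $\theta$ from the inputs,
\begin{align}
nR_1 - o(n) \le I(X^n;Z_A^n|Y^n) \le I(X^n;Z_A^n\theta|Y^n) = \frac{1}{w(A)}\int_A I(X^n;Z_\theta^n|Y^n)\,dw(\theta), \nonumber
\end{align}
and similarly for $R_2$ and for $R_1+R_2$. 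Memorylessness of each $W_\theta$ gives the usual single-letterization $I(X^n;Z^n_\theta|Y^n)\le \sum_i I(X_i;Z_{\theta,i}|Y_i)=n\,I_\theta(X;Z|YT_n)$, with $T_n$ uniform on $[1:n]$. The Markov chain $X-T_n-Y$ holds by independence of the codebooks, and additivity of the costs gives $\mathbb{E}\,c_\alpha\le\Gamma_\alpha$. Hence each rate (or rate sum) is at most the $w|_A$-average of the corresponding single-letter quantity evaluated at $P_n\equiv P_{XYT_n}$, up to $o(1)$.

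\emph{Main obstacle.} The set $A$ must have $w(A)$ bounded away from $0$ uniformly in $n$, while $P_n$ varies with $n$ and $|\mathcal{T}_n|=n$ is unbounded. I would handle this in two steps.
\begin{itemize}
\item \emph{Finite time-sharing alphabet.} Fix $\delta>0$ and a finite $\delta$-net $\{Q_1,\dots,Q_L\}$ of product pmfs on $\mathcal{X}\times\mathcal{Y}$. Merge the time indices $i$ whose pair $(P_{X_i},P_{Y_i})$ lies in the same net cell, and replace that pair by the net point. This yields $X'-T'-Y'$ with $|\mathcal{T}'|\le L$, independent of $n$. By Lemma \ref{lem:MI_difference}, applied as in \eqref{eq:27d}--\eqref{eq:27f}, all three mutual informations change by at most $2\tau(\delta)$ uniformly in $\theta$. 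Each cost changes by at most $\delta\max c_\alpha$, which is removed in the limit exactly as in part (b) of the proof of Theorem \ref{thm:inner_bound}.
\item \emph{Compactness.} The resulting distributions $P'_n$ lie in a compact set, so a subsequence converges to some $P^*$ satisfying the (closed) cost constraints. For the first bound, take $A=\{\theta \mid I_\theta(X^*;Z|Y^*T^*)\le w\text{-ess.inf}+\gamma\}$, which has positive measure by definition of the essential infimum. Uniform-in-$\theta$ continuity (again Lemma \ref{lem:MI_difference}) gives $I_\theta(P'_n)\le w\text{-ess.inf}+2\gamma$ on $A$ for all large $n$ in the subsequence. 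Plugging this into the Fano bound yields $R_1\le w\text{-ess.inf}\, I(X^*;Z_\theta|Y^*T^*)+2\gamma+2\tau(\delta)$.
\end{itemize}
The same argument with different sets $A$ handles $R_2$ and $R_1+R_2$ along the same subsequence, since $P^*$ is common to all three bounds. Letting $\gamma,\delta\to 0$ places $(R_1,R_2)$ in the closure on the RHS of \eqref{eq:0-capacity_region}. The delicate point is exactly this uniform continuity in $\theta$ combined with compactness, which is what lets a single positive-measure set $A$ serve for all large $n$.
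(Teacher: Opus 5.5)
Your argument is correct, but your converse takes a different route from the paper's in two places.

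\textbf{How the paper argues.} The paper applies Fano's inequality to each component $W_\theta$ separately, so every bound carries its own error term $\varepsilon_{\theta,n}\log|\mathcal{X}|$. To obtain a single bounded-cardinality time-sharing variable valid for all $\theta$, it approximates each $W_\theta$ by one of finitely many $L$-type channels $V_1,\dots,V_K$. It then applies the Fenchel--Eggleston--Carath\'eodory theorem to the $3K$ mutual informations in \eqref{eq:matrix1}. After extracting a convergent subsequence, it uses Fatou's lemma to show $\liminf_{\ell}\varepsilon_{\theta,n_\ell}=0$ for $w$-a.e.\ $\theta$, which turns the per-$\theta$ bounds into essential-infimum bounds.

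\textbf{How you argue.} First, you quantize the per-letter input pairs with a finite $\delta$-net of product pmfs. This bounds $|\mathcal{T}'|$ directly and needs only the $\theta$-uniform continuity from Lemma~\ref{lem:MI_difference}. With fixed net points only $P_{T'}$ depends on $n$, so the mutual informations are even linear in the varying parameter. Second, you apply Fano to the normalized sub-mixture on a set $A$ chosen after $P^*$ is known. There $I(X^n;Z_A^n|Y^n)\le I(X^n;Z_A^n\theta|Y^n)$ and the independence of $\theta$ from the inputs yield the $A$-average, and the error on $W_A^n$ is trivially at most $\varepsilon_n/w(A)$.

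\textbf{What each route buys.} Your route avoids the channel quantization, Carath\'eodory, and the Fatou/almost-everywhere step, so it is shorter and more elementary. The paper's route gives the pointwise fact that almost every component has vanishing error along a subsequence, which it reuses for the converse of Theorem~\ref{thm:k-user-vr-capacity}. Carath\'eodory is also what underlies the remark $|\mathcal{T}|\le 3|\Theta|$ for finite $\Theta$.

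\textbf{One small correction.} The $O(\delta)$ cost excess created by the net points is not what part~(b) of the proof of Theorem~\ref{thm:inner_bound} removes; that step only passes from strict to non-strict constraints. A simple repair is to replace each net point by the cell averages of the marginals $P_{X_i}$ and $P_{Y_i}$. By linearity the average costs are then preserved exactly. If the cells have diameter at most $\delta$, the averaged product pmf stays within $O(\delta)$ of every merged $P_{X_i}P_{Y_i}$. Your compactness step still goes through, with Lemma~\ref{lem:MI_difference} now handling the $n$-dependent conditionals.
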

When there are no cost constraints and the parameter space is a singleton, i.e., $|\Theta| = 1$, formula \eqref{eq:0-capacity_region} reduces to the MAC capacity formula established independently by Ahlswede \cite{Ahlswede71} and Liao \cite{Liao72}.
In this case, the Fenchel–Eggleston–Carath\'eodory theorem \cite[Appendix A]{ElGamal-Kim2011} implies that the cardinality of $\mathcal{T}$ can be bounded as $|\mathcal{T}| \le 3$ (cf.\ \cite{Cover-Thomas2006}).
By an analogous argument, for a mixed MAC with \textit{finite} $\Theta$, it suffices to constrain $|\mathcal{T}| \le 3|\Theta|$ in \eqref{eq:0-capacity_region}.

\begin{remark} \label{rem:epsilon-converse}
Unlike the $\varepsilon=0$ case, where the essential-infimum characterization in Theorem \ref{thm:0-capacity_region} is exact, the converse for general $0<\varepsilon<1$ must control the relevant tail distributions across the mixture, not merely its essential infimum---and no single-letter expression for these tail distributions is currently known for general finite-alphabet mixed MACs. Closing this gap remains open.
\qed
\end{remark}

\begin{remark}
Theorem \ref{thm:0-capacity_region} is a generalization of the results given by Ahlswede \cite[Theorem 1]{Ahlswede68} and Han \cite[Theorem 3.6.5]{Han2003} for the single-user mixed memoryless channel ($W_\theta : \mathcal{X} \to \mathcal{Z}$):
\begin{align}
C_{\Gamma_1}(0 \midd \boldsymbol{W}) = \sup_{X : \, \mathbb{E} \, c_1(X) \le \Gamma_1} w\text{-}\mathrm{ess.inf} \, I(X; Z_\theta).
\end{align}
where $X$ denotes the channel input and $Z_\theta$ denotes the channel output of $W_\theta$ due to $X$.
\qed
\end{remark}

\noindent
\textit{Proof of Theorem \ref{thm:0-capacity_region}:} \\

\noindent
By the definition of the essential infimum, we immediately obtain
\begin{align}
& \hspace*{-8em}\mathrm{Cl} \left\{(R_1, R_2)\left|
\int_{\{\theta \, | \, I(X;Z_{\theta}|YT) < R_1\mbox{ or }
 I(Y;Z_{\theta}|XT)< R_2 \mbox{ or } I(XY;Z_{\theta}|T) < R_1+R_2\}}dw(\theta) =  0 \right.\right\} \nonumber\\
=  \mathrm{Cl} \, \{ (R_1, R_2) \,| \, &0 \le R_1 \le w\text{-}\mathrm{ess.inf} \, I(X; Z_\theta \midd Y \, T), \nonumber \\
&0 \le R_2 \le  w\text{-}\mathrm{ess.inf} \, I(Y; Z_\theta \midd X \,T), \nonumber \\
&R_1 + R_2 \le  w\text{-}\mathrm{ess.inf} \, I(X \, Y; Z_\theta \midd T) \}
\end{align}
for any given $X - T - Y$. 
In view of Theorem \ref{thm:inner_bound} and Remark \ref{remark:inner_bound2} with $\varepsilon=0$, the direct part follows.

\medskip
Next, we show the converse part; i.e., any 0-achievable rate pair $(R_1, R_2)$ is an element of the RHS of \eqref{eq:0-capacity_region}. 
Suppose that there exists an $(n, M_n^{(1)}, M_n^{(2)}, \varepsilon_n)$ MAC  code satisfying cost constraint $(\Gamma_1, \Gamma_2)$ such that for any small fixed $\gamma > 0$ and for all sufficiently large $n$ it holds that 
\begin{align}
    \frac{1}{n} \log M_n^{(1)} &\ge R_1 - \gamma, \\
        \frac{1}{n} \log M_n^{(2)} &\ge R_2 - \gamma, \\
        \lim_{n \to \infty} \varepsilon_n &= 0. \label{eq:error_prob2}
\end{align}
First, notice that the probability of decoding error can be expressed as 
\begin{align}
   \varepsilon_n = \int_{\Theta} \varepsilon_{\theta,n} \, dw(\theta),
\end{align}
 where  $\varepsilon_{\theta,n}$ denotes the probability of decoding error when MAC code $C_n = C_n^{(1)} \times C_n^{(2)}$ is used over $W_\theta^n$. 
 Let $X^n = (X_1^{(n)}, X_2^{(n)}, \ldots, X_n^{(n)})$ and $Y^n = (Y_1^{(n)}, Y_2^{(n)}, \ldots, Y_n^{(n)})$ be uniformly distributed on $C_n^{(1)}$ and $C_n^{(2)}$, respectively.
 Then, owing to cost constraint $(\Gamma_1, \Gamma_2)$, it must hold that
 \begin{align}
     \frac{1}{n} \sum_{i = 1}^n \mathbb{E} \, c_1 (X_i^{(n)}) &\le \Gamma_1, \label{eq:cost_constraint1b} \\
     \frac{1}{n} \sum_{i = 1}^n \mathbb{E} \, c_2 (Y_i^{(n)}) &\le \Gamma_2. \label{eq:cost_constraint2b}
 \end{align}
 Next, Fano's inequality implies that
 \begin{align}
  H(X^n | Y^n, Z_\theta^n) \le h(\varepsilon_{\theta,n}) + \varepsilon_{\theta,n} \log |\mathcal{X}|^n,
 \end{align}
 where $Z_\theta^n$ is the output via $W_\theta^n$ due to input $(X^n,Y^n)$, and $h(\cdot)$ denotes the binary entropy function.
For each $\theta \in \Theta$, we evaluate the rate of user 1 as:
\begin{align}
    n (R_1 - \gamma) &\le H(X^n) \nonumber \\ 
          &= H(X^n \midd Y^n) \nonumber \\
          &\le H(X^n \midd Y^n) - H(X^n \midd Y^n, Z_\theta^n) + h(\varepsilon_{\theta,n}) + n \, \varepsilon_{\theta,n} \log |\mathcal{X}| \nonumber \\
          &= I(X^n ; Z_\theta^n \midd Y^n) + h(\varepsilon_{\theta,n}) + n \, \varepsilon_{\theta,n} \log |\mathcal{X}| \nonumber \\
          &= H(Z_\theta^n \midd Y^n) - H(Z_\theta^n \midd X^n, Y^n) + h(\varepsilon_{\theta,n}) + n \, \varepsilon_{\theta,n} \log |\mathcal{X}| \nonumber \\
          &\le \sum_{i=1}^n ( H(Z_{\theta, i} \midd Y_i) - H(Z_{\theta, i} \midd X_i, Y_i) ) + h(\varepsilon_{\theta,n}) + n \, \varepsilon_{\theta,n} \log |\mathcal{X}|\nonumber \\
          &= \sum_{i=1}^n I(X_i ; Z_{\theta, i} \midd Y_i)  + h(\varepsilon_{\theta,n}) + n \, \varepsilon_{\theta,n} \log |\mathcal{X}|, \label{eq:bound_M1}
\end{align}
where $Z_{\theta, i}$ is the channel output via $W_\theta$ due to input $(X_i, Y_i)$.
We introduce an auxiliary random variable $Q_n$ that is uniformly distributed on $[1:n] = \{ 1, 2, \ldots, n\}$, and we denote simply by $(X^{(n)}, Y^{(n)}, Z_{\theta}^{(n)})$ the tuple of random variables $(X_i, Y_i, Z_{\theta,i})$ when $Q_n = i$. 
Then, \eqref{eq:bound_M1} can be expressed as
\begin{align}
   R_1 - \gamma \le I(X^{(n)} ; Z_\theta^{(n)} \midd Y^{(n)} \, Q_n) +  \frac{1}{n} + \varepsilon_{\theta,n} \log |\mathcal{X}|. \label{eq:rate_UB1}
\end{align}
Similarly, we have 
\begin{align}
   R_2 - \gamma &\le I(Y^{(n)} ; Z_\theta^{(n)} \midd X^{(n)} \, Q_n) +  \frac{1}{n} + \varepsilon_{\theta,n} \log |\mathcal{Y}|, \label{eq:rate_UB2} \\
   R_1 + R_2 - 2 \gamma &\le I(X^{(n)} \, Y^{(n)} ; Z_\theta^{(n)} \midd Q_n) +  \frac{1}{n} + \varepsilon_{\theta,n} \log (|\mathcal{X}| \cdot |\mathcal{Y}|). \label{eq:rate_UB3}
\end{align}
Accordingly, \eqref{eq:cost_constraint1b} and \eqref{eq:cost_constraint2b} are rewritten as 
\begin{align}
    \mathbb{E} \, c_1 (X^{(n)}) &\le \Gamma_1, \label{eq:cost_constraint1c} \\
    \mathbb{E} \, c_2 (Y^{(n)}) & \le \Gamma_2, \label{eq:cost_constraint2c}
\end{align}
respectively.

Let $L$ be a sufficiently large integer (to be specified later), and fix $(x,y)\in\mathcal{X}\times\mathcal{Y}$. 
Given $\theta \in \Theta$, for each $z\in\mathcal{Z}$, choose integers $c_{z|x,y}\ge 0$ such that
\begin{align}
  \frac{c_{z|x,y}}{L} \le\ W_\theta(z\midd x,y) <   \frac{c_{z|x,y} + 1 }{L} \qquad\text{for } z=1,2,\ldots,|\mathcal{Z}|-1,
\end{align}
and then define the remaining count by
\begin{align}
  c_{|\mathcal{Z}| \, \midd \, x,y}\ \equiv\ L-\sum_{z=1}^{|\mathcal{Z}|-1} c_{z|x,y},
\end{align}
so that
\begin{align}
  \sum_{z\in\mathcal{Z}} \frac{c_{z|x,y}}{L}=1
\end{align}
for all $(x, y) \in \mathcal{X} \times \mathcal{Y}$.
In the sequel, we consider the (approximating) channel $P_{[\theta]}^L : \mathcal{X} \times \mathcal{Y} \to \mathcal{Z}$ with the transition probability:
\begin{align}
  P_{[\theta]}^L(z\midd x,y)\ =\ \frac{c_{z|x,y}}{L} \qquad ( z \in \mathcal{Z}).
\end{align}
The pmf $P_{[\theta]}^L(\cdot\midd x,y)$ is an $L$-type, i.e., the empirical distribution with denominator $L$. 
Let $\mathcal{P}^L(\mathcal{Z}|\mathcal{X} \mathcal{Y})$ be the set of all such $L$-type distributions over all $(x,y)\in\mathcal{X}\times\mathcal{Y}$ and $K$ be the cardinality of $\mathcal{P}^L(\mathcal{Z}|\mathcal{X} \mathcal{Y})$.
It is easy to check that the cardinality $K$ is given by
\begin{align}
 K = |\mathcal{X}| \cdot |\mathcal{Y}| \, \dbinom{L + |\mathcal{Z}| - 1}{|\mathcal{Z}|-1} \le    |\mathcal{X}| \cdot |\mathcal{Y}|\, (L+1)^{|\mathcal{Z}|-1}.
\end{align}
Note that the conditional pmf $P_{[\theta]}^L$ approximates the component MAC $W_\theta$ with finite precision $1/L$.
Let $Z_{[\theta],i}^L$ be the output of the channel $P_{[\theta]}^L$  due to input $ (X_i, Y_i)$  so that the joint pmf of $(X_i, Y_i, Z_{[\theta],i}^L)$ is given by
\begin{align}
    P_{X_iY_iZ_{[\theta],i}^L}(x, y, z) &\equiv P_{X_i}(x) P_{Y_i}(y)P_{[\theta]}^L(z| x, y) \qquad ((x,y,z) \in \mathcal{X}\times \mathcal{Y}\times \mathcal{Z}).
\end{align}

We now evaluate the difference of two mutual informations using the variational distance $d(P_U , P_{\tilde{U}})$ defined as in \eqref{eq:var-distance}.
It is easy to verify that $0 \le d(P_U,P_{\tilde{U}}) \le 1$.
We will use Lemma \ref{lem:MI_difference} to show that the difference between the mutual informations (to be treated below) can be bounded using the variational distance.

Now, for any $(X_i, Y_i) = (x, y)$, the way of constructing $L$-types $P_{[\theta]}^L$ implies that
\begin{align}
    \left|P_{X_i Z_{\theta,i}|Y_i}(x,z|y) - P_{X_i Z_{[\theta],i}^L|Y_i} (x, z|y) \right| \le \frac{P_{X_i}(x)}{L} \qquad \text{for } z = 1, 2, \ldots, |\mathcal{Z}|-1
\end{align}
and
\begin{align}
     \left|P_{X_i Z_{\theta,i}|Y_i}(x,z|y) - P_{X_i Z_{[\theta],i}^L|Y_i} (x, z|y) \right| \le \frac{P_{X_i}(x)(|\mathcal{Z}|-1)}{L} \qquad \text{for } z = |\mathcal{Z}|.
\end{align}
Therefore, the variational distance between $P_{X_i Z_{\theta,i}|Y_i=y}$ and $P_{X_i Z_{[\theta],i}^L|Y_i=y} $ is bounded as
\begin{align}
    d \left(P_{X_i Z_{\theta,i}|Y_i=y} , P_{X_i Z_{[\theta],i}^L|Y_i=y} \right) \le \frac{|\mathcal{Z}|^2}{L} \qquad (y \in \mathcal{Y}),
\end{align}
which is equivalent to
\begin{align}
    d \left(P_{X^{(n)}  Z_{\theta}^{(n)} |Y^{(n)}=y, Q_n =i} , P_{X^{(n)} Z_{[\theta]}^{L (n)} |Y^{(n)}=y, Q_n=i} \right) \le \frac{|\mathcal{Z}|^2}{L} \qquad (y \in \mathcal{Y}), \label{eq:variational_dist1}
\end{align}
where $Z_{[\theta]}^{L (n)}$ is the output via channel $P_{[\theta]}^L$ due to input $( X^{(n)}, Y^{(n)} )$.
By the same reasoning, we also have 
\begin{align}
    d \left(P_{Y^{(n)} Z_{\theta }^{(n)} |X^{(n)} =x, Q_n = i} , P_{Y^{(n)} Z_{[\theta]}^{L (n)} |X^{(n)} =x, Q_n = i} \right) &\le \frac{|\mathcal{Z}|^2}{L} \qquad (x \in \mathcal{X}),  \label{eq:variational_dist2} \\
     d \left(P_{X^{(n)} Y^{(n)} Z_{\theta}^{(n)} | Q_n = i} , P_{X^{(n)} Y^{(n)} Z_{[\theta]}^{L (n)} | Q_n = i} \right) &\le \frac{|\mathcal{Z}|^2}{L}. \label{eq:variational_dist3}
\end{align}
Thus, for sufficiently large $L$, by means of Lemma \ref{lem:MI_difference} with  $\delta = \frac{|Z|^2}{L}$, we can evaluate the difference of conditional mutual informations by 
\begin{align}
\bigl| I(X^{(n)} ; Z_\theta^{(n)} \midd Y^{(n)} =y, Q_n = i) - I(X^{(n)} ; Z_{[\theta]}^{L (n)} \midd Y^{(n)}=y, Q_n =i) \bigr| &\le 3 \delta \log (|\mathcal{X}||\mathcal{Z}| -1) + 3 h(\delta), \label{eq:type_approximation1} \\
\bigl| I(Y^{(n)} ; Z_\theta^{(n)} \midd X^{(n)} =x, Q_n = i) - I(Y^{(n)} ; Z_{[\theta]}^{L (n)} \midd X^{(n)}=x, Q_n =i) \bigr| &\le 3 \delta \log (|\mathcal{Y}||\mathcal{Z}| -1) + 3 h(\delta), \label{eq:type_approximation2}\\
\bigl| I(X^{(n)} \, Y^{(n)}; Z_\theta \midd Q_n = i) - I(X^{(n)} \, Y^{(n)}; Z_{[\theta]}^{L (n)} \midd Q_n =i) \bigr| &\le 3 \delta \log (|\mathcal{X}||\mathcal{Y}||Z| -1) + 3 h(\delta) \label{eq:type_approximation3}
\end{align}
for $i \in [1:n]$.
Here, the RHSs of \eqref{eq:type_approximation1}--\eqref{eq:type_approximation3} are all bounded by $f(\delta) \equiv 3 \delta \log (|\mathcal{X}||\mathcal{Y}||\mathcal{Z}|-1) + 3h(\delta)$ from above, which is independent of $n$ and decreasing in $\delta$.
Then, with any given $\gamma > 0$ we take $L$ so large that it holds $f(\delta) \le \gamma$ to get
\begin{align}
\bigl| I(X^{(n)} ; Z_\theta^{(n)} \midd Y^{(n)} \,  Q_n) - I(X^{(n)} ; Z_{[\theta]}^{L(n)} \midd Y^{(n)} \, Q_n) \bigr| &\le \gamma, \label{eq:type_approximation1b} \\
\bigl| I(Y^{(n)} ; Z_\theta^{(n)} \midd X^{(n)} \, Q_n) - I(Y^{(n)} ; Z_{[\theta]}^{L (n)} \midd X^{(n)} \, Q_n) \bigr| &\le \gamma, \, \label{eq:type_approximation2b}\\
\bigl| I(X^{(n)} \, Y^{(n)} ; Z_\theta^{(n)} \midd Q_n) - I(X^{(n)} \, Y^{(n)} ; Z_{[\theta]}^{L (n)} \midd Q_n) \bigr| &\le \gamma. \label{eq:type_approximation3b}
\end{align}

On the other hand, with $K = |\mathcal{P}^L(\mathcal{Z} |\mathcal{X} \mathcal{Y})|$ in mind, list all the elements in $\mathcal{P}^L(\mathcal{Z} |\mathcal{X} \mathcal{Y})$ as $V_1, V_2, \ldots, V_K$, and consider the following $K \times 3$ matrix
\begin{align}
\boldsymbol{I}_n(Q_n) \equiv \left[
    \begin{array}{ccc}
    I(X^{(n)} ; Z_1^{L (n)} \midd Y^{(n)} \, Q_n), & I(Y^{(n)} ; Z_1^{L (n)} \midd X^{(n)} \, Q_n), & I(X^{(n)} \, Y^{(n)} ; Z_1^{L (n)} \midd Q_n) \\
    I(X^{(n)} ; Z_2^{L (n)} \midd Y^{(n)} \, Q_n), & I(Y^{(n)} ; Z_2^{L (n)} \midd X^{(n)} \, Q_n), & I(X^{(n)} \, Y^{(n)} ; Z_2^{L (n)} \midd Q_n) \\
    \vdots & \vdots & \vdots \\
    I(X^{(n)} ; Z_K^{L (n)} \midd Y^{(n)} \, Q_n), & I(Y^{(n)} ; Z_K^{L (n)} \midd X^{(n)} \, Q_n), & I(X^{(n)} \, Y^{(n)} ; Z_K^{L (n)} \midd Q_n)
    \end{array} \right], \label{eq:matrix1}
\end{align}
where $Z_k^{L (n)}$ is the output via channel $V_k$ due to input $(X^{(n)}, Y^{(n)} )$ ($k=1, 2, \ldots, K$). 

The Fenchel–Eggleston–Carath\'eodory theorem \cite[Appendix A]{ElGamal-Kim2011} states that any point in the convex closure of a compact set $\mathcal{A}$ in a $D$ dimensional Euclidean space can be represented by a convex combination of at most $D + 1$ points in the set $\mathcal{A}$. The direct consequence of this theorem is that the time-sharing parameter $Q_n$ can be replaced with an auxiliary random variable $T_n$ taking values in $\mathcal{T}_n$ with finite cardinality $|\mathcal{T}_n| \le 3 K + 1$, and we have
\begin{align}
\boldsymbol{I}_n(Q_n)  = \boldsymbol{I}_n(T_n) \equiv \left[
    \begin{array}{ccc}
    I(X^{(n)} ; Z_1^{L (n)} \midd Y^{(n)} \, T_n), & I(Y^{(n)} ; Z_1^{L (n)} \midd X^{(n)} \, T_n), & I(X^{(n)} \, Y^{(n)} ; Z_1^{L (n)} \midd T_n) \\
    I(X^{(n)} ; Z_2^{L (n)} \midd Y^{(n)} \, T_n), & I(Y^{(n)} ; Z_2^{L (n)} \midd X^{(n)} \, T_n), & I(X^{(n)} \, Y^{(n)} ; Z_2^{L (n)} \midd T_n) \\
    \vdots & \vdots & \vdots \\
    I(X^{(n)} ; Z_K^{L (n)} \midd Y^{(n)} \, T_n), & I(Y^{(n)} ; Z_K^{L (n)} \midd X^{(n)} \, T_n), & I(X^{(n)} \, Y^{(n)} ; Z_K^{L (n)} \midd T_n)
    \end{array} \right], \label{eq:matrix2}
\end{align}
Thus, by virtue of \eqref{eq:type_approximation1b}--\eqref{eq:type_approximation3b} and \eqref{eq:matrix2}, we have
\begin{align}
\bigl| I(X^{(n)} ; Z_\theta^{(n)} \midd Y^{(n)} \, T_n) - I(X^{(n)} ; Z_{[\theta]}^{L (n)} \midd Y^{(n)} \, T_n) \bigr| &\le \gamma, \label{eq:type_approximation7} \\
\bigl| I(Y^{(n)} ; Z_\theta^{(n)} \midd X^{(n)} \, T_n) - I(Y^{(n)} ; Z_{[\theta]}^{L (n)} \midd X^{(n)} \, T_n) \bigr| &\le \gamma, \label{eq:type_approximation8}\\
\bigl| I(X^{(n)} \, Y^{(n)}; Z_\theta^{(n)} \midd T_n) - I(X^{(n)} \, Y^{(n)}; Z_{[\theta]}^{L (n)} \midd T_n) \bigr| &\le \gamma. \label{eq:type_approximation9}
\end{align}

We are now in a position to evaluate the RHSs of \eqref{eq:rate_UB1}--\eqref{eq:rate_UB3}. First, \eqref{eq:rate_UB1} is rewritten as
\begin{align}
   R_1 - \gamma &\le I(X^{(n)} ; Z_{\theta}^{(n)} \midd Y^{(n)} \, Q_{n}) + \frac{1}{n} + \varepsilon_{\theta,n} \log |\mathcal{X}| \nonumber \\
    &\overset{(a)}{\le} I(X^{(n)} ; Z_{[\theta]}^{L (n)} \midd Y^{(n)} \, Q_{n}) + \gamma + \frac{1}{n} + \varepsilon_{\theta,n} \log |\mathcal{X}| \nonumber \\
    &\overset{(b)}{=} I(X^{(n)} ; Z_{[\theta]}^{L (n)} \midd Y^{(n)} \, T_{n}) + \gamma + \frac{1}{n} + \varepsilon_{\theta,n} \log |\mathcal{X}| \nonumber \\
    &\overset{(c)}{\le} I(X^{(n)} ; Z_{\theta}^{(n)} \midd Y^{(n)} \, T_{n}) + 2 \gamma + \frac{1}{n} + \varepsilon_{\theta,n} \log |\mathcal{X}|,  \label{eq:rate_UB4}
\end{align}
where (a) follows from \eqref{eq:type_approximation1b}; (b) follows from \eqref{eq:matrix1} and \eqref{eq:matrix2}; (c) follows from \eqref{eq:type_approximation7}.
Similarly, we have
\begin{align}
   R_2 - \gamma & \le I(Y^{(n)} ; Z_{\theta}^{(n)} \midd X^{(n)} \, T_{n}) + 2 \gamma + \frac{1}{n} + \varepsilon_{\theta,n} \log |\mathcal{Y}|,  \\
   R_1 + R_2 - 2 \gamma & \le I(X^{(n)} \, Y^{(n)} ; Z_{\theta}^{(n)} \midd T_{n}) + 2 \gamma + \frac{1}{n} + \varepsilon_{\theta,n} \log (|\mathcal{X}| \cdot |\mathcal{Y}|).
   \label{eq:rate_UB4b}
\end{align}
Since the joint pmf $P_{X^{(n)}Y^{(n)}Z_\theta^{(n)}T_n}$ is of finite dimensions, there is a subsequence $n= n_1 < n_2 < \cdots \to \infty$ such that $P_{X^{(n_\ell)}Y^{(n_\ell)}Z_\theta^{(n_\ell)}T_{n_\ell}}$ converges to a pmf $P_{XYZ_\theta T}$ as $\ell$ tends to infinity, where it is obvious that $X - T - Y$ (Markov chain) and $P_{Z_\theta | X Y } = W_\theta (Z_\theta | X Y)$.
Therefore, taking liminf of \eqref{eq:rate_UB4} over $n = n_1, n_2, \cdots$, we have
\begin{align}
   R_1  &\le I(X ; Z_{\theta} \midd Y \, T) + 3 \gamma +  \log |\mathcal{X}| \cdot \liminf_{\ell \to \infty} \varepsilon_{\theta,n_\ell}.  \label{eq:rate_UB5}
\end{align}
Similarly, 
\begin{align}
   R_2 & \le I(Y ; Z_{\theta} \midd X \, T) + 3 \gamma + \log |\mathcal{Y}| \cdot \liminf_{\ell \to \infty} \varepsilon_{\theta,n_\ell} ,  \\
   R_1 + R_2 & \le I(X \, Y ; Z_{\theta} \midd T) + 4 \gamma +  \log (|\mathcal{X}| \cdot |\mathcal{Y}|) \cdot \liminf_{\ell \to \infty} \varepsilon_{\theta,n_\ell}.  \label{eq:rate_UB5b}
\end{align}

Furthermore, to evaluate the RHSs of \eqref{eq:rate_UB5}--\eqref{eq:rate_UB5b}, define 
\begin{align}
    \tilde{\varepsilon}_\theta &\equiv \liminf_{\ell \to \infty} \varepsilon_{\theta, n_\ell} \label{eq:P5-0c}
\end{align}
and 
\begin{align}
   \Psi_k & \equiv  \left\{ \theta \in \Theta \, | \, \tilde{\varepsilon}_\theta > \frac{1}{k} \right\} \qquad \text{for } k= 1, 2, \cdots.
\end{align}
The sequence of $\{ \Psi_k \}_{k=1}^\infty$ is increasing, i.e., $\Psi_k \subseteq \Psi_{k+1}$ for $k =1, 2, \ldots,$ so that there exists the limit 
\begin{align}
    \Psi \equiv \lim_{k \to \infty} \Psi_k = \{ \theta \, | \, \tilde{\varepsilon}_\theta > 0 \}. \label{eq:limit}
\end{align}
Suppose, to lead to a contradiction, that 
\begin{align}
 P_\Theta (\Psi) \equiv \int_\Psi dw(\theta) > 0. \label{eq:hypothesis}
\end{align}
Then, we can take sufficiently small $\eta >0$ such that
\begin{align}
 P_\Theta (\Psi) > 2 \eta. \label{eq:hypothesis2}
\end{align}
By Fatou's lemma we have
\begin{align}
    \limsup_{n \to \infty} \varepsilon_n &\ge \limsup_{\ell \to \infty} \varepsilon_{n_\ell} \nonumber \\
    &\ge \liminf_{\ell \to \infty} \varepsilon_{n_\ell} \nonumber \\
     &\ge \int_{\Theta} \liminf_{\ell \to \infty} \varepsilon_{\theta, n_\ell} \, dw(\theta) \nonumber \\
    &=\int_{\Psi_k} \tilde{\varepsilon}_\theta \, dw(\theta) \nonumber \\
    &> \frac{1}{k} P_\Theta (\Psi_k) \qquad \text{for each } k > 0. \label{eq:error_LB1}
\end{align}
In view of \eqref{eq:limit}, the continuity of probability measure implies that
\begin{align}
\lim_{k \to \infty }P_{\Theta} (\Psi_k) = P_{\Theta} (\Psi), \label{eq:limit2}
\end{align}
which, together with \eqref{eq:hypothesis2}, implies that we can take a sufficiently large $k_0 = k_0(\eta) > 0$ such that 
\begin{align}
P_{\Theta} (\Psi_k) > \eta \qquad \text{for } k > k_0. \label{eq:limit3}
\end{align}
Plugging \eqref{eq:limit3} into \eqref{eq:error_LB1} yields
\begin{align}
    \limsup_{n \to \infty} \varepsilon_n > \frac{\eta}{k} > 0 \qquad \text{for } k > k_0,
\end{align}
which contradicts \eqref{eq:error_prob2}.
Thus, it must hold that $P_\Theta (\Psi) = 0$, i.e., the negation of \eqref{eq:hypothesis} holds as
\begin{align}
    \tilde{\varepsilon}_\theta = \liminf_{\ell \to \infty } \varepsilon_{\theta, n_\ell} = 0  \label{eq:tilde_e}
\end{align}
for all $\theta \in \Theta_0$ with a subset $\Theta_0 \subseteq \Theta$ such that $P_{\Theta}(\Theta_0)  = 1$.
Therefore, from \eqref{eq:rate_UB5}--\eqref{eq:rate_UB5b} and \eqref{eq:tilde_e} we obtain
\begin{align}
   R_1  &\le I(Z_\theta ; X \midd Y T) + 3 \gamma, \label{eq:rate_UB4c} \\
    R_2 &\le I(Z_\theta ; Y \midd X T) + 3 \gamma, \label{eq:rate_UB5c} \\
   R_1 + R_2 &\le I(Z_\theta ; X Y \midd T) + 4 \gamma. \label{eq:rate_UB6c}
\end{align}
Since the equations hold for all $\theta \in \Theta_0$ and $\gamma > 0$ can be  arbitrarily small, it follows that \begin{align}
   R_1  &\le w\text{-}\mathrm{ess.inf} \,  I(Z_\theta ; X \midd Y T), \label{eq:rate_UB4d} \\
    R_2 &\le w\text{-}\mathrm{ess.inf} \,  I(Z_\theta ; Y \midd X T) , \label{eq:rate_UB5d} \\
   R_1 + R_2 &\le w\text{-}\mathrm{ess.inf} \,  I(Z_\theta ; X Y \midd T), \label{eq:rate_UB6d}
\end{align}
showing that the rate pair $(R_1, R_2)$ must be an element of the RHS of \eqref{eq:0-capacity_region}.
Here, cost constraints $\mathbb{E} \, c_1(X) \le \Gamma_1$, $\mathbb{E} \, c_2(Y) \le \Gamma_2$ follow from \eqref{eq:cost_constraint1c} and \eqref{eq:cost_constraint2c}, respectively.
\qed

%========================================================
%===================== Section 6 ========================
%========================================================
\section{Strong Converse of Gaussian MAC} \label{sec:P4}

In this section, we assume that $W^n$ is a \emph{Gaussian memoryless} MAC (without mixture), that is, the channel is subject to additive i.i.d.\ Gaussian noise 
\begin{align}
  V_i^{(n)} \sim \mathcal{N}(0, N)~~~(i = 1, 2, \ldots, n),
\end{align}
where $\mathcal{N}(0, \sigma^2)$ denotes the Gaussian distribution of mean $0$ and variance $\sigma^2$. 
The alphabets are assumed to be $\mathcal{X} = \mathcal{Y} = \mathcal{Z} = \mathcal{V} = \mathbb{R}$.
In this case, the cost of input symbols is measured by its power as $c_1(x) = x^2$ and $c_2(y) = y^2$, and according to the tradition, we write cost constraint $(P_1, P_2)$ instead of $(\Gamma_1, \Gamma_2)$ as
\begin{align}
 \Pr\left\{ \frac{1}{n} \sum_{i=1}^n \big(X_i^{(n)} \big)^2 \le P_1 \right\} &= 1, \label{eq:power_constraint1} \\
 \Pr\left\{ \frac{1}{n} \sum_{i=1}^n \big(Y_i^{(n)} \big)^2 \le P_2 \right\} &= 1,\label{eq:power_constraint2}
\end{align}
where $X^n = \big(X_1^{(n)}, \ldots, X_n^{(n)} \big)$ and $Y^n = \big(Y_1^{(n)}, \ldots, Y_n^{(n)} \big)$ are input sequences.
The output sequence $Z^n = (Z_1^{(n)}, Z_2^{(n)}, \ldots, Z_n^{(n)})$ due to input $(X^n, Y^n)$ is given by $Z_i^{(n)} = X_i^{(n)} + Y_i^{(n)} + V_i^{(n)}~(i=1, 2, \ldots, n)$.
The cost constraints are now referred to as \emph{power constraint} $(P_1, P_2)$. 

The $0$-capacity region of the Gaussian MAC with power constraint $(P_1, P_2)$ is given by
\begin{theorem}[{Wyner \cite{Wyner74}}] \label{thm:0-cap_Gaussian_MAC}
    For a Gaussian memoryless MAC $\boldsymbol{W} =\{W\}$, the 0-capacity region with power constraint $(P_1, P_2)$ and noise power $N$ is given by
    \begin{align} \label{eq:0-cap_Gaussian_MAC}
        C_{P_1, P_2}(0 \midd \boldsymbol{W}) = \bigg\{ (R_1, R_2) \, \Big| \, &0 \le R_1 \le \frac{1}{2} \log \left( 1 + \frac{P_1}{N} \right), \nonumber\\
&0 \le R_2 \le  \frac{1}{2} \log \left( 1 + \frac{P_2}{N} \right), \nonumber\\
&R_1 + R_2 \le \frac{1}{2} \log \left( 1 + \frac{P_1 + P_2}{N} \right) \bigg\}.
\end{align}
\end{theorem}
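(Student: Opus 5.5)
We prove the two inclusions separately, using Corollary \ref{cor:general_formula2} as the starting point. For the direct part we exhibit a suitable input pair; for the converse we use a Fano-type argument adapted to continuous alphabets with power constraints.

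\textbf{Direct part.} Fix $\delta>0$ and let $\tilde X_i, \tilde Y_i$ be i.i.d.\ $\mathcal{N}(0,P_1-\delta)$ and $\mathcal{N}(0,P_2-\delta)$, mutually independent. These sequences do not satisfy the almost-sure constraints \eqref{eq:power_constraint1}--\eqref{eq:power_constraint2}. We therefore condition them on the sets $A_n^{(1)}=\{\boldsymbol{x} : \frac1n\sum x_i^2\le P_1\}$ and $A_n^{(2)}=\{\boldsymbol{y} : \frac1n\sum y_i^2\le P_2\}$, exactly as in \eqref{eq:new_X}--\eqref{eq:new_Y}. By the weak law of large numbers, $\gamma_n^{(\alpha)}\to 1$.

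For the unconditioned i.i.d.\ Gaussian inputs, each of the three information densities is a normalized sum of i.i.d.\ terms with finite variance. Hence they converge in probability to $\frac12\log(1+\frac{P_1-\delta}{N})$, $\frac12\log(1+\frac{P_2-\delta}{N})$ and $\frac12\log(1+\frac{P_1+P_2-2\delta}{N})$, respectively.

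The comparison inequalities \eqref{eq:rel1}--\eqref{eq:opposite_rel2} carry over verbatim with densities in place of pmfs. They show that the spectral-inf quantities of the conditioned inputs are at least those of the unconditioned ones, up to the vanishing term $\frac1n\log(1/\gamma_n)$. Hence $\mathcal{R}_{\boldsymbol W}(\overline{\boldsymbol X},\overline{\boldsymbol Y})$ contains the pentagon with $P_\alpha-\delta$ in place of $P_\alpha$. Letting $\delta\to0$ and using closedness gives ``$\supseteq$''.

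\textbf{Converse part.} Take a sequence of codes with $\varepsilon_n\to0$ and rates at least $R_\alpha-\gamma$, with $X^n,Y^n$ uniform on the codebooks. Fano's inequality gives
\begin{align}
n(R_1-\gamma)\le I(X^n;Z^n\midd Y^n)+1+\varepsilon_n\log M_n^{(1)}.
\end{align}
Since $\log M_n^{(1)}\le n(R_1+\gamma)$ along a subsequence, the Fano term is $o(n)$. Next, $I(X^n;Z^n\midd Y^n)=h(Z^n\midd Y^n)-h(V^n)\le \sum_i [h(X_i+V_i)-h(V_i)]$, where the inequality uses the independence of $X^n$ and $Y^n$. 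By the maximum-entropy property of the Gaussian, each term is at most $\frac12\log(1+\mathbb E X_i^2/N)$.

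By concavity of the logarithm (Jensen) and $\frac1n\sum_i\mathbb E X_i^2\le P_1$, we get $R_1\le \frac12\log(1+P_1/N)+O(\gamma)+o(1)$. The bound on $R_2$ is symmetric. For the sum rate, we use $h(X_i+Y_i+V_i)\le\frac12\log 2\pi e(\mathbb E X_i^2+\mathbb E Y_i^2+N)$, again relying on independence of the inputs, and apply Jensen jointly. Letting $n\to\infty$ and then $\gamma\to0$ yields ``$\subseteq$''.

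\textbf{Expected obstacle.} The delicate step is the direct part's handling of the truncation on continuous alphabets. One must check that the change of measure \eqref{eq:rel1}--\eqref{eq:rel3} remains valid for densities with respect to Lebesgue measure. One must also check that the information densities for i.i.d.\ Gaussian inputs have finite variance, so that the weak law applies.

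Both checks are routine here. The information densities are quadratic forms in Gaussians, so all moments are finite. The inequality $P_{Z^n\midd Y^n}\ge\gamma_n^{(1)}P_{\overline Z^n\midd \overline Y^n}$ follows from restricting an integral to a subset. Everything else in the argument is standard.
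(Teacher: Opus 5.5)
The paper does not prove this statement. It is imported from Wyner \cite{Wyner74} and used as a black box, for instance to identify $C_{P_1,P_2}(0\midd\boldsymbol W)$ with the pentagon in the proof of Theorem~\ref{thm:StrongC_Gaussian_MAC}. Your proposal therefore has to stand on its own, and it does after one small repair.

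\textbf{The repair (Fano step).} The achievability definition only gives $\liminf_n\frac1n\log M_n^{(1)}\ge R_1$. Nothing forces $\log M_n^{(1)}\le n(R_1+\gamma)$ along any subsequence; the rate could equal $R_1+1$ for every $n$. So your sentence making the Fano term $o(n)$ is unjustified as written. Instead, move that term to the left:
\begin{align}
(1-\varepsilon_n)\log M_n^{(1)}\le I(X^n;Z^n\midd Y^n)+1\le \frac n2\log\Big(1+\frac{P_1}{N}\Big)+1 .
\end{align}
Since $\varepsilon_n\to0$, this gives $R_1\le\liminf_n\frac1n\log M_n^{(1)}\le\frac12\log(1+P_1/N)$ directly. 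The same argument with $\log(M_n^{(1)}M_n^{(2)})$, plus superadditivity of $\liminf$, handles the sum rate.

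\textbf{A precision (sum-rate entropy bound).} Codebooks need not be zero-mean, so $\mathbb E[(X_i+Y_i)^2]$ can contain the cross term $2\,\mathbb E X_i\,\mathbb E Y_i\neq0$. What independence actually gives is $\mathrm{Var}(X_i+Y_i+V_i)=\mathrm{Var}X_i+\mathrm{Var}Y_i+N\le\mathbb E X_i^2+\mathbb E Y_i^2+N$. You should therefore invoke maximum entropy under a variance constraint.

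With these two points, your converse is the classical one and is correct. Your direct part is also correct: truncated i.i.d.\ Gaussians, the comparison \eqref{eq:rel1}--\eqref{eq:opposite_rel2} for densities, the weak law, then Corollary~\ref{cor:general_formula2}. This is exactly the device the paper uses in part (a) of the proof of Theorem~\ref{thm:inner_bound} and in the direct part of Theorem~\ref{thm:GQS-fading_MAC}.

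\textbf{Comparison with the paper's framework.} Within the paper, the natural converse would be information-spectral. The bounds \eqref{eq:P4-3a}--\eqref{eq:P4-3c} on the spectral sup-rates are proved in Section~\ref{sec:P4} without using this theorem. Combined with $\mathcal R_{\boldsymbol W}(\boldsymbol X,\boldsymbol Y)\subseteq\overline{\mathcal R}_{\boldsymbol W}(\boldsymbol X,\boldsymbol Y)$ and Corollary~\ref{cor:general_formula2}, they already yield the inclusion $\subseteq$. Equivalently, specializing Theorem~\ref{thm:0-cap_mixed_Gaussian_MAC} to a one-point $w$ with $h=(1,1)$ and $N_\theta\equiv N$ recovers the statement.

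The two routes trade off as follows:
\begin{itemize}
\item Your Fano, maximum-entropy and Jensen argument is much shorter. It needs only the averaged constraint $\frac1n\sum_i\mathbb E X_i^2\le P_1$, but it yields only a weak converse.
\item The spectral route needs the almost-sure per-codeword power constraint and, for the sum rate, the diffuse--exceptional decomposition. In exchange it gives the strong converse of Theorem~\ref{thm:StrongC_Gaussian_MAC}, which is what the paper needs for the fading results.
\end{itemize}
For the $0$-capacity region alone, your route suffices.
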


Here, we introduce the notion of the strong converse property (cf. \cite[Definition 7.12.1]{Han2003}) 
\begin{definition} \label{def:strong_converse}
    A general MAC $\boldsymbol{W}$ with cost constraint $(\Gamma_1, \Gamma_2)$ is said to have the \emph{strong converse property}
if $(R_1, R_2) \not\in C_{\Gamma_1, \Gamma_2}(0|\boldsymbol{W})$, then any sequence of $(n, M_n^{(1)}, M_n^{(2)}, \varepsilon_n)$ codes with
    \begin{align}
        \liminf_{n \to \infty} \frac{1}{n} \log M_n^{(1)} &\ge R_1, \\
        \liminf_{n \to \infty} \frac{1}{n} \log M_n^{(2)} &\ge R_2
    \end{align}
    satisfies
    \begin{align}
        \lim_{n \to \infty} \varepsilon_n = 1. \label{eq:P4-0}
    \end{align}
\end{definition}

\medskip
The main theorem of this section is 
\begin{theorem}[Strong Converse] \label{thm:StrongC_Gaussian_MAC}
    The Gaussian memoryless MAC $\boldsymbol{W} =\{W\}$ has the strong converse property with power constraint $(P_1, P_2)$. 
    \qed
\end{theorem}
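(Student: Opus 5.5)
Since $(R_1,R_2)\notin C_{P_1,P_2}(0\midd\boldsymbol{W})$, at least one of the three inequalities in Theorem \ref{thm:0-cap_Gaussian_MAC} fails with a margin: there is $\gamma>0$ with $R_1\ge C_1+2\gamma$, or $R_2\ge C_2+2\gamma$, or $R_1+R_2\ge C_{12}+2\gamma$. Here $C_1=\frac12\log(1+P_1/N)$, $C_2=\frac12\log(1+P_2/N)$ and $C_{12}=\frac12\log(1+(P_1+P_2)/N)$. We treat each case separately. In every case we start from a code sequence with $X^n,Y^n$ uniform on $C_n^{(1)},C_n^{(2)}$ and power constraints \eqref{eq:power_constraint1}, \eqref{eq:power_constraint2}. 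The goal is $\varepsilon_n\to1$.

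The tool is the standard information-spectrum converse bound underlying the converse of Theorem \ref{thm:general_formula} (Han \cite[Lemma 7.10.2]{Han2003}). Its sum-rate form reads
\begin{align}
\varepsilon_n\ \ge\ \Pr\left\{\frac1n\log\frac{W^n(Z^n|X^n,Y^n)}{Q_{Z^n}(Z^n)}\le \frac1n\log M_n^{(1)}M_n^{(2)}-\gamma\right\}-e^{-n\gamma}.
\end{align}
It holds with an arbitrary auxiliary output density $Q_{Z^n}$ replacing $P_{Z^n}$, via a change-of-measure step analogous to Lemma \ref{lem:div_spectrum}. For the single-user rates we use the analogous forms with $Q_{Z^n|Y^n}(\cdot|\boldsymbol{y})$ or $Q_{Z^n|X^n}(\cdot|\boldsymbol{x})$ in the denominator. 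Since each of these bounds holds separately, it suffices to show that in the violated case the relevant probability tends to $1$.

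For the sum rate the natural choice is $Q_{Z^n}=\mathcal{N}(0,(P_1+P_2+N)I_n)$. Writing $Z^n=X^n+Y^n+V^n$ and $S^n=X^n+Y^n$, the normalized information density becomes
\begin{align}
\frac12\log\frac{P_1+P_2+N}{N}-\frac{\|V^n\|^2}{2nN}\log e+\frac{\|S^n+V^n\|^2}{2n(P_1+P_2+N)}\log e .
\end{align}
By the law of large numbers, $\|V^n\|^2/n\to N$. The cross term satisfies $\langle S^n,V^n\rangle/n\to0$ in probability, because conditionally on $S^n$ it is Gaussian with variance $\|S^n\|^2N/n^2\le 4\max(P_1,P_2)N/n$. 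So everything reduces to controlling $\|S^n\|^2/n=\|X^n+Y^n\|^2/n$. This quantity is \emph{not} bounded by $P_1+P_2$ pointwise, since $\langle X^n,Y^n\rangle$ can be large for individual pairs. We need $\|X^n+Y^n\|^2/n\le P_1+P_2+\gamma'$ with probability tending to one under the product of uniform codebook distributions. Then the density is at most $C_{12}+\gamma/2$ with probability $\to1$, and $\varepsilon_n\to1$. For the single-user cases, take $Q_{Z^n|Y^n}(\cdot|\boldsymbol{y})=\mathcal{N}(\boldsymbol{y},(P_1+N)I_n)$. Then $Z^n-\boldsymbol{y}=X^n+V^n$ with $\|X^n\|^2\le nP_1$ deterministically, so the same calculation goes through directly without any correlation issue, and symmetrically for user 2.

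The main obstacle is therefore the sum-rate case: bounding $\frac1n\langle X^n,Y^n\rangle$ for independent codewords drawn from arbitrary deterministic codebooks. Here I would invoke Tan's diffuse-exceptional subspace decomposition \cite{Tan2026a}, \cite{Tan2026b}. Let $\Sigma_1=\mathbb{E}[X^nX^{nT}]$, which has trace at most $nP_1$. Split $\mathbb{R}^n$ into the span $E$ of eigenvectors of $\Sigma_1$ with eigenvalues $\ge \delta n$ (so $\dim E\le P_1/\delta$, a bounded dimension) and its complement $D$. On $D$, $\mathbb{E}[\langle X_D,\boldsymbol{y}\rangle^2]\le\delta n\|\boldsymbol{y}\|^2\le \delta n^2P_2$. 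Chebyshev then gives $\frac1n\langle X_D^n,Y^n\rangle$ small in probability, uniformly over $\boldsymbol{y}$.

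On the exceptional subspace $E$ the correlation may be non-negligible, and a pure change of measure will not work; the plan is to absorb it instead. Condition on the bounded-dimensional projection $X_E$ (quantized finely, with subexponentially many cells since $\dim E=O(1)$). Within each cell, replace $Q_{Z^n}$ by a Gaussian mixture over the $O(1)$-dimensional means, i.e. use a mixture $Q_{Z^n}$ costing only $o(n)$ in $\log$. The per-user power bounds then still give the effective received power at most $\|X\|^2+\|Y\|^2+2\langle X_E,Y_E\rangle$, and the sum-rate density computed against the mixture is controlled by $C_{12}$ plus $o(1)$, since an aligned component in bounded dimension contributes only $O(\log n)$ bits.

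If this absorption step proves delicate, the fallback is an expurgation/typing argument. Restrict to the subset of message pairs with $\frac1n\langle\boldsymbol{u}_j,\boldsymbol{v}_k\rangle\le\gamma'$. Show that a $(1-\varepsilon_n)$ fraction of correct decoding cannot concentrate on the remaining correlated pairs without violating the bound above with the Gaussian $Q$ built for covariance $(P_1+P_2+2\rho+N)$ restricted to $E$.
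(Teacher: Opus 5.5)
Your proposal is correct in outline, and its core matches the paper's. Both use Tan's diffuse--exceptional split, a Gaussian reference of variance $a=P_1+P_2+N$ on the diffuse part, and Chebyshev for the cross term; your single-user bounds are the paper's in substance. The route differs in three places.

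\textbf{Framework.} You plug auxiliary output laws directly into the Verd\'u--Han-type lower bound on $\varepsilon_n$, i.e.\ the one-component case of Lemma~\ref{lemma:finite_length_LB}. The paper instead proves $\sup_{(\boldsymbol X,\boldsymbol Y)\in\mathcal S_{P_1,P_2}}\overline I(\cdot)\le$ Wyner's bounds and then invokes the Proposition. Your route is more self-contained. The paper's route yields spectral-sup bounds valid for all inputs, which are reused in \eqref{eq:P5-40a}--\eqref{eq:P5-40c}; your argument would also give these via Lemma~\ref{lem:div_spectrum}.

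\textbf{Decomposition.} You split using only the uncentered second-moment matrix of $X^n$. Because it is uncentered, the mean is automatically absorbed, so you avoid appending projected means to $E_n$ and computing $\mathbb V[\Xi_n]=\mathrm{tr}(A_nB_n)$. However, with $\delta$ fixed, Chebyshev gives only $\Pr\{|\langle X_D^n,Y^n\rangle|\ge tn\}\le\delta P_2/t^2$, which does not vanish. You must say explicitly that you get $\liminf_n\varepsilon_n\ge1-\delta P_2/t^2$ and then let $\delta\downarrow0$. Alternatively, take $\delta_n\to0$ with $\delta_n^{-1}\log n=o(n)$ so the exceptional-part cost stays $o(n)$. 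The paper's threshold $b_n=\sqrt n$ makes all error terms vanish directly.

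\textbf{Exceptional part.} Your remark that ``a pure change of measure will not work'' on $E$ is wrong. Your mixture is itself a change of measure, and the paper uses a much simpler one: the product reference $\mathcal N(0,aI_{D_n})\otimes\mathcal N(0,bI_{E_n})$ with $b=2a+N$. Completing the square gives, pathwise, with $s=x_E+y_E$,
\[
\frac{\|s+v\|^2}{2b}-\frac{\|v\|^2}{2N}\le\frac{\|s\|^2}{2(b-N)}=\frac{\|s\|^2}{4a}\le\frac{\|x_E\|^2+\|y_E\|^2}{2a}.
\]
So arbitrary correlation on $E$ costs only $\frac{r_n}{2}\log\frac bN=o(n)$, with no quantization.

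Your mixture does work once written as a single fixed $Q_{Z^n}$. For example, take a uniform mixture of $\mathcal N(m,aI_n)$ over a grid of $m\in E$ covering all possible $X_E^n+Y_E^n$; its log-cost is $O(\dim E\cdot\log n)$. This $Q_{Z^n}$ cannot depend on the cell of $X_E^n$, as your ``within each cell'' wording suggests. It is simply more cumbersome than the paper's device, and your expurgation fallback is unnecessary.
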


\begin{remark}
The proof of Theorem \ref{thm:StrongC_Gaussian_MAC} proceeds in the framework of information spectrum methods combined with the ``diffuse-exceptional subspace decomposition" technique recently developed by Tan \cite{Tan2026a}, \cite{Tan2026b}.
The well-known ``wringing" technique developed by Dueck \cite{Dueck81} and refined by Ahlswede \cite{Ahlswede82}, which is also used by Fong and Tan \cite{Fong-Tan2016}, is not used here.
\qed
\end{remark}

\begin{remark}
 Fong and Tan \cite{Fong-Tan2016} showed that the $\varepsilon$-capacity region of the Gaussian memoryless MAC with power constraint $(P_1, P_2)$ remains the same as the 0-capacity region, i.e., 
 \begin{align}
    C_{P_1, P_2}(\varepsilon|\boldsymbol{W}) =  C_{P_1, P_2}(0|\boldsymbol{W})~~~(0 \le \varepsilon < 1). \label{eq:P4-1b}
 \end{align}
 It should be noted that the strong converse property of the Gaussian memoryless MAC ensured by Theorem \ref{thm:StrongC_Gaussian_MAC} is a stronger result than \eqref{eq:P4-1b}. 
 In fact, suppose that, for some $0 < \varepsilon < 1$, $C_{P_1, P_2}(\varepsilon | \boldsymbol{W})$ contains a rate pair $(R_1, R_2)$ not contained in $C_{P_1, P_2}(0|\boldsymbol{W})$.
 Then, Definition \ref{def:strong_converse} implies that $\varepsilon \ge \lim_{n \to \infty} \varepsilon_n = 1$, which is a contradiction.
 A simple example that satisfies \eqref{eq:P4-1b} but not the strong converse property is found in \cite{Han2003}.
The concept of strong converse in the sense of \eqref{eq:P4-1b}, which we call the $\varepsilon$-strong converse for distinction, dates back to Wolfowitz \cite{Wolfowitz57}; 
 subsequently, many researchers followed this definition, including Ahlswede \cite{Ahlswede82}, Csisz{\'a}r and J.\ K{\"o}rner \cite{Csiszar-Korner2011}.
 On the other hand, the strong converse in the sense of Definition \ref{def:strong_converse} appeared, at latest, in Gallager \cite{Gallager68} for single-user channels with finite alphabets, which is credited to Wolfowitz \cite{Wolfowitz57}.
 The systematic information spectrum treatment of this kind of strong converse is due to Han and Verd{\'u} \cite{Han-Verdu93}, \cite{Han98}, \cite{Han2003}, \cite{Verdu-Han94}.
 %It is also noteworthy that the well-known ``wringing" technique developed by Dueck \cite{Dueck81} and refined by Ahlswede \cite{Ahlswede82}, which is also used in \cite{Fong-Tan2016}, is not required here to establish Theorem \ref{thm:StrongC_Gaussian_MAC}. 
 \qed
\end{remark}

\begin{remark}
    Theorem \ref{thm:StrongC_Gaussian_MAC}, while it is of independent interest, is indispensable for establishing the converse parts of Theorems \ref{thm:GQS-fading_MAC} and \ref{thm:QS-Gaussian-noCSI-CSIRT}.
    However, if we start with the $\varepsilon$-strong converse in the sense of \eqref{eq:P4-1b} instead of Theorem \ref{thm:StrongC_Gaussian_MAC},  we would not be successful in establishing those main theorems.
    \qed
\end{remark}

\noindent
\emph{Proof of Theorem \ref{thm:StrongC_Gaussian_MAC}:}

Denoting by $\boldsymbol{Z} =\{Z^n\}_{n=1}^{\infty}$ the output of MAC
$\boldsymbol{W}=\{W^n\}_{n=1}^{\infty}$ with two arbitrary inputs 
$\boldsymbol{X}=\{X^n\}_{n=1}^{\infty}$ and $\boldsymbol{Y}=\{Y^n\}_{n=1}^{\infty}$ 
satisfying $(\boldsymbol{X}, \boldsymbol{Y}) \in \mathcal{S}_{P_1, P_2}$, we define
\begin{align}
\overline{I}(\boldsymbol{X} ; \boldsymbol{Z}| \boldsymbol{Y})
&\equiv
\mathrm{p}\text{-}\limsup_{n\to\infty}
\frac{1}{n} 
\log
\frac{W^n(Z^n|X^n,Y^n)}
{P_{Z^n|Y^n}(Z^n|Y^n)},
\\
\overline{I}(\boldsymbol{Y} ; \boldsymbol{Z}| \boldsymbol{X})
&\equiv
\mathrm{p}\text{-}\limsup_{n\to\infty}
\frac{1}{n} 
\log
\frac{W^n(Z^n|X^n,Y^n)}
{P_{Z^n|X^n}(Z^n|X^n)},
\\
\overline{I}(\boldsymbol{X} \boldsymbol{Y} ; \boldsymbol{Z})
&\equiv
\mathrm{p}\text{-}\limsup_{n\to\infty}
\frac{1}{n} 
\log
\frac{W^n(Z^n|X^n,Y^n)}
{P_{Z^n}(Z^n)},
\end{align}
which are called the \emph{spectral sup-information rates} of $\boldsymbol{W}$ for $(\boldsymbol{X}, \boldsymbol{Y})$.
Moreover, defining $\overline{\mathcal{R}}_{\boldsymbol{W}}(\boldsymbol{X}, \boldsymbol{Y})$ as
\begin{align}
 \overline{\mathcal{R}}_{\boldsymbol{W}}(\boldsymbol{X}, \boldsymbol{Y})  = \bigg\{ (R_1, R_2) \, \Big| \, & 0 \le R_1 \le \overline{I}(\boldsymbol{X} ; \boldsymbol{Z}| \boldsymbol{Y}),
\\
& 0 \le R_2 \le \overline{I}(\boldsymbol{Y} ; \boldsymbol{Z}| \boldsymbol{X}),
\\
& R_1+R_2 \le \overline{I}(\boldsymbol{X} \boldsymbol{Y} ; \boldsymbol{Z}) \bigg\},
\end{align}
we can show a sufficient condition of the strong converse property, which follows by literally paralleling the argument in the proof of \cite[Corollary 1]{Han98}, \cite[Corollary 7.12.1]{Han2003} without cost constraints, as 
\begin{proposition}
If
\begin{align}
\bigcup_{(\boldsymbol{X}, \boldsymbol{Y}) \in \mathcal{S}_{\Gamma_1, \Gamma_2}}
\mathcal{R}_{\boldsymbol{W}}(\boldsymbol{X}, \boldsymbol{Y})
=
\bigcup_{(\boldsymbol{X}, \boldsymbol{Y}) \in \mathcal{S}_{\Gamma_1, \Gamma_2}}
\overline{\mathcal{R}}_{\boldsymbol{W}}(\boldsymbol{X}, \boldsymbol{Y}), \label{eq:P4-1}
\end{align}
then a general MAC $\bf{W}$ satisfies the strong converse property under cost constraint $(\Gamma_1, \Gamma_2)$.
\qed
\end{proposition}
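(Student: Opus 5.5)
The plan is to argue by contradiction, following \cite[Corollary 1]{Han98}. Throughout, write
\[
A_n^{(1)}=\frac{1}{n}\log\frac{W^n(Z^n|X^n,Y^n)}{P_{Z^n|Y^n}(Z^n|Y^n)},\qquad
A_n^{(2)}=\frac{1}{n}\log\frac{W^n(Z^n|X^n,Y^n)}{P_{Z^n|X^n}(Z^n|X^n)},\qquad
A_n^{(3)}=\frac{1}{n}\log\frac{W^n(Z^n|X^n,Y^n)}{P_{Z^n}(Z^n)}
\]
for the three information densities.

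\textbf{Setup.} Let $(R_1,R_2)\notin C_{\Gamma_1,\Gamma_2}(0|\boldsymbol{W})$. By Corollary \ref{cor:general_formula2}, this region is $\bigcup\mathcal{R}_{\boldsymbol{W}}(\boldsymbol{X},\boldsymbol{Y})$ over $\mathcal{S}_{\Gamma_1,\Gamma_2}$. By hypothesis \eqref{eq:P4-1}, this union also equals $\bigcup\overline{\mathcal{R}}_{\boldsymbol{W}}(\boldsymbol{X},\boldsymbol{Y})$.
\begin{itemize}
\item Take a sequence of codes with $\liminf\frac1n\log M_n^{(\alpha)}\ge R_\alpha$.
\item Suppose, for contradiction, that $\varepsilon_n\not\to1$. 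Then along some subsequence $\varepsilon_n\le 1-\eta$ for some $\eta>0$.
\item Let $(X^n,Y^n)$ be uniform on the codebooks. By definition of the codes, $(\boldsymbol{X},\boldsymbol{Y})\in\mathcal{S}_{\Gamma_1,\Gamma_2}$; along the complementary indices, fill in arbitrary admissible inputs.
\end{itemize}

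\textbf{Key step: a Verd\'u--Han type lower bound on the error.} For any $\gamma>0$,
\[
\varepsilon_n\ \ge\ \Pr\Big\{A_n^{(1)}\le \tfrac1n\log M_n^{(1)}-\gamma\ \text{ or }\ A_n^{(2)}\le \tfrac1n\log M_n^{(2)}-\gamma\ \text{ or }\ A_n^{(3)}\le \tfrac1n\log M_n^{(1)}M_n^{(2)}-\gamma\Big\}-3e^{-n\gamma}.
\]
This is the converse half of Theorem \ref{thm:general_formula}; it comes from \cite[Lemma 7.11.?]{Han2003} (equivalently, the argument underlying \cite[Theorem 5]{Han98}). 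The cost constraint plays no role here, since the inputs are exactly the codeword distributions. Combining this bound with $\varepsilon_n\le1-\eta$, the probability that \emph{all three} densities simultaneously exceed $R_1-2\gamma$, $R_2-2\gamma$, and $R_1+R_2-2\gamma$ is at least $\eta/2$ for large $n$ in the subsequence. In particular, each single event $A_n^{(1)}>R_1-2\gamma$, $A_n^{(2)}>R_2-2\gamma$, $A_n^{(3)}>R_1+R_2-2\gamma$ has probability bounded away from $0$ along the subsequence.

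\textbf{Conclusion.} Probability bounded away from zero implies $\overline{I}(\boldsymbol{X};\boldsymbol{Z}|\boldsymbol{Y})\ge R_1-2\gamma$, and similarly for the other two quantities. The p-limsup is taken over all $n$, so working along a subsequence only helps. Hence $(R_1-2\gamma,R_2-2\gamma)\in\overline{\mathcal{R}}_{\boldsymbol{W}}(\boldsymbol{X},\boldsymbol{Y})$ for every $\gamma>0$ (clip at zero if needed). By \eqref{eq:P4-1} and Corollary \ref{cor:general_formula2}, this point lies in $C_{\Gamma_1,\Gamma_2}(0|\boldsymbol{W})$. That region is closed, being a capacity region by Definition \ref{def:1st_achievable}. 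Letting $\gamma\to0$ gives $(R_1,R_2)\in C_{\Gamma_1,\Gamma_2}(0|\boldsymbol{W})$, a contradiction.

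\textbf{Main obstacle.} The main subtlety is that each $\overline{\mathcal{R}}_{\boldsymbol{W}}(\boldsymbol{X},\boldsymbol{Y})$ uses three separate limsups, while the error bound yields a \emph{joint} event. Handling the joint event is easy, because it implies each marginal event. The other point to check is that the subsequence and filling construction preserves membership in $\mathcal{S}_{\Gamma_1,\Gamma_2}$; it does, because the cost constraint is imposed blockwise for each $n$.
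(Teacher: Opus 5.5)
Your proof is correct and essentially the argument the paper intends. The paper gives no separate proof but defers to Han's proof of \cite[Corollary 1]{Han98}, which uses the same ingredients: the Verd\'u--Han-type lower bound (the single-component case of Lemma~\ref{lemma:finite_length_LB}) applied to the codeword distributions, combined with the hypothesis and Corollary~\ref{cor:general_formula2}. Your contrapositive argument via the closedness of $C_{\Gamma_1,\Gamma_2}(0\midd\boldsymbol{W})$ is equivalent to the direct version.

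Two cosmetic fixes are worth making. First, take $\frac{1}{n}\log M_n^{(\alpha)}\ge R_\alpha-\gamma/2$ eventually, so that the sum-rate threshold really is $R_1+R_2-2\gamma$. Second, drop the ``fill in'' step, since the code sequence, and hence $(X^n,Y^n)$, is given for every $n$.
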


Notice that, owing to Corollary \ref{cor:general_formula2},  the LHS of \eqref{eq:P4-1} with $\Gamma_1 = P_1$ and $\Gamma_2 = P_2$ is the 0-capacity region $C_{P_1, P_2}(0|\boldsymbol{W})$ with power constraint $(P_1, P_2)$. Then, in view of the relation
\begin{align}
\bigcup_{(\boldsymbol{X}, \boldsymbol{Y}) \in \mathcal{S}_{P_1, P_2}}
\mathcal{R}_{\boldsymbol{W}} (\boldsymbol{X}, \boldsymbol{Y})
\subset
\bigcup_{(\boldsymbol{X}, \boldsymbol{Y}) \in \mathcal{S}_{P_1, P_2}}
\overline{\mathcal{R}}_{\boldsymbol{W}}(\boldsymbol{X}, \boldsymbol{Y}), \label{eq:P4-1c}
\end{align}
it suffices to show that
\begin{align}
 C_{P_1, P_2}(0|\boldsymbol{W}) \supseteq \bigcup_{(\boldsymbol{X}, \boldsymbol{Y}) \in \mathcal{S}_{P_1, P_2}}
\overline{\mathcal{R}}_{\boldsymbol{W}}(\boldsymbol{X}, \boldsymbol{Y}).  \label{eq:P4-2}
\end{align}
In view of Theorem \ref{thm:0-cap_Gaussian_MAC}, to derive \eqref{eq:P4-2} we shall show
\begin{align}
 \sup_{(\boldsymbol{X}, \boldsymbol{Y}) \in \mathcal{S}_{P_1, P_2}} \overline{I}(\boldsymbol{X} ; \boldsymbol{Z}| \boldsymbol{Y}) &\le  \frac{1}{2} \log \left( 1 + \frac{P_1}{N} \right) ,\label{eq:P4-3a} \\
 \sup_{(\boldsymbol{X}, \boldsymbol{Y}) \in \mathcal{S}_{P_1, P_2}} \overline{I}(\boldsymbol{Y} ; \boldsymbol{Z}| \boldsymbol{X}) &\le \frac{1}{2} \log \left( 1 + \frac{P_2}{N} \right) , \label{eq:P4-3b} \\
 \sup_{(\boldsymbol{X}, \boldsymbol{Y}) \in \mathcal{S}_{P_1, P_2}} \overline{I}(\boldsymbol{X} \boldsymbol{Y} ; \boldsymbol{Z}) &\le \frac{1}{2} \log \left( 1 + \frac{P_1 + P_2}{N} \right). \label{eq:P4-3c}
\end{align}

\noindent
(a) ~~First, we show \eqref{eq:P4-3a} and \eqref{eq:P4-3b}. 
Fix $\boldsymbol{x} = (x_1, \ldots, x_n) \in \mathcal{X}^n$ and $\boldsymbol{y} =(y_1, \ldots, y_n) \in \mathcal{Y}^n$ so as to satisfy the power constraint
 \begin{align}
  \frac{1}{n} \sum_{i=1}^n x_i^2 \le P_1, ~~~ \frac{1}{n} \sum_{i=1}^n y_i^2 \le P_2.   \label{eq:power_const}
 \end{align}
 Setting $X^n = \big(X_1^{(n)}, X_2^{(n)}, \ldots, X_n^{(n)} \big)$, $Y^n = \big(Y_1^{(n)}, Y_2^{(n)}, \ldots, Y_n^{(n)} \big)$, and $Z^n = \big(Z_1^{(n)}, Z_2^{(n)}, \ldots, Z_n^{(n)} \big)$, define
 \begin{align}
     \tilde{I}(Z_i^{(n)}; x_i | \, y_i) \equiv \log \frac{W(Z_i^{(n)}| x_i, y_i)}{P_{\overline{Z}|\overline{Y}}(Z_i^{(n)} | \,  y_i)}, 
 \end{align}
 where
\begin{align}
\overline X \sim \mathcal N(0,P_1), 
\qquad
\overline Y \sim \mathcal N(0,P_2)
\end{align}
and $\overline Z$ is the corresponding output via channel $W$.
The density function of channel $W$ is given by
\begin{align}
W(z| \, x,y)
= \frac{1}{\sqrt{2\pi N}} \exp\!\left(-\frac{(z-x-y)^2}{2N} \right) \label{eq:density1}
\end{align}
and the conditional reference distribution $P_{\overline Z|\overline Y}$ given $\overline{Y}=y$ is
\begin{align}
P_{\overline Z|\overline Y}(z| \, y)
=\frac{1}{\sqrt{2\pi(P_1+N)}} \exp\!\left(-\frac{(z-y)^2}{2(P_1+N)}\right)  \label{eq:density2}
\end{align}
since 
\begin{align}
\overline{Z} \sim \mathcal N(y,P_1+N)~~~\text{given } \overline{Y}=y.
\end{align}
Using \eqref{eq:density1} and \eqref{eq:density2}, we have
\begin{align}
\log
\frac{W(z|x,y)}{P_{\overline Z|\overline Y}(z|y)}
& = \frac12 \log \frac{P_1+N}{N}
+ \frac{(z-y)^2}{2(P_1+N)}
- \frac{(z-x-y)^2}{2N} \nonumber\\
& = \frac12\log\!\left(1+\frac{P_1}{N}\right)
+ \frac{(z-y)^2}{2(P_1+N)}
- \frac{(z-x-y)^2}{2N},
\end{align}
and therefore
\begin{align}
    \mathbb{E}_{\boldsymbol{x}, \boldsymbol{y}} \left[  \tilde{I}(Z_i^{(n)}; x_i | \,  y_i) \right] &= \frac12\log\!\left(1+\frac{P_1}{N}\right) +\frac{x_i^2+N}{2(P_1+N)}-\frac{N}{2N} \nonumber\\
&=
\frac12\log\!\left(1+\frac{P_1}{N}\right) + \frac{x_i^2-P_1}{2(P_1+N)},
\end{align}
where $\mathbb{E}_{\boldsymbol{x}, \boldsymbol{y}}$ is the conditional expectation with respect to $W^n(\cdot | \boldsymbol{x}, \boldsymbol{y})$.
In view of \eqref{eq:power_const}, we have 
\begin{align}
    \mathbb{E}_{\boldsymbol{x}, \boldsymbol{y}} \left[  \frac{1}{n} \sum_{i=1}^n \tilde{I}(Z_i^{(n)}; x_i | \,  y_i) \right] &\le
\frac12\log\!\left(1+\frac{P_1}{N}\right). \label{eq:P4-3}
\end{align}
Similarly, the conditional variance with respect to $W^n(\cdot | \boldsymbol{x}, \boldsymbol{y})$ can be calculated as
\begin{align}
    \mathbb{V}_{\boldsymbol{x}, \boldsymbol{y}} \left[  \tilde{I}(Z_i^{(n)}; x_i | \,  y_i) \right] &= \frac{P_1^2}{2(P_1 + N)^2} + \frac{x_i^2 N}{(P_1 + N)^2}
\end{align}
and so from \eqref{eq:power_const} we have
\begin{align}
    \mathbb{V}_{\boldsymbol{x}, \boldsymbol{y}} \left[  \frac{1}{n} \sum_{i = 1}^n \tilde{I}(Z_i^{(n)}; x_i | \,  y_i) \right] &\le \frac{P_1^2}{2n (P_1 + N)^2} + \frac{P_1 N}{n (P_1 + N)^2}.
\end{align}
Here, we have taken into account that $\sum_{i=1}^n \tilde{I}(Z_i^{(n)}; x_i | \,  y_i) $ is a sum of independent random variables given $(X^n, Y^n) = (\boldsymbol{x}, \boldsymbol{y})$ with bounded variance.

Now, we can invoke Chebyshev's inequality with a constant $\sigma_0 > 0$, and \eqref{eq:P4-3} implies
  \begin{align}
   \Pr \left\{  \frac{1}{n} \sum_{i=1}^n \tilde{I}(Z_i^{(n)}; x_i | \,  y_i)  \ge
\frac12\log\!\left(1+\frac{P_1}{N}\right) + \gamma \, \Big| \, X^n = \boldsymbol{x}, Y^n = \boldsymbol{y} \right\} \le \frac{\sigma_0^2}{n \gamma^2}. \label{eq:P4-4}
  \end{align}
  Since \eqref{eq:P4-4} holds for any $(\boldsymbol{x}, \boldsymbol{y})$ satisfying \eqref{eq:power_const},
  \begin{align}
   \Pr \left\{   \frac{1}{n} \sum_{i=1}^n \tilde{I}(Z_i^{(n)}; X_i^{(n)} | \,  Y_i^{(n)})  \ge
\frac12\log\!\left(1+\frac{P_1}{N}\right) + \gamma \right\} \le \frac{\sigma_0^2}{n \gamma^2} \label{eq:P4-5}
  \end{align}
  for any given $(\boldsymbol{X}, \boldsymbol{Y}) \in \mathcal{S}_{P_1,P_2}$.
  Taking limsup for both sides in \eqref{eq:P4-5}, we obtain
  \begin{align}
      \text{p-}\!\limsup_{n \to \infty} \frac{1}{n} \sum_{i=1}^n \tilde{I}(Z_i^{(n)}; X_i^{(n)} | \,  Y_i^{(n)})  \le \frac12\log\!\left(1+\frac{P_1}{N}\right) + \gamma. \label{eq:P4-6}
  \end{align}

  On the other hand, we can easily see that
  \begin{align}
      \overline{I} (\boldsymbol{X}; \boldsymbol{Z} | \boldsymbol{Y}) &=  \text{p-}\!\limsup_{n \to \infty} \left(\frac{1}{n} \sum_{i=1}^n \tilde{I}(Z_i^{(n)}; X_i^{(n)} | \,  Y_i^{(n)})  - \frac{1}{n} \log \frac{P_{Z^n | Y^n}(Z^n | Y^n)}{P_{\overline{Z}^n | \overline{Y}^n}(Z^n | Y^n)}\right) \nonumber\\
      &\le  \text{p-}\!\limsup_{n \to \infty} \frac{1}{n} \sum_{i=1}^n \tilde{I}(Z_i^{(n)}; X_i^{(n)} | \,  Y_i^{(n)})  - \text{p-}\!\liminf_{n \to \infty} \frac{1}{n} \log \frac{P_{Z^n | Y^n}(Z^n | Y^n)}{P_{\overline{Z}^n | \overline{Y}^n}(Z^n | Y^n)} \nonumber\\
      & \overset{(a)}{\le}  \text{p-}\!\limsup_{n \to \infty} \frac{1}{n} \sum_{i=1}^n \tilde{I}(Z_i^{(n)}; X_i^{(n)} | \,  Y_i^{(n)})   \label{eq:P4-7}
  \end{align}
  for all $(\boldsymbol{X}, \boldsymbol{Y}) \in \mathcal{S}_{P_1, P_2}$, where ($a$) follows because Lemma \ref{lem:div_spectrum} implies that
  \begin{align}
      \text{p-}\!\liminf_{n \to \infty} \frac{1}{n} \log \frac{P_{Z^n | Y^n}(Z^n | Y^n)}{P_{\overline{Z}^n | \overline{Y}^n}(Z^n | Y^n)} \ge 0.
  \end{align}
   Since  $\gamma > 0$ is an arbitrary constant in \eqref{eq:P4-6}, we obtain \eqref{eq:P4-3a}. Similarly for \eqref{eq:P4-3b}.

\medskip
\noindent
(b)~~ Next, we show \eqref{eq:P4-3c}. The proof is more involved here since we need to deal with the spectral-inf of the joint information density. 

For an $n\times n$ real matrix $A_n=\big(a_{ij}^{(n)} \big)$,
we write $\mathrm{tr}(A_n)=\sum_{i=1}^n a_{ii}^{(n)}$ for its trace.
For symmetric matrices $A_n$ and $B_n$, $A_n  \succeq B_n$ means that the matrix $A_n - B_n$ is positive semidefinite. 
For a subspace $S\subseteq\mathbb R^n$, let $\mathcal P_S$ denote the orthogonal projection of $\mathbb R^n$ onto $S$, and let $I_S$ denote the identity operator on $S$.

Fix $(\boldsymbol X,\boldsymbol Y)\in\mathcal S_{P_1,P_2}$ (so that $X^n\perp Y^n$ and $\|X^n\|^2\le nP_1$, $\|Y^n\|^2\le nP_2$ hold almost surely for every $n$), and let $Z^n=X^n+Y^n+V^n$ denote the corresponding output of $W^n$, where $V^n\sim \mathcal{N}(\boldsymbol{0},NI_n)$ is the Gaussian noise. Following \cite{Tan2026b}, consider
\begin{align}
K_{X^n}\equiv\mathrm{Cov}(X^n),\qquad K_{Y^n}\equiv\mathrm{Cov}(Y^n),\qquad S_n\equiv K_{X^n}+K_{Y^n},
\end{align}
which denote the ($n\times n$) covariance matrices of $X^n,Y^n$ and their sum. Since $S_n$ is symmetric, it admits a \textit{spectral decomposition} $S_n=\sum_{i=1}^n\lambda_i(S_n) \, \boldsymbol{e}_i\boldsymbol{e}_i^\top$, where $\boldsymbol{e}_1,\ldots, \boldsymbol{e}_n\in\mathbb R^n$ form an orthonormal basis of eigenvectors of $S_n$ with corresponding eigenvalues $\lambda_1(S_n),\ldots,\lambda_n(S_n)$. Fix a threshold sequence $b_n\to\infty$ with $b_n=o(n)$; we take $b_n=\sqrt n$ for example.
Let
\begin{align}
F_n\ \equiv\ \mathrm{span}\{\boldsymbol{e}_i\ |\ \lambda_i(S_n)>b_n\}
\end{align}
denote the subspace spanned by the eigenvectors of $S_n$ whose eigenvalues exceed $b_n$ (the space with \emph{exceptionally} large eigenvalues), and let $\mu_1^n\equiv\mathbb E[X^n]$ and $\mu_2^n\equiv\mathbb E[Y^n]$ denote the mean vectors of $X^n$ and $Y^n$, respectively. Define
\begin{align}
E_n\ \equiv\ F_n\oplus\mathrm{span}\big\{\mathcal P_{F_n^\perp}\mu_1^n,\ \mathcal P_{F_n^\perp}\mu_2^n\big\},\qquad D_n\ \equiv\ E_n^\perp,
\end{align}
where $\oplus$ denotes the direct sum of subspaces; since $\mathcal P_{F_n^\perp}\mu_1^n,\mathcal P_{F_n^\perp}\mu_2^n\in F_n^\perp$, this direct sum is orthogonal between $F_n$ and $\mathrm{span}\{\mathcal P_{F_n^\perp}\mu_1^n,\mathcal P_{F_n^\perp}\mu_2^n\}$. Subspaces $E_n$ and $D_n$ are referred to as the \emph{exceptional} and \emph{diffuse} subspaces, respectively.

Let $d_n\equiv\dim(D_n)$ $r_n\equiv\dim(E_n)$ be the dimensions of subspaces $D_n$ and $E_n$; since $D_n=E_n^\perp$, we have $d_n+r_n=n$. By construction, it holds that $\mu_1^n,\mu_2^n\in E_n$, and therefore
\begin{align}
\mathcal P_{D_n}\mu_1^n=\boldsymbol{0},\qquad \mathcal P_{D_n}\mu_2^n=\boldsymbol{0}. \label{eq:zero_mean_diffuse}
\end{align}
Since $\mathrm{tr}(S_n)\le n(P_1+P_2)$ (see \eqref{eq:T1} below) and each eigenvalue counted in $F_n$ exceeds $b_n$,
\begin{align}
r_n\ \le\ \dim(F_n)+2\ \le\ \frac{\mathrm{tr}(S_n)}{b_n}+2\ \le\ \frac{n(P_1+P_2)}{b_n}+2\ =\ O(\sqrt n). \label{eq:rn_bound}
\end{align}
For any length-$n$ (random) vector $U^n$, write
\begin{align}
U_D^n\ \equiv\ \mathcal P_{D_n}U^n,\qquad U_E^n\ \equiv\ \mathcal P_{E_n}U^n
\end{align}
for its projections onto $D_n$ and $E_n$ respectively. It should be noted that $U^n=U_D^n+U_E^n$ and, by the Pythagorean theorem, it holds that $\|U^n\|^2=\|U_D^n\|^2+\|U_E^n\|^2$. %In particular, $X_D^n,Y_D^n,V_D^n,Z_D^n$ and $X_E^n,Y_E^n,V_E^n,Z_E^n$ denote the diffuse and exceptional components of $X^n,Y^n,V^n,Z^n$ respectively. 
Let
\begin{align}
A_n\ \equiv\ \mathcal P_{D_n}K_{X^n}\mathcal P_{D_n},\qquad B_n\ \equiv\ \mathcal P_{D_n}K_{Y^n}\mathcal P_{D_n}
\end{align}
denote the \emph{compressions} of $K_{X^n},K_{Y^n}$ to $D_n$. Since $K_{X^n},K_{Y^n}\succeq0$ and $\mathcal P_{D_n}$ is an orthogonal projection, it is clear that $A_n\succeq0$ and $B_n\succeq0$; moreover, since $S_n-K_{X^n}=K_{Y^n}\succeq0$, compressing by $\mathcal P_{D_n}$ preserves the Loewner partial order \cite[Section 7.7]{Horn-Johnson2013}, so that
\begin{align}
\ A_n\ = \mathcal P_{D_n}K_{X^n}\mathcal P_{D_n}\  \preceq\ \mathcal P_{D_n}S_n\mathcal P_{D_n}\ \preceq\ b_nI_{D_n}, \label{eq:An_bound_direct}
\end{align}
where the last inequality holds because $D_n\subseteq F_n^\perp$, on which the eigenvalues of $S_n$ are at most $b_n$. 
Similarly, $B_n\preceq b_nI_{D_n}$ and $A_n+B_n=\mathcal P_{D_n}S_n\mathcal P_{D_n}\preceq b_nI_{D_n}$. Let
\begin{align}
\Xi_n\ \equiv\ \langle X_D^n,\,Y_D^n\rangle,
\end{align}
where $\langle\cdot,\cdot\rangle$ denotes the standard Euclidean inner product on $\mathbb R^n$ (so that $\langle X_D^n,Y_D^n\rangle=\sum_{k=1}^{d_n}(X_D^n)_k(Y_D^n)_k$ in any orthonormal basis of $D_n$). 
%We refer to $\Xi_n$ as the \emph{inner-product (cross) term} between the diffuse components, and use this name consistently in what follows.

\medskip
The following facts about $\Xi_n$ and the trace of $S_n,A_n,B_n$ are used repeatedly in what follows. They obviously follow from linear algebra alone (and, in the case of (T5), also from the independence $X^n\perp Y^n$); for completeness, we give the proofs in Appendix~\ref{app:trace_lemma}. 
\begin{lemma}[Trace Relations]\label{lem:trace}
{\rm
\begin{align}
&\text{(T1)}\qquad \mathrm{tr}(S_n)\ \le\ n(P_1+P_2); \label{eq:T1}\\
&\text{(T2)}\qquad \mathrm{tr}(A_n)\ \le\ nP_1 ~\text{ and }~ \mathrm{tr}(B_n)\ \le\ nP_2; \label{eq:T2}\\
&\text{(T3)}\qquad \mathrm{tr}(A_n^2)\ \le\ b_n\,\mathrm{tr}(A_n) ~\text{ and }~ \mathrm{tr}(B_n^2)\ \le\ b_n\,\mathrm{tr}(B_n); \label{eq:T3}\\
&\text{(T4)}\qquad \mathrm{tr}(A_nB_n)\ \le\ \sqrt{\mathrm{tr}(A_n^2)\,\mathrm{tr}(B_n^2)}\ \le\ \frac{\mathrm{tr}(A_n^2)+\mathrm{tr}(B_n^2)}{2}; \label{eq:T4}\\
&\text{(T5)}\qquad \mathrm{tr}(A_nB_n)\ =\ \mathbb V[\Xi_n]. \label{eq:T5}
\end{align}}
\end{lemma}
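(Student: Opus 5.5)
All five relations come from elementary properties of traces of positive semidefinite matrices, together with the construction of $D_n$. I would prove them in the order stated, with (T5) being the only one that uses probabilistic structure.

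For (T1), write $\mathrm{tr}(S_n)=\mathrm{tr}(K_{X^n})+\mathrm{tr}(K_{Y^n})$. Then
\begin{align}
\mathrm{tr}(K_{X^n})=\mathbb E\|X^n-\mu_1^n\|^2=\mathbb E\|X^n\|^2-\|\mu_1^n\|^2\le \mathbb E\|X^n\|^2\le nP_1,
\end{align}
by the almost-sure power constraint \eqref{eq:power_constraint1}; the same argument gives $\mathrm{tr}(K_{Y^n})\le nP_2$.

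For (T2), use cyclicity of the trace and $\mathcal P_{D_n}^2=\mathcal P_{D_n}$:
\begin{align}
\mathrm{tr}(A_n)=\mathrm{tr}(K_{X^n}\mathcal P_{D_n})=\mathrm{tr}\big(K_{X^n}^{1/2}\mathcal P_{D_n}K_{X^n}^{1/2}\big).
\end{align}
Since $\mathcal P_{D_n}\preceq I_n$ and compression preserves the Loewner order, the last trace is at most $\mathrm{tr}(K_{X^n})\le nP_1$, by the bound just obtained in (T1). The same argument applies to $B_n$.

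For (T3), use $0\preceq A_n\preceq b_nI_{D_n}$ from \eqref{eq:An_bound_direct}. Conjugating by $A_n^{1/2}$ gives $A_n^2=A_n^{1/2}A_nA_n^{1/2}\preceq b_nA_n$, and taking traces yields $\mathrm{tr}(A_n^2)\le b_n\mathrm{tr}(A_n)$. Equivalently, $\sum_k\lambda_k^2\le b_n\sum_k\lambda_k$ over the eigenvalues $\lambda_k\in[0,b_n]$ of $A_n$. The case of $B_n$ is identical.

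For (T4), $\mathrm{tr}(A_nB_n)$ is the Frobenius inner product of the symmetric matrices $A_n$ and $B_n$. Cauchy--Schwarz for this inner product gives the first inequality, and the arithmetic--geometric mean inequality gives the second.

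For (T5), the key input is \eqref{eq:zero_mean_diffuse}: $X_D^n=\mathcal P_{D_n}X^n$ and $Y_D^n=\mathcal P_{D_n}Y^n$ have zero mean. They are independent because $X^n\perp Y^n$ and $\mathcal P_{D_n}$ is a deterministic linear map (the subspace $D_n$ depends only on the distributions, not on the realizations). Their covariances are
\begin{align}
\mathrm{Cov}(X_D^n)=\mathcal P_{D_n}K_{X^n}\mathcal P_{D_n}=A_n,\qquad \mathrm{Cov}(Y_D^n)=B_n.
\end{align}
In any orthonormal basis of $\mathbb R^n$, $\Xi_n=\sum_k (X_D^n)_k(Y_D^n)_k$, so independence and zero means give
\begin{align}
\mathbb E[\Xi_n]=\sum_k\mathbb E[(X_D^n)_k]\,\mathbb E[(Y_D^n)_k]=0.
\end{align}
Expanding the square and applying independence again,
\begin{align}
\mathbb E[\Xi_n^2]=\sum_{k,l}\mathbb E[(X_D^n)_k(X_D^n)_l]\,\mathbb E[(Y_D^n)_k(Y_D^n)_l]=\sum_{k,l}(A_n)_{kl}(B_n)_{kl}=\mathrm{tr}(A_nB_n),
\end{align}
using symmetry of $B_n$ in the last step. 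Hence $\mathbb V[\Xi_n]=\mathbb E[\Xi_n^2]=\mathrm{tr}(A_nB_n)$.

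The main point needing care is this last step. It relies on the zero-mean property on $D_n$, which is exactly why the mean vectors were absorbed into $E_n$; without it, cross terms involving the means would appear in $\mathbb E[\Xi_n^2]$. The interchange of expectation and the finite sum also needs justification. All moments involved are finite because $\|X^n\|^2\le nP_1$ and $\|Y^n\|^2\le nP_2$ almost surely, so $\Xi_n$ is bounded and the computation is legitimate.
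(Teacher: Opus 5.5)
Your proof is correct and follows essentially the same route as the paper for all five relations: the covariance-trace bound for (T1), eigenvalues in $[0,b_n]$ for (T3), Frobenius Cauchy--Schwarz plus AM--GM for (T4), and independence plus the zero-mean property \eqref{eq:zero_mean_diffuse} for (T5). The only differences are cosmetic. In (T2) you bound $\mathrm{tr}(K_{X^n}^{1/2}\mathcal P_{D_n}K_{X^n}^{1/2})\le\mathrm{tr}(K_{X^n})$ via the Loewner order, where the paper uses the identity $\mathrm{tr}(A_n)=\mathbb E[\|X_D^n\|^2]$. In (T5) you expand coordinatewise, where the paper conditions on $X_D^n$ and uses $\boldsymbol v^\top M\boldsymbol v=\mathrm{tr}(M\boldsymbol v\boldsymbol v^\top)$.
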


Set
\begin{align}
a\ \equiv\ P_1+P_2+N,\qquad b\ \equiv\ 2a+N,
\end{align}
and define the reference distribution
\begin{align}
P_{\tilde Z^n}\ \equiv\ \mathcal{N}(0,\,aI_{D_n}) \otimes \mathcal{N}(0,\,bI_{E_n}),
\end{align}
which corresponds to a Gaussian law on $\mathbb R^n$ with variance $a$ on $D_n$ and variance $b>a$ on $E_n$, and let
\begin{align}
\ell_n(Z^n;X^nY^n)\ \equiv\ \log\frac{W^n(Z^n\mid X^n,Y^n)}{P_{\tilde Z^n}(Z^n)}.
\end{align}
Since the pair $(X^n,Y^n)\in\mathcal S_{P_1,P_2}$ is fixed throughout the following, and no ambiguity can arise, we abbreviate $\ell_n(Z^n;X^nY^n)$ as $\ell_n$ from here on, reverting to the full notation only in \eqref{eq:chebyshev_ell_n}--\eqref{eq:plimsup_ell_n} below, where the argument is emphasized for clarity. As shown in Appendix~\ref{app:ell_n_decomposition}, $\ell_n$ can be written, pathwise\footnote{Throughout, \emph{pathwise} means that the stated relation holds identically for all  realizations of the underlying random variables (e.g., $X^n,Y^n,V^n$), as an algebraic consequence, without invoking any probabilistic argument (as with $Q_n\le0$ in \eqref{eq:Qn_def}).}, as
\begin{align}
\ell_n\ =\ \frac{d_n}{2}\log\frac{a}{N}+\frac{r_n}{2}\log\frac{b}{N}+\frac{\|X^n\|^2+\|Y^n\|^2}{2a}+R_n+Q_n, \label{eq:ell_n_exact}
\end{align}
where
\begin{align}
R_n\ &\equiv\ \frac{\Xi_n}{a}+\frac{\langle X_D^n+Y_D^n,\,V_D^n\rangle}{a}+\|V_D^n\|^2\Big(\frac1{2a}-\frac1{2N}\Big), \label{eq:Rn_def}\\
Q_n\ &\equiv\ -\frac{\|X_E^n-Y_E^n\|^2}{4a}-\frac{N}{4ab}\Big\|X_E^n+Y_E^n-\frac{2a}{N}V_E^n\Big\|^2\ \le\ 0\qquad(\text{pathwise}). \label{eq:Qn_def}
\end{align}

Here, we estimate $\mathbb{E}[R_n]$. Write $P_{12}\equiv P_1+P_2$ for brevity. Since $Q_n\le0$ pathwise, taking expectations in \eqref{eq:ell_n_exact} gives
\begin{align}
\mathbb E[\ell_n]\ \le\ \frac{d_n}{2}\log\frac{a}{N}+\frac{r_n}{2}\log\frac{b}{N}+\frac{\mathbb E[\|X^n\|^2]+\mathbb E[\|Y^n\|^2]}{2a}+\mathbb E[R_n]. \label{eq:mean_step1}
\end{align}
In view of \eqref{eq:Rn_def}, we evaluate $\mathbb E[R_n]$ term by term.
\begin{itemize}
\item $\mathbb E[\Xi_n]=\langle\mathbb E[X_D^n],\mathbb E[Y_D^n]\rangle=\langle\mathcal P_{D_n}\mu_1^n,\mathcal P_{D_n}\mu_2^n\rangle=0$, by \eqref{eq:zero_mean_diffuse} together with $X^n\perp Y^n$.
\item $\mathbb E[\langle X_D^n+Y_D^n,V_D^n\rangle]=0$, since $V^n$ is independent of $(X^n,Y^n)$ with zero mean.
\item $\mathbb E[\|V_D^n\|^2]=Nd_n$, since $V^n\sim \mathcal{N}(\boldsymbol{0},NI_n)$ and orthogonal projection of an isotropic Gaussian vector has covariance $N$ times the identity on the projected subspace.
\end{itemize}
Hence, we have
\begin{align}
\mathbb E[R_n]\ =\ Nd_n\Big(\frac1{2a}-\frac1{2N}\Big)\ =\ -\frac{d_nP_{12}}{2a}, \label{eq:ERn}
\end{align}
where we used $N/(2a)-1/2=-P_{12}/(2a)$ (since $a-N=P_{12}$). 
Substituting \eqref{eq:ERn} and $\mathbb E[\|X^n\|^2]\le nP_1$, $\mathbb E[\|Y^n\|^2]\le nP_2$ (so that $\mathbb E[\|X^n\|^2]+\mathbb E[\|Y^n\|^2]\le nP_{12}$) into \eqref{eq:mean_step1}, and using $d_n=n-r_n$, we obtain
\begin{align}
\mathbb E[\ell_n]\ \le\ \frac{d_n}{2}\log\frac{a}{N}+\frac{r_n}{2}\log\frac{b}{N}+\frac{nP_{12}}{2a}-\frac{d_nP_{12}}{2a}\ =\ \frac{n}{2}\log\frac{a}{N}+\frac{r_n}{2}\log\frac{b}{a}+\frac{r_nP_{12}}{2a}. \label{eq:mean_step3}
\end{align}
% Dividing by $n$ on both sides, we obtain
% \begin{align}
% \mathbb E\!\left[\frac{\ell_n}{n}\right]\ \le\ \frac12\log\frac{a}{N}+\frac{r_n}{2n}\log\frac{b}{a}+\frac{r_nP_{12}}{2an}\ =\ \frac12\log\!\Big(1+\frac{P_1+P_2}{N}\Big)+O(n^{-1/2}). \label{eq:mean_final}
% \end{align}

% \smallskip\noindent
% \textit{Remark:} Of the computations in this subsection, only the value of $\mathbb E[R_n]$
% in \eqref{eq:ERn} is needed for the proof of \eqref{eq:final_result}; the
% resulting bound \eqref{eq:mean_final} on $\mathbb E[\ell_n/n]$ is included
% to give an overview of the argument.

\medskip
Next, we evaluate $\mathbb{V}[R_n]$.
Since $Q_n\le0$ pathwise by \eqref{eq:Qn_def}, discarding $Q_n$ in \eqref{eq:ell_n_exact} and using the (almost sure) power constraints $\|X^n\|^2\le nP_1$, $\|Y^n\|^2\le nP_2$ gives, almost surely, 
\begin{align}
\ell_n\ \le\ \frac{d_n}{2}\log\frac{a}{N}+\frac{r_n}{2}\log\frac{b}{N}+\frac{nP_{12}}{2a}+R_n. \label{eq:ell_upper_pathwise}
\end{align}
Adding and subtracting $\mathbb E[R_n]=-d_nP_{12}/(2a)$, given in \eqref{eq:ERn}, and simplifying exactly as in \eqref{eq:mean_step3}, it holds that
\begin{align}
\ell_n\ \le\ \frac{n}{2}\log\frac{a}{N}+\frac{r_n}{2}\log\frac{b}{a}+\frac{r_nP_{12}}{2a}+\big(R_n-\mathbb E[R_n]\big) \label{eq:ell_pathwise_fluctuation}
\end{align}
almost surely. Therefore, it suffices to bound $\mathbb V[R_n]$.

The subspaces $D_n,E_n$ (hence $d_n,r_n$, $A_n,B_n$) are determined by the covariance matrices and mean vectors of $X^n,Y^n$ and are hence deterministic. Conditioned on $(X^n,Y^n)$, the only remaining randomness in $R_n$ is the channel noise $V^n$. By the law of total variance, we have
\begin{align}
\mathbb V[R_n]\ =\ \mathbb E\big[\mathbb V(R_n\mid X^n,Y^n)\big]\ +\ \mathbb V\big(\mathbb E[R_n\mid X^n,Y^n]\big). \label{eq:total_variance}
\end{align}
In what follows, letting
\begin{align}
C_1\ \equiv\ \mathbb E\big[\mathbb V(R_n\mid X^n,Y^n)\big],\qquad C_2\ \equiv\ \mathbb V\big(\mathbb E[R_n\mid X^n,Y^n]\big),
\end{align}
we evaluate each of $C_1,C_2$ in turn:
\begin{enumerate}
\item[(a)] Bounding $C_1$: ~~
Conditioned on $(X^n,Y^n)$, $\Xi_n=\langle X_D^n,Y_D^n\rangle$ is a constant, and the remaining two terms in $R_n$ are linear and quadratic functionals of $V_D^n$ alone. Since the linear term is an odd function and the quadratic term an even function of the symmetric Gaussian vector $V_D^n$, they are uncorrelated, and hence
\begin{align}
\mathbb V(R_n\mid X^n,Y^n)\ =\ \frac{N\|X_D^n+Y_D^n\|^2}{a^2}+2N^2d_n\Big(\frac1{2a}-\frac1{2N}\Big)^2 \label{eq:cond_var_formula}
\end{align}
(a derivation of \eqref{eq:cond_var_formula} is given in Appendix~\ref{app:cond_var_derivation}). Using $\|X_D^n+Y_D^n\|^2\le 2\|X^n\|^2+2\|Y^n\|^2\le 2nP_{12}$ almost surely (by the power constraint and the fact that orthogonal projection does not increase the norm) and $d_n\le n$, we obtain
\begin{align}
\mathbb V(R_n\mid X^n,Y^n)\ =\ O(n) \label{eq:cond_var_bound}
\end{align}
uniformly over all independent pairs $(X^n, Y^n)$ satisfying power constraint $(P_1, P_2)$. Hence $C_1=O(n)$.

\item[(b)] Bounding $C_2$:~~
The conditional expectation of the noise terms in $R_n$, given $(X^n,Y^n)$, is a constant not depending on $(X^n,Y^n)$ (it depends only on $d_n,a,N$). Hence, we have
\begin{align}
\mathbb E[R_n\mid X^n,Y^n]\ =\ \frac{\Xi_n}{a}+(\text{a constant independent of }X^n,Y^n), \label{eq:cond_mean_Rn}
\end{align}
so that it follows from \eqref{eq:T5} that $C_2=\mathbb V(\Xi_n/a)=\mathrm{tr}(A_nB_n)/a^2$. It follows from \eqref{eq:T2}--\eqref{eq:T4} that
\begin{align}
\mathrm{tr}(A_nB_n)\ \le\ \frac{\mathrm{tr}(A_n^2)+\mathrm{tr}(B_n^2)}{2}\ \le\ \frac{b_n\big(\mathrm{tr}(A_n)+\mathrm{tr}(B_n)\big)}{2}\ \le\ \frac{b_n\,nP_{12}}{2}. \label{eq:trAnBn_bound}
\end{align}
Since $a$ is a fixed constant (not depending on $n$) and $b_n=\sqrt n$, this gives $C_2=O(n^{3/2})/a^2=O(n^{3/2})$.
\end{enumerate}

Combining \eqref{eq:total_variance}, \eqref{eq:cond_var_bound} and \eqref{eq:trAnBn_bound}, it holds that
\begin{align}
\mathbb V[R_n]\ =\ C_1+C_2\ \le c_0 n^{3/2} \label{eq:var_Rn}
\end{align}
for all sufficiently large $n$, where $c_0 > 0$ is a constant.
By Chebyshev's inequality, with any fixed $\gamma>0$, we have
\begin{align}
\Pr\left\{\frac{R_n-\mathbb E[R_n]}{n}\ >\ \gamma\right\}\ \le\ \frac{\mathbb V[R_n]}{n^2\gamma^2}\ =\ \frac{c_0 n^{3/2}}{n^2\gamma^2}\ =\ \frac{c_0}{\gamma^2 \sqrt{n}}. \label{eq:chebyshev_ell_n}
\end{align}
Since it follows from \eqref{eq:rn_bound} that $\delta_n \equiv \dfrac{r_n}{2n}\log\dfrac ba+\dfrac{r_nP_{12}}{2an}=O(n^{-1/2})\to0$, dividing \eqref{eq:ell_pathwise_fluctuation} by $n$ and combining with \eqref{eq:chebyshev_ell_n} yields
\begin{align}
\Pr\left\{\frac{1}{n} {\ell_n(Z^n;X^nY^n)}\ >\ \frac12\log\frac{a}{N}+\delta_n+\gamma\right\}\ \le \frac{c_0}{\gamma^2 \sqrt{n}}.
\label{eq:final_chebyshev}
\end{align}
 Since $\delta_n\to0$, there exists $n_0$ such that $\delta_n<\gamma$ for all $n\ge n_0$, \eqref{eq:final_chebyshev} gives
\begin{align}
\text{p-}\!\limsup_{n\to\infty}\ \frac{1}{n} {\ell_n(Z^n;X^nY^n)}\ \le\ \frac12\log\frac{a}{N}+2\gamma \ =\ \frac12\log\!\Big(1+\frac{P_1+P_2}{N}\Big)+2 \gamma  \label{eq:plimsup_ell_n}
\end{align}
for an arbitrary constant $\gamma>0$. Therefore, \eqref{eq:plimsup_ell_n} implies that
\begin{align}
\text{p-}\!\limsup_{n\to\infty}\ \frac{1}{n} {\ell_n(Z^n;X^nY^n)}\ \le\ \frac12\log\!\Big(1+\frac{P_1+P_2}{N}\Big). \label{eq:plimsup_ell_n_final}
\end{align}

Now, consider $i(Z^n; X^n Y^n) \equiv \log \dfrac{W^n(Z^n| X^n, Y^n)}{P_{Z^n} (Z^n)}$ and expand as
\begin{align}
    \frac{1}{n} i(Z^n; X^n Y^n) = \frac{1}{n} \ell_n (Z^n; X^n Y^n) - \frac{1}{n} \log \frac{P_{Z^n} (Z^n)}{P_{\tilde{Z}^n} (Z^n)}. \label{eq:inf_density_expand1}
\end{align}
Using Lemma \ref{lem:div_spectrum}, we have
\begin{align}
    \overline I(\boldsymbol Z;\boldsymbol{XY}) &= \text{p-}\!\limsup_{n \to \infty} \,\frac{1}{n} i(Z^n; X^n Y^n)  \nonumber \\
&\le \text{p-}\!\limsup_{n \to \infty} \,  \frac{1}{n} \ell_n (Z^n; X^n Y^n). \label{eq:inf_density_expand2}
\end{align}
Combining \eqref{eq:inf_density_expand2} with \eqref{eq:plimsup_ell_n_final}, we obtain
\begin{align}
\overline I(\boldsymbol Z;\boldsymbol{XY})\ \le\ \frac12\log\!\Big(1+\frac{P_1+P_2}{N}\Big)  \label{eq:final_result}
\end{align}
for every $(\boldsymbol X,\boldsymbol Y)\in\mathcal S_{P_1,P_2}$.
Since \eqref{eq:final_result} holds for every $(\boldsymbol X,\boldsymbol Y)\in\mathcal S_{P_1,P_2}$, taking the supremum over $(\boldsymbol X,\boldsymbol Y)\in\mathcal S_{P_1,P_2}$ on both sides yields the desired \eqref{eq:P4-3c}.

Thus, we obtain \eqref{eq:P4-2} from Theorem \ref{thm:0-cap_Gaussian_MAC}.
\qed

%========================================================
%===================== Section 7 ========================
%========================================================
\section{Quasi-Static Fading MAC as a Mixed Gaussian MAC} \label{sec:P5}

\subsection{Channel Model}

In this section, we consider a class of MACs which are a generalization of the \emph{quasi-static fading} MAC.
The quasi-static fading channels are quite popular in the field of wireless communication (cf.\ \cite{BPS98}, \cite{EGL2010}, \cite{OSW94}, \cite{Telatar99}, \cite{YDKP2013}, \cite{YDKP2014}).
This class of traditional communication channels can be regarded as being a special case of the mixed memoryless MACs that we have investigated so far. 
This class of MACs, formally defined below, is referred to as the quasi-static fading Gaussian MAC.

In order to formulate this problem, let $\Phi  =  [0, +\infty) \times [0, +\infty)$ and $\Theta = [0,1]$  be the set of fading coefficients (channel gains) $h = (h_1, h_2)$ and the set of indices $\theta$ which parameterize the variance $N_\theta$ of Gaussian noise, respectively.
The overall parameter space $\mathcal{H}$ is given by $\mathcal{H}$ = $\Phi \times \Theta$, where $\eta = (h_1, h_2, \theta)  \in \mathcal{H}$ is a random variable generated subject to a probability measure $w(\eta)$ on $\mathcal{H}$ in advance to encoding.
The component MACs $\boldsymbol{W}_\eta = \{ W_\eta^n\}_{n=1}^\infty$ with $\eta =(h_1, h_2, \theta) \in \mathcal{H}$ fixed, denoted simply by $\boldsymbol{W}_\eta = \{ W_\eta \}$, are stationary and memoryless MACs with input-output alphabets $\mathcal{X} = \mathcal{Y} = \mathcal{Z}= \mathbb{R}$. 
The output $Z_\eta^n$ via $W_\eta^n$ due to input $(X^n, Y^n)$ is given by
\begin{align}
    Z_\eta^n = h_1 X^n + h_2 Y^n + V_\theta^n,  \label{eq:fading_MAC_output}
\end{align}
where $V_\theta^n = \big(V_{\theta,1}^{(n)}, V_{\theta,2}^{(n)}, \ldots, V_{\theta,n}^{(n)} \big)$ is an additive i.i.d.\ Gaussian noise sequence with $V_{\theta,i}^{(n)} \sim \mathcal{N}(0, N_\theta)$. 
It should be noted here that $h_1, h_2$ and $\theta$ may be correlated. 
The motivation for the incorporation of Gaussian noise $V_\theta^n$ depending on $\theta$ is to take into consideration that power of thermal noises and/or strength of inter-cell interactions may vary over coherence times.

The transition probability of the quasi-static fading MAC $\boldsymbol{W} = \{ W^n \}_{n=1}^\infty$ is given by
\begin{align}
    W^n(\boldsymbol{z} | \boldsymbol{x}, \boldsymbol{y}) = \int_{\mathcal{H}} W_\eta^n(\boldsymbol{z} | \boldsymbol{x}, \boldsymbol{y}) \, dw(\eta)
\end{align}
for $\boldsymbol{x} \in \mathcal{X}^n, \boldsymbol{y} \in \mathcal{Y}^n$ and $\boldsymbol{z} \in \mathcal{Z}^n$.
In other words, in the sense of Shannon, the quasi-static fading MAC $\boldsymbol{W}=\{W^n\}_{n=1}^\infty$, is indeed a mixed memoryless MAC with Gaussian component channels $\boldsymbol{W}_\eta = \{W_\eta\}$. Therefore, the analysis developed in the preceding sections can be directly applied to this setting.
When the dependence on $(h, \theta) \in \mathcal{H}$ is emphasized, the component MAC $\boldsymbol{W}_\eta$ and $W_\eta^n$ are denoted by $\boldsymbol{W}_{h, \theta}$ and $W_{h, \theta}^n$, respectively.

The class of MACs investigated here includes important classes of MACs as special cases, as follows. When $N_\theta$ does not depend on $\theta \in \Theta$, i.e., $N_\theta = N$ for all $\theta \in \Theta$, only the coefficient $h = (h_1, h_2)$ varies depending on $\eta \in \mathcal{H}$. 
This case corresponds to the class of quasi-static fading MACs, previously investigated in the literature, e.g., \cite{KAFP2019}, \cite{Narasimhan2007}, \cite{Tse-Hanly98}, \cite{TVZ2004}.

In this section, we impose the following assumptions just for simplicity.
\begin{itemize}
\item[(A1)] 
 The probability measure $w(\eta)$ on $\mathcal{H}$ \cite{Billingsley95} is such that $\mathcal{H}$ is partitioned as $\mathcal{H} = \mathcal{H}_1 \cup \mathcal{H}_2 \cup E$, where $\mathcal{H}_1$ is the set of atomic points, i.e., $w(\{ \eta \}) > 0$ if $\eta \in \mathcal{H}_1$; $E$ is the null set\footnote{Formally, $E$ is the complement of the support of $w$.} of $w$, and measure $w$ on $\mathcal{H}_2 \equiv \mathcal{H} \setminus (\mathcal{H}_1 \cup E)$ is absolutely continuous with positive density. 
\item[(A2)] The variance $N_\theta$ is continuous in $\theta \in \Theta$ and positive for all $\theta \in \Theta$.
\end{itemize}
Since $\Theta = [0,1]$ is a compact set, assumption (A2) implies that $N_\theta$ is uniformly continuous and uniformly bounded, and it holds that
\begin{align}  
N_{\rm min} \equiv \min_{\theta \in \Theta} N_\theta > 0 ~~\mbox{ and }~~ N_{\rm max} \equiv\max_{\theta \in \Theta} N_\theta < + \infty. \label{eq:space_assumption}
\end{align}
Throughout the paper, we let $H$ denote the random variable taking values in $\mathcal{H}$ subject to $P_H(\eta) = w(\eta)$. 

\subsection{$\varepsilon$-Capacity Region}

Under assumptions (A1) and (A2), we establish the following theorem.
\begin{theorem} \label{thm:GQS-fading_MAC}
    Fix $\varepsilon \in [0, 1)$ arbitrarily. For a quasi-static fading MAC $\boldsymbol{W}$ with Gaussian components $\boldsymbol{W}_{h, \theta} = \{W_{h, \theta}\}$, the $\varepsilon$-capacity region with power constraint $(P_1, P_2)$ is given by
    \begin{align} \label{eq:GQS-fading_MAC}
        & C_{P_1, P_2}(\varepsilon \midd \boldsymbol{W}) \nonumber\\
        & = \mathrm{Cl} \left\{(R_1, R_2)\left|
\int_{\big\{(h, \theta) \, |\,\frac{1}{2} \log \big( 1 + \frac{h_1^2 P_1}{N_\theta} \big) \le R_1\mbox{ or }
 \frac{1}{2} \log \big( 1 + \frac{h_2^2 P_2}{N_\theta} \big) \le R_2 \mbox{ or } \frac{1}{2} \log \big( 1 + \frac{h_1^2 P_1+ h_2^2 P_2}{N_\theta} \big) \le R_1+R_2\big\}}dw(h, \theta)\le \varepsilon \right. \right\}.
\end{align}
\end{theorem}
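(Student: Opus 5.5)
The proof splits into a direct part and a converse part, both reduced to the general formula of Theorem \ref{thm:general_formula} via the decomposition \eqref{eq:6}, \eqref{eq:7} of $J$ into an integral over $\eta=(h,\theta)$ of component tail probabilities $F_{\eta,n}$.

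\textbf{Direct part.} I will take i.i.d.\ Gaussian inputs $X_i\sim\mathcal N(0,P_1-\delta)$ and $Y_i\sim\mathcal N(0,P_2-\delta)$. These are then conditioned on the power balls $\{\|\boldsymbol x\|^2\le nP_1\}$ and $\{\|\boldsymbol y\|^2\le nP_2\}$, exactly as in \eqref{eq:new_X}--\eqref{eq:new_Y}. The probabilities $\gamma_n^{(\alpha)}\to1$ by the weak law of large numbers, so \eqref{eq:25} transfers the bound on $J$ to the truncated inputs.

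The next step is to establish the analogue of Lemma \ref{hodai:1} for the continuous mixture. Its proof only uses the Lemma \ref{lem:div_spectrum}-type inequality $W^n\ge$ (mass near $\eta$)$\cdot W^n_\eta$ on a set $\Theta_n^*$ of measure tending to one. For this I will use (A1)--(A2). On atoms, $w(\{\eta\})W_\eta^n\le W^n$ directly. On the absolutely continuous part, I will use continuity of $W_\eta^n$ in $(h,\theta)$ on small neighbourhoods, with $N_\theta$ uniformly continuous and bounded away from $0$. The aim is that replacing $W^n$ by $W^n_\eta$ costs only $o(1)$ in the normalized log, up to an $e^{-n\gamma}$ error.

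For each fixed $\eta$, the component densities are i.i.d.\ Gaussian. The weak law of large numbers then gives convergence in probability to $\frac12\log(1+h_1^2(P_1-\delta)/N_\theta)$ and the other two mutual informations. Fatou's lemma, as in \eqref{eq:22}, bounds $J$ by the $w$-measure of the outage set. Letting $\delta\to0$, with the closure argument of Remark \ref{remark:inner_bound2}, yields the inclusion $\supseteq$.

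\textbf{Converse part.} Here Theorem \ref{thm:StrongC_Gaussian_MAC} is used. Take any code with rates at least $R_1-\gamma$ and $R_2-\gamma$, and write $\varepsilon_n=\int\varepsilon_{\eta,n}\,dw(\eta)$. For fixed $\eta$, the component channel is a Gaussian MAC with gains $h_1,h_2$, which is equivalent to powers $h_1^2P_1,h_2^2P_2$ and noise $N_\theta$. Suppose $\eta$ lies in the strict outage set, meaning some inequality $\frac12\log(1+h_1^2P_1/N_\theta)<R_1-\gamma$ (etc.) holds. Then $(R_1-\gamma,R_2-\gamma)$ lies outside that component's capacity region (Theorem \ref{thm:0-cap_Gaussian_MAC}), and the strong converse forces $\varepsilon_{\eta,n}\to1$.

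Fatou's lemma then gives $\liminf_n\varepsilon_n\ge w(\text{strict outage set at }(R_1-\gamma,R_2-\gamma))$. Hence this $w$-measure is at most $\varepsilon$. Letting $\gamma\to0$ and invoking the strict/non-strict closure equivalence of Remark \ref{remark:inner_bound2} gives the inclusion $\subseteq$. This is precisely where the full strong converse, rather than the $\varepsilon$-strong converse, is needed. Only the statement $\varepsilon_{\eta,n}\to1$ for each $\eta$ makes the integral bound work.

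\textbf{Main obstacle.} I expect the hardest step to be the direct-part replacement of the mixture density $W^n$ by the component density $W^n_\eta$ in the information densities, i.e.\ the continuous-parameter analogue of Lemma \ref{hodai:1}. The outputs are unbounded, so the local-continuity argument must control $\frac1n\log W^n_{\eta'}(Z^n|\cdot)/W^n_\eta(Z^n|\cdot)$ uniformly over neighbourhoods of $\eta$ that shrink with $n$. My plan is to exploit the quadratic form of the Gaussian log-density together with the power constraint. I will also use a deliberate choice of $\eta$-neighbourhoods of width $n^{-1}$, so that the excess is $O(\|Z^n\|^2/n^2)+O(\log n/n)$, which tends to $0$ in probability.
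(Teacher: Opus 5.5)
Your proposal is correct. The direct part is essentially the paper's, while the converse takes a genuinely different route.

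\textbf{Direct part.} You use the same ingredients as the paper:
\begin{itemize}
\item truncated i.i.d.\ Gaussian inputs;
\item $W^n\ge w(\{\eta\})W^n_\eta$ on atoms;
\item local continuity of the Gaussian log-likelihood on shrinking neighbourhoods for the absolutely continuous part;
\item the per-component weak law, followed by reverse Fatou.
\end{itemize}
You route this through Theorem~\ref{thm:general_formula} and a Lemma~\ref{hodai:1}-type replacement, whereas the paper re-runs Lemma~\ref{lem:finite_length_UB} directly. Your neighbourhoods centred at $\eta$ also avoid the paper's grid of width $a_0n^{-4}$ and its two-step replacement $\eta\to t(\eta)\to\eta$.

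One caveat: your stated excess $O(\|Z^n\|^2/n^2)$ tacitly assumes $N_\theta$ is Lipschitz. Under (A2) alone, the $\theta$-part of the normalized log-ratio is only $O\bigl(\omega(n^{-1})\bigr)\bigl(1+\|V_\theta^n\|^2/n\bigr)$, where $\omega$ is the modulus of continuity of $N_\theta$. This still vanishes in probability, so nothing breaks. In fact a neighbourhood of fixed small radius would already suffice: its $w$-mass is positive for $w$-a.e.\ $\eta$ and does not depend on $n$.

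\textbf{Converse part.} The paper lower-bounds $\varepsilon_n$ by the mixture Verd\'u--Han-type bound of Lemma~\ref{lemma:finite_length_LB}, passes to the component spectral-sup rates, and then applies \eqref{eq:P5-40a}--\eqref{eq:P5-40c}. You instead write $\varepsilon_n=\int\varepsilon_{\eta,n}\,dw(\eta)$, apply the operational strong converse (Theorem~\ref{thm:StrongC_Gaussian_MAC}) to each rescaled component, and finish with Fatou. Both rest on the same ingredient, \eqref{eq:P4-3a}--\eqref{eq:P4-3c}.

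\textbf{What each route buys.} Yours is more modular: any mixture of strong-converse components obeys this outage outer bound. It also carries over verbatim to CSIT/CSIRT, since for fixed $\eta$ the codes $C_{\eta,n}$ still form a code sequence for $W_\eta$. The paper's lemma, by contrast, yields a spectral-sup outer bound even when components lack the strong converse, and it is the form reused in Lemma~\ref{lemma:finite_length_LB2} and Sec.~\ref{sec:P8}.

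\textbf{Minor bookkeeping on your side.}
\begin{itemize}
\item A sum-rate outage needs slack $2\gamma$ to place $(R_1-\gamma,R_2-\gamma)$ outside the component region.
\item Components with $h_1h_2=0$ are not literally covered by Theorem~\ref{thm:StrongC_Gaussian_MAC} after rescaling, because the rescaled codebook collapses. The strong converse there is trivial, but it should be noted.
\end{itemize}
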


\begin{remark} \label{rem:alternative-formula}
    It follows immediately from \eqref{eq:GQS-fading_MAC} (cf.\ Theorem \ref{thm:0-cap_Gaussian_MAC}) that the 0-capacity region for the component MAC $\boldsymbol{W}_\eta = \{W_\eta\}$ with $\eta = (h, \theta)$ is given by
\begin{align}
    \Pi_{P_1, P_2, \eta} = \Big\{(R_1, R_2) \Big| \, &
0 \le R_1 \le \frac{1}{2} \log \Big( 1 + \frac{h_1^2 P_1}{N_\theta} \Big),\nonumber\\
& 0 \le R_2 \le \frac{1}{2} \log \Big( 1 + \frac{h_2^2 P_2}{N_\theta} \Big), \nonumber\\ 
 & R_1 + R_2 \le \frac{1}{2} \log \Big( 1 + \frac{h_1^2 P_1+ h_2^2 P_2}{N_\theta} \Big)  \Big\}.
\end{align}
Using the random variable $H$ taking values in $\mathcal{H}$, Theorem \ref{thm:GQS-fading_MAC} implies that the $\varepsilon$-capacity region can also be expressed as
\begin{align}
    C_{P_1, P_2}(\varepsilon \midd \boldsymbol{W}) = \mathrm{Cl} \left\{(R_1, R_2)\left| \, \Pr\{ (R_1, R_2) \not\in \Pi_{P_1, P_2, H} \} 
\le \varepsilon \right. \right\}, \label{eq:GQS-fading_MAC2}
\end{align}
where we have noticed that the RHS in \eqref{eq:GQS-fading_MAC} is unchanged if we replace the inequalities with the strict ones as noted in Remark \ref{remark:inner_bound2}.
\qed
\end{remark}

\medskip
As a special case of Theorem \ref{thm:GQS-fading_MAC}, consider the single-user quasi-static fading Gaussian channel, where the fading coefficient $h_1$ is simply denoted by $h$, i.e., $h \in [0,+\infty)$ and the channel output via  $W_\eta^n = W_{h,\theta}^n$ is given by
\begin{align}
    Z_\eta^n = h X^n + V_\theta^n, \label{eq:single-user-QS-fading-ch}
\end{align}
where $X^n$ denotes the channel input.
Then, we obtain the following corollary.
\begin{corollary} \label{coro:GQS-fading_ch}
    Fix $\varepsilon \in [0, 1)$ arbitrarily. For a single-user quasi-static fading channel $\boldsymbol{W}$ with Gaussian components $\boldsymbol{W}_{h, \theta} = \{W_{h, \theta}\}$, the $\varepsilon$-capacity with power constraint $P$, denoted by $C_{P}(\varepsilon \midd \boldsymbol{W})$, is given by
    \begin{align} \label{eq:GQS-fading_ch}
        C_{P}(\varepsilon \midd \boldsymbol{W}) = \sup \left\{ R\left|
\int_{\big\{(h, \theta) \, |\,\frac{1}{2} \log \big( 1 + \frac{h^2 P}{N_\theta} \big) \le R \big\}} dw(h, \theta)\le \varepsilon \right. \right\}.
\end{align}
\end{corollary}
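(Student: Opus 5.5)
The plan is to obtain Corollary \ref{coro:GQS-fading_ch} as a specialization of Theorem \ref{thm:GQS-fading_MAC}. The single-user channel \eqref{eq:single-user-QS-fading-ch} is embedded into the two-user model \eqref{eq:fading_MAC_output} by attaching a dummy second user whose rate we set to zero. Concretely, take $h_2 \equiv 0$ (the measure $w$ on $\mathcal{H}$ is the image of the single-user measure under $(h,\theta)\mapsto(h,0,\theta)$, which preserves (A1) and (A2)). Take an arbitrary power $P_2>0$ and $R_2=0$. The single-user $\varepsilon$-capacity $C_P(\varepsilon\midd\boldsymbol{W})$ should then equal $\sup\{R_1 \mid (R_1,0)\in C_{P,P_2}(\varepsilon\midd\boldsymbol{W})\}$.

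To justify this identification, I argue both inclusions at the level of codes. Any single-user code of rate $R$ gives a MAC code by letting user 2 use a single codeword $\boldsymbol{v}=\boldsymbol{0}$, i.e.\ $M_n^{(2)}=1$. This code satisfies the power constraint and has the same error probability, since $h_2=0$. Conversely, given a MAC code with $\liminf \frac1n\log M_n^{(2)}\ge 0$, the output does not depend on $\boldsymbol{v}$ because $h_2=0$. Hence for a fixed $\boldsymbol{v}_k$ minimizing the averaged error over $j$, the $j$-codebook, decoded by projecting $g_n$ onto its first coordinate, is a single-user code with error at most $\varepsilon_n$. Since the average over $k$ is $\le \varepsilon_n$, some $k$ achieves it. So the two rate notions coincide.

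Next I evaluate the integrand region in \eqref{eq:GQS-fading_MAC} with $h_2=0$ and $R_2=0$. The second event becomes $\frac12\log(1+0)=0\le 0$, which always holds. That would make the measure equal to $1$, so a naive substitution fails. This is the main obstacle. It is resolved by working with the strict-inequality version noted in Remark \ref{rem:alternative-formula} (cf.\ Remark \ref{remark:inner_bound2}): there the second event $0<R_2$ is empty at $R_2=0$. The third event becomes $\frac12\log(1+h^2P/N_\theta)<R_1$, which coincides with the first. Hence the region is $\mathrm{Cl}\{(R_1,R_2)\mid w\{\frac12\log(1+h^2P/N_\theta)<R_1 \text{ or } 0<R_2\}\le\varepsilon\}$.

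It remains to check that its slice at $R_2=0$ equals $\mathrm{Cl}\{R_1\mid w\{\frac12\log(1+h^2P/N_\theta)<R_1\}\le\varepsilon\}$. The set inside the closure contains exactly the points $(R_1,0)$ with that property, and points with $R_2>0$ only approach $R_2=0$ via rates already controlled by the $R_1$ condition. Thus the closure's slice is the closure of the one-dimensional set. The function $R_1\mapsto w\{\cdot<R_1\}$ is monotone nondecreasing, so that set is an interval $[0,R^*)$ or $[0,R^*]$, and its closure has supremum $R^*$. Finally, I would replace the strict inequality by $\le$ using the one-dimensional analogue of the argument in Remark \ref{remark:inner_bound2}. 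The two sets differ only at atoms of the distribution of $\frac12\log(1+h^2P/N_\theta)$, so their suprema coincide. This yields \eqref{eq:GQS-fading_ch}.
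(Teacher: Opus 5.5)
Your code-level identification of $C_P(\varepsilon\midd\boldsymbol W)$ with the $R_2=0$ slice is correct; this uses one codeword for user~2 and the decoder projected onto the first message. The converse half also goes through this way. The converse of Theorem~\ref{thm:GQS-fading_MAC} works $\eta$ by $\eta$ and never uses (A1). With $h_2\equiv 0$ it returns $(R-3\gamma,-3\gamma)\in B(\varepsilon\midd\boldsymbol W)$, i.e.\ $w\{\frac12\log(1+h^2P/N_\theta)\le R-3\gamma\}\le\varepsilon$.

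The gap is in achievability. First, and repairably, the embedding does not preserve (A1) unless $w$ is purely atomic. The non-atomic part sits on the plane $\{h_2=0\}$ and has no density on the three-dimensional parameter space, yet the direct proof uses that density through $a_j(\eta)\ge c_\eta/n^{12}$. More seriously, $h_2\equiv 0$ is exactly the case the direct part of Theorem~\ref{thm:GQS-fading_MAC} cannot reach. That argument certifies only pairs with $R_2\ge 0$ and $(R_1+\gamma,R_2+\gamma)\in B(\varepsilon\midd\boldsymbol W)$. For $h_2=0$, however, the second event $0\le R_2+\gamma$ holds for every $\eta$. Even with $M_n^{(2)}=1$, Lemma~\ref{lem:finite_length_UB} contains the event $\frac1n\log\frac{W^n(Z^n|X^n,Y^n)}{P_{Z^n|X^n}(Z^n|X^n)}\le\gamma$, which has probability one because that ratio is identically $1$. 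So the instance of the theorem you invoke for achievability is one its proof does not establish.

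The ``obstacle'' you diagnosed is a symptom of this, and switching to strict inequalities does not cure it. Suppose rates are confined to $\mathbb{R}_+^2$ before taking the closure. Then the $\le$-formula is empty for $\varepsilon<1$, although $(0,0)$ is trivially achievable, so Theorem~\ref{thm:GQS-fading_MAC} read that way is false here. Remark~\ref{remark:inner_bound2} also does not let you pass to the strict version, since its argument moves to $(R_1-2\gamma,R_2-2\gamma)$, which leaves the quadrant when $R_2=0$.

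Suppose instead that negative rates are allowed inside the closure. This is the convention of the footnote to Theorem~\ref{thm:general_formula}, and the one the converse of Theorem~\ref{thm:GQS-fading_MAC} tacitly uses when it concludes $(R_1-3\gamma,R_2-3\gamma)\in B$. Then the $\le$-formula already gives the correct slice by letting $R_2\uparrow 0$: for $R_2<0$ the second event is empty and the third is contained in the first. So there is no obstacle to resolve, but the unproven direct part remains.

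The clean fix, and what the paper means by ``special case'', is to rerun the direct part of the proof of Theorem~\ref{thm:GQS-fading_MAC} for a single user with a single information density. That means i.i.d.\ $\mathcal{N}(0,P-\rho)$ inputs restricted to the power ball, a single-user Feinstein-type bound, the same $\mathcal{H}_1$/$\mathcal{H}_2$ evaluation, and the weak law of large numbers. Your final one-dimensional step, passing between $<$ and $\le$ by monotonicity, then completes the proof.
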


\begin{remark} \label{rem:outage-capacity}
The RHS of \eqref{eq:GQS-fading_ch} is the well-known $\varepsilon$-outage capacity, which, to our knowledge, has earlier been introduced by Ozarow et al.\ \cite{OSW94} when $N_\theta$ does not depend on $\theta \in \Theta$.
Since then, this notion has been widely used to characterize the performance of fading channels (cf.\ \cite{BPS98}, \cite{EGL2010}, \cite{Telatar99}, \cite{TVZ2004}, \cite{YYG2026}), while its Shannon-theoretic justification remained a long-standing open problem \cite{SW97}.
Among others, Caire et al.\ \cite{CTB99} have given a first Shannon-theoretic proof for \eqref{eq:GQS-fading_ch} in 1999 when fading coefficients $h$ are available to both of the transmitter and the receiver (CSIRT).
It should be noted that we are here considering only the case where fading coefficients are not available to both of them. 
Then, the weak law of large numbers cannot play a core role (due to the nonergodicity of mixed channels without CSI).
Recently, Yang et al.\ \cite{YDKP2014} have shown \eqref{eq:GQS-fading_ch} via very sophisticated finite-length coding arguments.
On the other hand, Theorem \ref{thm:GQS-fading_MAC} generalizes their single-user results to the case of quasi-static fading Gaussian MACs with the rigorous Shannon-theoretic justification, which gives an answer to the question raised in \cite{SW97}.
\qed
\end{remark}

\medskip
\noindent
\emph{Proof of Theorem \ref{thm:GQS-fading_MAC}:}

\smallskip
\noindent
~(i) \textit{Direct Part:}

 In view of \eqref{eq:GQS-fading_MAC}, we set
 \begin{align}
     &B(\varepsilon | \boldsymbol{W}) \nonumber\\
     &~= \left\{(R_1, R_2)\left|
\int_{\big\{(h, \theta)\, |\,\frac{1}{2} \log \left( 1 + \frac{h_1^2 P_1}{N_\theta} \right) \le R_1\mbox{ or }
 \frac{1}{2} \log \left( 1 + \frac{h_2^2 P_2}{N_\theta} \right) \le R_2 \mbox{ or } \frac{1}{2} \log \left( 1 + \frac{h_1^2 P_1+ h_2^2 P_2}{N_\theta} \right) \le R_1+R_2\big\}}dw(h, \theta)\le \varepsilon \right. \right\}. \label{eq:region_B}
 \end{align}
 Fix $\gamma > 0$ arbitrarily small. We shall show that any rate pair $(R_1, R_2)$ with $(R_1 + \gamma, R_2 + \gamma) \in B(\varepsilon|\boldsymbol{W})$ is achievable, which then implies that $(R_1,R_2) \in C_{P_1, P_2}(\varepsilon| \boldsymbol{W})$.
 
%\vspace*{-2mm}
\begin{enumerate}
\item Let $\boldsymbol{X} = \big\{ X^n = \big(X_1^{(n)}, \ldots, X_n^{(n)} \big) \big\}_{n=1}^\infty$ and $\boldsymbol{Y} = \big\{ Y^n = \big(Y_1^{(n)}, \ldots, Y_n^{(n)} \big) \big\}_{n=1}^\infty$ be i.i.d.\ sequences generated subject to $\mathcal{N}(0, P_1 - \rho)$ and $\mathcal{N}(0, P_2 - \rho)$, respectively, where $\rho > 0$ is a sufficiently small constant such that 
\begin{align}
\frac{1}{2} \log \left(1 + \frac{h_1^2 P_1}{N_\theta} \right) - \gamma & \le \frac{1}{2} \log \left(1 + \frac{h_1^2 (P_1 - \rho)}{N_\theta} \right), \label{eq:P5-19a}\\
\frac{1}{2} \log \left(1 + \frac{h_2^2 P_2}{N_\theta} \right) - \gamma & \le \frac{1}{2} \log \left(1 + \frac{h_2^2 (P_2 - \rho)}{N_\theta} \right),  \label{eq:P5-19b} \\
\frac{1}{2} \log \left(1 + \frac{h_1^2 P_1 + h_2^2 P_2}{N_\theta} \right) - 2 \gamma & \le \frac{1}{2} \log \left(1 + \frac{h_1^2 (P_1 -\rho) + h_2^2 (P_2 - \rho)}{N_\theta} \right).  \label{eq:P5-19c}
\end{align}
Here, notice that under assumption (A2) on $\Theta$, \eqref{eq:space_assumption} guarantees that such $\rho > 0$ can always be chosen, independently of $\eta \in \mathcal{H}$.

\quad We set $M_n^{(1)} = e^{n R_1}$ and $M_n^{(2)} = e^{n R_2}$. 
Since $(\boldsymbol{X}, \boldsymbol{Y})$ does not necessarily satisfy power constraints \eqref{eq:power_constraint1} and \eqref{eq:power_constraint2}, we construct another input pair $(\overline{\boldsymbol{X}}, \overline{\boldsymbol{Y}})$ satisfying power constraint $(P_1, P_2)$ as follows.
First, let $A_n^{(1)} \subseteq\mathcal{X}^n$ and $A_n^{(2)} \subseteq\mathcal{Y}^n$ be the subsets given by
\begin{align}
    A_n^{(1)} &\equiv \left\{ \boldsymbol{x} \in \mathcal{X}^n \, \bigg| \, \frac{1}{n} \sum_{i=1}^n x_i^2  \le P_1 \right\}, \label{eq:set_A1} \\
    A_n^{(2)} &\equiv \left\{ \boldsymbol{y} \in \mathcal{Y}^n \, \bigg| \, \frac{1}{n} \sum_{i=1}^n y_i^2  \le P_2 \right\}. \label{eq:set_A2} 
\end{align}
Next, we set the probabilities of $\overline{X}^n$ and $\overline{Y}^n$ as
\begin{align}
    P_{\overline{X}^n}(\boldsymbol{x}) &= \left\{ 
    \begin{array}{cl}
    \frac{1}{\gamma_n^{(1)}}P_{X^n} (\boldsymbol{x}) & \text{for~} \boldsymbol{x} \in A_n^{(1)},  \\ 
    0 & \text{otherwise}; 
    \end{array} 
    \right. \label{eq:new_X2} \\
    P_{\overline{Y}^n}(\boldsymbol{y}) &= \left\{ 
    \begin{array}{cl}
    \frac{1}{\gamma_n^{(2)}}P_{Y^n} (\boldsymbol{y}) & \text{for~} \boldsymbol{y} \in A_n^{(2)}, \\ 
    0 & \text{otherwise};
    \end{array}
    \right.  \label{eq:new_Y2}
\end{align}
where $\gamma_n^{(1)}$ and $\gamma_n^{(2)}$ are given by
\begin{align}
    \gamma_n^{(1)} &\equiv \Pr \left\{ X^n \in A_n^{(1)} \right\}, \label{eq:gamma_1} \\
    \gamma_n^{(2)} &\equiv \Pr \left\{ Y^n \in A_n^{(2)} \right\}. \label{eq:gamma_2}
\end{align}
By definition, since we have
\begin{align}
    \frac{1}{n} \sum_{i = 1}^n \, \mathbb{E} \left[ \big(X_i^{(n)} \big)^2 \right] &= P_1 - \rho, \label{eq:power_X} \\
    \frac{1}{n} \sum_{i = 1}^n \, \mathbb{E} \left[ \big(Y_i^{(n)} \big)^2  \right] &= P_2 - \rho, \label{eq:power_Y}
\end{align}
the weak law of large numbers yields that $\gamma_n^{(\alpha)} \to 1$ as $n \to \infty$ for both $\alpha = 1, 2$.

\quad Let $Z^n$ and $\overline{Z}^n $ be the outputs via channel $W^n$ due to input pairs $(X^n, Y^n)$ and $(\overline{X}^n, \overline{Y}^n)$, respectively.
Based on \eqref{eq:new_X2} and \eqref{eq:new_Y2}, it can be easily verified that
\begin{align}
    P_{Z^n | Y^n} (\boldsymbol{z} | \boldsymbol{y}) & = \int_{\mathcal{X}^n} W^n(\boldsymbol{z} | \boldsymbol{x}, \boldsymbol{y}) P_{X^n}(\boldsymbol{x})\, d \boldsymbol{x} \nonumber\\
    & \ge \int_{A_n^{(1)}} W^n(\boldsymbol{z} | \boldsymbol{x}, \boldsymbol{y}) P_{X^n}(\boldsymbol{x}) \, d \boldsymbol{x} \nonumber\\
    & = \gamma_n^{(1)} \int_{A_n^{(1)}} W^n(\boldsymbol{z} | \boldsymbol{x}, \boldsymbol{y}) P_{\overline{X}^n}(\boldsymbol{x}) \, d \boldsymbol{x} \nonumber\\
    & = \gamma_n^{(1)}  P_{\overline{Z}^n | \overline{Y}^n} (\boldsymbol{z} | \boldsymbol{y}), \label{eq:P5-rel1}
\end{align}
and analogously
\begin{align}
    P_{Z^n | X^n} (\boldsymbol{z} | \boldsymbol{x}) & \ge \gamma_n^{(2)}  P_{\overline{Z}^n | \overline{X}^n} (\boldsymbol{z} | \boldsymbol{x}),   \label{eq:P5-rel2} \\
    P_{Z^n } (\boldsymbol{z} ) & \ge \gamma_n^{(1)}  \gamma_n^{(2)} P_{\overline{Z}^n} (\boldsymbol{z}) \label{eq:P5-rel3}
\end{align}
for all $( \boldsymbol{x}, \boldsymbol{y}, \boldsymbol{z}) \in \mathcal{X}^n \times \mathcal{Y}^n \times \mathcal{Z}^n$. 
Here, we use the following lemma.
\begin{lemma}[Han {\cite[Lemma 3]{Han98}}] \label{lem:finite_length_UB}
Let $X^n$ and $Y^n$ be any mutually independent input random variables. For arbitrarily fixed integers $M_n^{(1)}$ and $M_n^{(2)}$, there exists an $(n, M_n^{(1)}, M_n^{(2)}, \varepsilon_n)$ MAC code satisfying 
\begin{align}
    \varepsilon_n \le \Pr \bigg\{ & \frac{1}{n} \log \frac{W^n(Z^n | X^n, Y^n)}{P_{Z^n | Y^n}(Z^n | Y^n)} \le \frac{1}{n} \log M_n^{(1)} + \gamma  \nonumber\\
&\text{or } \frac{1}{n} \log \frac{W^n(Z^n | X^n, Y^n)}{P_{Z^n | X^n}(Z^n | X^n)} \le \frac{1}{n} \log M_n^{(2)} + \gamma\nonumber\\
&\text{or } \frac{1}{n} \log \frac{W^n(Z^n | X^n, Y^n)}{P_{Z^n}(Z^n)} \le \frac{1}{n} \log \big( M_n^{(1)}M_n^{(2)} \big) + \gamma \bigg\} + 3e^{-n\gamma},
\end{align}
where $\gamma > 0$ is an arbitrary constant.
\end{lemma}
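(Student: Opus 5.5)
We would prove Lemma \ref{lem:finite_length_UB} by the standard Feinstein-type random coding argument adapted to two independent codebooks, followed by a threshold (information-density) decoder. Set $M_1 = M_n^{(1)}$ and $M_2 = M_n^{(2)}$. Draw codewords $\boldsymbol{u}_1,\ldots,\boldsymbol{u}_{M_1}$ i.i.d.\ according to $P_{X^n}$ and, independently, $\boldsymbol{v}_1,\ldots,\boldsymbol{v}_{M_2}$ i.i.d.\ according to $P_{Y^n}$. Introduce the three information densities
\begin{align}
i_1(\boldsymbol{x},\boldsymbol{y},\boldsymbol{z}) &= \log\frac{W^n(\boldsymbol{z}|\boldsymbol{x},\boldsymbol{y})}{P_{Z^n|Y^n}(\boldsymbol{z}|\boldsymbol{y})}, \nonumber\\
i_2(\boldsymbol{x},\boldsymbol{y},\boldsymbol{z}) &= \log\frac{W^n(\boldsymbol{z}|\boldsymbol{x},\boldsymbol{y})}{P_{Z^n|X^n}(\boldsymbol{z}|\boldsymbol{x})}, \nonumber\\
i_3(\boldsymbol{x},\boldsymbol{y},\boldsymbol{z}) &= \log\frac{W^n(\boldsymbol{z}|\boldsymbol{x},\boldsymbol{y})}{P_{Z^n}(\boldsymbol{z})}. \nonumber
\end{align}
Define the typical set $\mathcal{T}_n$ of triples with $i_1 > \log M_1 + n\gamma$, $i_2 > \log M_2 + n\gamma$ and $i_3 > \log(M_1M_2) + n\gamma$. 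The decoder outputs the unique pair $(j,k)$ with $(\boldsymbol{u}_j,\boldsymbol{v}_k,\boldsymbol{z})\in\mathcal{T}_n$, and declares an error if there is none or more than one.

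By symmetry of the random code, the expected average error equals the expected error for the message pair $(1,1)$. We then split this error by the union bound into four events.
\begin{itemize}
\item $E_0$: $(\boldsymbol{u}_1,\boldsymbol{v}_1,Z^n)\notin\mathcal{T}_n$. Its probability is exactly the first term in the lemma, since $(\boldsymbol{u}_1,\boldsymbol{v}_1,Z^n)\sim P_{X^n}P_{Y^n}W^n$.
\item $E_1$: some $j\neq 1$ gives $(\boldsymbol{u}_j,\boldsymbol{v}_1,Z^n)\in\mathcal{T}_n$.
\item $E_2$: some $k\neq 1$ gives $(\boldsymbol{u}_1,\boldsymbol{v}_k,Z^n)\in\mathcal{T}_n$.
\item $E_3$: some pair with $j\neq1$ and $k\neq1$ gives $(\boldsymbol{u}_j,\boldsymbol{v}_k,Z^n)\in\mathcal{T}_n$.
\end{itemize}

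For $E_1$, the triple $(\boldsymbol{u}_j,\boldsymbol{v}_1,Z^n)$ is distributed as $P_{X^n}(\boldsymbol{x})P_{Y^n}(\boldsymbol{y})P_{Z^n|Y^n}(\boldsymbol{z}|\boldsymbol{y})$. On $\{i_1 > \log M_1 + n\gamma\}$ we have $P_{Z^n|Y^n} \le W^n e^{-n\gamma}/M_1$. Hence each such $j$ contributes at most $e^{-n\gamma}/M_1$, and the union over at most $M_1$ indices gives $e^{-n\gamma}$. The same computation, with the roles of the users exchanged and $i_2$ in place of $i_1$, bounds $E_2$ by $e^{-n\gamma}$. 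For $E_3$, the triple has law $P_{X^n}P_{Y^n}P_{Z^n}$. Here we use $P_{Z^n}\le W^n e^{-n\gamma}/(M_1M_2)$ on $\{i_3 > \log(M_1M_2)+n\gamma\}$, and the union over at most $M_1M_2$ pairs gives $e^{-n\gamma}$. Measure-theoretic issues for general alphabets are handled by writing these as integrals against densities with respect to a dominating measure, which is harmless.

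Summing the four bounds shows that the ensemble-average error is at most the stated probability plus $3e^{-n\gamma}$. Hence some deterministic codebook pair achieves it. The main point requiring care is the independence bookkeeping: a wrong codeword $\boldsymbol{u}_j$ must be independent of $(\boldsymbol{v}_1,Z^n)$, which holds because $X^n\perp Y^n$ and the codewords are drawn independently. This is exactly what produces the conditional output laws $P_{Z^n|Y^n}$, $P_{Z^n|X^n}$ and $P_{Z^n}$ in the three denominators. Note that the lemma as used in the paper places no cost constraint on the codewords; the constraint is handled separately by the truncation to $(\overline{\boldsymbol{X}},\overline{\boldsymbol{Y}})$.
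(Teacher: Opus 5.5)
Your proposal is correct and follows essentially the same route as the paper. The paper does not reprove \cite[Lemma 3]{Han98} directly, but its proof of Lemma~\ref{lem:CSIT-error-bound}, which it describes as an extension of that lemma, uses exactly your argument: independent random codebooks drawn from $P_{X^n}$ and $P_{Y^n}$, a threshold decoder on the intersection of the three information-density sets, and a union bound splitting the wrong-pair events into $(j'\neq 1,k'=1)$, $(j'=1,k'\neq 1)$ and $(j'\neq 1,k'\neq 1)$, each bounded by $e^{-n\gamma}$ through $P_{Z^n|Y^n}$, $P_{Z^n|X^n}$ and $P_{Z^n}$, giving the same $3e^{-n\gamma}$ term.
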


\quad We apply Lemma \ref{lem:finite_length_UB} with $(X^n, Y^n)$ replaced by $\big(\overline{X}^n, \overline{Y}^n \big)$ specified in \eqref{eq:new_X2} and \eqref{eq:new_Y2}. 
Then, there exists an $(n, M_n^{(1)}, M_n^{(2)}, \varepsilon_n)$ MAC code satisfying power constraint $(P_1, P_2)$ and
\begin{align}
    \varepsilon_n \le \Pr \bigg\{ & \frac{1}{n} \log \frac{W^n(\overline{Z}^n | \overline{X}^n, \overline{Y}^n)}{P_{\overline{Z}^n | \overline{Y}^n}(\overline{Z}^n | \overline{Y}^n)} \le R_1 + \gamma \nonumber\\
&\text{or } \frac{1}{n} \log \frac{W^n(\overline{Z}^n | \overline{X}^n, \overline{Y}^n)}{P_{\overline{Z}^n | \overline{X}^n}(\overline{Z}^n | \overline{X}^n)} \le  R_2 +\gamma \nonumber\\
&\text{or } \frac{1}{n} \log \frac{W^n(\overline{Z}^n | \overline{X}^n, \overline{Y}^n)}{P_{\overline{Z}^n}(\overline{Z}^n)} \le  R_1 + R_2 + \gamma \bigg\} + 3e^{-n\gamma},\label{eq:P5-0}
\end{align}
where we have set $M_n^{(\alpha)} = e^{nR_\alpha}$ for $\alpha = 1, 2$.
Using \eqref{eq:P5-rel1}--\eqref{eq:P5-rel3}, the first term on the RHS of \eqref{eq:P5-0} can be bounded as (cf. the proof of Theorem \ref{thm:inner_bound})
\begin{align}
  & \Pr \bigg\{  \frac{1}{n} \log \frac{W^n(\overline{Z}^n | \overline{X}^n, \overline{Y}^n)}{P_{\overline{Z}^n | \overline{Y}^n}(\overline{Z}^n | \overline{Y}^n)} \le R_1 + \gamma \nonumber\\
& \qquad  \text{or } \frac{1}{n} \log \frac{W^n(\overline{Z}^n | \overline{X}^n, \overline{Y}^n)}{P_{\overline{Z}^n | \overline{X}^n}(\overline{Z}^n | \overline{X}^n)} \le R_2 + \gamma \nonumber\\
& \qquad \text{or } \frac{1}{n} \log \frac{W^n(\overline{Z}^n | \overline{X}^n, \overline{Y}^n)}{P_{\overline{Z}^n}(\overline{Z}^n)} \le  R_1 + R_2 + \gamma \bigg\} \nonumber\\
&\quad \le \frac{1}{\gamma_n^{(1)} \gamma_n^{(2)}}
\Pr \bigg\{  \frac{1}{n} \log \frac{W^n(Z^n | {X}^n, {Y}^n)}{P_{Z^n | {Y}^n}({Z}^n | {Y}^n)} \le R_1 + 2 \gamma \nonumber\\
&\qquad \qquad \qquad \qquad \text{or } \frac{1}{n} \log \frac{W^n({Z}^n | {X}^n, {Y}^n)}{P_{{Z}^n | {X}^n}({Z}^n | {X}^n)} \le R_2 + 2 \gamma \nonumber\\
&\qquad \qquad \qquad \qquad \text{or } \frac{1}{n} \log \frac{W^n({Z}^n | {X}^n, {Y}^n)}{P_{{Z}^n}({Z}^n)} \le R_1 + R_2 + 2 \gamma \bigg\} 
\end{align}
if $n \ge n_0$ is so large as to satisfy $\frac{1}{n} \log \frac{1}{\gamma_{n}^{(1)}\gamma_{n}^{(2)}} \le \gamma$.
Thus, for $n \ge n_0$ we obtain
\begin{align}
   \varepsilon_n \le \frac{1}{\gamma_n^{(1)} \gamma_n^{(2)}}\Pr \bigg\{ & \frac{1}{n} \log \frac{W^n(Z^n | {X}^n, {Y}^n)}{P_{Z^n | {Y}^n}({Z}^n | {Y}^n)} \le R_1 + 2 \gamma  \nonumber\\
&\text{or } \frac{1}{n} \log \frac{W^n({Z}^n | {X}^n, {Y}^n)}{P_{{Z}^n | {X}^n}({Z}^n | {X}^n)} \le R_2 + 2 \gamma \nonumber\\
&\text{or } \frac{1}{n} \log \frac{W^n({Z}^n | {X}^n, {Y}^n)}{P_{{Z}^n}({Z}^n)} \le R_1 + R_2 + 2 \gamma \bigg\} + 3 e^{-n \gamma}. \label{eq:P5-0b}
\end{align}

\quad Let $D_{\eta, n}$ for $\eta \in \mathcal{H}$ be the event defined by
\begin{align}
    D_{\eta, n} = \bigg\{ & \frac{1}{n} \log \frac{W^n(Z_\eta^n | X^n, Y^n)}{P_{Z^n | Y^n}(Z_\eta^n | Y^n)} \le R_1 + 2 \gamma \nonumber\\
&\text{or } \frac{1}{n} \log \frac{W^n(Z_\eta^n | X^n, Y^n)}{P_{Z^n | X^n}(Z_\eta^n | X^n)} \le R_2 + 2 \gamma \nonumber\\
&\text{or } \frac{1}{n} \log \frac{W^n(Z_\eta^n | X^n, Y^n)}{P_{Z^n}(Z_\eta^n)} \le R_1 + R_2 + 2 \gamma \bigg\}, \label{eq:P5-1}
\end{align}
where $Z_\eta^n$ is the output via $W_\eta^n$ due to input $(X^n, Y^n)$.
Then, \eqref{eq:P5-0b} can be rewritten as
\begin{align}
    \gamma_n^{(1)} \gamma_n^{(2)} \varepsilon_n 
    & \le \int_{\mathcal{H}} \Pr \{ D_{\eta, n} \} \, d w(\eta) + 3 \, \gamma_n^{(1)} \gamma_n^{(2)} e^{-n \gamma} \nonumber\\
    & = \int_{\mathcal{H}_1} \Pr \{ D_{\eta, n} \} \, d w(\eta) + \int_{\mathcal{H}_2} \Pr \{ D_{\eta, n} \} \, d w(\eta) + 3 e^{-n \gamma}.  \label{eq:P5-4} 
\end{align}
We notice here that $\mathcal{H} = \mathcal{H}_1 \cup \mathcal{H}_2 \cup E$ for null set $E$ with respect to $w(\eta)$ (cf.\ assumption (A1)).
Taking limsup for both sides, Fatou's lemma yields
\begin{align}
   \limsup_{n \to \infty} \varepsilon_n \le \limsup_{n \to \infty}  \int_{\mathcal{H}_1} \Pr \{ D_{\eta, n} \} \, d w(\eta) + \limsup_{n \to \infty}  \int_{\mathcal{H}_2} \Pr \{ D_{\eta, n} \} \, d w(\eta).  \label{eq:P5-5}
\end{align}
In what follows, we bound the first and second terms on the RHS of \eqref{eq:P5-5} from above separately.

\item \textbf{(Evaluation for $\mathcal{H}_1$)} ~Without loss of generality, we index all the members $\eta \in \mathcal{H}_1$ as $\eta_k$ with $k=1, 2, 3, \cdots$.
We define $a_k = w(\{\eta_k\})$ for simplicity.

\quad By definition, for $\eta_k \in \mathcal{H}_1$ we have 
\begin{align}
    W^n(\boldsymbol{z} | \boldsymbol{x}, \boldsymbol{y}) \ge a_k W_{\eta_k}^n (\boldsymbol{z} | \boldsymbol{x}, \boldsymbol{y}) \label{eq:P5-6}
\end{align}
for all $(\boldsymbol{x}, \boldsymbol{y}, \boldsymbol{z}) \in \mathcal{X}^n \times \mathcal{Y}^n \times \mathcal{Z}^n$. 
On the other hand, we use Lemma \ref{hodai:8} in Appendix \ref{appendix:proof_lemma_3} to have
\begin{align}
    \Pr \left\{ \frac{1}{n} \log P_{Z_{\eta_k}^n | Y^n}(Z_{\eta_k}^n | \, \boldsymbol{y}) - \frac{1}{n} \log P_{Z^n | Y^n}(Z_{\eta_k}^n | \, \boldsymbol{y}) \ge - \gamma \right\} \ge 1 - e^{-n\gamma}, \label{eq:P5-8a} \\
    \Pr \left\{ \frac{1}{n} \log P_{Z_{\eta_k}^n | X^n}(Z_{\eta_k}^n | \, \boldsymbol{x}) - \frac{1}{n} \log P_{Z^n | X^n}(Z_{\eta_k}^n | \, \boldsymbol{x}) \ge - \gamma \right\} \ge 1 - e^{-n\gamma}, \label{eq:P5-8b} \\
    \Pr \left\{ \frac{1}{n} \log P_{Z_{\eta_k}^n}(Z_{\eta_k}^n) - \frac{1}{n} \log P_{Z^n}(Z_{\eta_k}^n) \ge - \gamma \right\} \ge 1 - e^{-n\gamma} \label{eq:P5-8c}
\end{align}
for any given $(\boldsymbol{x}, \boldsymbol{y}) \in \mathcal{X}^n \times \mathcal{Y}^n$, where $Z_{\eta_k}^n$ is the output via channel $W_{\eta_k}^n$ due to input $(X^n, Y^n)$.
Moreover, let $\overline{D}_{\eta, n}$ be the event given by
\begin{align}
    \overline{D}_{\eta, n}= \bigg\{ & \frac{1}{n} \log \frac{W_\eta^n(Z_\eta^n | X^n, Y^n)}{P_{Z_\eta^n | Y^n}(Z_\eta^n | Y^n)} \le R_1 + 6 \gamma \nonumber\\
&\text{or } \frac{1}{n} \log \frac{W_\eta^n(Z_\eta^n | X^n, Y^n)}{P_{Z_\eta^n | X^n}(Z_\eta^n | X^n)} \le R_2 + 6 \gamma \nonumber\\
&\text{or } \frac{1}{n} \log \frac{W_\eta^n(Z_\eta^n | X^n, Y^n)}{P_{Z_\eta^n}(Z_\eta^n)} \le R_1 + R_2 + 6 \gamma \bigg\}. \label{eq:P5-9a}
\end{align}
Combining \eqref{eq:P5-1} and \eqref{eq:P5-6}--\eqref{eq:P5-8c}, there exists some $n_k > 0$ such that
\begin{align}
    \Pr\{ D_{\eta_k, n} \} \le  \Pr \! \big\{ \overline{D}_{\eta_k, n} \big\} + 3 e^{-n\gamma}~~(\forall n > n_k), \label{eq:P5-9b}
\end{align}
where we use the fact that $a_k$ does not depend on $n$. 
Although $4\gamma$ would suffice in \eqref{eq:P5-9a} to derive \eqref{eq:P5-9b}, we use $6\gamma$ for notational consistency, since the offset term $6\gamma$ is needed for the evaluation over $\mathcal{H}_2$ below. 
Thus, using Fatou's lemma, we have
\begin{align}
  \limsup_{n \to \infty}  \int_{\mathcal{H}_1} \Pr \{ D_{\eta, n} \} \, d w(\eta) &\le \int_{\mathcal{H}_1} \limsup_{n \to \infty} \Pr \{ D_{\eta, n} \} \, d w(\eta) \nonumber\\
  &\le \int_{\mathcal{H}_1} \limsup_{n \to \infty} \Pr \! \big\{ \overline{D}_{\eta, n}\big\} \, d w(\eta). \label{eq:P5-9c}
\end{align}

\item \textbf{(Evaluation for $\mathcal{H}_2$)} ~~ Here, we use the positivity and the continuity of $N_\theta$ in $\theta \in \Theta$. 
Since $\eta \in \mathcal{H}_2$ is not an atomic point, we need to use an alternative evaluation instead of \eqref{eq:P5-6}. 
For $\eta = (h,\theta) \in \mathcal{H}$, we sometimes write the component channel $W_\eta$ as $W_{h,\theta}$.
 
  \begin{enumerate}
    \item Let $g_\theta$ be the density function of $\mathcal{N}(0, N_\theta)$, i.e., $W_{h,\theta}(z | x, y) = g_\theta(z-(h_1 x+ h_2 y))$ for $ (h, \theta) \in \mathcal{H}_2$.
    Then, for any $\delta>0$, 
    \begin{align}
        \log \frac{g_\theta(v)}{g_{\theta + \delta}(v)} = \frac{1}{2} \log \frac{N_{\theta + \delta}}{N_\theta} - \frac{v^2}{2 N_\theta} +  \frac{v^2}{2 N_{\theta+\delta}}. \label{eq:P5-10}
    \end{align}

    Since $N_\theta$ is uniformly continuous, for any given $\tau > 0$ there exists some $\delta = \delta(\tau) >0 $ such that $|N_\theta - N_{\theta + \delta}| < \tau$ for all $\theta \in \mathcal{H}_2$, and thus
     \begin{align}
        \left| \frac{v^2}{2 N_\theta} -  \frac{v^2}{2 N_{\theta+\delta}} \right|
        & = \frac{v^2 \left| N_{\theta + \delta} - N_\theta \right|}{2 N_\theta N_{\theta + \delta} } \le \frac{\tau v^2}{2 N_\theta N_{\theta + \delta}}
         \label{eq:P5-12}
    \end{align}
    and
     \begin{align} 
  \left| \frac{N_{\theta + \delta}}{N_\theta} -1 \right| \le \frac{\tau}{N_\theta}.
 \label{eq:P5-14}
 \end{align}
  It follows from \eqref{eq:P5-14} that, for sufficiently small $\tau >0$ to satisfy $\frac{\tau}{N_{\rm min}} \le \frac{1}{2}$ (cf. eq.\ \eqref{eq:space_assumption}), we have
 \begin{align} 
   - \frac{2 \tau}{N_\theta} \le \log \frac{N_{\theta + \delta}}{N_\theta} \le \frac{\tau}{N_\theta}.  \label{eq:P5-15}
 \end{align}
 Therefore, for any $\tau \in \big(0, \frac{N_{\rm min}}{2} \big]$ it follows from \eqref{eq:P5-10}, \eqref{eq:P5-12} and \eqref{eq:P5-15} that
  \begin{align} 
    \left| \log \frac{g_\theta(v)}{g_{\theta + \delta}(v)}\right| \le \left| \frac{1}{2} \log \frac{N_{\theta + \delta}}{N_\theta} \right| + \left| \frac{v^2}{2 N_\theta} -  \frac{v^2}{2 N_{\theta+\delta}} \right| \le   \frac{\tau}{N_\theta} + \frac{\tau v^2}{2 N_\theta N_{\theta + \delta}} . \label{eq:P5-16}
  \end{align}

 \item For any $\delta > 0$, 
 \begin{align}
         \log \frac{W_{h+\delta, \theta}(z \mid x, y)}{W_{h,\theta}(z \mid x, y)} =\frac{\delta (x+y) (z - (h_1 x + h_2 y))}{N_\theta}- \frac{\delta^2 (x+y)^2}{2N_\theta},\label{eq:P7-10}
    \end{align}
    where we denote $h + \delta = (h_1 + \delta , h_2 + \delta)$ with a slight abuse of notation for simplicity.
    Assume that $\boldsymbol{x} = (x_1, x_2, \ldots, x_n) \in \mathcal{X}^n$ and $\boldsymbol{y} = (y_1, y_2, \ldots, y_n) \in \mathcal{Y}^n$ satisfy power constraint $(P_1, P_2)$:
    \begin{align}
    \sum_{i=1}^n x_i^2 \le n P_1, ~~\sum_{i=1}^n y_i^2 \le n P_2. \label{eq:power2}       
    \end{align}
Let $v_i = z_i - (h_1x_i + h_2y_i)$, then it follows from \eqref{eq:P7-10} that
\begin{align}
\log \frac{W_{h+\delta, \theta}(z_i|x_i,y_i)}{W_{h,\theta}(z_i|x_i,y_i)}
= \frac{\delta(x_i+y_i) \, v_i}{N_\theta} - \frac{\delta^2(x_i+y_i)^2}{2N_\theta}
\end{align}
for $i = 1, 2, \ldots, n$.
Then, setting 
\begin{align}
A_n &\equiv \frac{\delta}{N_\theta} \sum_{i=1}^n (x_i+y_i) \, v_i, \\
B_n &\equiv \frac{\delta^2}{2N_\theta} \sum_{i=1}^n (x_i+y_i)^2,
\end{align}
we obtain
\begin{align}
\sum_{i=1}^n \log \frac{W_{h+\delta, \theta}(z_i|x_i,y_i)}{W_{h,\theta}(z_i|x_i,y_i)} = A_n-B_n,
\end{align}
and, by the triangle inequality,
\begin{align}
\left| \sum_{i=1}^n \log \frac{W_{h+\delta, \theta}(z_i|x_i,y_i)}
{W_{h,\theta}(z_i|x_i,y_i)} \right|
\le |A_n|+ |B_n| = |A_n| + B_n,
\label{eq:triangle}
\end{align}
where we have used the fact that $B_n \ge 0$. 
We bound $|A_n|$ and $B_n$ by using \eqref{eq:power2}.
First, by the Cauchy-Schwarz inequality,
\begin{align}
|A_n|
&\le
\frac{\delta}{N_\theta}
\sqrt{ \sum_{i=1}^n (x_i+y_i)^2}
\sqrt{\sum_{i=1}^n v_i^2 }.
\label{eq:CS}
\end{align}
Since $(x_i+y_i)^2 \le 2x_i^2+2y_i^2$ for $i= 1, 2, \ldots, n$, we have
\begin{align}
\sum_{i=1}^n (x_i+y_i)^2
&\le 2\sum_{i=1}^n x_i^2 + 2\sum_{i=1}^n y_i^2 \nonumber\\
&\le 2 n (P_1+P_2).
\label{eq:power}
\end{align}
Combining \eqref{eq:CS} and \eqref{eq:power}, we obtain
\begin{align}
|A_n| \le \frac{\delta}{N_\theta} \sqrt{2 n (P_1+P_2) \sum_{i=1}^n v_i^2}.
\label{eq:Abound}
\end{align}
Next, for the second term $B_n$ we have
\begin{align}
B_n &= \frac{\delta^2}{2N_\theta} \sum_{i=1}^n (x_i+y_i)^2 
\le \frac{\delta^2 n (P_1+P_2)}{N_\theta}.
\label{eq:Bbound}
\end{align}
Substituting \eqref{eq:Abound} and \eqref{eq:Bbound}
into \eqref{eq:triangle} yields
\begin{align}
\left|
\sum_{i=1}^n \log \frac{W_{h+\delta, \theta}(z_i|x_i,y_i)}{W_{h,\theta}(z_i|x_i,y_i)}
\right| 
\le \frac{\delta}{N_\theta} \sqrt{2n (P_1+P_2) \sum_{i=1}^n v_i^2}
+ \frac{\delta^2 n (P_1+P_2)}{N_\theta}.
\label{eq:mainbound}
\end{align}

 \item To bound the RHS of \eqref{eq:P5-16}, we truncate the domain $\mathbb{R}$ of $g_\theta$. 
 Let $\mathcal{L}_n = [-n,n]$ be the closed interval. 
 For $i = 1, 2, \ldots, n$, let $T_{\eta, n}^{(i)}$ with $\eta =(h, \theta)$ be the event such that
 \begin{align}
 T_{\eta, n}^{(i)} \equiv \big\{ V_{\theta, i}^{(n)} \in \mathcal{L}_n \big\}, 
 \end{align}
 where $V_{\theta, i}^{(n)}$ is the $i$th symbol of the i.i.d.\ Gaussian noise sequence $V_\theta^n = \big( V_{\theta, 1}^{(n)}, V_{\theta, 2}^{(n)}, \ldots, V_{\theta, n}^{(n)} \big)$.
Let $E_n$ denote the event that the input pair $(X^n, Y^n)$ satisfies power constraint $(P_1, P_2)$, i.e., 
 \begin{align}
     E_n =\left\{ \frac{1}{n} \sum_{i=1}^n X_i^2 \le P_1 ~~\text{and}~~ \frac{1}{n} \sum_{i=1}^n Y_i^2 \le P_2 \right\}.
 \end{align}
 We also define the event $ T_{\eta, n}$ by
 \begin{align}
     T_{\eta, n} \equiv \bigcap_{i = 1}^n  T_{\eta, n}^{(i)} \cap E_n. 
 \end{align}
Since $V_{\theta, i}^{(n)}$ is a Gaussian random variable with variance $N_\theta$, for each $i = 1, 2, \ldots, n$ we have
\begin{align}
\Pr \big\{ \big(T_{\eta,n}^{(i)}\big)^c \big\} = \Pr \big\{ \big|V_{\theta, i}^{(n)} \big| >  n \big\}  \le 2 \, e^{-\frac{n^2}{2N_\theta}} \le 2 \, e^{-\frac{n^2}{2N_{\rm max}}},
\end{align}
where $A^c$ denotes the complement of event $A$. 
By the union bound,
\begin{align}
\Pr \{ E_n^c \} 
&\le 
\Pr\left\{ \frac{1}{n}\sum_{i=1}^n X_i^2 > P_1 \right\} + \Pr\left\{ \frac{1}{n}\sum_{i=1}^n Y_i^2 > P_2 \right\}.
\label{eq:union_bound-E}
\end{align}
Since $X_i \sim \mathcal{N}(0,P_1-\rho)$, it holds that
\begin{align}
\mathbb{E} \big[X_i^4 \big] = 3(P_1-\rho)^2,
\end{align}
which implies that the variance of $X_i^2$ is given by
\begin{align}
\mathbb{V} \big[X_i^2 \big] = 2(P_1-\rho)^2.
\end{align}
Then, the Chebyshev inequality yields
\begin{align}
\Pr\left\{
\frac{1}{n}\sum_{i=1}^n X_i^2>P_1
\right\}
&\le
\Pr\left\{
\left|\frac{1}{n}\sum_{i=1}^n X_i^2-(P_1-\rho)\right|>\rho \right\}
\nonumber\\
&\le
\frac{2(P_1-\rho)^2}{n\rho^2},
\label{eq:chebyshev_x}
\end{align}
and similarly
\begin{align}
\Pr\left\{ \frac{1}{n}\sum_{i=1}^n Y_i^2>P_2 \right\}
&\le
\frac{2(P_2-\rho)^2}{n\rho^2}.
\label{eq:chebyshev_y}
\end{align}
Plugging \eqref{eq:chebyshev_x} and \eqref{eq:chebyshev_y} into \eqref{eq:union_bound-E}, we obtain
\begin{align}
\Pr \{ E_n^c \} 
&\le 
\frac{2(P_1-\rho)^2}{n\rho^2} + \frac{2(P_2-\rho)^2}{n\rho^2}.
\label{eq:union_bound-E2}
\end{align}
Thus, again by the union bound, we have
\begin{align}
 \Pr \!\left\{ T_{\eta, n}^c \right\} & = \Pr\!\left\{ \bigcup_{i=1}^n \big(T_{\eta, n}^{(i)}\big)^c  \cup E_n^c \right\} \nonumber\\
&\le \sum_{i=1}^n \Pr\big\{\big(T_{\eta, n}^{(i)}\big)^c \big\} + \Pr \{ E_n^c \} \nonumber\\
&\le \lambda_n,
\end{align}
where
\begin{align}
     \lambda_n &\equiv 2n  e^{-\frac{n^2}{2N_{\rm max}}} + \frac{2(P_1-\rho)^2}{n\rho^2} + \frac{2(P_2-\rho)^2}{n\rho^2} \nonumber\\
     & \to 0~~(n \to \infty). 
\end{align}
Using this fact, we have
\begin{align}
\int_{\mathcal{H}_2} \Pr\{ D_{\eta, n} \} \, dw(\eta) &= \int_{\mathcal{H}_2} \Pr\{ D_{\eta, n} \cap T_{\eta, n} \} \, dw(\eta) + \int_{\mathcal{H}_2} \Pr\{ D_{\eta, n} \cap T_{\eta, n}^c \} \, dw(\eta) \nonumber\\
& \le \int_{\mathcal{H}_2} \Pr\{ D_{\eta, n} \cap T_{\eta, n} \} \, dw(\eta) + \int_{\mathcal{H}_2} \Pr\{ T_{\eta, n}^c \} \, dw(\eta) \nonumber\\
& \le \int_{\mathcal{H}_2} \Pr\{ D_{\eta, n} \cap T_{\eta, n} \} \, dw(\eta) + \lambda_n. \label{eq:P5-18} 
\end{align}

\item We now bound the probability $\Pr\{ D_{\eta, n} \cap T_{\eta, n}\}$ for each $\eta \in \mathcal{H}_2$. To this end, we introduce the following quantization of $\mathcal{H}_2$. First, we quantize $\Phi = [0, +\infty) \times [0, +\infty)$ into a grid of rectangles $\{R_{k, \ell}\}_{k, \ell \ge 0}$: Set $\tau = \frac{1}{n^4}$, and let $\delta = \delta(\tau)$ be specified below.
For $k, \ell =0,1, 2, \cdots $, let $R_{k, \ell}$ be the rectangle defined by
\begin{align}
R_{k,  \ell} \equiv  [k \delta,(k+1)\delta) \times [\ell \delta,(\ell+1)\delta). \label{eq:P7-21}
\end{align}
Next, we quantize $\Theta = [0,1]$ into intervals $\{ I_m\}_{m \ge 0}$: For $m = 0 , 1, 2 , \ldots$, let $I_m$ be the interval
\begin{align}
    I_m \equiv [m \delta, (m+1)\delta).
\end{align}
We then construct a partition of $\mathcal{H}_2$ as
\begin{align}
\mathcal{H}_2 = \mathcal{K}_1\cup\mathcal{K}_2\cup \mathcal{K}_3\cup \cdots, \label{eq:P7-23}
\end{align}
where each $\mathcal{K}_j$ is the nonempty set given by
\begin{align}
\mathcal K_j = \mathcal{H}_2 \cap (R_{k, \ell} \times I_m) \label{eq:P7-23b}
\end{align}
for some (unique) tuple $(k, \ell, m)$.
Due to the uniform continuity of $N_\theta$, we can take $\delta(\tau) = \frac{a_0}{n^4}$ with some constant $a_0 > 0$, so that it holds for all $j=1, 2, \ldots $ that  $|N_\theta - N_{\theta'}| \le \tau = \frac{1}{n^4}$ for any pair $\eta= (h, \theta) \in \mathcal{K}_j$ and $\eta'= (h', \theta') \in \mathcal{K}_j$. 
For each $j = 1, 2, \ldots$, we choose a point $\eta_j^* \in \mathcal{K}_j$ arbitrarily, called the representative point in $\mathcal{K}_j$. 
Let $t$ be the quantizer that assigns $\eta \in \mathcal{H}_2$ to its representative point in the same subset that is, $t(\eta) = \eta_j^*$ if $\eta \in \mathcal{K}_j$. 
Also, let $t(h)$ and $t(\theta)$ denote the components of $t(\eta)$ such that $t(\eta) = (t(h), t(\theta))$.

\smallskip
\quad From \eqref{eq:P5-16} we have
\begin{align}
 \left| \log \frac{W_{h,\theta} (z| x, y)}{W_{h, t(\theta)} (z| x, y )} \right| = \left| \log \frac{g_\theta(z - ( h_1 x + h_2 y))}{g_{t(\theta)} (z- ( h_1 x + h_2 y ))} \right| \le \frac{1}{n^4 N_\theta} + \frac{1}{2 n^2 N_\theta N_{t(\theta)}}, \label{eq:P5-19}
\end{align}
provided that $v = z - (h_1 x + h_2 y) \in \mathcal{L}_n$.
Suppose that $(\boldsymbol{x}, \boldsymbol{y}, \boldsymbol{z})$ satisfies $\boldsymbol{v} \equiv \boldsymbol{z} - (h_1 \boldsymbol{x} + h_2 \boldsymbol{y}) \in (\mathcal{L}_n)^n$, where $(\boldsymbol{x}, \boldsymbol{y})$ satisfies \eqref{eq:power2}, then \eqref{eq:space_assumption} and \eqref{eq:P5-19} imply that there exists a constant $b_0> 0$ such that for all $\eta = (h, \theta) \in \mathcal{H}_2$ and all sufficiently large $n$ we have
\begin{align}
 \left| \log \frac{W_{h,\theta}(z_i| x_i, y_i)}{W_{h, t(\theta)} (z_i| x_i, y_i )} \right| = \left| \log \frac{g_\theta(z_i - (h_1 x_i+ h_2 y_i))}{g_{t(\theta)} (z_i - (h_1 x_i+ h_2 y_i))} \right| \le \frac{b_0}{n^2}  ~~~(\forall i = 1, 2, \ldots, n).
\end{align}
Since
\begin{align}
 \frac{W_{h,\theta}^n (\boldsymbol{z}| \boldsymbol{x}, \boldsymbol{y})}{W_{h, t(\theta)}^n (\boldsymbol{z}| \boldsymbol{x}, \boldsymbol{y} )} = \prod_{i = 1}^n \frac{g_\theta(z_i - (h_1 x_i+ h_2 y_i))}{g_{t(\theta)} (z_i - (h_1 x_i+ h_2 y_i))},
\end{align}
it holds that
\begin{align}
\left| \log \frac{W_{h,\theta}^n (\boldsymbol{z}| \boldsymbol{x}, \boldsymbol{y})}{W_{h, t(\theta)}^n (\boldsymbol{z}| \boldsymbol{x}, \boldsymbol{y} )} \right| \le \frac{b_0}{n}. \label{eq:P5-20b}
\end{align}
\smallskip
Also, from \eqref{eq:mainbound} and $\delta = \frac{a_0}{n^4}$ we have
\begin{align}
\left| \sum_{i=1}^n \log \frac{W_{h, t(\theta)}(z_i|x_i,y_i)}
{W_{t(h), t(\theta)}(z_i|x_i,y_i)}
\right|
&\le \frac{a_0}{n^4 N_{t(\theta)}}
\sqrt{2n(P_1+P_2) \sum_{i=1}^n v_i^2 }
+ \frac{a_0^2(P_1+P_2)}{n^7N_{t(\theta)}} \nonumber\\
&\le \frac{a_0 \sqrt{2 n^4 (P_1+P_2) }}{n^4 N_{t(\theta)}}
+ \frac{a_0^2(P_1+P_2)}{n^7 N_{t(\theta)}} \nonumber\\
&=
\frac{a_0 \sqrt{2  (P_1+P_2) }}{n^2 N_{t(\theta)}} + \frac{a_0^2 (P_1+P_2)}{n^7 N_{t(\theta)}}.
\label{eq:P7-16}
\end{align}
Thus, for sufficiently large $n$, we have
\begin{align}
\left| \log \frac{W_{h, t(\theta)}^n (\boldsymbol{z}|\boldsymbol{x},\boldsymbol{y})}
{W_{t(h), t(\theta)}(\boldsymbol{z}|\boldsymbol{x},\boldsymbol{y})} \right| \le \frac{b_0}{n}.  \label{eq:P7-16b}
\end{align}
Combining \eqref{eq:P5-20b} and \eqref{eq:P7-16b} for $\eta = (h, \theta) \in \mathcal{H}_2$, we obtain
\begin{align}
\left| \log \frac{W_{\eta}^n (\boldsymbol{z}|\boldsymbol{x},\boldsymbol{y})}
{W_{t(\eta) }(\boldsymbol{z}|\boldsymbol{x},\boldsymbol{y})} \right| 
 &= \left| \log \frac{W_{h,\theta}^n (\boldsymbol{z}|\boldsymbol{x},\boldsymbol{y})}{W_{h, t(\theta)}(\boldsymbol{z}|\boldsymbol{x},\boldsymbol{y})} + \log \frac{W_{h, t(\theta)}^n (\boldsymbol{z}|\boldsymbol{x},\boldsymbol{y})}{W_{t(h), t(\theta)}(\boldsymbol{z}|\boldsymbol{x},\boldsymbol{y})}   \right| \nonumber\\
&\le \left| \log \frac{W_{h,\theta}^n (\boldsymbol{z}|\boldsymbol{x},\boldsymbol{y})}{W_{h, t(\theta)}(\boldsymbol{z}|\boldsymbol{x},\boldsymbol{y})} \right| + \left| \log \frac{W_{h, t(\theta)}^n (\boldsymbol{z}|\boldsymbol{x},\boldsymbol{y})}{W_{t(h), t(\theta)}(\boldsymbol{z}|\boldsymbol{x},\boldsymbol{y})}   \right| \nonumber\\
&\le \frac{c_0}{n},  \label{eq:P5-20a}
\end{align}
where $c_0 \ge 2 b_0$ is a constant independent of $\eta \in \mathcal{H}_2$.

\quad Let $a_j \equiv \int_{\mathcal{K}_j} dw(\eta) $, then, since we have assumed in (A1) that $w(\eta)$ has a positive density on $\mathcal{H}_2$, it is easy to check that $a_j >0$.
To be more precise, let $\mathcal{K}_j(\eta)$ denote $\mathcal{K}_j$ containing $\eta \in \mathcal{H}_2$, then
\begin{align}
    a_j(\eta) \equiv \int_{\mathcal{K}_j(\eta)} dw(\eta) \ge \frac{c_\eta}{n^{12}},
\end{align}
where $c_\eta > 0$ is a constant depending on $\eta \in \mathcal{H}_2$.
Then, using \eqref{eq:P5-20a}, we have
\begin{align}
     W^n(\boldsymbol{z} |\boldsymbol{x}, \boldsymbol{y}) &= \int_{\mathcal{H}} W_{\eta'}^n (\boldsymbol{z}  |\boldsymbol{x}, \boldsymbol{y}) \, dw(\eta') \nonumber\\
     &\ge  \int_{\mathcal{K}_j(\eta)} W_\eta^n (\boldsymbol{z}  |\boldsymbol{x}, \boldsymbol{y}) \, dw(\eta) \nonumber\\
    &\ge \ W_{t(\eta)}^n (\boldsymbol{z}  |\boldsymbol{x}, \boldsymbol{y})  \cdot e^{-\frac{c_0}{n}}  \int_{\mathcal{K}_j(\eta)} dw(\eta) \nonumber\\
     &=  a_j(\eta) \, W_{t(\eta)}^n (\boldsymbol{z}  |\boldsymbol{x}, \boldsymbol{y})  \cdot e^{-\frac{c_0}{n}} \nonumber\\
     &\ge  \frac{c_\eta}{n^{12}} \cdot W_{t(\eta)}^n (\boldsymbol{z}  |\boldsymbol{x}, \boldsymbol{y})  \cdot e^{-\frac{c_0}{n}},
\end{align}
leading to
\begin{align}
   \frac{1}{n} \log W^n(\boldsymbol{z} |\boldsymbol{x}, \boldsymbol{y})  \ge \frac{1}{n} \log W_{t(\eta)}^n (\boldsymbol{z} |\boldsymbol{x}, \boldsymbol{y})   - \gamma \label{eq:P5-20c}
\end{align}
for sufficiently large $n$.
Combining these evaluations, we have
\begin{align}
    \Pr & \{ D_{\eta, n} \cap T_{\eta, n} \} \nonumber\\
    &=  \Pr \bigg\{ \bigg\{ \frac{1}{n} \log \frac{W^n(Z_\eta^n | X^n, Y^n)}{P_{Z^n | Y^n}(Z_\eta^n | Y^n)} \le R_1 + 2 \gamma  \nonumber\\
& \qquad \qquad \text{or } \frac{1}{n} \log \frac{W^n(Z_\eta^n | X^n, Y^n)}{P_{Z^n | X^n}(Z_\eta^n | X^n)} \le R_2 + 2 \gamma \nonumber\\
& \qquad \qquad \text{or } \frac{1}{n} \log \frac{W^n(Z_\eta^n | X^n, Y^n)}{P_{Z^n}(Z_\eta^n)} \le R_1 + R_2 + 2 \gamma \bigg\}  \cap T_{\eta, n} \bigg\} \nonumber\\
&\overset{(a)}{\le}   \Pr \bigg\{ \bigg\{ \frac{1}{n} \log \frac{W^n(Z_{t(\eta)}^n | X^n, Y^n)}{P_{Z^n | Y^n}(Z_{t(\eta)}^n | Y^n)} \le R_1 + 2 \gamma  \nonumber\\
& \qquad \qquad \text{or } \frac{1}{n} \log \frac{W^n(Z_{t(\eta)}^n | X^n, Y^n)}{P_{Z^n | X^n}(Z_{t(\eta)}^n | X^n)} \le R_2 + 2 \gamma \nonumber\\
& \qquad \qquad \text{or } \frac{1}{n} \log \frac{W^n(Z_{t(\eta)}^n | X^n, Y^n)}{P_{Z^n}(Z_{t(\eta)}^n)} \le R_1 + R_2 + 2 \gamma \bigg\} \cap T_{t(\eta), n} \bigg\} \cdot e^{\frac{c_0}{n}} \nonumber\\
& \overset{(b)}{\le}   \Pr \bigg\{ \bigg\{ \frac{1}{n} \log \frac{W_{t(\eta)}^n(Z_{t(\eta)}^n | X^n, Y^n)}{P_{Z^n | Y^n}(Z_{t(\eta)}^n | Y^n)} \le R_1 + 3 \gamma  \nonumber\\
& \qquad \qquad \text{or } \frac{1}{n} \log \frac{W_{t(\eta)}^n(Z_{t(\eta)}^n | X^n, Y^n)}{P_{Z^n | X^n}(Z_{t(\eta)}^n | X^n)} \le R_2 + 3 \gamma \nonumber\\
& \qquad \qquad \text{or } \frac{1}{n} \log \frac{W_{t(\eta)}^n(Z_{t(\eta)}^n | X^n, Y^n)}{P_{Z^n}(Z_{t(\eta)}^n)} \le R_1 + R_2 + 3 \gamma \bigg\} \cap T_{t(\eta), n} \bigg\} \cdot e^{\frac{c_0}{n}} \nonumber\\
& \overset{(c)}{\le} \Pr \bigg\{ \bigg\{ \frac{1}{n} \log \frac{W_{t(\eta)}^n(Z_{t(\eta)}^n | X^n, Y^n)}{P_{Z_{t(\eta)}^n | Y^n}(Z_{t(\eta)}^n | Y^n)} \le R_1 + 4 \gamma  \nonumber\\
& \qquad \qquad \text{or } \frac{1}{n} \log \frac{W_{t(\eta)}^n(Z_{t(\eta)}^n | X^n, Y^n)}{P_{Z_{t(\eta)}^n | X^n}(Z_{t(\eta)}^n | X^n)} \le R_2 + 4 \gamma \nonumber\\
& \qquad \qquad \text{or } \frac{1}{n} \log \frac{W_{t(\eta)}^n(Z_{t(\eta)}^n | X^n, Y^n)}{P_{Z_{t(\eta)}^n}(Z_{t(\eta)}^n)} \le R_1 + R_2 + 4 \gamma \bigg\} \cap T_{t(\eta), n} \bigg\} \cdot e^{\frac{c_0}{n}} + 3 e^{-n\gamma + \frac{c_0}{n}}, \label{eq:P5-22}
\end{align}
where ($a$) is due to \eqref{eq:P5-20a}, ($b$) is due to \eqref{eq:P5-20c}, and ($c$) is due to Lemma \ref{hodai:8} in Appendix \ref{appendix:proof_lemma_3}.
Furthermore, in the same way as above, for the probability term $\Pr\{ \, \cdot \, \}$ on the RHS of \eqref{eq:P5-22}, we repeat the inverse procedure to replace $t(\eta)$ back to the original $\eta$, thereby obtaining
\begin{align}
   \Pr \{ D_{\eta, n} \, \cap \, & T_{\eta, n} \} \nonumber\\
   \overset{(d)}{\le}   \Pr \bigg\{ & \bigg\{ \frac{1}{n} \log \frac{W_\eta^n(Z_\eta^n | X^n, Y^n)}{P_{Z_\eta^n | Y^n}(Z_\eta^n | Y^n)} \le R_1 + 6 \gamma  \nonumber\\
& ~\text{or } \frac{1}{n} \log \frac{W_\eta^n(Z_\eta^n | X^n, Y^n)}{P_{Z_\eta^n | X^n}(Z_\eta^n | X^n)} \le R_2 + 6 \gamma \nonumber\\
& ~ \text{or } \frac{1}{n} \log \frac{W_\eta^n(Z_\eta^n | X^n, Y^n)}{P_{Z_\eta^n}(Z_\eta^n)} \le R_1 + R_2 + 6 \gamma \bigg\} \cap T_{\eta, n} \bigg\} \cdot e^{ \frac{2 c_0}{n}} + 6 \, e^{-n\gamma + \frac{2 c_0}{n}} \nonumber\\
\overset{(e)}{=}  \Pr \{ & \overline{D}_{\eta, n}  \cap T_{\eta, n} \}    \cdot e^{ \frac{2 c_0}{n}} + 6 \, e^{-n\gamma + \frac{2 c_0}{n}}  ,\label{eq:P5-24}
\end{align}
where to derive ($d$) we use the fact, which follows from \eqref{eq:P5-20a},  that
\begin{align}
    P_{Z_{t(\eta)}^n | Y^n} (\boldsymbol{z} | \boldsymbol{y}) &= \int_{\mathcal{X}^n} P_{X^n}(\boldsymbol{x}) W_{t(\eta)}^n (\boldsymbol{z} | \boldsymbol{x}, \boldsymbol{y}) \, d\boldsymbol{x} \nonumber\\
    &\le e^\frac{c_0}{n}  \int_{\mathcal{X}^n} P_{X^n}(\boldsymbol{x}) W_\eta^n (\boldsymbol{z} | \boldsymbol{x}, \boldsymbol{y}) \, d\boldsymbol{x} \nonumber\\
    & = e^\frac{c_0}{n} \cdot P_{Z_{\eta}^n | Y^n} (\boldsymbol{z} | \boldsymbol{y}) \label{eq:P5-24b}
\end{align}
and similarly, 
\begin{align}
    P_{Z_{t(\eta)}^n | X^n} (\boldsymbol{z} | \boldsymbol{x}) & \le e^\frac{c_0}{n} 
    \cdot P_{Z_{\eta}^n | X^n} (\boldsymbol{z} | \boldsymbol{x}),  \label{eq:P5-24c} \\
    P_{Z_{t(\eta)}^n} (\boldsymbol{z}) & \le e^\frac{c_0}{n} 
    \cdot P_{Z_{\eta}^n} (\boldsymbol{z}), \label{eq:P5-24d}
\end{align}
and ($e$) is due to \eqref{eq:P5-9a}.

\quad Plugging \eqref{eq:P5-24} into the RHS of \eqref{eq:P5-18} and taking the limsup on both sides, we obtain 
\begin{align}
    \limsup_{n \to \infty} \int_{\mathcal{H}_2} \Pr\{ D_{\eta, n} \} \, dw(\eta) &\le \limsup_{n \to \infty} \int_{\mathcal{H}_2} \Pr\{ \overline{D}_{\eta, n} \cap T_{\eta, n}\} \, dw(\eta) \nonumber\\
    &\le \limsup_{n \to \infty} \int_{\mathcal{H}_2} \Pr\{ \overline{D}_{\eta, n} \} \, dw(\eta) \nonumber\\
    &\le \int_{\mathcal{H}_2} \limsup_{n \to \infty} \Pr\{ \overline{D}_{\eta, n} \} \, dw(\eta). \label{eq:P5-26}
\end{align}
  \end{enumerate}

\item So far, we have evaluated the terms for $\mathcal{H}_1$ and $\mathcal{H}_2$, and are now ready to complete the proof.
Substituting \eqref{eq:P5-9c} and \eqref{eq:P5-26} back into \eqref{eq:P5-5}, we obtain
\begin{align}
    \limsup_{n \to \infty} \varepsilon_n 
    &\le \int_{\mathcal{H}_1} \limsup_{n \to \infty} \Pr\{ \overline{D}_{\eta, n} \} \, dw(\eta) + \int_{\mathcal{H}_2} \limsup_{n \to \infty} \Pr\{ \overline{D}_{\eta, n} \} \, dw(\eta) \nonumber\\ 
    &\le \int_{\mathcal{H}} \limsup_{n \to \infty} \Pr\{ \overline{D}_{\eta, n} \} \, dw(\eta) \nonumber\\ 
    &= \int_{\mathcal{H}} \limsup_{n \to \infty} \Pr \bigg\{  \frac{1}{n} \log \frac{W_\eta^n(Z_\eta^n | X^n, Y^n)}{P_{Z_\eta^n | Y^n}(Z_\eta^n | Y^n)} \le R_1 + 6 \gamma  \nonumber\\
&\qquad \qquad \qquad \qquad \text{or } \frac{1}{n} \log \frac{W_\eta^n(Z_\eta^n | X^n, Y^n)}{P_{Z_\eta^n | X^n}(Z_\eta^n | X^n)} \le R_2 + 6 \gamma \nonumber\\
&\qquad \qquad \qquad \qquad \text{or } \frac{1}{n} \log \frac{W_\eta^n(Z_\eta^n | X^n, Y^n)}{P_{Z_\eta^n}(Z_\eta^n)} \le R_1 + R_2 + 6 \gamma  \bigg\} \, dw(\eta) .
\label{eq:P5-28}
\end{align} 
Since the mutual information densities
\begin{align}
\log \frac{W_\eta^n(Z_\eta^n | X^n, Y^n)}{P_{Z_\eta^n | Y^n}(Z_\eta^n | Y^n)} &= \sum_{i=1}^n \log \frac{W_\eta(Z_{\eta, i}^{(n)} | X_i^{(n)}, Y_i^{(n)})}{P_{Z_{\eta, i}^{(n)} | Y_i^{(n)}}(Z_{\eta, i}^{(n)} | Y_i^{(n)})}, \label{eq:P5-15e} \\
\log \frac{W_\eta^n(Z_\eta^n | X^n, Y^n)}{P_{Z_\eta^n | X^n}(Z_\eta^n | X^n)} &= \sum_{i=1}^n \log \frac{W_\eta(Z_{\eta, i}^{(n)} | X_i^{(n)}, Y_i^{(n)})}{P_{Z_{\eta, i}^{(n)} | X_i^{(n)}}(Z_{\eta, i}^{(n)} | X_i^{(n)})}, \label{eq:P5-15f} \\
\log \frac{W_\eta^n(Z_\eta^n | X^n, Y^n)}{P_{Z_\eta^n}(Z_\eta^n)} &= \sum_{i=1}^n \log \frac{W_\eta(Z_{\eta, i}^{(n)} | X_i^{(n)}, Y_i^{(n)})}{P_{Z_{\eta, i}^{(n)}}(Z_{\eta, i}^{(n)})} \label{eq:P5-15g}
\end{align}
are sums of independent random variables with expectations
\begin{align}
    \mathbb{E} \left[\log \frac{W_\eta(Z_{\eta, i}^{(n)} | X_i^{(n)}, Y_i^{(n)})}{P_{Z_{\eta, i}^{(n)} | Y_i^{(n)}}(Z_{\eta, i}^{(n)} | Y_i^{(n)})} \right] & = I(X; Z_\eta | Y) = \frac{1}{2} \log \left(1 + \frac{h_1^2(P_1 - \rho)}{N_\theta} \right), \\
    \mathbb{E} \left[\log \frac{W_\eta(Z_{\eta, i}^{(n)} | X_i^{(n)}, Y_i^{(n)})}{P_{Z_{\eta, i}^{(n)} | X_i^{(n)}}(Z_{\eta, i}^{(n)} | X_i^{(n)})} \right] &= I(Y; Z_\eta | X) = \frac{1}{2} \log \left(1 + \frac{h_2^2(P_2 - \rho)}{N_\theta} \right), \\
    \mathbb{E} \left[\log \frac{W_\eta(Z_{\eta, i}^{(n)} | X_i^{(n)}, Y_i^{(n)})}{P_{Z_{\eta, i}^{(n)} }(Z_{\eta, i}^{(n)} )} \right] &= I(XY; Z_\eta ) = \frac{1}{2} \log \left(1 + \frac{h_1^2(P_1 - \rho) + h_2^2 (P_2 - \rho)}{N_\theta} \right),
\end{align}
where $ X \sim \mathcal{N}(0, P_1 - \rho)$ and $Y \sim \mathcal{N}(0, P_2 - \rho)$ and $Z_\eta$ denotes the corresponding output of $W_\eta$.
We can also confirm that the variances of \eqref{eq:P5-15e}--\eqref{eq:P5-15g} are bounded since $X_i^{(n)}$, $Y_i^{(n)}$ and $V_{\theta,i}^{(n)}$ are all Gaussian (cf.\ the proof of Theorem \ref{thm:StrongC_Gaussian_MAC}).
Denoting $R_{12} = R_1 + R_2$ for notational simplicity, with sufficiently small $\gamma>0$, the weak law of large numbers yields
\begin{align}
 \limsup_{n\to\infty} \Pr\left\{\frac{1}{n} \log \frac{W_{\eta}^n(Z_{\eta}^n|X^n, Y^n)}{P_{Z_{\eta}^n|Y^n}(Z_{\eta}^n|Y^n)} \le R_1
+ 6 \gamma\right\} & =\left\{\begin{array}{ll}1& \mbox{ for } R_1 \ge \frac{1}{2} \log \left(1 + \frac{h_1^2(P_1 - \rho )}{N_\theta} \right) ,\\
0&  \mbox{ for } R_1 < \frac{1}{2} \log \left(1 + \frac{h_1^2(P_1 - \rho) }{N_\theta} \right); \end{array}\right.\label{eq:P5-16b}\\
\limsup_{n\to\infty}\Pr\left\{\frac{1}{n} \log \frac{W_{\eta}^n(Z_{\eta}^n|X^n, Y^n)}{P_{Z_{\eta}^n|X^n}(Z_{\eta}^n|X^n)}\le R_2
 + 6 \gamma\right\} & =\left\{\begin{array}{ll}1& \mbox{ for } R_2 \ge \frac{1}{2} \log \left( 1 + \frac{h_2^2( P_2 - \rho) }{N_\theta} \right),\\
0&  \mbox{ for } R_2<\frac{1}{2} \log \left( 1 + \frac{h_2^2(P_2 - \rho) }{N_\theta} \right) ;\end{array}\right.\label{eq:P5-17b}\\
\limsup_{n\to\infty}\Pr\left\{\frac{1}{n} \log \frac{W_{\eta}^n(Z_{\eta}^n|X^n, Y^n)}{P_{Z_{\eta}^n}(Z_{\eta}^n)}\le R_{12}
 + 6 \gamma\right\} & =\left\{\begin{array}{ll}1& \mbox{ for } R_{12} \ge \frac{1}{2} \log \left(1 + \frac{h_1^2(P_1 - \rho ) + h_2^2 (P_2 - \rho) }{N_\theta} \right),\\
0&  \mbox{ for } R_{12} <\frac{1}{2} \log \left(1 + \frac{h_1^2(P_1 -\rho) + h_2^2 (P_2 - \rho) }{N_\theta} \right) . 
\end{array} \right.
\label{eq:P5-30} 
\end{align}

Therefore, by setting
\begin{align}
\tilde{\mathcal{H}}&=\left\{\eta \, |\, \textstyle \frac{1}{2} \log \left(1 + \frac{h_1^2 P_1}{N_\theta} \right) \le R_1  + \gamma \mbox{ or }
\frac{1}{2} \log \left(1 + \frac{h_2^2 P_2}{N_\theta} \right) \le R_2 + \gamma \right.\ \nonumber\\
& \qquad \quad \left. \textstyle \mbox{ or } \frac{1}{2} \log \left(1 + \frac{h_1^2 P_1+ h_2^2 P_2}{N_\theta} \right) \le R_{12} + 2 \gamma \right\}, 
\end{align}
it follows from \eqref{eq:P5-19a}--\eqref{eq:P5-19c} and \eqref{eq:P5-28} that
\begin{align}
    \limsup_{n \to \infty} \varepsilon_n 
    &\le \int_{\tilde{\mathcal{H}}} 1 dw(\eta) \nonumber\\
    & = \int_{\left\{\eta \, |\,\frac{1}{2} \log \left(1 + \frac{h_1^2 P_1}{N_\theta} \right)\le R_1 + \gamma \mbox{ or }
\frac{1}{2} \log \left(1 + \frac{h_2^2 P_2}{N_\theta} \right) \le R_2 + \gamma \mbox{ or } \frac{1}{2} \log \left(1 + \frac{h_1^2 P_1 + h_2^2 P_2}{N_\theta} \right) \le R_1+R_2 + 2 \gamma \right\}}  dw(\eta) \nonumber\\
  & \le \varepsilon,
\end{align}
where the last inequality follows because $(R_1 + \gamma, R_2 + \gamma) \in B(\varepsilon | \boldsymbol{W})$.
Since we have set  $M_n^{(\alpha)} = e^{n R_\alpha}$ for $\alpha = 1, 2$, all the conditions in \eqref{eq:1st_order_achievable} are satisfied, and thus we conclude that the rate pair $(R_1, R_2) \in B^\circ (\varepsilon|\boldsymbol{W}) $ is $\varepsilon$-achievable under power constraint $(P_1, P_2)$, where $B^\circ (\varepsilon|\boldsymbol{W})$ denotes the interior of $B(\varepsilon|\boldsymbol{W})$, hence, $B^\circ (\varepsilon|\boldsymbol{W}) \subseteq C_{\Gamma_1, \Gamma_2}(\varepsilon | \boldsymbol{W})$.
Since the $\varepsilon$-capacity region $C_{\Gamma_1, \Gamma_2}(\varepsilon | \boldsymbol{W})$ is closed, this implies that $\mathrm{Cl}\, B^\circ (\varepsilon|\boldsymbol{W}) = \mathrm{Cl} \, B (\varepsilon|\boldsymbol{W}) \subseteq  C_{\Gamma_1, \Gamma_2}(\varepsilon | \boldsymbol{W})$, completing the proof of the direct part.

\end{enumerate}

\smallskip
\noindent
~(ii) \textit{Converse Part:}

Assume that a rate pair $(R_1, R_2)$ is $\varepsilon$-achievable under power constraint $(P_1, P_2)$, that is, there exists an $(n, M_n^{(1)}, M_n^{(2)}, \varepsilon_n)$ MAC code $C_n = C_n^{(1)} \times C_n^{(2)}$ satisfying \eqref{eq:1st_order_achievable}, \eqref{eq:power_constraint1}, and \eqref{eq:power_constraint2}. 
    We shall show $(R_1, R_2) \in \mbox{Cl} (B(\varepsilon | \boldsymbol{W}))$, where $B(\varepsilon | \boldsymbol{W})$ is defined as in \eqref{eq:region_B}.

We use the following lemma, which is a generalization of \cite[Lemma 7]{YHN2016} for the single-user mixed channel to the MAC case (also, cf.\ \cite[Lemma 4]{Han98}).
    \begin{lemma} \label{lemma:finite_length_LB}
     Let $\{ Q_{\eta, 1}^n \}_{\eta \in \mathcal{H}}$, $\{ Q_{\eta, 2}^n \}_{\eta \in \mathcal{H}}$, and $ \{ Q_{\eta, 3}^n \}_{\eta \in \mathcal{H}}$ be families of (conditional) probability measures on $\mathcal{Z}^n$. 
     For any $(n, M_n^{(1)}, M_n^{(2)}, \varepsilon_n)$ MAC code, the probability of decoding error $\varepsilon_n$ is lower bounded as
     \begin{align}
         \varepsilon_{n} \ge \int_\mathcal{H} \Pr \bigg\{ & \frac{1}{n} \log \frac{W_\eta^n(Z_\eta^n | X^n, Y^n)}{Q_{\eta, 1}^n (Z_\eta^n | Y^n)} \le \frac{1}{n} \log M_n^{(1)} - \gamma  \nonumber\\
        &\text{or } \frac{1}{n} \log \frac{W_\eta^n(Z_\eta^n | X^n, Y^n)}{Q_{\eta, 2}^n (Z_\eta^n | X^n)} \le \frac{1}{n} \log M_n^{(2)}  -\gamma \nonumber\\
        &\text{or } \frac{1}{n} \log \frac{W_\eta^n(Z_\eta^n | X^n, Y^n)}{Q_{\eta, 3}^n(Z_\eta^n)} \le \frac{1}{n} \log \big(M_n^{(1)} M_n^{(2)} \big)   - \gamma \bigg\} \, dw(\eta)  - 3 e^{-n\gamma}, \label{eq:lemma:finite_length_LB}
     \end{align}
     where $\gamma >0$ is an arbitrary constant; $X^n$ and $Y^n$ are uniformly and independently distributed on the codebook $C_n^{(1)}$ and $ C_n^{(2)}$, respectively, and $Z_\eta^n$ is the output of $W_\eta^n$ due to  $(X^n, Y^n)$.
    \end{lemma}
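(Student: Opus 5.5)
The plan is to adapt the Verdú–Han converse argument (as in \cite[Lemma 4]{Han98} and \cite[Lemma 7]{YHN2016}) to the mixed setting. The inequality will be proved for each component $\eta$, with the auxiliary measures $Q_{\eta,k}^n$ playing the role of the output distributions, and then integrated against $w$.

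\emph{Decomposition of the event.} Write $M_1=M_n^{(1)}$, $M_2=M_n^{(2)}$ and let $\mathcal{D}_{jk}$ be the decoding regions. For fixed $\eta$, the error probability of the code over $W_\eta^n$ is
\[
\varepsilon_{\eta,n}=\frac{1}{M_1M_2}\sum_{j,k}W_\eta^n(\mathcal{D}_{jk}^c|\boldsymbol{u}_j,\boldsymbol{v}_k).
\]
The overall error probability is $\varepsilon_n=\int_{\mathcal{H}}\varepsilon_{\eta,n}\,dw(\eta)$, since $W^n$ is the $w$-mixture of the $W_\eta^n$. It therefore suffices to show, for every $\eta$, that $\Pr\{A_\eta\cup B_\eta\cup C_\eta\}\le \varepsilon_{\eta,n}+3e^{-n\gamma}$. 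Here $A_\eta,B_\eta,C_\eta$ are the three threshold events in \eqref{eq:lemma:finite_length_LB}. The union bound gives
\[
\Pr\{A_\eta\cup B_\eta\cup C_\eta\}\le \Pr\{(A_\eta\cup B_\eta\cup C_\eta)\cap\{Z_\eta^n\in\mathcal{D}_{X^nY^n}\}\}+\varepsilon_{\eta,n},
\]
where $\mathcal{D}_{X^nY^n}$ denotes the decoding region of the transmitted pair. It then suffices to bound each of $\Pr\{A_\eta\cap\{\text{correct}\}\}$, $\Pr\{B_\eta\cap\{\text{correct}\}\}$ and $\Pr\{C_\eta\cap\{\text{correct}\}\}$ by $e^{-n\gamma}$.

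\emph{Bounding each piece.} Take $A_\eta$ first. On $A_\eta$ we have $W_\eta^n(\boldsymbol{z}|\boldsymbol{u}_j,\boldsymbol{v}_k)\le M_1e^{-n\gamma}Q_{\eta,1}^n(\boldsymbol{z}|\boldsymbol{v}_k)$. Hence
\[
\Pr\{A_\eta\cap\{\text{correct}\}\}\le\frac{1}{M_1M_2}\sum_{j,k}M_1e^{-n\gamma}Q_{\eta,1}^n(\mathcal{D}_{jk}|\boldsymbol{v}_k)=\frac{e^{-n\gamma}}{M_2}\sum_k Q_{\eta,1}^n\Big(\bigcup_j\mathcal{D}_{jk}\,\Big|\,\boldsymbol{v}_k\Big)\le e^{-n\gamma}.
\]
The equality holds because the $\mathcal{D}_{jk}$ are disjoint in $j$ for fixed $k$, and the final step uses that $Q_{\eta,1}^n(\cdot|\boldsymbol{v}_k)$ is a probability measure. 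The event $B_\eta$ is handled symmetrically, summing over $k$ for fixed $j$ and using $Q_{\eta,2}^n(\cdot|\boldsymbol{u}_j)$. For $C_\eta$, the factor $M_1M_2e^{-n\gamma}$ cancels the normalization, and disjointness of all $\mathcal{D}_{jk}$ gives $\sum_{j,k}Q_{\eta,3}^n(\mathcal{D}_{jk})\le1$. Combining the three bounds yields $\Pr\{A_\eta\cup B_\eta\cup C_\eta\}\le\varepsilon_{\eta,n}+3e^{-n\gamma}$.

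\emph{Integration over $\eta$.} Integrating this per-component bound against $dw(\eta)$ gives \eqref{eq:lemma:finite_length_LB}. The only delicate point is measurability of the maps $\eta\mapsto\Pr\{\cdot\}$. This follows as in the mixture definition \eqref{eq:mixed_channel_general_measure}, provided the families $Q_{\eta,k}^n$ depend measurably on $\eta$, which I would add as an implicit assumption. For continuous alphabets, the sums over $\boldsymbol{z}$ above become integrals of densities. This causes no difficulty, since only the pointwise density inequality on each event is used. I do not expect any substantive obstacle beyond this bookkeeping, which is the main thing to check carefully.
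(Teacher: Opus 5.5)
Your proposal is correct and follows essentially the same route as the paper's proof in Appendix~F: write $\varepsilon_n=\int_{\mathcal{H}}\varepsilon_{\eta,n}\,dw(\eta)$, split the threshold event according to whether decoding is correct, bound each of the three correct-decoding pieces by $e^{-n\gamma}$ using the pointwise density inequality and the disjointness of the decoding regions, and integrate over $\eta$. The paper phrases the splitting step as $W_\eta^n(D_{jk}^c\cap B_{\eta,jk})=W_\eta^n(B_{\eta,jk})-W_\eta^n(D_{jk}\cap B_{\eta,jk})$ rather than as your union bound, but the two are equivalent; your remark on measurability in $\eta$ addresses a point the paper leaves implicit.
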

    \noindent
    \emph{Proof:} ~~See Appendix \ref{sec:proof_lemma_LB}. \qed

 \smallskip
 Fix an arbitrary small $\gamma > 0$. Using Lemma \ref{lemma:finite_length_LB} with $Q_{\eta, 1}^n = P_{Z_\eta^n | Y^n}$, $Q_{\eta, 2}^n = P_{Z_\eta^n | X^n}$, and $Q_{\eta, 3}^n = P_{Z_\eta^n}$, we have
    \begin{align}
         \varepsilon_{n} \ge \int_\mathcal{H} \Pr \bigg\{ & \frac{1}{n} \log \frac{W_\eta^n(Z_\eta^n | X^n, Y^n)}{P_{Z_\eta^n | Y^n} (Z_\eta^n | Y^n)} \le \frac{1}{n} \log M_n^{(1)} - \gamma  \nonumber\\
        &\text{or } \frac{1}{n} \log \frac{W_\eta^n(Z_\eta^n | X^n, Y^n)}{P_{Z_\eta^n | X^n} (Z_\eta^n | X^n)} \le \frac{1}{n} \log M_n^{(2)}  -\gamma \nonumber\\
        &\text{or } \frac{1}{n} \log \frac{W_\eta^n(Z_\eta^n | X^n, Y^n)}{P_{Z_\eta^n}(Z_\eta^n)} \le \frac{1}{n} \log \big(M_n^{(1)} M_n^{(2)} \big)   - \gamma \bigg\} \, dw(\eta)  - 3 e^{-n\gamma}. \label{eq:P5-31}
     \end{align}
     Since the rate conditions in \eqref{eq:1st_order_achievable} indicate that
     \begin{align}
         \frac{1}{n} \log M_n^{(1)} &\ge R_1 - \gamma, \label{eq:P5-32a} \\
         \frac{1}{n} \log M_n^{(2)} &\ge R_2 - \gamma \label{eq:P5-32b}
     \end{align}
     for all $n \ge n_0$ with some $n_0 >0$, from \eqref{eq:P5-31} we obtain
     \begin{align}
         \varepsilon_{n} \ge \int_\mathcal{H} \Pr \bigg\{ & \frac{1}{n} \log \frac{W_\eta^n(Z_\eta^n | X^n, Y^n)}{P_{Z_\eta^n | Y^n} (Z_\eta^n | Y^n)} \le R_1 - 2 \gamma  \nonumber\\
        &\text{or } \frac{1}{n} \log \frac{W_\eta^n(Z_\eta^n | X^n, Y^n)}{P_{Z_\eta^n | X^n} (Z_\eta^n | X^n)} \le R_2 - 2 \gamma \nonumber\\
        &\text{or } \frac{1}{n} \log \frac{W_\eta^n(Z_\eta^n | X^n, Y^n)}{P_{Z_\eta^n}(Z_\eta^n)} \le R_1 + R_2   - 3 \gamma \bigg\} \, dw(\eta)  - 3 e^{-n\gamma}. \label{eq:P5-34}
     \end{align}

Let $S_{\eta,n} \subseteq \mathcal{X}^n \times \mathcal{Y}^n \times \mathcal{Z}^n$ be the subset defined by
     \begin{align}
         S_{\eta,n} \equiv \bigg\{(\boldsymbol{x}, \boldsymbol{y}, \boldsymbol{z}) \in \mathcal{X}^n \times \mathcal{Y}^n \times \mathcal{Z}^n \, \Big| ~  & \frac{1}{n} \log \frac{W_\eta^n(\boldsymbol{z} | \boldsymbol{x}, \boldsymbol{y})}{P_{Z_\eta^n | Y^n} (\boldsymbol{z} | \boldsymbol{y})}  \le \overline{I}(\boldsymbol{X}; \boldsymbol{Z}_\eta | \boldsymbol{Y}) + \gamma, \nonumber\\
          & \frac{1}{n} \log \frac{W_\eta^n(\boldsymbol{z} | \boldsymbol{x}, \boldsymbol{y})}{P_{Z_\eta^n | X^n} (\boldsymbol{z} | \boldsymbol{x})}  \le \overline{I}(\boldsymbol{Y}; \boldsymbol{Z}_\eta | \boldsymbol{X}) + \gamma , \nonumber\\
          & \frac{1}{n} \log \frac{W_\eta^n(\boldsymbol{z} | \boldsymbol{x}, \boldsymbol{y})}{P_{Z_\eta^n} (\boldsymbol{z})}  \le \overline{I}(\boldsymbol{X}\boldsymbol{Y}; \boldsymbol{Z}_\eta) + \gamma \bigg\},
      \end{align}
      where $\boldsymbol{X} = \{ X^n\}_{n=1}^\infty, \boldsymbol{Y} = \{ Y^n\}_{n=1}^\infty$ and $\boldsymbol{Z}_\eta = \{ Z_\eta^n\}_{n=1}^\infty$. 
      By the definition of (conditional) information spectral-sup, it holds that
      \begin{align}
          \xi_{\eta, n} \equiv \Pr \{ (X^n, Y^n, Z_\eta^n) \not\in S_{\eta,n} \} \to 0 ~\mbox{ as }~ n \to \infty. \label{eq:P5-35}
      \end{align}
      Now, setting 
      \begin{align}
      \overline{\mathcal{H}} = \{ \eta \, |\,\overline{I}(\boldsymbol{X}; \boldsymbol{Z}_\eta | \boldsymbol{Y}) + 3 \gamma \le R_1 \mbox{ or } \overline{I}(\boldsymbol{Y}; \boldsymbol{Z}_\eta | \boldsymbol{X}) + 3 \gamma \le R_2 \mbox{ or } \overline{I}(\boldsymbol{X} \boldsymbol{Y}; \boldsymbol{Z}_\eta) + 4 \gamma \le R_1 + R_2 \},
      \end{align}
      it follows from \eqref{eq:P5-34} that
      \begin{align}
      \varepsilon_{n} & \ge \int_\mathcal{H} \Pr \bigg\{ \bigg[ \frac{1}{n} \log \frac{W_\eta^n(Z_\eta^n | X^n, Y^n)}{P_{Z_\eta^n | Y^n} (Z_\eta^n | Y^n)} \le R_1 - 2 \gamma  ~ \text{or } \frac{1}{n} \log \frac{W_\eta^n(Z_\eta^n | X^n, Y^n)}{P_{Z_\eta^n | X^n} (Z_\eta^n | X^n)} \le R_2 - 2 \gamma \nonumber\\
        & \qquad \qquad \quad \text{or } \frac{1}{n} \log \frac{W_\eta^n(Z_\eta^n | X^n, Y^n)}{P_{Z_\eta^n}(Z_\eta^n)} \le R_1 + R_2   - 3 \gamma \bigg] \cap \big[(X^n, Y^n, Z_\eta^n) \in S_{\eta,n} \big] \bigg\} \, dw(\eta)  - 3 e^{-n\gamma} \nonumber\\
        & \ge \int_{\overline{\mathcal{H}}} \Pr \bigg\{ \bigg[ \frac{1}{n} \log \frac{W_\eta^n(Z_\eta^n | X^n, Y^n)}{P_{Z_\eta^n | Y^n} (Z_\eta^n | Y^n)} \le R_1 - 2 \gamma  ~ \text{or } \frac{1}{n} \log \frac{W_\eta^n(Z_\eta^n | X^n, Y^n)}{P_{Z_\eta^n | X^n} (Z_\eta^n | X^n)} \le R_2 - 2 \gamma \nonumber\\
        & \qquad \qquad \quad \text{or } \frac{1}{n} \log \frac{W_\eta^n(Z_\eta^n | X^n, Y^n)}{P_{Z_\eta^n}(Z_\eta^n)} \le R_1 + R_2   - 3 \gamma \bigg] \cap \big[ (X^n, Y^n, Z_\eta^n) \in S_{\eta,n} \big] \bigg\} \, dw(\eta)  - 3 e^{-n\gamma} \nonumber\\
        & = \int_{\overline{\mathcal{H}}} \Pr \Big\{(X^n, Y^n, Z_\eta^n) \in S_{\eta,n} \Big\} \, dw(\eta)  - 3 e^{-n\gamma} \nonumber\\
        & \ge \int_{\overline{\mathcal{H}} } dw(\eta) - \int_{\mathcal{H}} \xi_{\eta, n} \, dw(\eta)  - 3 e^{-n\gamma},    \label{eq:P5-36}
     \end{align}
     where to obtain the equality we have taken into account the fact that for $\eta \in \overline{\mathcal{H}}$ any tuple $(\boldsymbol{x}, \boldsymbol{y}, \boldsymbol{z}) \in S_{\eta,n}$ implies either of
     \begin{align}
            \frac{1}{n} \log \frac{W_\eta^n(\boldsymbol{z} | \boldsymbol{x}, \boldsymbol{y})}{P_{Z_\eta^n | Y^n} (\boldsymbol{z} | \boldsymbol{y})} &\le R_1 - 2 \gamma , \nonumber\\
            \text{or }~ \frac{1}{n} \log \frac{W_\eta^n(\boldsymbol{z} | \boldsymbol{x}, \boldsymbol{y})}{P_{Z_\eta^n | X^n} (\boldsymbol{z} | \boldsymbol{x})} &\le R_2 - 2 \gamma, \nonumber\\
            \text{or }~~ \,  \frac{1}{n} \log \frac{W_\eta^n(\boldsymbol{z} | \boldsymbol{x}, \boldsymbol{y})}{P_{Z_\eta^n}(\boldsymbol{z})} &\le R_1 + R_2   - 3 \gamma.
     \end{align}
     In view of the condition on $\varepsilon_n$ in \eqref{eq:1st_order_achievable} and \eqref{eq:P5-35}, taking limsup on both sides in \eqref{eq:P5-36} yields
     \begin{align}
         \varepsilon &\ge \limsup_{n \to \infty} \varepsilon_n  \ge \int_{\overline{\mathcal{H}} } dw(\eta) \nonumber\\
           & \ge \int_{\big\{ \eta \, |\,\overline{I}(\boldsymbol{X}; \boldsymbol{Z}_\eta | \boldsymbol{Y})  \le R_1 - 3 \gamma \mbox{ or } \overline{I}(\boldsymbol{Y}; \boldsymbol{Z}_\eta | \boldsymbol{X}) \le R_2 - 3 \gamma \mbox{ or } \overline{I}(\boldsymbol{X} \boldsymbol{Y}; \boldsymbol{Z}_\eta) \le R_1 + R_2 - 6 \gamma \big\}} dw(\eta). \label{eq:P5-38}
     \end{align}
     Moreover, it holds that
     \begin{align}
         \overline{I}(\boldsymbol{X}; \boldsymbol{Z}_\eta | \boldsymbol{Y}) & \le \sup_{(\boldsymbol{X}, \boldsymbol{Y}) \in \mathcal{S}_{P_1, P_2}} \overline{I}(\boldsymbol{X}; \boldsymbol{Z}_\eta | \boldsymbol{Y}) \le \frac{1}{2} \log \left( 1 + \frac{h_1^2 P_1}{N_\theta} \right), \label{eq:P5-40a} \\
         \overline{I}(\boldsymbol{Y}; \boldsymbol{Z}_\eta | \boldsymbol{X}) & \le \sup_{(\boldsymbol{X}, \boldsymbol{Y}) \in \mathcal{S}_{P_1, P_2}} \overline{I}(\boldsymbol{Y}; \boldsymbol{Z}_\eta | \boldsymbol{X}) \le \frac{1}{2} \log \left( 1 + \frac{h_2^2 P_2}{N_\theta} \right), \label{eq:P5-40b} \\
         \overline{I}(\boldsymbol{X} \boldsymbol{Y}; \boldsymbol{Z}_\eta) & \le \sup_{(\boldsymbol{X}, \boldsymbol{Y}) \in \mathcal{S}_{P_1, P_2}} \overline{I}(\boldsymbol{X} \boldsymbol{Y}; \boldsymbol{Z}_\eta ) \le \frac{1}{2} \log \left( 1 + \frac{h_1^2P_1 + h_2^2P_2}{N_\theta} \right), \label{eq:P5-40c}
     \end{align}
     where the rightmost inequalities in \eqref{eq:P5-40a}--\eqref{eq:P5-40c} can be ascertained in an analogous way to derive \eqref{eq:P4-3a}--\eqref{eq:P4-3c}, respectively.
     Plugging \eqref{eq:P5-40a}--\eqref{eq:P5-40c} into \eqref{eq:P5-38}, we obtain
     \begin{align}
         \varepsilon &\ge \int_{ \big\{ \eta \, |\,\frac{1}{2} \log \big( 1 + \frac{h_1^2 P_1}{N_\theta} \big)  \le R_1 - 3 \gamma \mbox{ or } \frac{1}{2} \log \big( 1 + \frac{h_2^2 P_2}{N_\theta} \big) \le R_2 - 3 \gamma \mbox{ or } \frac{1}{2} \log \big( 1 + \frac{h_1^2 P_1 + h_2^2 P_2}{N_\theta} \big) \le R_1 + R_2 - 6 \gamma \big\}} dw(\eta), \label{eq:P5-42}
     \end{align}
     indicating that $(R_1 - 3\gamma, R_2 - 3 \gamma ) \in B(\varepsilon | \boldsymbol{W})$.
     Since $\gamma > 0$ is an arbitrary constant, this concludes that $(R_1, R_2) \in \mbox{Cl}(B(\varepsilon | \boldsymbol{W}))$, completing the proof of the converse part.
     \qed

\subsection{0-Capacity Region} \label{sec:P6}

We characterize the 0-capacity region for the quasi-static fading Gaussian MAC under assumptions (A1) and (A2).
By setting $\varepsilon = 0$ in Theorem \ref{thm:GQS-fading_MAC}, we immediately obtain the following theorem.
\begin{theorem}[Delay-Limited Capacity Region] \label{thm:0-cap_mixed_Gaussian_MAC}
    For a quasi-static fading MAC $\boldsymbol{W}$ with Gaussian components $\boldsymbol{W}_{h,\theta} = \{W_{h,\theta} \}$, the  0-capacity region with power constraint $(P_1, P_2)$ is given by
    \begin{align} \label{eq:0-cap_mixed_Gaussian_MAC}
        C_{P_1, P_2}(0 \midd \boldsymbol{W}) = \bigg\{ (R_1, R_2) \, \Big| ~ &0 \le R_1 \le w\text{-ess.inf } \frac{1}{2} \log \left( 1 + \frac{h_1^2 P_1}{N_\theta} \right), \nonumber\\
&0 \le R_2 \le  w\text{-ess.inf } \frac{1}{2} \log \left( 1 + \frac{h_2^2 P_2}{N_\theta} \right), \nonumber\\
&R_1 + R_2 \le w\text{-ess.inf } \frac{1}{2} \log \left( 1 + \frac{h_1^2 P_1 + h_2^2 P_2}{N_\theta} \right) \bigg\}.
\end{align}
\end{theorem}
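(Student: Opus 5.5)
The plan is to specialize Theorem \ref{thm:GQS-fading_MAC} to $\varepsilon = 0$ and then identify the resulting region with the essential-infimum description \eqref{eq:0-cap_mixed_Gaussian_MAC}. Nothing new is needed on the coding side: Theorem \ref{thm:GQS-fading_MAC} holds for every $\varepsilon\in[0,1)$, including $\varepsilon=0$. Hence $C_{P_1,P_2}(0\midd\boldsymbol{W}) = \mathrm{Cl}\,B(0|\boldsymbol{W})$, with $B(\varepsilon|\boldsymbol{W})$ as in \eqref{eq:region_B}. For brevity, write
\begin{align}
f_1(\eta)=\tfrac12\log\big(1+\tfrac{h_1^2P_1}{N_\theta}\big),\quad f_2(\eta)=\tfrac12\log\big(1+\tfrac{h_2^2P_2}{N_\theta}\big),\quad f_3(\eta)=\tfrac12\log\big(1+\tfrac{h_1^2P_1+h_2^2P_2}{N_\theta}\big). \nonumber
\end{align}
Let $e_k\equiv w\text{-ess.inf}\, f_k$ for $k=1,2,3$, and let $\Pi^*$ denote the RHS of \eqref{eq:0-cap_mixed_Gaussian_MAC}.

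First I would pass to the strict-inequality version of the region, as permitted by Remark \ref{rem:alternative-formula} and Remark \ref{remark:inner_bound2}. This gives
\begin{align}
\mathrm{Cl}\,B(0|\boldsymbol{W})=\mathrm{Cl}\{(R_1,R_2)\,|\, w(\{f_1<R_1\}\cup\{f_2<R_2\}\cup\{f_3<R_1+R_2\})=0\}. \nonumber
\end{align}
Next I would use the defining property of the essential infimum: $w(\{f_k<r\})=0$ holds if and only if $r\le e_k$. One direction is immediate. For the other, if $r>e_k$ then by definition some $r'<r$ has $w(\{f_k<r'\})>0$, and so $w(\{f_k<r\})>0$ as well. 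A union of three sets is $w$-null exactly when each of them is null. Therefore the set inside the closure is precisely $\{R_1\le e_1,\ R_2\le e_2,\ R_1+R_2\le e_3\}$, intersected with $\mathbb{R}_+^2$ by the convention of Theorem \ref{thm:general_formula}. This set is already closed, so the closure does nothing and the region equals $\Pi^*$.

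The only point needing care is the equivalence between the non-strict set $B(0|\boldsymbol{W})$ and the strict one. Directly, $B(0|\boldsymbol{W})$ requires $w(\{f_1\le R_1\})=0$, which forces $R_1<e_1$ unless the level set $\{f_1 = e_1\}$ is null. So $B(0|\boldsymbol{W})$ may lose the boundary of $\Pi^*$, yet it still contains the interior of $\Pi^*$. Because $\Pi^*$ is a convex polytope, the closure of its interior is $\Pi^*$ itself, provided $\Pi^*$ has nonempty interior. If some $e_k=0$, then $\Pi^*$ is degenerate: it is a segment or the single point $\{(0,0)\}$. In that case I would check directly that $(0,0)$, or the relevant boundary segment, belongs to the closure, using the strict-inequality form. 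This degenerate bookkeeping is the only mild obstacle I anticipate, and it is handled by the cited remark.
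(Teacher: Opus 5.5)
Your proposal is correct and follows the paper's route: the paper obtains this theorem simply by setting $\varepsilon=0$ in Theorem~\ref{thm:GQS-fading_MAC}, and your passage to strict inequalities (Remark~\ref{rem:alternative-formula}) followed by the essential-infimum identification is the same step used at the start of the proof of Theorem~\ref{thm:0-capacity_region}. Your degenerate-case worry goes away under the footnote convention of Theorem~\ref{thm:general_formula}. Take the closure in $\mathbb{R}^2$, where $\{R_1\le e_1,\,R_2\le e_2,\,R_1+R_2\le e_3\}$ always has nonempty interior, and intersect with $\mathbb{R}_+^2$ only at the end.
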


This theorem can be specialized to the quasi-static fading Gaussian channel in the single-user case, given in \eqref{eq:single-user-QS-fading-ch}, to have Corollary \ref{coro:GQS-fading_ch} with $\varepsilon = 0$:
\begin{corollary}[Delay-Limited Capacity] \label{coro:cap_QS-fading_ch}
    For a quasi-static fading channel $\boldsymbol{W}$ with Gaussian components $\boldsymbol{W}_{h,\theta} = \{W_{h,\theta} \}$, the channel  0-capacity, denoted by $C_{P}(0 \midd \boldsymbol{W})$, with power constraint $P$ is given by
    \begin{align} \label{eq:cap_QS-fading_ch}
        C_{P}(0 \midd \boldsymbol{W}) = w\text{-ess.inf } \frac{1}{2} \log \left( 1 + \frac{h^2 P}{N_\theta} \right).
\end{align}
\end{corollary}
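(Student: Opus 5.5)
This corollary follows by specializing Theorem \ref{thm:0-cap_mixed_Gaussian_MAC} to a single user. Equivalently, it follows from Corollary \ref{coro:GQS-fading_ch} with $\varepsilon=0$. The single-user channel \eqref{eq:single-user-QS-fading-ch} is embedded in the MAC model \eqref{eq:fading_MAC_output} by taking $h_2 = 0$ identically, with $h_1 = h$. Then user 2 is useless, and the region in \eqref{eq:0-cap_mixed_Gaussian_MAC} becomes $R_2 \le 0$ (because $\frac12\log(1+0)=0$), together with $R_1 \le w\text{-ess.inf}\,\frac12\log(1+h^2P/N_\theta)$. The sum-rate constraint coincides with the first one because $h_2=0$. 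The $0$-capacity of the single-user channel equals the largest $R_1$ such that $(R_1,0)$ is $0$-achievable for this degenerate MAC. This holds because a single-user code is the same as a MAC code with $M_n^{(2)}=1$ and the same decoder, and conversely. Reading off the $R_1$-axis projection then gives \eqref{eq:cap_QS-fading_ch}.

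To avoid any worry about the embedding, I would also run the direct route through Corollary \ref{coro:GQS-fading_ch}. Setting $\varepsilon=0$ there gives
\[
C_P(0\midd\boldsymbol W)=\sup\Big\{R \,\Big|\, w\big(\{(h,\theta)\,|\,\tfrac12\log(1+h^2P/N_\theta)\le R\}\big)=0\Big\}.
\]
Write $f(h,\theta)=\frac12\log(1+h^2P/N_\theta)$ and $m = w\text{-ess.inf}\, f$. I would show that this supremum equals $m$. If $R<m$, then by the definition of the essential infimum $w(\{f\le R\})=0$, so $R$ is admissible, and hence the supremum is at least $m$. If $R>m$, then $w(\{f<R\})>0$, again by the definition of the essential infimum, so $w(\{f\le R\})>0$ and $R$ is not admissible. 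Therefore the supremum equals $m$. The boundary value $R=m$ does not matter, consistent with the strict/non-strict equivalence noted in Remark \ref{remark:inner_bound2}.

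The only point needing care is that Corollary \ref{coro:GQS-fading_ch} (and Theorem \ref{thm:GQS-fading_MAC}) are stated under assumptions (A1)--(A2). These carry over verbatim to the single-user parameter space once $h_2\equiv 0$. No real obstacle is expected beyond checking this measure-theoretic identity between the supremum set and the essential infimum.
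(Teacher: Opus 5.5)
Your proposal is correct and follows the paper's own argument. The paper obtains this corollary by specializing Theorem~\ref{thm:0-cap_mixed_Gaussian_MAC} to a single user, i.e., by taking Corollary~\ref{coro:GQS-fading_ch} at $\varepsilon=0$; your check that $\sup\{R \mid w(\{f\le R\})=0\}$ equals $w\text{-}\mathrm{ess.inf}\, f$ just makes explicit a step the paper leaves implicit.

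One small caution concerns the embedding via $h_2\equiv 0$. It concentrates $w$ on the plane $\{h_2=0\}$ of the three-dimensional $\mathcal{H}$, so the continuous part of $w$ is no longer absolutely continuous there and (A1) need not hold verbatim. Your direct route through the single-user Corollary~\ref{coro:GQS-fading_ch} avoids this and is the cleaner one.
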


\begin{remark}
    Theorem \ref{thm:0-cap_mixed_Gaussian_MAC} and Corollary \ref{coro:cap_QS-fading_ch} mathematically guarantee that if $h_1, h_2, h$ are generated subject to Rayleigh, Rician, or Nakagami-$m$ distributions (cf.\ Goldsmith \cite{Goldsmith2005}), then $C_{P_1,P_2}(0|\boldsymbol{W}) =\{(0,0)\}$ and $C_{P}(0|\boldsymbol{W}) =0$.  
    \qed
\end{remark}

%========================================================
%===================== Section 8 ========================
%========================================================
\section{Quasi-Static Fading Gaussian MAC with CSI} \label{sec:P7}

In this section, for the quasi-static fading Gaussian MAC, we consider the following four scenarios depending on availability of channel state information (CSI) $\eta$, which is taken by $H$ with probability $P_{H} (\eta) = w(\eta)$:
\begin{enumerate}
    \item[(i)] channel $\boldsymbol{W}_{\rm rx}=\{W_{\rm rx}^n\}_{n=1}^\infty$ with CSI $H = \eta$ available at the receiver (CSIR) is specified by\footnote{As far as there is no fear of confusion, when we write a channel as $W^n(U^n | V^n)$ it denotes the conditional probability of $U^n$ given $V^n$. Specifically, $W^n (\boldsymbol{z}, \eta | \boldsymbol{x}, \boldsymbol{y})$ denotes the conditional probability of $(Z^n, H) = (\boldsymbol{z}, \eta)$ given $(X^n, Y^n) = (\boldsymbol{x}, \boldsymbol{y})$.}
    \begin{align}
    W_{\rm rx}^n (\boldsymbol{z}, \eta | \boldsymbol{x}, \boldsymbol{y}) &= W^n (\boldsymbol{z}, \eta | \boldsymbol{x}, \boldsymbol{y}), \label{eq:CSIR-channel}
\end{align} 
whose $\varepsilon$-capacity region is denoted by $C_{P_1, P_2}^{\rm rx}(\varepsilon \midd \boldsymbol{W})$, 
    \item[(ii)] channel $\boldsymbol{W}_{\rm tx}=\{W_{\rm tx}^n\}_{n=1}^\infty$ with CSI $H = \eta$ available at the transmitter (CSIT) is specified by
    \begin{align}
    W_{\rm tx}^n (\boldsymbol{z} | \boldsymbol{x}, \boldsymbol{y}, \eta) &= W^n (\boldsymbol{z} | \boldsymbol{x}, \boldsymbol{y}, \eta), \label{eq:CSIT-channel}
\end{align}
whose $\varepsilon$-capacity region is denoted by $C_{P_1, P_2}^{\rm tx}(\varepsilon \midd \boldsymbol{W})$,
    \item[(iii)] channel  $\boldsymbol{W}_{\rm rt}=\{W_{\rm rt}^n\}_{n=1}^\infty$ with CSI $H = \eta$ available at both the receiver and the transmitter (CSIRT) is specified by
    \begin{align}
    W_{\rm rt}^n (\boldsymbol{z}, \eta | \boldsymbol{x}, \boldsymbol{y}, \eta) &= W^n (\boldsymbol{z}, \eta | \boldsymbol{x}, \boldsymbol{y}, \eta), \label{eq:CSIRT-channel}
\end{align}
whose  $\varepsilon$-capacity region is denoted by $C_{P_1, P_2}^{\rm rt}(\varepsilon \midd \boldsymbol{W})$, 
    \item[(iv)]  channel  $\boldsymbol{W}_{\rm no}=\{W_{\rm no}^n\}_{n=1}^\infty$ with CSI unavailable (no-CSI) is specified by
    \begin{align}
    W_{\rm no}^n (\boldsymbol{z} | \boldsymbol{x}, \boldsymbol{y}) &= W^n (\boldsymbol{z} | \boldsymbol{x},  \label{eq:no-CSI-channel}\boldsymbol{y}),
\end{align}
whose $\varepsilon$-capacity region is denoted by $C_{P_1, P_2}^{\rm no}(\varepsilon \midd \boldsymbol{W})$. 
\end{enumerate}
The objective of this section is to establish the formulas for the $\varepsilon$-capacity regions and investigate the relationship among them.

The crucial difference between the case of CSI available at the transmitter (CSIT or CSIRT) and the other cases with no CSI at the transmitter (non-CSI or CSIR) lies in whether the encoder can choose an appropriate code depending on the observed CSI $H = \eta$. 
In other words, in the former case (CSIT or CSIRT), a set of $(n, M_n^{(1)}, M_n^{(2)}, \varepsilon_n)$ MAC codes $\big\{ C_{\eta, n}^{(1)} \times C_{\eta,n}^{(2)} \big\}_{\eta \in \mathcal{H}}$ ($|C_{\eta, n}^{(1)}| = M_n^{(1)}$ and $|C_{\eta, n}^{(2)}| = M_n^{(2)}$ for all $\eta \in \mathcal{H}$) is arranged, and when $H = \eta$ is observed at the encoder, the pair of codes $C_{\eta, n}^{(1)} \times C_{\eta,n}^{(2)}$ is used.
Let $\varepsilon_{\eta, n}$ denote the error probability of $C_{\eta, n}^{(1)} \times C_{\eta,n}^{(2)}$ used over the component MAC $W_\eta^n$.
Then, the average error probability is given by
\begin{align}
    \varepsilon_n = \int_\mathcal{H} \varepsilon_{\eta, n} \, dw(\eta).
\end{align}
A rate pair $(R_1, R_2)$ is said to be $\varepsilon$-achievable if there exists a set of $(n, M_n^{(1)}, M_n^{(2)}, \varepsilon_n)$ MAC codes $\big\{ C_{\eta, n}^{(1)} \times C_{\eta,n}^{(2)} \big\}_{\eta \in \mathcal{H}}$ satisfying \eqref{eq:1st_order_achievable}.
When $H = \eta$, codewords denoted by $X_\eta^n$ and $Y_\eta^n$ are uniformly distributed on $C_{\eta, n}^{(1)}$ and $C_{\eta,n}^{(2)}$, respectively.
Therefore, the inputs $X_\eta^n$ and $Y_\eta^n$ to the channel are conditionally independent when CSI $H = \eta$ is fixed. 
Thus, the inputs are denoted by $(X_H^n, Y_H^n)$ when the dependence on $H$ is emphasized. 

By definition, we have
\begin{align}
    C_{P_1, P_2}^{\rm no}(\varepsilon \midd \boldsymbol{W}) &\subseteq C_{P_1, P_2}^{\rm rx}(\varepsilon \midd \boldsymbol{W}) \subseteq C_{P_1, P_2}^{\rm rt}(\varepsilon \midd \boldsymbol{W}), \label{eq:CSIR-relation1} \\
    C_{P_1, P_2}^{\rm no}(\varepsilon \midd \boldsymbol{W}) &\subseteq C_{P_1, P_2}^{\rm tx}(\varepsilon \midd \boldsymbol{W}) \subseteq C_{P_1, P_2}^{\rm rt}(\varepsilon \midd \boldsymbol{W}) \label{eq:CSIT-relation1}
\end{align}
for all $\varepsilon \in [0,1)$.

In this section, we show the following theorem:
\begin{theorem} \label{thm:QS-Gaussian-noCSI-CSIRT}
    Fix $\varepsilon \in [0, 1)$ arbitrarily. For a quasi-static fading MAC $\boldsymbol{W}$ with stationary and memoryless Gaussian components $\boldsymbol{W}_{\eta} = \{ W_{\eta} \}_{n=1}^\infty$, the $\varepsilon$-capacity regions with power constraint $(P_1, P_2)$ satisfy
    \begin{align} 
   C_{P_1, P_2}^{\rm  no}(\varepsilon \midd \boldsymbol{W}) &=  C_{P_1, P_2}^{\rm rx}(\varepsilon \midd \boldsymbol{W})=  C_{P_1, P_2}^{\rm tx}(\varepsilon \midd \boldsymbol{W}) =  C_{P_1, P_2}^{\rm rt}(\varepsilon \midd \boldsymbol{W}). \label{eq:CSIT-CSIRT1}
\end{align}
\end{theorem}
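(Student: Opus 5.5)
By the inclusion chains \eqref{eq:CSIR-relation1} and \eqref{eq:CSIT-relation1}, it suffices to prove the single inclusion
\begin{align}
C_{P_1,P_2}^{\rm rt}(\varepsilon \midd \boldsymbol{W}) \subseteq C_{P_1,P_2}^{\rm no}(\varepsilon \midd \boldsymbol{W}). \nonumber
\end{align}
By Theorem \ref{thm:GQS-fading_MAC}, $C_{P_1,P_2}^{\rm no}(\varepsilon \midd \boldsymbol{W})$ equals $\mathrm{Cl}\,B(\varepsilon|\boldsymbol{W})$ with $B$ as in \eqref{eq:region_B}. So the whole task reduces to a CSIRT converse: every rate pair $\varepsilon$-achievable with CSI at both ends must lie in $\mathrm{Cl}\,B(\varepsilon|\boldsymbol{W})$.

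\textbf{Step 1: a lower bound for codes that depend on $\eta$.} Given $\eta$-dependent codebooks $C_{\eta,n}^{(1)}\times C_{\eta,n}^{(2)}$, the decoder also knows $\eta$. Hence the error probability decomposes as $\varepsilon_n=\int_{\mathcal H}\varepsilon_{\eta,n}\,dw(\eta)$, where each $\varepsilon_{\eta,n}$ is the error probability of an ordinary MAC code over the single component $W_\eta^n$. I would apply the single-component Verd\'u--Han type bound (the case $|\mathcal H|=1$ of Lemma \ref{lemma:finite_length_LB}, i.e.\ \cite[Lemma 4]{Han98}) to each $\eta$ separately. The inputs $X_\eta^n,Y_\eta^n$ are uniform on that $\eta$'s codebooks, and the reference measures are chosen as $P_{Z_\eta^n|Y_\eta^n}$, $P_{Z_\eta^n|X_\eta^n}$, $P_{Z_\eta^n}$. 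Integrating over $\eta$ then gives exactly the analogue of \eqref{eq:P5-34}, but with $(X^n,Y^n)$ replaced by $(X_\eta^n,Y_\eta^n)$.

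\textbf{Step 2: uniform spectral-sup control.} In the proof of Theorem \ref{thm:GQS-fading_MAC}, the converse passed through $\overline{I}(\boldsymbol X;\boldsymbol Z_\eta|\boldsymbol Y)$ for a fixed input sequence, which is no longer available because the inputs now change with $\eta$. Instead I would use the finite-$n$ bounds established in the proof of Theorem \ref{thm:StrongC_Gaussian_MAC}, namely \eqref{eq:P4-5} and \eqref{eq:final_chebyshev}, rescaled to gains $h_1,h_2$ and noise $N_\theta$. These bounds hold uniformly over all independent input pairs satisfying the power constraint: the constants $\sigma_0,c_0$ depend only on $P_1,P_2$ and the channel parameters, and not on the input law. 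Consequently, for every $\eta$ and every admissible $(X_\eta^n,Y_\eta^n)$,
\begin{align}
\Pr\Big\{\tfrac1n\log\tfrac{W_\eta^n(Z_\eta^n|X_\eta^n,Y_\eta^n)}{P_{Z_\eta^n|Y_\eta^n}(Z_\eta^n|Y_\eta^n)} > \tfrac12\log\big(1+\tfrac{h_1^2P_1}{N_\theta}\big)+\gamma\Big\}\le \xi_{\eta,n}\to0, \nonumber
\end{align}
with analogous bounds for the other two densities. The sum-rate bound also requires replacing $P_{\tilde Z^n}$ by $P_{Z_\eta^n}$ via Lemma \ref{lem:div_spectrum}, at an extra cost $e^{-n\gamma}$. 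This is the role of the strong converse that the paper flags as indispensable: an $\varepsilon$-strong converse would only control the error probability, not these per-$\eta$ tails.

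\textbf{Step 3: conclude as in \eqref{eq:P5-36}--\eqref{eq:P5-42}.} Intersect with the high-probability event from Step 2 and restrict the integral to the set of $\eta$ on which one of the three capacity expressions lies below the corresponding rate minus a margin. Pointwise $\xi_{\eta,n}\to0$ together with dominated convergence then gives
\begin{align}
\varepsilon\ge\int_{\{\eta:\,\frac12\log(1+h_1^2P_1/N_\theta)\le R_1-3\gamma\ \text{or}\ \cdots\}}dw(\eta), \nonumber
\end{align}
so $(R_1-3\gamma,R_2-3\gamma)\in B(\varepsilon|\boldsymbol W)$, and letting $\gamma\to0$ finishes the proof.

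\textbf{Main obstacle.} The delicate point is Step 2: checking that the diffuse/exceptional-subspace Chebyshev bound in the proof of Theorem \ref{thm:StrongC_Gaussian_MAC} is genuinely uniform in the input distribution. In particular, $r_n=O(\sqrt n)$ and $\mathbb V[R_n]\le c_0n^{3/2}$ must hold with constants independent of $(X^n,Y^n)$, and the rescaling by $h_1,h_2,N_\theta$ must not break this; $\eta$-dependent constants are acceptable since only pointwise convergence in $\eta$ is needed. A secondary technical point is measurability of $\eta\mapsto\varepsilon_{\eta,n}$ and of the code selection, which I would simply assume as part of the CSIT/CSIRT model.
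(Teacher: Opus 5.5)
Your proposal is correct and is essentially the paper's proof: reduce to $C^{\rm rt}_{P_1,P_2}(\varepsilon\midd\boldsymbol W)\subseteq\mathrm{Cl}(B(\varepsilon|\boldsymbol W))$, lower-bound $\varepsilon_n$ by the per-$\eta$ Verd\'u--Han bound (this is Lemma~\ref{lemma:finite_length_LB2} with $Q_{\eta,1}^n=P_{Z_\eta^n|Y_\eta^n}$, etc.), invoke the strong converse of Theorem~\ref{thm:StrongC_Gaussian_MAC}, and finish as in \eqref{eq:P5-36}--\eqref{eq:P5-42}. Your Step~2 detour is not needed, since the spectral-sup route remains available: for each fixed $\eta$ the sequence $\{(X_\eta^n,Y_\eta^n)\}_{n}$ is itself an element of $\mathcal S_{P_1,P_2}$, so the paper simply uses $\overline I(\boldsymbol X_\eta;\boldsymbol Z_\eta|\boldsymbol Y_\eta)\le\sup_{\mathcal S_{P_1,P_2}}\overline I(\boldsymbol X;\boldsymbol Z_\eta|\boldsymbol Y)\le\frac12\log(1+h_1^2P_1/N_\theta)$ (and likewise for the other two) together with pointwise-in-$\eta$ convergence of $\xi_{\eta,n}$, and the uniformity in the input law that you single out as the main obstacle is therefore never required (although, as you note, it does hold).
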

\begin{remark}
    Theorem \ref{thm:QS-Gaussian-noCSI-CSIRT} generalizes and mathematically supports the claim of Yang et al.\ \cite{YDKP2014} for the single-user case.
    \qed
\end{remark}
\noindent
\textit{Proof of Theorem \ref{thm:QS-Gaussian-noCSI-CSIRT}:}~~ The $\varepsilon$-capacity region $C_{P_1, P_2}^{\rm  no}(\varepsilon \midd \boldsymbol{W})$ is given by \eqref{eq:GQS-fading_MAC}. The inclusions \eqref{eq:CSIR-relation1} and \eqref{eq:CSIT-relation1} imply that it suffices to show that
\begin{align} 
   C_{P_1, P_2}^{\rm rt}(\varepsilon \midd \boldsymbol{W}) \subseteq \mbox{Cl} (B(\varepsilon | \boldsymbol{W})), \label{eq:outer}
\end{align}
where $B(\varepsilon | \boldsymbol{W})$ is defined as in \eqref{eq:region_B}.

\medskip
Let $(R_1, R_2)$ be $\varepsilon$-achievable in the scenario of CSIRT. Then, there exists a set of $(n, M_n^{(1)}, M_n^{(2)}, \varepsilon_n)$ MAC codes $\big\{ C_{\eta, n}^{(1)} \times C_{\eta,n}^{(2)} \big\}_{\eta \in \mathcal{H}}$ satisfying \eqref{eq:1st_order_achievable}.
The following lemma can be obtained by tailoring Lemma \ref{lemma:finite_length_LB} to this case, which is applicable also in the scenario of CSIT.
\begin{lemma}[CSIRT-Converse] \label{lemma:finite_length_LB2}
     Let $\{ Q_{\eta, 1}^n \}_{\eta \in \mathcal{H}}$, $\{ Q_{\eta, 2}^n \}_{\eta \in \mathcal{H}}$, and $ \{ Q_{\eta, 3}^n \}_{\eta \in \mathcal{H}}$ be families of (conditional) probability measures on $\mathcal{Z}^n$. 
     For any set of $(n, M_n^{(1)}, M_n^{(2)}, \varepsilon_n)$ MAC codes $\big\{ C_{\eta, n}^{(1)} \times C_{\eta,n}^{(2)} \big\}_{\eta \in \mathcal{H}}$, the average probability of decoding error $\varepsilon_n$ is lower bounded as
     \begin{align}
         \varepsilon_{n} \ge \int_\mathcal{H} \Pr \bigg\{ & \frac{1}{n} \log \frac{W_\eta^n(Z_\eta^n | X_\eta^n, Y_\eta^n)}{Q_{\eta, 1}^n (Z_\eta^n | Y_\eta^n)} \le \frac{1}{n} \log M_n^{(1)} - \gamma  \nonumber\\
        &\text{or } \frac{1}{n} \log \frac{W_\eta^n(Z_\eta^n | X_\eta^n, Y_\eta^n)}{Q_{\eta, 2}^n (Z_\eta^n | X_\eta^n)} \le \frac{1}{n} \log M_n^{(2)}  -\gamma \nonumber\\
        &\text{or } \frac{1}{n} \log \frac{W_\eta^n(Z_\eta^n | X_\eta^n, Y_\eta^n)}{Q_{\eta, 3}^n(Z_\eta^n)} \le \frac{1}{n} \log \big(M_n^{(1)} M_n^{(2)} \big)   - \gamma \bigg\} \, dw(\eta)  - 3 e^{-n\gamma}, \label{eq:lemma:finite_length_LB2}
     \end{align}
     where $\gamma >0$ is an arbitrary constant; for each $\eta \in \mathcal{H}$, $X_\eta^n$ and $Y_\eta^n$ are uniformly and independently distributed on the codebook $C_{\eta, n}^{(1)}$ and $ C_{\eta, n}^{(2)}$, respectively, and $Z_\eta^n$ is the output of $W_\eta^n$ due to  $(X_\eta^n, Y_\eta^n)$ .
    \end{lemma}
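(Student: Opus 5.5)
The plan is to prove Lemma~\ref{lemma:finite_length_LB2} by conditioning on the state $H=\eta$ and reducing to a per-state, single-code version of the Verd\'u--Han/Han meta-converse (cf.\ \cite[Lemma 4]{Han98}). In the CSIRT (or CSIT) scenario, once $H=\eta$ is fixed, both encoders use the fixed code $C_{\eta,n}^{(1)}\times C_{\eta,n}^{(2)}$. Under CSIRT the decoder also uses a fixed decoding rule $\{D_{jk}^{\eta}\}$; under CSIT it uses a single rule $\{D_{jk}\}$ that does not depend on $\eta$. Either way, for each $\eta$ we have a genuine $(n,M_n^{(1)},M_n^{(2)},\varepsilon_{\eta,n})$ MAC code over the component channel $W_\eta^n$, and $\varepsilon_n=\int_{\mathcal H}\varepsilon_{\eta,n}\,dw(\eta)$. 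It therefore suffices to show, for every $\eta$,
\begin{align}
\varepsilon_{\eta,n}\ge \Pr\{L_{\eta,1}\cup L_{\eta,2}\cup L_{\eta,3}\}-3e^{-n\gamma},
\end{align}
where $L_{\eta,\alpha}$ denote the three events in \eqref{eq:lemma:finite_length_LB2}. Integrating this over $w$ gives the claim. The right-hand side must be measurable in $\eta$; I take this as implicit in the setup, since the codes are arranged measurably.

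For fixed $\eta$ I will follow the standard argument. The event $L_{\eta,1}\cup L_{\eta,2}\cup L_{\eta,3}$ splits into the part intersected with correct decoding and a remainder, and
\begin{align}
\Pr\{\cup_\alpha L_{\eta,\alpha}\}\le \varepsilon_{\eta,n}+\sum_{\alpha=1}^3\Pr\{L_{\eta,\alpha}\cap\{Z_\eta^n\in D_{JK}\}\},
\end{align}
where $(J,K)$ is the transmitted message pair, uniform over $[1:M_n^{(1)}]\times[1:M_n^{(2)}]$. The task is then to bound each of the three terms by $e^{-n\gamma}$.

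\textbf{Term $\alpha=1$.} On $L_{\eta,1}$ we have $W_\eta^n(\boldsymbol z|\boldsymbol u_j,\boldsymbol v_k)\le M_n^{(1)}e^{-n\gamma}Q^n_{\eta,1}(\boldsymbol z|\boldsymbol v_k)$. Hence
\begin{align}
\Pr\{L_{\eta,1}\cap\{Z_\eta^n\in D_{JK}\}\}\le \frac{1}{M_n^{(1)}M_n^{(2)}}\sum_{j,k}M_n^{(1)}e^{-n\gamma}Q^n_{\eta,1}(D_{jk}|\boldsymbol v_k)\le e^{-n\gamma}.
\end{align}
The last inequality uses that, for fixed $k$, the sets $D_{jk}$ are disjoint in $j$, so $\sum_j Q^n_{\eta,1}(D_{jk}|\boldsymbol v_k)\le1$.

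\textbf{Term $\alpha=2$.} This is symmetric to $\alpha=1$, summing over $k$ for fixed $j$.

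\textbf{Term $\alpha=3$.} On $L_{\eta,3}$ the bound is $M_n^{(1)}M_n^{(2)}e^{-n\gamma}Q^n_{\eta,3}(\boldsymbol z)$. Summing $Q^n_{\eta,3}(D_{jk})$ over all $(j,k)$ gives at most $1$, again by disjointness of the decoding regions.

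One bookkeeping point needs care. The codes may have repeated codewords, so the argument must be indexed by message pairs rather than codewords; this keeps the disjointness sums valid. For continuous alphabets, the sums over $\boldsymbol z$ become integrals with respect to a dominating measure.

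The only step I expect to require care is justifying that $Q_{\eta,\alpha}$ may depend on $\eta$ while the decoder does not (CSIT). This works because each bound above is applied for fixed $\eta$, and the decoding regions there are just some disjoint family, whether or not they depend on $\eta$. With that observed, the per-state bound integrates directly to \eqref{eq:lemma:finite_length_LB2}.
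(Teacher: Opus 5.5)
Your proposal is correct and is essentially the paper's argument. The paper proves this lemma by rerunning the proof of Lemma~\ref{lemma:finite_length_LB} (Appendix~\ref{sec:proof_lemma_LB}) for the channel $W_{\rm rt}^n$ and then substituting $W_{\rm rt}^n(\boldsymbol z,\eta|\boldsymbol x,\boldsymbol y,\eta)=W_\eta^n(\boldsymbol z|\boldsymbol x,\boldsymbol y)$; that proof writes $\varepsilon_n=\int_{\mathcal H}\varepsilon_{\eta,n}\,dw(\eta)$ and bounds each of the three correct-decoding terms by $e^{-n\gamma}$ using disjointness of the decoding regions, exactly as you do per state, and your direct use of the $\eta$-slices of the decoding regions is the same proof, stated in a way that also covers the CSIT case.
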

    \textit{Proof:}~~By an argument analogous to the proof of Lemma \ref{lemma:finite_length_LB}, we obtain
    \begin{align}
         \varepsilon_{n} \ge \int_\mathcal{H} \Pr \bigg\{ & \frac{1}{n} \log \frac{W_{\rm rt}^n(Z_\eta^n, \eta | X_\eta^n, Y_\eta^n, \eta)}{Q_{\eta, 1}^n (Z_\eta^n | Y_\eta^n)} \le \frac{1}{n} \log M_n^{(1)} - \gamma  \nonumber\\
        &\text{or } \frac{1}{n} \log \frac{W_{\rm rt}^n(Z_\eta^n, \eta | X_\eta^n, Y_\eta^n, \eta)}{Q_{\eta, 2}^n (Z_\eta^n | X_\eta^n)} \le \frac{1}{n} \log M_n^{(2)}  -\gamma \nonumber\\
        &\text{or } \frac{1}{n} \log \frac{W_{\rm rt}^n(Z_\eta^n , \eta| X_\eta^n, Y_\eta^n, \eta)}{Q_{\eta, 3}^n(Z_\eta^n)} \le \frac{1}{n} \log \big(M_n^{(1)} M_n^{(2)} \big)   - \gamma \bigg\} \, dw(\eta)  - 3 e^{-n\gamma}. \label{eq:lemma:finite_length_LB3}
     \end{align}
        As is explained in \eqref{eq:CSIRT-channel}, the channel law can be identified as
    \begin{align}
        W_{\rm rt}^n(\boldsymbol{z}, \eta | \boldsymbol{x}, \boldsymbol{y}, \eta) &= P_{Z^n H | X_H^n Y_H^n H} (\boldsymbol{z}, \eta | \boldsymbol{x}, \boldsymbol{y}, \eta) \nonumber\\
      &= P_{Z^n | X_H^n Y_H^n H} (\boldsymbol{z}| \boldsymbol{x}, \boldsymbol{y}, \eta)  \nonumber\\\
        & = W_\eta^n (\boldsymbol{z} | \boldsymbol{x}, \boldsymbol{y}),  \label{eq:P8-3}
    \end{align}
    where the second equality is due to the identity $P_{H | X_H^n Y_H^n Z^n H} (\eta | \boldsymbol{x}, \boldsymbol{y}, \boldsymbol{z}, \eta) = 1$.
    Plugging \eqref{eq:P8-3} into \eqref{eq:lemma:finite_length_LB3} yields \eqref{eq:lemma:finite_length_LB2}.
    \qed

    \medskip
    Let $\gamma>0$ be an arbitrary constant. Using Lemma \ref{lemma:finite_length_LB2} with $Q_{\eta,1}^n = P_{Z_\eta^n| Y_\eta^n}$, $Q_{\eta,2}^n = P_{Z_\eta^n| X_\eta^n}$, and $Q_{\eta,3}^n = P_{Z_\eta^n}$, the succeeding argument is analogous to the proof of the converse part of Theorem \ref{thm:GQS-fading_MAC}. 
    More precisely, in a manner similar to the derivation of \eqref{eq:P5-38}, we obtain
    \begin{align}
         \varepsilon &\ge \limsup_{n \to \infty} \varepsilon_n  \nonumber\\
           & \ge \int_{\big\{ \eta \, |\,\overline{I}(\boldsymbol{X}_\eta; \boldsymbol{Z}_\eta | \boldsymbol{Y}_\eta)  \le R_1 - 3 \gamma \mbox{ or } \overline{I}(\boldsymbol{Y}_\eta; \boldsymbol{Z}_\eta | \boldsymbol{X}_\eta) \le R_2 - 3 \gamma \mbox{ or } \overline{I}(\boldsymbol{X}_\eta \boldsymbol{Y}_\eta; \boldsymbol{Z}_\eta) \le R_1 + R_2 - 6 \gamma \big\}} dw(\eta). \label{eq:P6-2}
     \end{align}
     By the same reasoning used to derive \eqref{eq:P5-40a}--\eqref{eq:P5-40c}, we have
     \begin{align}
         \overline{I}(\boldsymbol{X}_\eta; \boldsymbol{Z}_\eta | \boldsymbol{Y}_\eta) & \le \sup_{(\boldsymbol{X}, \boldsymbol{Y}) \in \mathcal{S}_{P_1, P_2}} \overline{I}(\boldsymbol{X}; \boldsymbol{Z}_\eta | \boldsymbol{Y}) \le \frac{1}{2} \log \left( 1 + \frac{h_1^2 P_1}{N_\theta} \right), \label{eq:P6-3a} \\
         \overline{I}(\boldsymbol{Y}_\eta; \boldsymbol{Z}_\eta | \boldsymbol{X}_\eta) & \le \sup_{(\boldsymbol{X}, \boldsymbol{Y}) \in \mathcal{S}_{P_1, P_2}} \overline{I}(\boldsymbol{Y}; \boldsymbol{Z}_\eta | \boldsymbol{X}) \le \frac{1}{2} \log \left( 1 + \frac{h_2^2 P_2}{N_\theta} \right), \label{eq:P6-3b} \\
         \overline{I}(\boldsymbol{X}_\eta \boldsymbol{Y}_\eta; \boldsymbol{Z}_\eta) & \le \sup_{(\boldsymbol{X}, \boldsymbol{Y}) \in \mathcal{S}_{P_1, P_2}} \overline{I}(\boldsymbol{X} \boldsymbol{Y}; \boldsymbol{Z}_\eta ) \le \frac{1}{2} \log \left( 1 + \frac{h_1^2P_1 + h_2^2P_2}{N_\theta} \right). \label{eq:P6-3c}
     \end{align}
     Plugging \eqref{eq:P6-3a}--\eqref{eq:P6-3c} into \eqref{eq:P6-2}, we obtain
     \begin{align}
         \varepsilon &\ge \int_{ \big\{ \eta \, |\,\frac{1}{2} \log \big( 1 + \frac{h_1^2 P_1}{N_\theta} \big)  \le R_1 - 3 \gamma \mbox{ or } \frac{1}{2} \log \big( 1 + \frac{h_2^2 P_2}{N_\theta} \big) \le R_2 - 3 \gamma \mbox{ or } \frac{1}{2} \log \big( 1 + \frac{h_1^2 P_1 + h_2^2 P_2}{N_\theta} \big) \le R_1 + R_2 - 6 \gamma \big\}} dw(\eta), \label{eq:P6-4}
     \end{align}
     indicating that $(R_1 - 3\gamma, R_2 - 3 \gamma ) \in B(\varepsilon | \boldsymbol{W})$.
     Since $\gamma > 0$ is an arbitrary constant, this concludes that $(R_1, R_2) \in \mbox{Cl}(B(\varepsilon | \boldsymbol{W}))$, completing the proof of \eqref{eq:outer}.
\qed

\begin{remark} \label{rem:CSIRT-Converse}
 Lemma \ref{lemma:finite_length_LB2} with CSIRT is quite general in the sense that it does not need the assumption that component channels $\boldsymbol{W}_\eta$ need to be stationary and memoryless. Indeed, Lemma \ref{lemma:finite_length_LB2} is valid also for a very general class of component channels (nonstationary and/or nonergodic) $\boldsymbol{W}_\eta$ to be specified later in Sec.\ \ref{sec:P8}.
 \qed
\end{remark}

%========================================================
%===================== Section 9 ========================
%========================================================
\section{General Mixed MACs With CSI} \label{sec:P8}

In this section, we consider a very general mixed MAC $\boldsymbol{W}$  with arbitrary components $\boldsymbol{W}_\eta = \{ W_\eta^n \}_{n= 1}^\infty$ (nonstationary and/or nonergodic) whose input and output alphabets are also arbitrary (finite, countably infinite, continuous, or abstract).
In order to bring out the essential ideas, we first focus here on the case where the parameter space $\mathcal{H}$ is countably infinite.
Surprisingly, only with this simple assumption, we can reach very general results as will be stated from now on.

As in Sec.\ \ref{sec:P7}, the  $\varepsilon$-capacity regions under cost constraint $(\Gamma_1, \Gamma_2)$ are denoted by $C_{\Gamma_1, \Gamma_2}^{\rm rx}(\varepsilon \midd \boldsymbol{W})$ (CSIR), $C_{\Gamma_1, \Gamma_2}^{\rm tx}(\varepsilon \midd \boldsymbol{W})$ (CSIT),
 $C_{\Gamma_1, \Gamma_2}^{\rm rt}(\varepsilon \midd \boldsymbol{W})$ (CSIRT), and 
$C_{\Gamma_1, \Gamma_2}^{\rm no}(\varepsilon \midd \boldsymbol{W})$ (no-CSI). 
Similar to \eqref{eq:CSIR-relation1} and \eqref{eq:CSIT-relation1}, it holds that
\begin{align}
    C_{\Gamma_1, \Gamma_2}^{\rm no}(\varepsilon \midd \boldsymbol{W}) &\subseteq C_{\Gamma_1, \Gamma_2}^{\rm rx}(\varepsilon \midd \boldsymbol{W}) \subseteq C_{\Gamma_1, \Gamma_2}^{\rm rt}(\varepsilon \midd \boldsymbol{W}), \label{eq:CSIR-relation2} \\
    C_{\Gamma_1, \Gamma_2}^{\rm no}(\varepsilon \midd \boldsymbol{W}) &\subseteq C_{\Gamma_1, \Gamma_2}^{\rm tx}(\varepsilon \midd \boldsymbol{W}) \subseteq C_{\Gamma_1, \Gamma_2}^{\rm rt}(\varepsilon \midd \boldsymbol{W}). \label{eq:CSIT-relation2}
\end{align}

\medskip
 The following theorem reveals the relationship among the $\varepsilon$-capacity regions when the parameter space $\mathcal{H}$ consists of countably infinite points. 

\begin{theorem} \label{thm:noCSI-CSIRT}
    Fix $\varepsilon \in [0, 1)$ arbitrarily. For a mixed MAC $\boldsymbol{W}$ with general components $\boldsymbol{W}_{\eta} = \{ W_{\eta}^n \}_{n=1}^\infty$, the $\varepsilon$-capacity regions with cost constraint $(\Gamma_1, \Gamma_2)$ satisfy
    \begin{align} 
   C_{\Gamma_1, \Gamma_2}^{\rm no}(\varepsilon \midd \boldsymbol{W}) &=  C_{\Gamma_1, \Gamma_2}^{\rm rx}(\varepsilon \midd \boldsymbol{W}), \label{eq:noCSI-CSIR} \\
   C_{\Gamma_1, \Gamma_2}^{\rm tx}(\varepsilon \midd \boldsymbol{W}) &=  C_{\Gamma_1, \Gamma_2}^{\rm rt}(\varepsilon \midd \boldsymbol{W}) \label{eq:CSIT-CSIRT}. 
\end{align}
\end{theorem}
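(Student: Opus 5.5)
For \eqref{eq:noCSI-CSIR}, the inclusions \eqref{eq:CSIR-relation2}--\eqref{eq:CSIT-relation2} leave only $C^{\rm rx}_{\Gamma_1,\Gamma_2}(\varepsilon\midd\boldsymbol{W})\subseteq C^{\rm no}_{\Gamma_1,\Gamma_2}(\varepsilon\midd\boldsymbol{W})$ to prove. For \eqref{eq:CSIT-CSIRT} they leave only $C^{\rm rt}_{\Gamma_1,\Gamma_2}(\varepsilon\midd\boldsymbol{W})\subseteq C^{\rm tx}_{\Gamma_1,\Gamma_2}(\varepsilon\midd\boldsymbol{W})$. Both reductions rest on the countability of $\mathcal{H}=\{\eta_1,\eta_2,\ldots\}$ with masses $a_k=w(\{\eta_k\})>0$.

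\emph{CSIR versus no-CSI.} I will view $\boldsymbol{W}_{\rm rx}$ as an ordinary general MAC with output $(Z^n,H)$ and apply Theorem~\ref{thm:general_formula} to it. Since $W_{\rm rx}^n(\boldsymbol{z},\eta_k|\boldsymbol{x},\boldsymbol{y})=a_kW_{\eta_k}^n(\boldsymbol{z}|\boldsymbol{x},\boldsymbol{y})$ and $P_{Z^nH|Y^n}(\boldsymbol{z},\eta_k|\boldsymbol{y})=a_kP_{Z_{\eta_k}^n|Y^n}(\boldsymbol{z}|\boldsymbol{y})$ (and similarly for the other two), the factor $a_k$ cancels. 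Hence $J_{\boldsymbol{W}_{\rm rx}}(R_1,R_2\midd\boldsymbol{X},\boldsymbol{Y})=\limsup_n\sum_k a_k\Pr\{\mathcal{D}_{k,n}(R_1,R_2)\}$, where $\mathcal{D}_{k,n}$ is the union of the three events that the \emph{component} information densities of $W_{\eta_k}^n$ fall below $R_1,R_2,R_1+R_2$. Let $(R_1,R_2)$ be CSIR-achievable. Then for every $\gamma>0$ there is $(\boldsymbol{X},\boldsymbol{Y})\in\mathcal{S}_{\Gamma_1,\Gamma_2}$ with $J_{\boldsymbol{W}_{\rm rx}}(R_1-\gamma,R_2-\gamma)\le\varepsilon$. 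For the same inputs I will bound $J_{\boldsymbol{W}}$ of the no-CSI channel, exactly as in the $\mathcal{H}_1$ step of the proof of Theorem~\ref{thm:GQS-fading_MAC}:
\begin{itemize}
\item use $W^n\ge a_kW^n_{\eta_k}$ in the numerators;
\item use Lemma~\ref{hodai:8} to replace $P_{Z^n|Y^n}$, $P_{Z^n|X^n}$, $P_{Z^n}$ in the denominators by their $\eta_k$-counterparts, at the cost of $\gamma$ and $3e^{-n\gamma}$.
\end{itemize}
This gives $\Pr\{\text{no-CSI event for }\eta_k\}\le\Pr\{\mathcal{D}_{k,n}(R_1-3\gamma,R_2-3\gamma)\}+3e^{-n\gamma}$ for $n>n_k$. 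To deal with the $k$-dependence of $n_k$, I truncate: fix $K$ with $\sum_{k>K}a_k\le\delta$, take $n>\max_{k\le K}n_k$, and then let $\delta\to0$. Along the way I must also check that the inequality chain is applied the right way round, i.e.\ that lowering the threshold on the rx side matches the no-CSI event. The result is $J_{\boldsymbol{W}}(R_1-3\gamma,R_2-3\gamma\midd\boldsymbol{X},\boldsymbol{Y})\le\varepsilon$, hence $(R_1-3\gamma,R_2-3\gamma)\in C^{\rm no}$ by Theorem~\ref{thm:general_formula}, and closure finishes.

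\emph{CSIRT versus CSIT.} For the converse side I use Lemma~\ref{lemma:finite_length_LB2} with $Q^n_{\eta,1}=P_{Z_\eta^n|Y_\eta^n}$, $Q^n_{\eta,2}=P_{Z_\eta^n|X_\eta^n}$, $Q^n_{\eta,3}=P_{Z_\eta^n}$. Any CSIRT-achievable pair thus comes with code-induced inputs $(X^n_\eta,Y^n_\eta)$ satisfying the cost constraint for each $\eta$ and
\[\limsup_n\textstyle\sum_k a_k\Pr\{\mathcal{D}_{k,n}(R_1-2\gamma,R_2-2\gamma)\}\le\varepsilon,\]
where $\mathcal{D}_{k,n}$ is now computed with $(X^n_{\eta_k},Y^n_{\eta_k})$. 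It then suffices to prove a CSIT-achievability bound (the lemma announced as Lemma~\ref{lem:CSIT-error-bound}) for arbitrary $\eta$-dependent independent inputs. The claim is that there exist CSIT codes with
\[\varepsilon_n\le\textstyle\sum_k a_k\Pr\big\{\mathcal{D}_{k,n}\big(R_1+\gamma+\tfrac1n\log\tfrac1{a_k},\,R_2+\gamma+\tfrac1n\log\tfrac1{a_k}\big)\big\}+c\,e^{-n\gamma},\]
with the analogous shift $2\gamma+\tfrac1n\log\tfrac1{a_k}$ in the sum-rate threshold.

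I plan to prove this by random coding with independent codebooks drawn from $X^n_{\eta}$ and $Y^n_\eta$ for each $\eta$. The decoder, which does not know $\eta$, searches over triples $(j,k,\eta')$ for which all three component information densities of $W_{\eta'}$, evaluated at $(\boldsymbol{u}^{\eta'}_j,\boldsymbol{v}^{\eta'}_k,\boldsymbol{z})$, exceed the thresholds penalized by $\tfrac1n\log\tfrac1{w(\eta')}$. It outputs only $(j,k)$, so a wrong $\eta'$ with the correct messages is harmless. The error events are of two kinds.
\begin{itemize}
\item The true triple fails its test; this gives the first term.
\item Some wrong triple $(j',k',\eta')$ passes. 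For each $\eta'$, the usual change-of-measure argument of Lemma~\ref{lem:finite_length_UB} bounds this probability by $e^{-n\gamma}w(\eta')$, because the penalty pays for the union over $\eta'$. Summing over $\eta'$ gives at most $3e^{-n\gamma}$.
\end{itemize}
I expect the main obstacle to be this second kind of event when $\eta'\neq\eta$. There the codeword $\boldsymbol{z}$ is generated through $W_\eta$ but is tested against densities of $W_{\eta'}$. So the change of measure must go from the true output law to $Q_{\eta',\cdot}$, using the independence of the $\eta'$-codebook from $\boldsymbol{z}$. A further obstacle, present even when $\eta'=\eta$, is that the partially correct cases (correct $j$, wrong $k$, and the reverse) need the conditional densities with the correct codeword held fixed.

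Once the lemma is in hand, the conclusion follows. For each fixed $k$ the extra term $\tfrac1n\log\tfrac1{a_k}$ vanishes. Fatou's lemma and the same truncation over finitely many $k$ then give $\limsup_n\varepsilon_n\le\varepsilon$ for the rates $(R_1-4\gamma,R_2-4\gamma)$. Letting $\gamma\to0$ yields $C^{\rm rt}\subseteq C^{\rm tx}$.
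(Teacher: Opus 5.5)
Your proof of \eqref{eq:noCSI-CSIR} is the paper's argument with different bookkeeping. The paper proves both $J_{\boldsymbol{W}_{\rm rx}}(R_1,R_2)\ge J_{\boldsymbol{W}}(R_1-2\gamma,R_2-2\gamma)$ and the reverse inequality, using the ratios $A_n(\eta),\dots,D_n(\eta)$ and the set $\{\eta \mid w(\eta)>1/n\}$. This shows that the regions cut out by $J_{\boldsymbol{W}}$ and $J_{\boldsymbol{W}_{\rm rx}}$ coincide for every fixed input. You need only the first inequality, since the other inclusion is trivial, and you replace that set by a fixed finite truncation. The orientation you worried about is correct: $W^n\ge a_kW^n_{\eta_k}$, together with $P_{Z^n|Y^n}\le e^{n\gamma}P_{Z^n_{\eta_k}|Y^n}$ with high probability, lower-bounds the no-CSI density by the component density minus $\frac1n\log\frac1{a_k}+\gamma$. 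The converse half of \eqref{eq:CSIT-CSIRT} via Lemma \ref{lemma:finite_length_LB2} is also as in the paper.

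The genuine difference is your CSIT achievability lemma. The paper's decoder works as follows:
\begin{itemize}
\item it thresholds the mixture densities $W^n/P_{Z^n|Y_H^n}$, $W^n/P_{Z^n|X_H^n}$ and $W^n/P_{Z^n}$;
\item it searches only a finite list $D_n$ of $N_n=e^{\sqrt n}$ states, with a uniform penalty $\frac1n\log N_n$;
\item it handles wrong-state codewords through $w(\eta)P_{Z^n_\eta}\le P_{Z^n}$;
\item only afterwards does it convert to component densities, using $w(\eta)>1/n$ and Lemma \ref{lem:div_spectrum}, at the price of extra terms $N_ne^{-n\gamma}$ and $\Pr\{H\notin D_n\}$.
\end{itemize}
Your decoder instead tests the component densities of $W_{\eta'}$ over all of $\mathcal{H}$, with the Kraft-type penalty $\frac1n\log\frac1{w(\eta')}$. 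This lands directly on $\tilde{J}_{\boldsymbol{W}}$ with an $O(e^{-n\gamma})$ remainder and no list. The paper's route, by contrast, keeps a single mixture metric in the spirit of Theorem \ref{thm:general_formula}.

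The obstacle you anticipate for $\eta'\neq\eta$ does dissolve. For each fixed $\boldsymbol{z}$, the $\eta'$-codewords are independent of $\boldsymbol{z}$, and $\mathbb{E}\,W^n_{\eta'}(\boldsymbol{z}\mid\boldsymbol{U}^{\eta'}_{j'},\boldsymbol{V}^{\eta'}_{k'})=P_{Z^n_{\eta'}}(\boldsymbol{z})$. Markov's inequality then bounds the probability of passing the sum-rate test alone by $w(\eta')e^{-n\gamma}/(M_n^{(1)}M_n^{(2)})$, whatever the law of $\boldsymbol{z}$. Summing over $(j',k')$ and $\eta'$ gives at most $e^{-n\gamma}$. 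The partially correct cases with $\eta'=\eta$ are the same argument with $(\boldsymbol{v},\boldsymbol{z})$ held fixed.

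One correction: drop Fatou's lemma in the last step. It gives $\limsup_n\sum_k\le\sum_k\limsup_n$, whereas your hypothesis from Lemma \ref{lemma:finite_length_LB2} controls $\limsup_n\sum_k$. The truncation, comparing both sums at the same $n$, is what actually closes the argument.
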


\begin{remark} \label{rem:CSI-availability}
It is of interest to see that the $\varepsilon$-capacity regions coincide in the scenarios of no-CSI and CSIR, and those in the scenarios of CSIT and CSIRT, even when the component MACs are so general as above. These results indicate that whether the $\varepsilon$-capacity region is enlarged or not depends solely on availability of CSI at the \textit{transmitters}, regardless of whether CSI is additionally available at the receiver. Later, in \textbf{Counter Example}, we present an example in which the $\varepsilon$-capacity regions in \eqref{eq:noCSI-CSIR} do not coincide with those in \eqref{eq:CSIT-CSIRT}, which is in contrast with formula \eqref{eq:CSIT-CSIRT1} for the quasi-static fading Gaussian MAC.
\qed
\end{remark}

\medskip
\noindent
\textit{Proof of Theorem \ref{thm:noCSI-CSIRT}:}~~

\smallskip
\noindent
~(i) \textit{Proof of \eqref{eq:noCSI-CSIR}}

In the cases of no-CSI and CSIR, in view of Theorem \ref{thm:general_formula} (cf.\ \eqref{eq:func_J}), which holds also with non-countable $\mathcal{H}$, we have the following information spectrum expressions:
\begin{align}
C_{\Gamma_1, \Gamma_2}^{\rm no}(\varepsilon \midd \boldsymbol{W}) &= \bigcup_{(\boldsymbol{X}, \boldsymbol{Y}) \in \mathcal{S}_{\Gamma_1, \Gamma_2}} \mathrm{Cl} \{ (R_1, R_2) \midd R_1 \ge 0, R_2 \ge 0,  J_{\boldsymbol{W}}(R_1, R_2 | \boldsymbol{X}, \boldsymbol{Y}) \le \varepsilon \}, \label{eq:P9-1} \\
C_{\Gamma_1, \Gamma_2}^{\rm rx}(\varepsilon \midd \boldsymbol{W})  &= \bigcup_{(\boldsymbol{X}, \boldsymbol{Y}) \in \mathcal{S}_{\Gamma_1, \Gamma_2}} \mathrm{Cl} \{ (R_1, R_2) \midd R_1 \ge 0, R_2 \ge 0,  J_{\boldsymbol{W}_{\rm rx}} (R_1, R_2 | \boldsymbol{X}, \boldsymbol{Y}) \le \varepsilon \},  \label{eq:P9-55}
\end{align}
where 
\begin{align}
J_{\boldsymbol{W}}(R_1, R_2 | \boldsymbol{X}, \boldsymbol{Y}) = \limsup_{n \to \infty} \Pr & \left\{   \frac{1}{n} \log \frac{W^n(Z^n | X^n, Y^n)}{P_{Z^n | Y^n} (Z^n | Y^n) } \le R_1 \right. \nonumber \\
     & ~~~\text{  or }  \frac{1}{n} \log \frac{W^n(Z^n | X^n, Y^n)}{P_{Z^n | X^n} (Z^n | X^n) } \le R_2  \nonumber \\
    & \left. ~~\text{ or }    \frac{1}{n} \log \frac{W^n(Z^n | X^n, Y^n)}{P_{Z^n}(Z^n)} \le R_1 + R_2 \right\},  \label{eq:func_J2'} \\
     J_{\boldsymbol{W}_{\rm rx}} (R_1, R_2 | \boldsymbol{X}, \boldsymbol{Y})  = \limsup_{n \to \infty}  \Pr & \left\{   \frac{1}{n} \log \frac{W^n(Z^n, H | X^n, Y^n)}{P_{Z^n H | Y^n} (Z^n, H | Y^n) } \le R_1 \right. \nonumber \\
     & ~~~\text{  or }  \frac{1}{n} \log \frac{W^n(Z^n, H | X^n, Y^n)}{P_{Z^n H| X^n} (Z^n, H | X^n) } \le R_2  \nonumber \\
    & \left. ~~\text{ or }    \frac{1}{n} \log \frac{W^n(Z^n, H | X^n, Y^n)}{P_{Z^n H}(Z^n, H)} \le R_1 + R_2 \right\}. \label{eq:func_J3}
\end{align}
Here, it should be noted that \eqref{eq:func_J2'} is the same one as already defined in \eqref{eq:func_J}.
Furthermore, let us define
\begin{align}
B_{\rm no}(\varepsilon \midd \boldsymbol{X}, \boldsymbol{Y}) &\equiv \mathrm{Cl} \{ (R_1, R_2) \midd R_1 \ge 0, R_2 \ge 0,  J_{\boldsymbol{W}}(R_1, R_2 | \boldsymbol{X}, \boldsymbol{Y}) \le \varepsilon \}, \label{eq:P9-1b} \\
B_{\rm rx}(\varepsilon \midd \boldsymbol{X}, \boldsymbol{Y})  &\equiv \mathrm{Cl} \{ (R_1, R_2) \midd R_1 \ge 0, R_2 \ge 0,  J_{\boldsymbol{W}_{\rm rx}} (R_1, R_2 | \boldsymbol{X}, \boldsymbol{Y}) \le \varepsilon \}.\label{eq:P9-1c}
\end{align}
Then, it is enough to show that \eqref{eq:P9-1b} coincides with \eqref{eq:P9-1c}.

By setting
\begin{align}
     A_n(\eta) &\equiv  \frac{P_{X^nY^n Z^n | H} (X^n, Y^n, Z^n | \eta)}{P_{X^n Y^n Z^n} (X^n, Y^n, Z^n)} , \label{eq:P9-61a} \\
     B_n (\eta)  &\equiv \frac{ P_{ Z^n | Y^n H} ( Z^n | Y^n, \eta)}{P_{Z^n |  Y^n } (Z^n  | Y^n)}, \label{eq:P9-61b} \\
     C_n (\eta)  &\equiv \frac{ P_{ Z^n | X^n H} ( Z^n | X^n, \eta)}{P_{Z^n |  X^n } (Z^n  | X^n)},  \label{eq:P9-61c} \\
     D_n (\eta)  &\equiv \frac{ P_{ Z^n | H} ( Z^n |  \eta)}{P_{Z^n} (Z^n)}, \label{eq:P9-61d}
\end{align}
we notice that Lemma \ref{lem:div_spectrum} with arbitrary small $\gamma>0$ yields, under distributions given $H = \eta$,
\begin{align}
    \Pr \left\{ \frac{1}{n} \log A_n (\eta) \ge - \gamma \right\} &\ge 1 - e^{- n \gamma}, \label{eq:P9-div_spectrum1} \\
     \Pr \left\{ \frac{1}{n} \log B_n (\eta) \ge - \gamma \right\} &\ge 1 - e^{- n \gamma}, \label{eq:P9-div_spectrum2} \\
     \Pr \left\{ \frac{1}{n} \log C_n (\eta) \ge - \gamma \right\} &\ge 1 - e^{- n \gamma}, \label{eq:P9-div_spectrum3} \\
     \Pr \left\{ \frac{1}{n} \log D_n (\eta) \ge - \gamma \right\} &\ge 1 - e^{- n \gamma}. \label{eq:P9-div_spectrum4}
\end{align}
On the other hand, each of probability distributions on the RHS of \eqref{eq:func_J3}, i.e., $W^n(\boldsymbol{z}, \eta | \boldsymbol{x}, \boldsymbol{y})$, and three marginal distributions, for $H = \eta$ is rewritten as follows:
\begin{align}
    W^n(\boldsymbol{z}, \eta | \boldsymbol{x}, \boldsymbol{y}) &=  W^n(\boldsymbol{z} | \boldsymbol{x}, \boldsymbol{y}) \, W^n (\eta | \boldsymbol{x}, \boldsymbol{y}, \boldsymbol{z}) \nonumber\\
    &= W^n(\boldsymbol{z} | \boldsymbol{x}, \boldsymbol{y}) \frac{w(\eta) P_{X^nY^n Z^n | H} (\boldsymbol{x}, \boldsymbol{y}, \boldsymbol{z} | \eta)}{P_{X^n Y^n Z^n} (\boldsymbol{x}, \boldsymbol{y}, \boldsymbol{z})} \nonumber\\
    &= W^n(\boldsymbol{z} | \boldsymbol{x}, \boldsymbol{y}) w(\eta) A_n(\eta),
    \label{eq:P9-55b} \\
    P_{Z^n  H | Y^n}(\boldsymbol{z}, \eta | \boldsymbol{y}) &=  P_{Z^n | Y^n}(\boldsymbol{z} |  \boldsymbol{y}) \, P_{H |  Y^n Z^n} (\eta | \boldsymbol{y}, \boldsymbol{z}) \nonumber\\
     &= P_{Z^n | Y^n}(\boldsymbol{z} |  \boldsymbol{y})  \frac{P_{H | Y^n} (\eta | \boldsymbol{y}) P_{ Z^n | Y^n H} ( \boldsymbol{z} | \boldsymbol{y}, \eta)}{P_{ Z^n | Y^n } ( \boldsymbol{z} | \boldsymbol{y})} \nonumber\\
    &= P_{Z^n | Y^n}(\boldsymbol{z} |  \boldsymbol{y})  \frac{w(\eta) P_{ Z^n | Y^n H} ( \boldsymbol{z} | \boldsymbol{y}, \eta)}{P_{Z^n |  Y^n } (\boldsymbol{z}  | \boldsymbol{y})} \nonumber\\
    &= P_{Z^n | Y^n}(\boldsymbol{z} |  \boldsymbol{y}) w(\eta)  B_n (\eta), \label{eq:P9-55c}
\end{align}
and similarly,
\begin{align}
    P_{Z^n  H | X^n}(\boldsymbol{z}, \eta | \boldsymbol{x}) &= P_{Z^n | X^n}(\boldsymbol{z} |  \boldsymbol{x}) w(\eta)  C_n (\eta),  \label{eq:P9-55d} \\
    P_{Z^n  H }(\boldsymbol{z}, \eta) &= P_{Z^n}(\boldsymbol{z} ) w(\eta)  D_n (\eta). \label{eq:P9-55e}
\end{align}
Also, note that
 \begin{align}
    W^n(\boldsymbol{z}, \eta | \boldsymbol{x}, \boldsymbol{y}) &\le \sum_{\eta' \in \mathcal{H}} W^n(\boldsymbol{z}, \eta' | \boldsymbol{x}, \boldsymbol{y}) = W^n(\boldsymbol{z} | \boldsymbol{x}, \boldsymbol{y}).
    \label{eq:P9-55f}
\end{align}
Then, it follows from \eqref{eq:P9-55c}--\eqref{eq:P9-55e} and \eqref{eq:P9-55f} that
\begin{align}
\sum_{\eta \in \mathcal{H}} & w(\eta) \Pr   \left\{   \frac{1}{n} \log \frac{W^n(Z_\eta^n, \eta | X^n, Y^n)}{P_{Z^n H | Y^n} (Z_\eta^n, \eta | Y^n) } \le R_1  \right.  \text{  or }  \frac{1}{n} \log \frac{W^n(Z_\eta^n, \eta | X^n, Y^n)}{P_{Z^n  H| X^n} (Z_\eta^n, \eta | X^n) } \le R_2  \nonumber\\
    & ~~~~~~~~~~ \left. \text{ or }    \frac{1}{n} \log \frac{W^n(Z_\eta^n, \eta | X^n, Y^n)}{P_{Z^n H}(Z_\eta^n, \eta)} \le R_1 + R_2  \right\} \nonumber\\
    &\ge \sum_{\eta \in \mathcal{H}} w(\eta) \Pr \left\{   \frac{1}{n} \log \frac{W^n(Z_\eta^n | X^n, Y^n)}{P_{Z^n H | Y^n} (Z_\eta^n, \eta | Y^n) } \le R_1 \right.  \text{  or }  \frac{1}{n} \log \frac{W^n(Z_\eta^n | X^n, Y^n)}{P_{Z^n H | X^n} (Z_\eta^n, \eta | X^n) } \le R_2 \nonumber \\
    & ~~~~~~~~~~~~~~~~~~~~\left. \text{ or }    \frac{1}{n} \log \frac{W^n(Z_\eta^n | X^n, Y^n)}{P_{Z^n H} (Z_\eta^n, \eta) } \le R_1 + R_2 \right\}   \nonumber\\
    &= \sum_{\eta \in \mathcal{H}} w(\eta) \Pr \left\{   \frac{1}{n} \log \frac{W^n(Z_\eta^n | X^n, Y^n)}{w(\eta) P_{Z^n | Y^n} (Z_\eta^n | Y^n) } - \frac{1}{n} \log B_n(\eta)  \le R_1  \right.  \nonumber \\
    & ~~~~~~~~~~~~~~~~~~~~~ \text{  or }  \frac{1}{n} \log \frac{W^n(Z_\eta^n | X^n, Y^n)}{w(\eta) P_{Z^n | X^n} (Z_\eta^n | X^n) } - \frac{1}{n} \log C_n(\eta)  \le R_2  \nonumber \\
    & ~~~~~~~~~~~~~~~~~~~~~\left. \text{ or }    \frac{1}{n} \log \frac{W^n(Z_\eta^n | X^n, Y^n)}{w(\eta) P_{Z^n}(Z_\eta^n)}  - \frac{1}{n} \log D_n(\eta) \le R_1 + R_2 \right\} \nonumber\\
    &\ge \sum_{\eta \in \mathcal{H}} w(\eta) \Pr \left\{   \frac{1}{n} \log \frac{W^n(Z_\eta^n | X^n, Y^n)}{w(\eta) P_{Z^n | Y^n} (Z_\eta^n | Y^n) } \le R_1 - \gamma \right.  \nonumber \\
    & ~~~~~~~~~~~~~~~~~~~~~ \text{  or }  \frac{1}{n} \log \frac{W^n(Z_\eta^n | X^n, Y^n)}{w(\eta) P_{Z^n | X^n} (Z_\eta^n | X^n) } \le R_2 - \gamma \nonumber \\
    & ~~~~~~~~~~~~~~~~~~~~~\left. \text{ or }    \frac{1}{n} \log \frac{W^n(Z_\eta^n | X^n, Y^n)}{P_{Z^n}(Z_\eta^n)}  \le R_1 + R_2 - 2 \gamma \right\} - 3 e^{- n\gamma} \nonumber\\
    &\ge \sum_{\eta \in \mathcal{H}_n} w(\eta) \Pr \left\{   \frac{1}{n} \log \frac{n W^n(Z_\eta^n | X^n, Y^n)}{P_{Z^n | Y^n} (Z_\eta^n | Y^n) } \le R_1 - \gamma \right.  \nonumber \\
    & ~~~~~~~~~~~~~~~~~~~~~ \text{  or }  \frac{1}{n} \log \frac{n W^n(Z_\eta^n | X^n, Y^n)}{P_{Z^n | X^n} (Z_\eta^n | X^n) } \le R_2 - \gamma \nonumber \\
    & ~~~~~~~~~~~~~~~~~~~~~\left. \text{ or }    \frac{1}{n} \log \frac{n W^n(Z_\eta^n | X^n, Y^n)}{P_{Z^n}(Z_\eta^n)}  \le R_1 + R_2 - 2 \gamma \right\} - 3 e^{- n\gamma} \nonumber\\
    &\ge \sum_{\eta \in \mathcal{H}} w(\eta) \Pr \left\{   \frac{1}{n} \log \frac{W^n(Z_\eta^n | X^n, Y^n)}{P_{Z^n | Y^n} (Z_\eta^n | Y^n) } \le R_1 - 2 \gamma \right.  \nonumber \\
    & ~~~~~~~~~~~~~~~~~~~~~ \text{  or }  \frac{1}{n} \log \frac{W^n(Z_\eta^n | X^n, Y^n)}{ P_{Z^n | X^n} (Z_\eta^n | X^n) } \le R_2 - 2 \gamma \nonumber \\
    & ~~~~~~~~~~~~~~~~~~~~~\left. \text{ or }    \frac{1}{n} \log \frac{W^n(Z_\eta^n | X^n, Y^n)}{P_{Z^n}(Z_\eta^n)}  \le R_1 + R_2 - 4 \gamma \right\} - P_H(\mathcal{H}_n^c)- 3 e^{- n\gamma}   \label{eq:P9-3}
\end{align}
for $n$ sufficiently large to satisfy $\frac{1}{n} \log n \le \gamma$,
where we have used \eqref{eq:P9-div_spectrum2}--\eqref{eq:P9-div_spectrum4} and set 
\begin{align}
    \mathcal{H}_n \equiv \left\{ \eta \in \mathcal{H} \midd w(\eta) > \frac{1}{n} \right\} \label{eq:uniform_set_H}
\end{align}
and $\mathcal{H}_n^c$ denotes the complement of $\mathcal{H}_n$.
By the continuity of the probability measure, it is obvious that
\begin{align}
    P_H(\mathcal{H}_n) \to 1 ~~(n \to \infty).
\end{align}

Taking limsup on both sides of \eqref{eq:P9-3}, we obtain
\begin{align} 
    J_{\boldsymbol{W}_{\rm rx}} (R_1 , R_2  \midd \boldsymbol{X}, \boldsymbol{Y}) \ge  J_{\boldsymbol{W}} (R_1 - 2 \gamma, R_2 - 2 \gamma \midd \boldsymbol{X}, \boldsymbol{Y}), \label{eq:P9-4}
\end{align}
where we have used the fact that the function $J_{\boldsymbol{W}}(R_1, R_2 | \boldsymbol{X}, \boldsymbol{Y})$ can be expressed as
\begin{align}
J_{\boldsymbol{W}}(R_1, R_2 | \boldsymbol{X}, \boldsymbol{Y}) = \limsup_{n \to \infty} \sum_{\eta \in \mathcal{H}} w(\eta) \Pr &  \left\{   \frac{1}{n} \log \frac{W^n(Z_\eta^n | X^n, Y^n)}{P_{Z^n | Y^n} (Z_\eta^n | Y^n) } \le R_1 \right. \nonumber \\
     & ~~~\text{  or }  \frac{1}{n} \log \frac{W^n(Z_\eta^n | X^n, Y^n)}{P_{Z^n | X^n} (Z_\eta^n | X^n) } \le R_2  \nonumber \\
    & \left. ~~\text{ or }    \frac{1}{n} \log \frac{W^n(Z_\eta^n | X^n, Y^n)}{P_{Z^n}(Z_\eta^n)} \le R_1 + R_2 \right\},  \label{eq:func_J3'}
\end{align}
and $J_{\boldsymbol{W}_{\rm rx}}(R_1, R_2 | \boldsymbol{X}, \boldsymbol{Y})$ can be expressed in an analogous way.
Inequality \eqref{eq:P9-4} implies that any rate pair $(R_1, R_2 )$ which is an interior point of $ B_{\rm rx} (\varepsilon| \boldsymbol{X}, \boldsymbol{Y})$ satisfies $(R_1 - 2 \gamma, R_2  - 2 \gamma ) \in B_{\rm no} (\varepsilon| \boldsymbol{X}, \boldsymbol{Y})$.
Since  $B_{\rm no}(\varepsilon| \boldsymbol{X}, \boldsymbol{Y})$ and $B_{\rm rx}(\varepsilon| \boldsymbol{X}, \boldsymbol{Y})$ are both closed and $\gamma > 0 $ is an arbitrary constant, it follows that $(R_1, R_2 ) \in B_{\rm no} (\varepsilon| \boldsymbol{X}, \boldsymbol{Y})$, leading to $B_{\rm no} (\varepsilon| \boldsymbol{X}, \boldsymbol{Y}) \supseteq B_{\rm rx} (\varepsilon| \boldsymbol{X}, \boldsymbol{Y})$. 

We shall show the reverse inequality of \eqref{eq:P9-4} holds.
As noted in \eqref{eq:P9-55f}, we first notice that 
\begin{align}
P_{Z^n  H | Y^n}(\boldsymbol{z}, \eta | \boldsymbol{y}) &\le P_{Z^n | Y^n}(\boldsymbol{z} |  \boldsymbol{y}),  \label{eq:P9-56c} \\
    P_{Z^n  H | X^n}(\boldsymbol{z}, \eta | \boldsymbol{x}) &\le  P_{Z^n | X^n}(\boldsymbol{z} |  \boldsymbol{x}),  \label{eq:P9-56d} \\
    P_{Z^n  H }(\boldsymbol{z}, \eta) &\le  P_{Z^n}(\boldsymbol{z} ).  \label{eq:P9-56e}
\end{align}
Then, it follows from \eqref{eq:P9-55b} and \eqref{eq:P9-56c}--\eqref{eq:P9-56e} that
\begin{align}
\sum_{\eta \in \mathcal{H}} & w(\eta) \Pr   \left\{   \frac{1}{n} \log \frac{W^n(Z_\eta^n, \eta | X^n, Y^n)}{P_{Z^n H | Y^n} (Z_\eta^n, \eta | Y^n) } \le R_1  \right.  \text{  or }  \frac{1}{n} \log \frac{W^n(Z_\eta^n, \eta | X^n, Y^n)}{P_{Z^n  H| X^n} (Z_\eta^n, \eta | X^n) } \le R_2  \nonumber\\
    & ~~~~~~~~~~ \left. \text{ or }    \frac{1}{n} \log \frac{W^n(Z_\eta^n, \eta | X^n, Y^n)}{P_{Z^n H}(Z_\eta^n, \eta)} \le R_1 + R_2  \right\} \nonumber\\
    &\le \sum_{\eta \in \mathcal{H}} w(\eta) \Pr \left\{   \frac{1}{n} \log \frac{W^n(Z_\eta^n , \eta| X^n, Y^n)}{P_{Z^n| Y^n} (Z_\eta^n | Y^n) } \le R_1 \right.  \text{  or }  \frac{1}{n} \log \frac{W^n(Z_\eta^n , \eta| X^n, Y^n)}{P_{Z^n | X^n} (Z_\eta^n | X^n) } \le R_2 \nonumber \\
    & ~~~~~~~~~~~~~~~~~~~~\left. \text{ or }    \frac{1}{n} \log \frac{W^n(Z_\eta^n, \eta | X^n, Y^n)}{P_{Z^n} (Z_\eta^n ) } \le R_1 + R_2 \right\}   \nonumber\\
    &= \sum_{\eta \in \mathcal{H}} w(\eta) \Pr \left\{   \frac{1}{n} \log \frac{ w(\eta) W^n(Z_\eta^n | X^n, Y^n)}{ P_{Z^n | Y^n} (Z_\eta^n | Y^n) }  + \frac{1}{n} \log A_n (\eta)  \le R_1  \right.  \nonumber \\
    & ~~~~~~~~~~~~~~~~~~~~~ \text{  or }  \frac{1}{n} \log \frac{w(\eta) W^n(Z_\eta^n | X^n, Y^n)}{ P_{Z^n | X^n} (Z_\eta^n | X^n) } + \frac{1}{n} \log A_n (\eta) \le R_2  \nonumber \\
    & ~~~~~~~~~~~~~~~~~~~~~\left. \text{ or }    \frac{1}{n} \log \frac{w(\eta) W^n(Z_\eta^n | X^n, Y^n)}{P_{Z^n}(Z_\eta^n)} + \frac{1}{n} \log A_n (\eta) \le R_1 + R_2 \right\} \nonumber\\
    &\le \sum_{\eta \in \mathcal{H}} w(\eta) \Pr \left\{   \frac{1}{n} \log \frac{ w(\eta) W^n(Z_\eta^n | X^n, Y^n)}{ P_{Z^n | Y^n} (Z_\eta^n | Y^n) }  \le R_1 + \gamma  \right.  \nonumber \\
    & ~~~~~~~~~~~~~~~~~~~~~ \text{  or }  \frac{1}{n} \log \frac{w(\eta) W^n(Z_\eta^n | X^n, Y^n)}{ P_{Z^n | X^n} (Z_\eta^n | X^n) }  \le R_2 + \gamma  \nonumber \\
    & ~~~~~~~~~~~~~~~~~~~~~\left. \text{ or }    \frac{1}{n} \log \frac{w(\eta) W^n(Z_\eta^n | X^n, Y^n)}{P_{Z^n}(Z_\eta^n)} \le R_1 + R_2 + \gamma \right\} + 3 e^{-n \gamma}\nonumber\\
    &\le \sum_{\eta \in \mathcal{H}_n} w(\eta) \Pr \left\{   \frac{1}{n} \log \frac{W^n(Z_\eta^n | X^n, Y^n)}{n P_{Z^n | Y^n} (Z_\eta^n | Y^n) } \le R_1 + \gamma \right.  \nonumber \\
    & ~~~~~~~~~~~~~~~~~~~~~ \text{  or }  \frac{1}{n} \log \frac{W^n(Z_\eta^n | X^n, Y^n)}{n P_{Z^n | X^n} (Z_\eta^n | X^n) } \le R_2 + \gamma  \nonumber \\
    & ~~~~~~~~~~~~~~~~~~~~~\left. \text{ or }    \frac{1}{n} \log \frac{W^n(Z_\eta^n | X^n, Y^n)}{n P_{Z^n}(Z_\eta^n)}  \le R_1 + R_2 + \gamma  \right\} + P_H (\mathcal{H}_n^c)  + 3 e^{-n \gamma} \nonumber\\
    &\le \sum_{\eta \in \mathcal{H}} w(\eta) \Pr \left\{   \frac{1}{n} \log \frac{W^n(Z_\eta^n | X^n, Y^n)}{P_{Z^n | Y^n} (Z_\eta^n | Y^n) } \le R_1 + 2 \gamma \right.  \nonumber \\
    & ~~~~~~~~~~~~~~~~~~~~~ \text{  or }  \frac{1}{n} \log \frac{W^n(Z_\eta^n | X^n, Y^n)}{P_{Z^n | X^n} (Z_\eta^n | X^n) } \le R_2 + 2 \gamma \nonumber \\
    & ~~~~~~~~~~~~~~~~~~~~~\left. \text{ or }    \frac{1}{n} \log \frac{W^n(Z_\eta^n | X^n, Y^n)}{P_{Z^n}(Z_\eta^n)}  \le R_1 + R_2 + 2 \gamma \right\} + P_H (\mathcal{H}_n^c)   + 3 e^{-n \gamma} \label{eq:P9-57}
\end{align}
for $n$ sufficiently large to satisfy $\frac{1}{n} \log n \le \gamma$, where we have used \eqref{eq:P9-div_spectrum1}. 
In view of \eqref{eq:func_J3'}, taking limsup on both sides, we obtain
\begin{align} 
    J_{\boldsymbol{W}_{\rm rx}} (R_1 , R_2  \midd \boldsymbol{X}, \boldsymbol{Y}) \le J_{\boldsymbol{W}} (R_1 + 2 \gamma, R_2 + 2 \gamma \midd \boldsymbol{X}, \boldsymbol{Y}). \label{eq:P9-7}
\end{align}
Therefore, by the same reasoning used to derive the relation $B_{\rm no} (\varepsilon| \boldsymbol{X}, \boldsymbol{Y}) \supseteq B_{\rm rx} (\varepsilon| \boldsymbol{X}, \boldsymbol{Y})$, we also obtain $B_{\rm no} (\varepsilon| \boldsymbol{X}, \boldsymbol{Y}) \subseteq B_{\rm rx} (\varepsilon| \boldsymbol{X}, \boldsymbol{Y})$.

\medskip
\noindent
~(ii) \textit{Proof of \eqref{eq:CSIT-CSIRT}} 

As we have discussed in Sec.\ref{sec:P7}, when the transmitter can observe $H = \eta$ prior to encoding, the MAC code $C_{\eta,n}^{(1)} \times C_{\eta,n}^{(2)} $ is chosen and their codewords are input to the channel. The random variables corresponding to such codewords are denoted by $X_\eta^n$ and $Y_\eta^n$, which are conditionally independent when $H = \eta$.
Let $\mathcal{S}_{\Gamma_1}$ and $\mathcal{S}_{\Gamma_2}$ denote the sets of all $\boldsymbol{X}_\eta = \{ X_\eta^n\}_{n = 1}^\infty $ and $\boldsymbol{Y}_\eta = \{ Y_\eta^n\}_{n = 1}^\infty$ satisfying cost constraint \eqref{eq:cost_constraint1} and \eqref{eq:cost_constraint2} with $X^n$ and $Y^n$ replaced by $X_\eta^n$ and $Y_\eta^n$, respectively. 
For a rate pair $(R_1, R_2)$, we define 
\begin{align}
    \tilde{J}_{\boldsymbol{W}} (R_1, R_2 \midd \boldsymbol{X}_H, \boldsymbol{Y}_H, H) =  \limsup_{n \to \infty} \sum_{\eta \in \mathcal{H}} w(\eta) \Pr \bigg\{ & \frac{1}{n} \log \frac{W_\eta^n(Z_\eta^n | X_\eta^n, Y_\eta^n)}{P_{Z_\eta^n | Y_\eta^n} (Z_\eta^n | Y_\eta^n)} \le R_1  \nonumber\\
        &\text{or } \frac{1}{n} \log \frac{W_\eta^n(Z_\eta^n | X_\eta^n, Y_\eta^n)}{P_{Z_\eta^n | X_\eta^n} (Z_\eta^n | X_\eta^n)} \le R_2 \nonumber\\
        &\text{or } \frac{1}{n} \log \frac{W_\eta^n(Z_\eta^n | X_\eta^n, Y_\eta^n)}{P_{Z_\eta^n}(Z_\eta^n)} \le R_1 + R_2   \bigg\}, \label{eq:P8-1a}
\end{align}
where $Z_\eta^n$ denotes the output via $W_\eta^n$ due to input $(X_\eta^n, Y_\eta^n)$.
We also define 
\begin{align}
    \tilde{B}(\varepsilon\midd \boldsymbol{W}) &\equiv \bigcup_{\substack{\boldsymbol{X}_H - H - \boldsymbol{Y}_H: \\ \boldsymbol{X}_H \in \mathcal{S}_{\Gamma_1},  \boldsymbol{Y}_H \in \mathcal{S}_{\Gamma_2}}}
 \mathrm{Cl} \{ (R_1, R_2) \, | \,   R_1 \ge 0, R_2 \ge 0, \tilde{J}_{\boldsymbol{W}} (R_1, R_2 \midd \boldsymbol{X}_H, \boldsymbol{Y}_H, H) \le \varepsilon\},  \label{eq:P8-1b}
\end{align}
where the union is taken over all the pairs of general sources $\boldsymbol{X}_H = \{ X_H^n \}_{n =1}^\infty$ and $\boldsymbol{Y}_H = \{ Y_H^n\}_{n=1}^\infty$ such that $X_H^n$ and $Y_H^n$ satisfy the cost constraint $\Gamma_1$ and $\Gamma_2$, respectively, and are conditionally independent given $H$ for all $n = 1, 2, \cdots$.
To prove \eqref{eq:CSIT-CSIRT}, it suffices to show that 
\begin{align}
\tilde{B}(\varepsilon\midd \boldsymbol{W}) \supseteq C_{\Gamma_1, \Gamma_2}^{\rm rt}(\varepsilon\midd \boldsymbol{W}) \label{eq:P8-outer}
\end{align}
and 
\begin{align}
\tilde{B}(\varepsilon\midd \boldsymbol{W}) \subseteq C_{\Gamma_1, \Gamma_2}^{\rm tx}(\varepsilon\midd \boldsymbol{W}) \label{eq:P8-inner}
\end{align}
due to \eqref{eq:CSIT-relation2}.

\begin{enumerate}
    \item[(a)] First, we shall show \eqref{eq:P8-outer}. Let $(R_1, R_2)$ be an $\varepsilon$-achievable rate pair in the case of CSIRT. Then, there exists  a set of $(n, M_n^{(1)}, M_n^{(2)}, \varepsilon_n)$ MAC codes $\big\{ C_{\eta, n}^{(1)} \times C_{\eta,n}^{(2)} \big\}_{\eta \in \mathcal{H}}$ satisfying \eqref{eq:1st_order_achievable}.
    When $H = \eta$, codewords $X_\eta^n$ and $Y_\eta^n$ are uniformly and independently distributed on $C_{\eta, n}^{(1)}$ and $C_{\eta,n}^{(2)}$, respectively.

    \quad Again, using Lemma \ref{lemma:finite_length_LB2} (shown in the proof of Theorem \ref{thm:GQS-fading_MAC}) with $Q_{\eta,1}^n = P_{Z_\eta^n| Y_\eta^n}  $, $Q_{\eta,2}^n =P_{Z_\eta^n | X_\eta^n} $, and $Q_{\eta,3}^n = P_{Z_\eta^n} $, we have
    \begin{align}
         \varepsilon_{n} \ge \sum_{\eta \in \mathcal{H}} w(\eta) \Pr \bigg\{ & \frac{1}{n} \log \frac{W_\eta^n(Z_\eta^n | X_\eta^n, Y_\eta^n)}{P_{Z_\eta^n | Y_\eta^n} (Z_\eta^n | Y_\eta^n)} \le \frac{1}{n} \log M_n^{(1)} - \gamma  \nonumber\\
        &\text{or } \frac{1}{n} \log \frac{W_\eta^n(Z_\eta^n | X_\eta^n, Y_\eta^n)}{P_{Z_\eta^n | X_\eta^n} (Z_\eta^n | X_\eta^n)} \le \frac{1}{n} \log M_n^{(2)}  -\gamma \nonumber\\
        &\text{or } \frac{1}{n} \log \frac{W_\eta^n(Z_\eta^n | X_\eta^n, Y_\eta^n)}{P_{Z_\eta^n}(Z_\eta^n)} \le \frac{1}{n} \log \big(M_n^{(1)} M_n^{(2)} \big)   - \gamma \bigg\} - 3 e^{-n\gamma}. \label{eq:P8-12}
     \end{align}
     Since the rate conditions in \eqref{eq:1st_order_achievable} indicate that \eqref{eq:P5-32a} and \eqref{eq:P5-32b} hold for all $n \ge n_0$ with some $n_0 >0$, from \eqref{eq:P8-12} we obtain
     \begin{align}
         \varepsilon_{n} \ge \sum_{\eta \in \mathcal{H}} w(\eta) \Pr \bigg\{ & \frac{1}{n} \log \frac{W_\eta^n(Z_\eta^n | X_\eta^n, Y_\eta^n)}{P_{Z_\eta^n | Y_\eta^n} (Z_\eta^n | Y_\eta^n)} \le R_1 - 2 \gamma  \nonumber\\
        &\text{or } \frac{1}{n} \log \frac{W_\eta^n(Z_\eta^n | X_\eta^n, Y_\eta^n)}{P_{Z_\eta^n | X_\eta^n} (Z_\eta^n | X_\eta^n)} \le R_2 - 2 \gamma \nonumber\\
        &\text{or } \frac{1}{n} \log \frac{W_\eta^n(Z_\eta^n | X_\eta^n, Y_\eta^n)}{P_{Z_\eta^n}(Z_\eta^n)} \le R_1 + R_2   - 4\gamma \bigg\}  - 3 e^{-n\gamma}. \label{eq:P8-13}
     \end{align}
     Taking limsup on both sides, the condition for the probability of decoding error in \eqref{eq:1st_order_achievable} yields
     \begin{align}
         \varepsilon \ge \tilde{J}_{\boldsymbol{W}} (R_1 - 2 \gamma , R_2 - 2 \gamma | \boldsymbol{X}_H, \boldsymbol{Y}_H, H ),  \label{eq:P8-14}
     \end{align}
     where the function $\tilde{J}_{\boldsymbol{W}}$ is defined in \eqref{eq:P8-1a}.
     Therefore, 
     \begin{align}
     (R_1 - 2 \gamma, R_2 - 2 \gamma) \in \mathrm{Cl} \{ (R_1, R_2) \, | \,   R_1 \ge 0, R_2 \ge 0, \tilde{J}_{\boldsymbol{W}} (R_1, R_2 \midd \boldsymbol{X}_H, \boldsymbol{Y}_H, H) \le \varepsilon\}, \label{eq:P8-15}
    \end{align}  
    and since $\gamma > 0$ is an arbitrary constant and the RHS of \eqref{eq:P8-15} is a closed set, we conclude that \eqref{eq:P8-outer} holds.

    \item[(b)] Next, we shall show \eqref{eq:P8-inner}. 
     Fixing any $(\boldsymbol{X}_H, \boldsymbol{Y}_H)$ such that $X_H^n - H - Y_H^n$ forms a Markov chain for all $n = 1,2, \cdots$ and $\boldsymbol{X}_H \in \mathcal{S}_{\Gamma_1}, \boldsymbol{Y}_H \in \mathcal{S}_{\Gamma_2}$, we shall show that any rate pair $(R_1, R_2)$ satisfying
     \begin{align}
         (R_1 + 4 \gamma, R_2 + 4 \gamma) \in \{ (R_1, R_2) \, | \,   R_1 \ge 0, R_2 \ge 0, \tilde{J}_{\boldsymbol{W}} (R_1, R_2 \midd \boldsymbol{X}_H, \boldsymbol{Y}_H, H) \le \varepsilon\} \label{eq:P8-16}
     \end{align}
     is $\varepsilon$-achievable in the scenario of CSIT. 
     
     \quad We need now the following important lemma:
     \begin{lemma}[CSIT-Achievability] \label{lem:CSIT-error-bound}
     Assume that $\mathcal{H}$ is a countably infinite set. Let $X_H^n$ and $Y_H^n$ be any conditionally independent input random variables given $H$. For arbitrarily fixed integers $M_n^{(1)}$ and $M_n^{(2)}$ and a finite subset $D_n \subseteq \mathcal{H}$ with $N_n \equiv |D_n|$, there exists a set of $(n, M_n^{(1)}, M_n^{(2)})$ MAC codes $\big\{ (C_{\eta,n}^{(1)}, C_{\eta,n}^{(2)}) \big\}_{\eta \in \mathcal{H}}$ whose average probability of decoding error $\varepsilon_n$ is bounded as 
\begin{align}
    \varepsilon_n \le \Pr \bigg\{ & \frac{1}{n} \log \frac{W^n(Z^n | X_H^n, Y_H^n)}{P_{Z^n | Y_H^n}(Z^n | Y_H^n)} \le \frac{1}{n} \log \big( N_n M_n^{(1)}\big) + \gamma  \nonumber\\
&\text{or } \frac{1}{n} \log \frac{W^n(Z^n | X_H^n, Y_H^n)}{P_{Z^n | X_H^n}(Z^n | X_H^n)} \le \frac{1}{n} \log \big(N_n M_n^{(2)} \big)+ \gamma\nonumber\\
&\text{or } \frac{1}{n} \log \frac{W^n(Z^n | X_H^n, Y_H^n)}{P_{Z^n}(Z^n)} \le \frac{1}{n} \log \big( N_n M_n^{(1)}M_n^{(2)} \big) + \gamma \bigg\} + (3 + N_n) \, e^{-n\gamma} + \lambda_n, \label{eq:CSIT-error_bound}
\end{align}
where $\lambda_n \equiv \Pr \{ H \not\in D_n\}$ and $\gamma > 0$ is an arbitrary constant.
     \end{lemma}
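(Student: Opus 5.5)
The plan is to reduce Lemma \ref{lem:CSIT-error-bound} to Lemma \ref{lem:finite_length_UB} (Han's Feinstein-type bound for the MAC) by building a single auxiliary MAC in which the state is absorbed into the inputs. For each $\eta\in D_n$ the transmitters will use a separate sub-codebook. Equivalently, one can picture a "super-code" whose user-1 codebook consists of the pairs $(\eta,\boldsymbol{u})$ and whose user-2 codebook consists of the pairs $(\eta,\boldsymbol{v})$. The decoder does not know $\eta$. It therefore faces the mixed channel $W^n(\boldsymbol z|\boldsymbol x,\boldsymbol y)=\sum_\eta w(\eta)W_\eta^n(\boldsymbol z|\boldsymbol x,\boldsymbol y)$, and it must in effect decode $(\eta,j,k)$ among $N_nM_n^{(1)}M_n^{(2)}$ hypotheses. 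This accounts for the factor $N_n$ that inflates every rate threshold in \eqref{eq:CSIT-error_bound}.

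\textbf{Random coding step.} For each $\eta\in D_n$, draw $M_n^{(1)}$ codewords i.i.d.\ from $P_{X_\eta^n}$ and $M_n^{(2)}$ codewords i.i.d.\ from $P_{Y_\eta^n}$, independently across $\eta$. For $\eta\notin D_n$, use an arbitrary code and charge its entire error to $\lambda_n=\Pr\{H\notin D_n\}$. The decoder computes the three information densities with respect to $W^n$, $P_{Z^n|X_H^n}$, $P_{Z^n|Y_H^n}$ and $P_{Z^n}$, all taken under the joint law of $(H,X_H^n,Y_H^n,Z^n)$. It then looks for a unique triple $(\eta',j',k')$ whose codewords pass all three threshold tests at levels $\frac1n\log(N_nM_n^{(1)})+\gamma$, $\frac1n\log(N_nM_n^{(2)})+\gamma$ and $\frac1n\log(N_nM_n^{(1)}M_n^{(2)})+\gamma$, and it outputs $(j',k')$.

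\textbf{Error analysis.} The error event splits into two parts. The first is that the true triple fails one of the tests; this gives the probability term in \eqref{eq:CSIT-error_bound}. The second is that some wrong triple passes; here I would reproduce the three standard change-of-measure bounds of Han's proof.
\begin{itemize}
\item Wrong $j'$ with the correct $(\eta,k)$: there are at most $M_n^{(1)}$ competitors, each accepted with probability at most $e^{-n\gamma}/(N_nM_n^{(1)})$.
\item Wrong $k'$ with the correct $(\eta,j)$: symmetric to the previous case.
\item Both wrong with the correct $\eta$: handled by the joint test at level $\frac1n\log(N_nM_n^{(1)}M_n^{(2)})+\gamma$.
\item Wrong $\eta'$: all codewords are then independent of $Z^n$. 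There are at most $N_nM_n^{(1)}M_n^{(2)}$ such triples, and the joint-density test handles them.
\end{itemize}
The count of hypotheses across the distinct states $\eta'$ should be what produces the extra $N_ne^{-n\gamma}$ term, while the remaining three terms give $3e^{-n\gamma}$. Finally, a deterministic code set is extracted from the random ensemble in the usual way.

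\textbf{Main obstacle.} The hard step is the change of measure when $\eta'\ne\eta$ but one codeword is shared (e.g.\ $j$ correct). The densities are normalized by marginals of the mixture, such as $P_{Z^n|X_H^n}$, rather than by the conditional law under $\eta'$. The difficulty is that $X_H^n$ and $Y_H^n$ are only conditionally independent given $H$, so the usual product structure needed to sum over $\boldsymbol z$ against $P_{Z^n|X_H^n}$ fails. I would first try to sidestep this by making the codebooks of different states disjoint in time-sharing labels, i.e.\ treating $(\eta,\boldsymbol u)$ as the actual input symbol. Then $X'=(H,X_H^n)$ and $Y'=(H',Y_H^n)$, with $H'$ a copy of $H$, form an input pair on an enlarged channel that ignores mismatched labels. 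If this works, Lemma \ref{lem:finite_length_UB} with message sizes $N_nM_n^{(\alpha)}$ applies almost directly, and the $N_ne^{-n\gamma}$ term absorbs the mismatched-label pairs. The cost constraints carry over componentwise from $\boldsymbol X_H\in\mathcal S_{\Gamma_1}$ and $\boldsymbol Y_H\in\mathcal S_{\Gamma_2}$.
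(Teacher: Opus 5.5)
Your coding scheme is exactly the paper's: independent random sub-codebooks drawn from $P_{X_\eta^n}$ and $P_{Y_\eta^n}$ for each $\eta\in D_n$, a decoder searching for a unique $(\eta',j',k')$ with $\eta'\in D_n$ and $(\boldsymbol u(\eta',j'),\boldsymbol v(\eta',k'),\boldsymbol z)\in T_n$, and the event $H\notin D_n$ charged to $\lambda_n$. The gap is in the error analysis, at the one step that is not routine.

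\textbf{The change of measure.} You defer this step to ``the standard change-of-measure bounds of Han's proof.'' Those bounds rely on the product law $P_{X^n}P_{Y^n}W^n$, which is not available here.
\begin{itemize}
\item A wrong-$j'$ competitor, the transmitted $\boldsymbol v$ and $Z^n$ are jointly distributed as $\sum_{\eta\in D_n}w(\eta)P_{X_\eta^n}(\boldsymbol x)P_{Y_\eta^nZ_\eta^n}(\boldsymbol y,\boldsymbol z)$. In general this is not of the form $P_{X_H^n}P_{Y_H^nZ^n}$.
\item The densities defining $T_n$ are built from the mixture quantities $W^n$, $P_{Z^n|Y_H^n}$, $P_{Z^n|X_H^n}$, $P_{Z^n}$, not from $W_\eta^n$.
\end{itemize}
You sense this obstacle but place it in a case that cannot occur. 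Sub-codebooks of different states are independent, and the decoder only pairs codewords with the same label. So a wrong state $\eta'\neq\eta$ never shares a codeword with the transmitted pair, as your own last bullet says.

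\textbf{The missing idea.} What you need is the per-state domination $w(\eta)P_{Y_\eta^nZ_\eta^n}\le P_{Y_H^nZ^n}$, and likewise $w(\eta)P_{X_\eta^nZ_\eta^n}\le P_{X_H^nZ^n}$ and $w(\eta)P_{Z_\eta^n}\le P_{Z^n}$. On $T_{1,n}$ one has $P_{Z^n|Y_H^n}(\boldsymbol z|\boldsymbol y)\le W^n(\boldsymbol z|\boldsymbol x,\boldsymbol y)e^{-n\gamma}/(N_nM_n^{(1)})$. Combining the two bounds the wrong-$j'$ term by $\sum_{\eta\in D_n}\sum_{j'\neq 1}\frac{e^{-n\gamma}}{N_nM_n^{(1)}}\sum P_{X_\eta^n}(\boldsymbol x)P_{Y_H^n}(\boldsymbol y)W^n(\boldsymbol z|\boldsymbol x,\boldsymbol y)\le e^{-n\gamma}$.
\begin{itemize}
\item The inner sum is at most one because $P_{X_\eta^n}P_{Y_H^n}$ is a product law and $W^n$ is a channel; a product structure is restored artificially.
\item The factor $1/w(\eta)$ cancels the weight $w(\eta)$.
\item The remaining sum over the $N_n$ states in $D_n$ is exactly what the $N_n$ in each threshold pays for.
\end{itemize}

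\textbf{Your accounting.} Conditionally on $H=\eta$, the available bound on a competitor's acceptance probability is $e^{-n\gamma}/(w(\eta)N_nM_n^{(1)})$, not $e^{-n\gamma}/(N_nM_n^{(1)})$. The $N_n$ in the individual thresholds is not a count of hypotheses: there are only $M_n^{(1)}-1$ wrong-$j'$ competitors in the correct state. The paper's $N_ne^{-n\gamma}$ term also comes from the crude domination, applied as $w(\eta)P_{Z_\eta^n}\le P_{Z^n}$ to the wrong-state term and then summed over $\eta$ and $\eta'\neq\eta$. Your counting intuition does work for that term if you sum over the true state first, using $\sum_{\eta\neq\eta'}w(\eta)P_{Z_\eta^n}\le P_{Z^n}$; this gives only $e^{-n\gamma}$.

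\textbf{The fallback reduction does not work.} Your reduction to Lemma~\ref{lem:finite_length_UB} fails for several reasons.
\begin{itemize}
\item In CSIT both encoders see the same $H$, so $(H,X_H^n)$ and $(H,Y_H^n)$ are dependent, whereas Lemma~\ref{lem:finite_length_UB} needs independent inputs.
\item Replacing the second label by an independent copy $H'$ creates mismatched-label input pairs that never occur and on which no channel is defined.
\item That lemma draws all $N_nM_n^{(\alpha)}$ codewords i.i.d., so labels are random rather than exactly $M_n^{(\alpha)}$ codewords per state.
\item Its error criterion averages uniformly over super-messages, whereas the CSIT error weights matched labels by $w(\eta)$.
\item Its information densities are those of the enlarged inputs, not the ones in \eqref{eq:CSIT-error_bound}.
\end{itemize}
Drop the reduction and carry out the direct union bound using the domination above.
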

     
     \noindent
     \textit{Proof:}~~ The proof is an extension of \cite[{Lemma 3}]{Han98} to the case of CSIT.

     \textbf{Random Code Generation:} ~~We use the argument of random coding; for each $\eta \in \mathcal{H}$, codeword $\boldsymbol{u}(\eta, j) \in C_{\eta,n}^{(1)}$ is generated randomly and independently according to $P_{X^n_\eta}$ and $\boldsymbol{v}(\eta, k) \in C_{\eta,n}^{(2)}$ is generated according to $P_{Y^n_\eta}$ for $j = 1, 2, \ldots, M_n^{(1)}, k = 1, 2, \ldots, M_n^{(2)}$. Let $\boldsymbol{U}(\eta, j)$ and $\boldsymbol{V}(\eta, k)$ denote the corresponding random variables. 
     The set of generated codebooks $\big\{ (C_{\eta,n}^{(1)}, C_{\eta,n}^{(2)}) \big\}_{\eta \in \mathcal{H}}$ are shared by the encoders and the decoder.
     
     \textbf{Encoding:} ~~ Assume that the CSI is $H = \eta$. When the transmitted messages are $j$ and $k$, the codewords $\boldsymbol{u} (\eta, j) \in C_{\eta,n}^{(1)}$ and $\boldsymbol{v} (\eta, k) \in C_{\eta,n}^{(2)}$ are selected and sent over the component MAC $W_\eta^n$.

     \textbf{Decoding:} ~~ Let $ T_{1,n}$, $ T_{2,n}$ and $ T_{3,n}$ be the sets defined as
     \begin{align}
         T_{1,n} &\equiv  \left\{ (\boldsymbol{x}, \boldsymbol{y}, \boldsymbol{z}) \, \Big| \, \frac{1}{n} \log \frac{W^n(\boldsymbol{z} | \boldsymbol{x}, \boldsymbol{y})}{P_{Z^n | Y_H^n} (\boldsymbol{z} | \boldsymbol{y})} \ge \frac{1}{n} \log \big(N_n M_n^{(1)}\big) + \gamma \right\}, \\
     T_{2,n} &\equiv  \left\{ (\boldsymbol{x}, \boldsymbol{y}, \boldsymbol{z}) \, \Big| \, \frac{1}{n} \log \frac{W^n(\boldsymbol{z} | \boldsymbol{x}, \boldsymbol{y})}{P_{Z^n | X_H^n} (\boldsymbol{z} | \boldsymbol{x})} \ge \frac{1}{n} \log  \big(N_n M_n^{(2)}\big) + \gamma \right\}, \\
     T_{3,n} &\equiv  \left\{ (\boldsymbol{x}, \boldsymbol{y}, \boldsymbol{z}) \, \Big| \, \frac{1}{n} \log \frac{W^n(\boldsymbol{z} | \boldsymbol{x}, \boldsymbol{y})}{P_{Z^n} (\boldsymbol{z})} \ge \frac{1}{n} \log  \big(N_n M_n^{(1)} M_n^{(2)}\big) + \gamma \right\},
        \end{align}
        and define
        \begin{align}
            T_n \equiv T_{1,n} \cap T_{2,n} \cap T_{3,n}.
        \end{align}
        When the decoder observes $\boldsymbol{z} \in \mathcal{Z}^n$, it looks for $(\boldsymbol{u}(\eta, j), \boldsymbol{v}(\eta, k))$ satisfying $(\boldsymbol{u}(\eta, j), \boldsymbol{v}(\eta, k), \boldsymbol{z}) \allowbreak \in T_n$ and $\eta \in D_n$. If there exists such a unique tuple $(\eta, j, k)$, then the decoded message is $(\hat{j}, \hat{k}) = (j, k)$; otherwise it declares error.
        
   \textbf{Analysis of Error Probability:} ~~  We will evaluate the ensemble average of the probability of decoding error, denoted by $\overline{\varepsilon}_n$.
     Let $E_n$ be the event of decoding error, and $G_n$ be the event $H \in D_n$. Then, it holds that 
      \begin{align}
          \overline{\varepsilon}_n &\le \mathbb{E} \Pr \{ E_n \cap G_n \} + \mathbb{E} \Pr\{ G_n^c\} \nonumber\\
          &=  \mathbb{E} \Pr \{ E_n \cap G_n \} +  \lambda_n, \label{eq:P8-21}
     \end{align}
     where $\mathbb{E}\{ \cdot\}$ denotes the average with respect to random codebook generation and $\lambda_n \equiv \mathbb{E} \Pr \{ G_n^c\}$.
     We will evaluate the first term on the RHS.  Let $F_{\eta, j, k}$ be the event 
      \begin{align}
          F_{\eta, j, k} &\equiv \{ (\boldsymbol{U}(\eta, j), \boldsymbol{V}(\eta, k), Z^n) \in T_n \}. \label{eq:P8-22}
      \end{align}
      When the pair of transmitted messages is $(j, k)$, the event $E_n \cap G_n$ can be decomposed as
     \begin{align}
         E_n \cap G_n = \left\{ F_{H, j, k}^c \cup \bigcup_{(j', k')\neq (j, k)} F_{H, j', k'} \cup  \bigcup_{\eta' \neq H }  \bigcup_{(j', k')} F_{\eta', j', k'} \right\} \cap G_n. \label{eq:P8-23} 
     \end{align}
     By the symmetry of the random codebook generation, we can assume that the pair of transmitted messages is $(1,1)$.
     Then, by the union bound, the ensemble average of the probability of decoding error can be bounded as
     \begin{align}
         &\mathbb{E} \Pr \{ E_n \cap G_n \} \nonumber\\
         & ~\le \mathbb{E}  \Pr \{ F_{H,1,1}^c \cap G_n \} +  \sum_{(j', k')\neq (1,1)} \mathbb{E}  \Pr \{ F_{H,j', k'} \cap G_n \} +  \sum_{j', k'} \mathbb{E}  \Pr \left\{ \bigcup_{\eta' \neq H}  F_{\eta',j', k'} \cap G_n \right\}. \label{eq:P8-24}
     \end{align}
      The first term on the RHS of \eqref{eq:P8-24} is bounded as
     \begin{align}
          \mathbb{E}  \Pr \{ F_{H,1,1}^c \cap G_n \} &\le \Pr \left\{ (X_H^n, Y_H^n, Z^n) \not\in T_n \right\}.  \label{eq:P8-25}
     \end{align}
     The second term can be evaluated as follows: First, we decompose this term as
     \begin{align}
         &\sum_{(j', k')\neq (1,1)} \mathbb{E}  \Pr \{ F_{H,j', k'} \cap G_n \}  \nonumber\\
         &~~ =      \sum_{j'\neq 1} \mathbb{E}  \Pr \{ F_{H,j',1} \cap G_n \} +  \sum_{k'\neq 1} \mathbb{E}  \Pr \{ F_{H,1,k'} \cap G_n \} +  \sum_{j' \neq 1 , k' \neq 1} \mathbb{E}  \Pr \{ F_{H,j', k'} \cap G_n \}.   \label{eq:P8-26a} 
     \end{align}
     For $j' \neq 1, k' = 1$, the probability can be evaluated as
       \begin{align}
          \sum_{j'\neq 1} \mathbb{E}  \Pr \{ F_{H,j',1} \cap G_n \}  &=  \sum_{j'\neq 1} \sum_{\eta \in D_n} w(\eta) \, \mathbb{E} \Pr \{ (\boldsymbol{U}(\eta, j'), \boldsymbol{V}(\eta, 1), Z_\eta^n) \in T_n \} \nonumber\\
          &= \sum_{\eta \in D_n} w(\eta) \sum_{j'\neq 1} \sum_{(\boldsymbol{x}, \boldsymbol{y}, \boldsymbol{z}) \in T_n } P_{X_\eta^n} (\boldsymbol{x}) P_{Y_\eta^n Z_\eta^n}(\boldsymbol{y}. \boldsymbol{z}).
          \label{eq:P8-26b} 
     \end{align}
     It holds that
     \begin{align}
         P_{Y_\eta^n Z_\eta^n}(\boldsymbol{y}, \boldsymbol{z}) &= \frac{1}{w(\eta)} \cdot w(\eta) P_{Y_\eta^n Z_\eta^n}(\boldsymbol{y}, \boldsymbol{z}) \nonumber\\
         & \le \frac{1}{w(\eta)} \cdot \sum_{\eta' \in \mathcal{H}} w(\eta') P_{Y_{\eta'}^n Z_{\eta'}^n}(\boldsymbol{y}, \boldsymbol{z}) \nonumber\\
         & \le  \frac{1}{w(\eta)} P_{Y_H^n Z^n}(\boldsymbol{y}, \boldsymbol{z}).  \label{eq:P8-26c} 
     \end{align}
     Here, since any tuple $(\boldsymbol{x}, \boldsymbol{y}, \boldsymbol{z}) \in T_n$ also satisfies $(\boldsymbol{x}, \boldsymbol{y}, \boldsymbol{z}) \in T_{1,n}$, we have 
     \begin{align}
        P_{Y_H^n Z^n}(\boldsymbol{y}, \boldsymbol{z})  = P_{Y_H^n}(\boldsymbol{y}) P_{Z^n|Y_H^n}(\boldsymbol{z} | \boldsymbol{y})\le P_{Y_H^n} (\boldsymbol{y}) \, W^n (\boldsymbol{z}| \boldsymbol{y}, \boldsymbol{z}) \cdot \frac{e^{-n\gamma}}{N_n M_n^{(1)}}. \label{eq:P8-26d} 
     \end{align}
     It follows from \eqref{eq:P8-26b}--\eqref{eq:P8-26d} that  
     \begin{align}
          \sum_{j'\neq 1} \mathbb{E}  \Pr \{ F_{H,j',1} \cap G_n \} & \le \sum_{\eta \in D_n} \sum_{j'\neq 1} \sum_{(\boldsymbol{x}, \boldsymbol{y}, \boldsymbol{z}) \in T_n } P_{X_\eta^n} (\boldsymbol{x}) P_{Y_H^n} (\boldsymbol{y}) W^n (\boldsymbol{z}| \boldsymbol{y}, \boldsymbol{z}) \cdot \frac{e^{-n\gamma}}{N_n M_n^{(1)}} \nonumber\\
          & \le \sum_{\eta \in D_n} \sum_{j'\neq 1} \frac{e^{-n\gamma}}{N_n M_n^{(1)}} \nonumber\\
          & = e^{- n \gamma},          \label{eq:P8-26e} 
     \end{align}
     where we have used the equality $N_n = |D_n|$.
     Similarly, for the summation over $j = 1, k' \neq 1$ and the one over $j' \neq 1, k' \neq 1$, we obtain
     \begin{align}
          \sum_{k'\neq 1} \mathbb{E}  \Pr \{ F_{H,1,k'} \cap G_n \} & \le  e^{- n \gamma}, \label{eq:P8-26f}  \\   
          \sum_{j' \neq 1, k'\neq 1} \mathbb{E}  \Pr \{ F_{H,j', k'} \cap G_n \} & \le  e^{- n \gamma}, \label{eq:P8-26g} 
     \end{align}
     respectively.
     Plugging \eqref{eq:P8-26e}--\eqref{eq:P8-26g} into \eqref{eq:P8-26a}, we obtain the upper bound on the second term in \eqref{eq:P8-24} as
     \begin{align}
         &\sum_{(j', k')\neq (1,1)} \mathbb{E}  \Pr \{ F_{H,j', k'} \cap G_n \}  \le 3 \, e^{ - n \gamma}.   \label{eq:P8-26h} 
     \end{align}
     We proceed to the evaluation of the third term  in \eqref{eq:P8-24}.
     By the union bound, it holds that
     \begin{align}
          \sum_{j', k'} \mathbb{E}  \Pr \left\{ \bigcup_{\eta' \neq H } F_{\eta' ,j', k'} \cap G_n \right\}  &\le  \sum_{j', k'} \sum_{\eta \in D_n} w(\eta)  \sum_{\substack{\eta' \neq \eta, \\ \eta' \in D_n}} \mathbb{E} \Pr \{ (\boldsymbol{U}(\eta', j'), \boldsymbol{V}(\eta', k'), Z_\eta^n) \in T_n \} \nonumber\\
          &= \sum_{\eta \in D_n} w(\eta) \sum_{j', k'} \sum_{\substack{\eta' \neq \eta, \\ \eta' \in D_n}} \sum_{(\boldsymbol{x}, \boldsymbol{y}, \boldsymbol{z}) \in T_n } P_{X_{\eta'}^n} (\boldsymbol{x}) P_{Y_{\eta'}^n}(\boldsymbol{y}) P_{Z_\eta^n}(\boldsymbol{z}).
          \label{eq:P8-26i} 
     \end{align}
     Similarly to the derivation of \eqref{eq:P8-26c} and \eqref{eq:P8-26d}, for any tuple $(\boldsymbol{x}, \boldsymbol{y}, \boldsymbol{z}) \in T_n$ it holds that
     \begin{align}
         P_{Z_\eta^n}( \boldsymbol{z})  & \le  \frac{1}{w(\eta)} W^n (\boldsymbol{z}| \boldsymbol{y}, \boldsymbol{z}) \cdot \frac{e^{-n\gamma}}{N_n M_n^{(1)} M_n^{(2)}}.  \label{eq:P8-26j} 
     \end{align}
     Substituting \eqref{eq:P8-26j} into \eqref{eq:P8-26i}, we obtain
     \begin{align}
           \sum_{j', k'} \mathbb{E}  \Pr \left\{ \bigcup_{\eta' \neq H } F_{\eta' ,j', k'} \cap G_n \right\} & \le \sum_{\eta \in D_n} \sum_{j', k'} \sum_{\substack{\eta' \neq \eta, \\ \eta' \in D_n}} \sum_{(\boldsymbol{x}, \boldsymbol{y}, \boldsymbol{z}) \in T_n } P_{X_{\eta'}^n} (\boldsymbol{x}) P_{Y_{\eta'}}^n (\boldsymbol{y}) W^n (\boldsymbol{z}| \boldsymbol{y}, 
           \boldsymbol{z}) \cdot \frac{e^{-n\gamma}}{N_n M_n^{(1)} M_n^{(2)}} \nonumber\\
          & \le \sum_{\eta \in D_n} \sum_{\substack{\eta' \neq \eta, \\ \eta' \in D_n}} \sum_{j', k'} \frac{e^{-n\gamma}}{N_n M_n^{(1)}M_n^{(2)}} \nonumber\\
          & = N_n e^{- n \gamma}.          \label{eq:P8-26k} 
     \end{align}
     
     Plugging \eqref{eq:P8-24}, \eqref{eq:P8-25}, \eqref{eq:P8-26h} and \eqref{eq:P8-26k} into \eqref{eq:P8-21}, we obtain
      \begin{align}
          \overline{\varepsilon}_n &\le \Pr \left\{ (X_H^n, Y_H^n, Z^n) \not\in T_n \right\} + (3 + N_n) \, e^{- n \gamma} +  \lambda_n, \label{eq:P8-28}
     \end{align}
     implying the existence of a set of MAC codes  $\big\{ (C_{\eta,n}^{(1)}, C_{\eta,n}^{(2)}) \big\}_{\eta \in \mathcal{H}}$ satisfying \eqref{eq:CSIT-error_bound}.
     Thus, the proof of Lemma \ref{lem:CSIT-error-bound} has been completed.
     \qed
         \end{enumerate}
         For the countably infinite $\mathcal{H}$, let $D_1 \subseteq D_2 \subseteq  D_3 \subseteq \cdots $ be an increasing sequence of subsets $D_n \subseteq \mathcal{H}$ such that, for example\footnote{In place of the specific choice in \eqref{eq:Dn_cond1}, any increasing sequence $\{D_n\}_{n=1}^\infty$ satisfying \eqref{eq:Dn_cond2} may be used, as long as $|D_n|\, e^{-n\gamma} \to 0$ as $n \to \infty$.}, 
     \begin{align}
        N_n \equiv  |D_n | = e^{\sqrt{n}}  \label{eq:Dn_cond1}
     \end{align}
     and 
     \begin{align}
         \bigcup_{n =1}^\infty D_n = \mathcal{H}. \label{eq:Dn_cond2}
     \end{align}
     By the continuity of the probability measure $w$, 
     \begin{align}
         \lambda_n = \Pr \{ H \not\in D_n \} \to 0 ~~(n \to \infty).  \label{eq:Dn_cond3}
     \end{align}

    \quad Now, Lemma \ref{lem:CSIT-error-bound} with $M_n^{(1)} = e^{n R_1}$,  $M_n^{(2)} = e^{n R_2}$ and $N_n = e^{\sqrt{n}}$ implies that
    there exists a set of $(n, M_n^{(1)}, M_n^{(2)})$ MAC codes $\big\{ (C_{\eta,n}^{(1)}, C_{\eta,n}^{(2)}) \big\}_{\eta \in \mathcal{H}}$ whose average probability of decoding error $\varepsilon_n$ is bounded as 
\begin{align}
    \varepsilon_n \le \Pr \bigg\{ & \frac{1}{n} \log \frac{W^n(Z^n | X_H^n, Y_H^n)}{P_{Z^n | Y_H^n}(Z^n | Y_H^n)} \le \frac{1}{n} \log \big( N_n M_n^{(1)}\big) + \gamma  \nonumber\\
&\text{or } \frac{1}{n} \log \frac{W^n(Z^n | X_H^n, Y_H^n)}{P_{Z^n | X_H^n}(Z^n | X_H^n)} \le \frac{1}{n} \log \big(N_n M_n^{(2)} \big)+ \gamma\nonumber\\
&\text{or } \frac{1}{n} \log \frac{W^n(Z^n | X_H^n, Y_H^n)}{P_{Z^n}(Z^n)} \le \frac{1}{n} \log \big( N_n M_n^{(1)}M_n^{(2)} \big) + \gamma \bigg\} + (3 + e^{\sqrt{n}}) \, e^{-n\gamma} + \lambda_n. \label{eq:CSIT-error-bound2}
\end{align}
      Since $\frac{1}{n} \log N_n \to 0$ as $n \to \infty$, it holds that
      \begin{align}
          \frac{1}{n} \log \big(N_n M_n^{(1)} \big) \le R_1 +  \gamma, \\
          \frac{1}{n} \log \big(N_n M_n^{(2)} \big) \le R_2 +  \gamma, \\
          \frac{1}{n} \log \big(N_n M_n^{(1)} M_n^{(2)} \big) \le R_1 + R_2 + \gamma 
      \end{align}
      for all sufficiently large $n$, it follows from \eqref{eq:CSIT-error-bound2}
 that      
     \begin{align}
    \varepsilon_n \le \Pr \bigg\{ & \frac{1}{n} \log \frac{W^n(Z^n | X_H^n, Y_H^n)}{P_{Z^n | Y_H^n}(Z^n | Y_H^n)} \le R_1 + 2 \gamma  \nonumber\\
&\text{or } \frac{1}{n} \log \frac{W^n(Z^n | X_H^n, Y_H^n)}{P_{Z^n | X_H^n}(Z^n | X_H^n)} \le R_2 +  2 \gamma\nonumber\\
&\text{or } \frac{1}{n} \log \frac{W^n(Z^n | X_H^n, Y_H^n)}{P_{Z^n}(Z^n)} \le R_1 + R_2  + 2 \gamma \bigg\} + (3 + e^{\sqrt{n}}) \, e^{-n\gamma} + \lambda_n \nonumber\\
= \sum_{\eta \in \mathcal{H}} & w(\eta) \, \xi_{\eta, n} + \beta_n,
\label{eq:CSIT-error-bound3}
\end{align}
where 
\begin{align}
    \xi_{\eta,n} \equiv  \Pr \bigg\{ & \frac{1}{n} \log \frac{W^n(Z_\eta^n | X_\eta^n, Y_\eta^n)}{P_{Z^n | Y_H^n}(Z_\eta^n | Y_\eta^n)} \le R_1 + 2 \gamma  \nonumber\\
&\text{or } \frac{1}{n} \log \frac{W^n(Z_\eta^n | X_\eta^n, Y_\eta^n)}{P_{Z^n | X_H^n}(Z_\eta^n | X_\eta^n)} \le R_2 +  2 \gamma\nonumber\\
&\text{or } \frac{1}{n} \log \frac{W^n(Z_\eta^n | X_\eta^n, Y_\eta^n)}{P_{Z^n}(Z_\eta^n)} \le R_1 + R_2  + 2 \gamma \bigg\}
\end{align}
and $\beta_n \equiv (3 + e^{\sqrt{n}}) \, e^{-n\gamma} + \lambda_n$.
Let us now consider $\mathcal{H}_n$ that was defined in \eqref{eq:uniform_set_H}.
For each $\eta \in \mathcal{H}_n$, the probability $\xi_{\eta,n}$ can be evaluated as
    \begin{align}
    \xi_{\eta,n}\le \Pr \bigg\{ & \frac{1}{n} \log \frac{w(\eta) W_\eta^n(Z_\eta^n | X_\eta^n, Y_\eta^n)}{P_{Z^n | Y_H^n}(Z_\eta^n | Y_\eta^n)} \le R_1 + 2 \gamma  \nonumber\\
&\text{or } \frac{1}{n} \log \frac{w(\eta) W_\eta^n(Z_\eta^n | X_\eta^n, Y_\eta^n)}{P_{Z^n | X_H^n}(Z_\eta^n | X_\eta^n)} \le R_2 +  2 \gamma\nonumber\\
&\text{or } \frac{1}{n} \log \frac{w(\eta) W_\eta^n(Z_\eta^n | X_\eta^n, Y_\eta^n)}{P_{Z^n}(Z_\eta^n)} \le R_1 + R_2  + 2 \gamma \bigg\}  \nonumber\\
    \le \Pr \bigg\{ & \frac{1}{n} \log \frac{ W_\eta^n(Z_\eta^n | X_\eta^n, Y_\eta^n)}{n P_{Z^n | Y_H^n}(Z_\eta^n | Y_\eta^n)} \le R_1 + 2 \gamma  \nonumber\\
&\text{or } \frac{1}{n} \log \frac{W_\eta^n(Z_\eta^n | X_\eta^n, Y_\eta^n)}{n P_{Z^n | X_H^n}(Z_\eta^n | X_\eta^n)} \le R_2 +  2 \gamma\nonumber\\
&\text{or } \frac{1}{n} \log \frac{ W_\eta^n(Z_\eta^n | X_\eta^n, Y_\eta^n)}{n P_{Z^n}(Z_\eta^n)} \le R_1 + R_2  + 2 \gamma \bigg\}  \nonumber\\
\le \Pr \bigg\{ & \frac{1}{n} \log \frac{ W_\eta^n(Z_\eta^n | X_\eta^n, Y_\eta^n)}{P_{Z^n | Y_H^n}(Z_\eta^n | Y_\eta^n)} \le R_1 + 3 \gamma  \nonumber\\
&\text{or } \frac{1}{n} \log \frac{W_\eta^n(Z_\eta^n | X_\eta^n, Y_\eta^n)}{P_{Z^n | X_H^n}(Z_\eta^n | X_\eta^n)} \le R_2 +  3 \gamma\nonumber\\
&\text{or } \frac{1}{n} \log \frac{ W_\eta^n(Z_\eta^n | X_\eta^n, Y_\eta^n)}{P_{Z^n}(Z_\eta^n)} \le R_1 + R_2  + 3 \gamma \bigg\}  \nonumber\\
\le \Pr \bigg\{ & \frac{1}{n} \log \frac{W_\eta^n(Z_\eta^n | X_\eta^n, Y_\eta^n)}{P_{Z_\eta^n  | Y_\eta^n}(Z_\eta^n | Y_\eta^n)} \le R_1 + 4 \gamma  \nonumber\\
&\text{or } \frac{1}{n} \log \frac{W_\eta^n(Z_\eta^n | X_\eta^n, Y_\eta^n)}{P_{Z_\eta^n  | X_\eta^n}(Z_\eta^n | X_\eta^n)}  \le R_2 +  4 \gamma\nonumber\\
&\text{or } \frac{1}{n} \log \frac{W_\eta^n(Z_\eta^n | X_\eta^n, Y_\eta^n)}{P_{Z_\eta^n}(Z_\eta^n)} \le R_1 + R_2  + 4 \gamma \bigg\} + 3 \, e^{-n \gamma},
\label{eq:CSIT-error-bound5}
\end{align} 
where in the last step we have used Lemma \ref{lem:div_spectrum}. 
Therefore, taking account of \eqref{eq:CSIT-error-bound3}, we obtain
\begin{align}
    \varepsilon_n \le \sum_{\eta \in \mathcal{H}_n} w(\eta) \, \xi_{\eta, n} &  + P_H(\mathcal{H}_n^c) +\beta_n \nonumber\\
    \! \! \le \sum_{\eta \in \mathcal{H}_n}  w(\eta) \Pr & \bigg\{ \frac{1}{n} \log \frac{W_\eta^n(Z_\eta^n | X_\eta^n, Y_\eta^n)}{P_{Z_\eta^n  | Y_\eta^n}(Z_\eta^n | Y_\eta^n)} \le R_1 + 4 \gamma  \nonumber\\
&\text{or } \frac{1}{n} \log \frac{W_\eta^n(Z_\eta^n | X_\eta^n, Y_\eta^n)}{P_{Z_\eta^n  | X_\eta^n}(Z_\eta^n | X_\eta^n)}  \le R_2 +  4 \gamma\nonumber\\
&\text{or } \frac{1}{n} \log \frac{W_\eta^n(Z_\eta^n | X_\eta^n, Y_\eta^n)}{P_{Z_\eta^n}(Z_\eta^n)} \le R_1 + R_2  + 4 \gamma \bigg\} + P_H(\mathcal{H}_n^c) +\beta_n + 3 \, e^{-n \gamma} \nonumber\\
 \! \! \le \sum_{\eta \in \mathcal{H}}  w(\eta) \Pr & \bigg\{ \frac{1}{n} \log \frac{W_\eta^n(Z_\eta^n | X_\eta^n, Y_\eta^n)}{P_{Z_\eta^n  | Y_\eta^n}(Z_\eta^n | Y_\eta^n)} \le R_1 + 4 \gamma  \nonumber\\
&\text{or } \frac{1}{n} \log \frac{W_\eta^n(Z_\eta^n | X_\eta^n, Y_\eta^n)}{P_{Z_\eta^n  | X_\eta^n}(Z_\eta^n | X_\eta^n)}  \le R_2 +  4 \gamma\nonumber\\
&\text{or } \frac{1}{n} \log \frac{W_\eta^n(Z_\eta^n | X_\eta^n, Y_\eta^n)}{P_{Z_\eta^n}(Z_\eta^n)} \le R_1 + R_2  + 4 \gamma \bigg\} + P_H(\mathcal{H}_n^c) +\beta_n + 3 \, e^{-n \gamma}.
\label{eq:CSIT-error-bound6}
\end{align}
Finally, by taking limsup on both sides of \eqref{eq:CSIT-error-bound6}, we obtain
\begin{align}
 \limsup_{n \to \infty} \varepsilon_n \le \tilde{J}_{\boldsymbol{W}} (R_1 + 4 \gamma, R_2 + 4 \gamma \midd \boldsymbol{X}_H, \boldsymbol{Y}_H, H) \le \varepsilon, 
\end{align}
where the last inequality is due to \eqref{eq:P8-16} and we have used \eqref{eq:P8-1a}.
Thus, we conclude that $(R_1, R_2)$ is $\varepsilon$-achievable in the case of CSIT.
\qed

\begin{remark}
So far in the proof of Theorem \ref{thm:noCSI-CSIRT}, we have assumed that the parameter space $\mathcal{H}$ is countably infinite. However, one may wonder why we have assumed so. Actually, also in the case where $\mathcal{H}$ has general probability measure $w(\eta)$, we may consider about \eqref{eq:noCSI-CSIR} and \eqref{eq:CSIT-CSIRT}. Indeed, it is expected that \eqref{eq:noCSI-CSIR} and \eqref{eq:CSIT-CSIRT} hold, if the transition probabilities of component channels with parameter $H=\eta$ are modestly insensitive with respect to $\eta$. As shown in the proof of Theorem \ref{thm:QS-Gaussian-noCSI-CSIRT}, the quasi-static fading Gaussian MAC satisfies this condition. On the other hand, however, even in the case where \eqref{eq:noCSI-CSIR} and \eqref{eq:CSIT-CSIRT} hold, it is not expected that \eqref{eq:noCSI-CSIR} and \eqref{eq:CSIT-CSIRT} coincide with each other, because with general measure $w(\eta)$ the computation of relevant tail probabilities becomes more involved.
\qed
\end{remark}

\begin{remark}
   For the single-user mixed channel $\boldsymbol{W}$, it follows from \eqref{eq:P9-55} that the $\varepsilon$-capacity with cost constraint $\Gamma$ in the scenario of CSIR is given by
   \begin{align}
       C_{\Gamma}^{\rm rx}(\varepsilon \midd \boldsymbol{W})  &= \sup_{\boldsymbol{X} \in \mathcal{S}_{\Gamma}}  \{ R \midd R \ge 0, J_{\boldsymbol{W}_{\rm rx}} (R | \boldsymbol{X}) \le \varepsilon \},  \label{eq:P9-70}
\end{align}
where $\mathcal{S}_{\Gamma}$ denotes the set of general sources $ \boldsymbol{X}  = \{ X^n\}_{n=1}^\infty$ satisfying $\Pr \left\{ \frac{1}{n} c_1(X^n) \le \Gamma \right\} = 1$ for $n = 1, 2, \cdots$ and
\begin{align}
     J_{\boldsymbol{W}_{\rm rx}} (R | \boldsymbol{X})  \equiv \limsup_{n \to \infty}  \Pr & \left\{   \frac{1}{n} \log \frac{W^n(Z^n, H | X^n)}{P_{Z^n H}(Z^n, H)} \le R \right\}. \label{eq:func_J3b}
\end{align}
  Equation \eqref{eq:noCSI-CSIR} implies that $ C_{\Gamma}^{\rm no}(\varepsilon \midd \boldsymbol{W})  = C_{\Gamma}^{\rm rx}(\varepsilon \midd \boldsymbol{W})$ when $\mathcal{H}$ is countably infinite, where
   \begin{align}
       C_{\Gamma}^{\rm no}(\varepsilon \midd \boldsymbol{W})  &= \sup_{\boldsymbol{X} \in \mathcal{S}_{\Gamma}}  \{ R \midd R \ge 0, J_{\boldsymbol{W}} (R | \boldsymbol{X}) \le \varepsilon \} \label{eq:P9-71}
   \end{align} 
   and
   \begin{align}
     J_{\boldsymbol{W}} (R | \boldsymbol{X})  \equiv \limsup_{n \to \infty}  \Pr & \left\{   \frac{1}{n} \log \frac{W^n(Z^n | X^n)}{P_{Z^n}(Z^n)} \le R \right\}.\label{eq:func_J3c}
\end{align}
    On the other hand, Effros et al.\ \cite{EGL2010} have considered the problem called the capacity versus outage (for details, see \cite[Definition 5]{EGL2010}) for the single-user mixed channel with general components in the scenario of CSIR. In this case, they showed that the capacity such that the outage probability is asymptotically no greater than $\varepsilon$ and the probability of decoding error vanishes conditioned on the non-outage event is given by the RHS of \eqref{eq:P9-70}. However, the equivalence between the RHSs of \eqref{eq:P9-70} and \eqref{eq:P9-71} was not recognized in \cite{EGL2010}.
   \qed
\end{remark}

\medskip
Let us now discuss the characterization of the $\varepsilon$-capacity region in the case of CSIT and CSIRT when $\mathcal{H}$ is countably infinite. In view of \eqref{eq:CSIT-CSIRT}, \eqref{eq:P8-outer} and \eqref{eq:P8-inner}, it holds that
\begin{align}
    C_{\Gamma_1, \Gamma_2}^{\rm tx}(\varepsilon \midd \boldsymbol{W})  = C_{\Gamma_1, \Gamma_2}^{\rm rt}(\varepsilon \midd \boldsymbol{W}) = \tilde{B} (\varepsilon \midd \boldsymbol{W}), \label{eq:P8-CSIT-CSIRT-formula}
\end{align}
where $\tilde{B} (\varepsilon \midd \boldsymbol{W})$ is defined in \eqref{eq:P8-1b}.
We show the following expression without using the expansion with respect to $w(\eta)$, which is in formal correspondence with formulas \eqref{eq:P9-1} and \eqref{eq:P9-55}.
\begin{lemma}
   For any $\boldsymbol{X}_H - H - \boldsymbol{Y}_H$, it holds that
   \begin{align}
     J_{\boldsymbol{W}_{\rm tx}} (R_1, R_2 | \boldsymbol{X}_H , \boldsymbol{Y}_H, H) = J_{\boldsymbol{W}_{\rm rt}} (R_1, R_2 | \boldsymbol{X}_H , \boldsymbol{Y}_H, H) =  \tilde{J}_{\boldsymbol{W}} (R_1, R_2 | \boldsymbol{X}_H , \boldsymbol{Y}_H, H),   \label{eq:P9-70b}
   \end{align}
   where
   \begin{align}
     J_{\boldsymbol{W}_{\rm tx}} (R_1, R_2 |  \boldsymbol{X}_H , \boldsymbol{Y}_H, H)  \equiv \limsup_{n \to \infty}  \Pr & \left\{   \frac{1}{n} \log \frac{W^n(Z^n | X_H^n, Y_H^n, H)}{P_{Z^n | Y_H^n H} (Z^n| Y_H^n, H) } \le R_1 \right. \nonumber \\
     & ~~~\text{  or }  \frac{1}{n} \log \frac{W^n(Z^n | X_H^n, Y_H^n, H)}{P_{Z^n| X_H^n H} (Z^n | X_H^n, H) } \le R_2  \nonumber \\
    & \left. ~~\text{ or }    \frac{1}{n} \log \frac{W^n(Z^n| X_H^n, Y_H^n, H)}{P_{Z^n | H}(Z^n| H)} \le R_1 + R_2 \right\}, \label{eq:P9-75} \\
     J_{\boldsymbol{W}_{\rm rt}} (R_1, R_2 |  \boldsymbol{X}_H , \boldsymbol{Y}_H, H)  \equiv \limsup_{n \to \infty}  \Pr & \left\{   \frac{1}{n} \log \frac{W^n(Z^n, H | X_H^n, Y_H^n, H)}{P_{Z^n H | Y_H^n H} (Z^n, H| Y_H^n, H) } \le R_1 \right. \nonumber \\
     & ~~~\text{  or }  \frac{1}{n} \log \frac{W^n(Z^n, H | X_H^n, Y_H^n, H)}{P_{Z^n H| X_H^n H} (Z^n, H | X_H^n, H) } \le R_2  \nonumber \\
    & \left. ~~\text{ or }    \frac{1}{n} \log \frac{W^n(Z^n, H | X_H^n, Y_H^n, H)}{P_{Z^n H | H}(Z^n, H| H)} \le R_1 + R_2 \right\}, \label{eq:P9-76}
 \end{align}
 and $\tilde{J}_{\boldsymbol{W}} (R_1, R_2 | \boldsymbol{X}_H , \boldsymbol{Y}_H, H)$ is defined in \eqref{eq:P8-1a}.
\end{lemma}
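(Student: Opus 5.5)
The plan is to show that, pathwise (for every realization $H=\eta$, $X_\eta^n=\boldsymbol{x}$, $Y_\eta^n=\boldsymbol{y}$, $Z^n=\boldsymbol{z}$), the three information densities appearing in \eqref{eq:P9-75}, in \eqref{eq:P9-76}, and in \eqref{eq:P8-1a} coincide term by term. Once the random variables inside the probabilities are identical, the probabilities agree for every $n$ and hence so do the limsups. Since $\mathcal{H}$ is countable, we have $\Pr\{\cdot\}=\sum_{\eta}w(\eta)\Pr\{\cdot \mid H=\eta\}$. Conditioned on $H=\eta$, the triple $(X_H^n,Y_H^n,Z^n)$ has the law of $(X_\eta^n,Y_\eta^n,Z_\eta^n)$. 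So the main task is to compute each conditional density appearing in the ratios.

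First, I will treat the CSIT quantities. By definition \eqref{eq:CSIT-channel}, $W^n(\boldsymbol{z}\mid\boldsymbol{x},\boldsymbol{y},\eta)=W_\eta^n(\boldsymbol{z}\mid\boldsymbol{x},\boldsymbol{y})$. Next I use the Markov chain $X_H^n-H-Y_H^n$, i.e.\ conditional independence given $H$:
\begin{align}
P_{Z^n|Y_H^nH}(\boldsymbol{z}|\boldsymbol{y},\eta)=\sum_{\boldsymbol{x}}P_{X_H^n|Y_H^nH}(\boldsymbol{x}|\boldsymbol{y},\eta)\,W_\eta^n(\boldsymbol{z}|\boldsymbol{x},\boldsymbol{y})=\sum_{\boldsymbol{x}}P_{X_\eta^n}(\boldsymbol{x})W_\eta^n(\boldsymbol{z}|\boldsymbol{x},\boldsymbol{y})=P_{Z_\eta^n|Y_\eta^n}(\boldsymbol{z}|\boldsymbol{y}). \nonumber
\end{align}
The sum is replaced by an integral for general alphabets. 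The same argument gives $P_{Z^n|X_H^nH}(\cdot|\boldsymbol{x},\eta)=P_{Z_\eta^n|X_\eta^n}(\cdot|\boldsymbol{x})$ and $P_{Z^n|H}(\cdot|\eta)=P_{Z_\eta^n}(\cdot)$. Substituting these into \eqref{eq:P9-75} and expanding over $\eta$ yields exactly \eqref{eq:P8-1a}. This proves $J_{\boldsymbol{W}_{\rm tx}}=\tilde{J}_{\boldsymbol{W}}$.

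Second, I will handle CSIRT by reducing it to CSIT. I reuse the identity \eqref{eq:P8-3}: $W^n(\boldsymbol{z},\eta\mid\boldsymbol{x},\boldsymbol{y},\eta)=W_\eta^n(\boldsymbol{z}|\boldsymbol{x},\boldsymbol{y})$, since $H$ is known given $H$. Likewise $P_{Z^nH|Y_H^nH}(\boldsymbol{z},\eta|\boldsymbol{y},\eta)=P_{Z^n|Y_H^nH}(\boldsymbol{z}|\boldsymbol{y},\eta)$, because $P_{H|Z^nY_H^nH}(\eta|\cdot,\eta)=1$. The analogous identities hold for the other two denominators. Thus each ratio in \eqref{eq:P9-76} equals the corresponding ratio in \eqref{eq:P9-75}, which gives $J_{\boldsymbol{W}_{\rm rt}}=J_{\boldsymbol{W}_{\rm tx}}$.

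The main obstacle is rigor rather than computation. The point needing care is that "conditioning on the event $H=\eta$" is legitimate. Countability of $\mathcal{H}$ is what makes this work: atoms with $w(\eta)=0$ contribute nothing and can be discarded, and on the remaining atoms elementary conditioning applies without regular conditional probabilities. I would also check that the identities hold almost surely, so that null sets where the densities vanish do not affect the probabilities.
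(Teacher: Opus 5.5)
Your proposal is correct and follows essentially the same route as the paper. You expand over the countable $\mathcal{H}$, use the Markov chain $X_H^n - H - Y_H^n$ to identify $P_{Z^n|Y_H^nH}(\cdot|\boldsymbol{y},\eta)$, $P_{Z^n|X_H^nH}(\cdot|\boldsymbol{x},\eta)$ and $P_{Z^n|H}(\cdot|\eta)$ with the component-channel quantities, and invoke \eqref{eq:P8-3} (i.e., $H$ is known given $H$) for CSIRT. The only cosmetic difference is the order: you get $J_{\boldsymbol{W}_{\rm rt}}=J_{\boldsymbol{W}_{\rm tx}}$ by matching ratios and then chain to $\tilde{J}_{\boldsymbol{W}}$, whereas the paper computes the CSIRT denominators directly in \eqref{eq:P8-4}--\eqref{eq:P8-6} and shows that each quantity equals $\tilde{J}_{\boldsymbol{W}}$ separately.
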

\noindent
\textit{Proof:}~~The functions $J_{\boldsymbol{W}_{\rm tx}} (R_1, R_2 |  \boldsymbol{X}_H , \boldsymbol{Y}_H, H)$ and $J_{\boldsymbol{W}_{\rm rt}} (R_1, R_2 |  \boldsymbol{X}_H , \boldsymbol{Y}_H, H) $ can be expressed as
\begin{align}
     J_{\boldsymbol{W}_{\rm tx}} (R_1, R_2 |  \boldsymbol{X}_H , \boldsymbol{Y}_H, H)  = \limsup_{n \to \infty} \sum_{\eta \in \mathcal{H}} w(\eta) \Pr & \left\{   \frac{1}{n} \log \frac{W^n(Z_\eta^n | X_\eta^n, Y_\eta^n, \eta)}{P_{Z^n | Y_H^n H} (Z_\eta^n| Y_\eta^n, \eta) } \le R_1 \right. \nonumber \\
     & ~~~\text{  or }  \frac{1}{n} \log \frac{W^n(Z_\eta^n | X_\eta^n, Y_\eta^n, \eta)}{P_{Z^n| X_H^n H} (Z_\eta^n | X_\eta^n, \eta) } \le R_2  \nonumber \\
    & \left. ~~\text{ or }    \frac{1}{n} \log \frac{W^n(Z_\eta^n| X_\eta^n, Y_\eta^n, \eta)}{P_{Z^n | H}(Z_\eta^n| \eta)} \le R_1 + R_2 \right\}, \label{eq:P9-77} \\
     J_{\boldsymbol{W}_{\rm rt}} (R_1, R_2 |  \boldsymbol{X}_H , \boldsymbol{Y}_H, H)  = \limsup_{n \to \infty}  \sum_{\eta \in \mathcal{H}} w(\eta)  \Pr & \left\{   \frac{1}{n} \log \frac{W^n(Z_\eta^n, \eta | X_\eta^n, Y_\eta^n, \eta)}{P_{Z^n H | Y_H^n H} (Z_\eta^n, \eta| Y_\eta^n, \eta) } \le R_1 \right. \nonumber \\
     & ~~~\text{  or }  \frac{1}{n} \log \frac{W^n(Z_\eta^n, \eta | X_\eta^n, Y_\eta^n, \eta)}{P_{Z^n H| X_H^n H} (Z_\eta^n, \eta | X_\eta^n, \eta) } \le R_2  \nonumber \\
    & \left. ~~\text{ or }    \frac{1}{n} \log \frac{W^n(Z_\eta^n, \eta | X_\eta^n, Y_\eta^n, \eta)}{P_{Z^n H | H}(Z_\eta^n, \eta| \eta)} \le R_1 + R_2 \right\}. \label{eq:P9-78}
\end{align}
In the case of CSIT, we have $W^n(\boldsymbol{z} | \boldsymbol{x}, \boldsymbol{y}, \eta) = W_\eta^n (\boldsymbol{z} | \boldsymbol{x}, \boldsymbol{y})$ and
\begin{align}
    P_{Z^n  | Y_H^n H }(\boldsymbol{z} | \boldsymbol{y} , \eta) &= P_{Z_\eta^n  | Y_\eta^n}(\boldsymbol{z} | \boldsymbol{y}), \label{eq:P9-67} \\
    P_{Z^n  | X_H^n H }(\boldsymbol{z} | \boldsymbol{x} , \eta) &= P_{Z_\eta^n  | X_\eta^n}(\boldsymbol{z} | \boldsymbol{x}), \label{eq:P9-68} \\
    P_{Z^n  | H }(\boldsymbol{z} |  \eta) &= P_{Z_\eta^n}( \boldsymbol{z} )\label{eq:P9-69}
\end{align}
for $(\boldsymbol{x}, \boldsymbol{y}, \boldsymbol{z}) \in \mathcal{X}^n \times \mathcal{Y}^n \times \mathcal{Z}^n$.
Plugging \eqref{eq:P9-67}--\eqref{eq:P9-69} into \eqref{eq:P9-77}, we obtain 
\begin{align}
    J_{\boldsymbol{W}_{\rm tx}} (R_1, R_2 | \boldsymbol{X}_H , \boldsymbol{Y}_H, H) = \tilde{J}_{\boldsymbol{W}} (R_1, R_2 | \boldsymbol{X}_H , \boldsymbol{Y}_H, H).
\end{align}
Next, in the case of CSIRT, it follows from \eqref{eq:P8-3} that the output distribution $P_{Z^n H | Y_H^n  H}$ can be expressed as
    \begin{align}
        P_{Z^n H | Y_H^n H} (\boldsymbol{z}, \eta | \boldsymbol{y}, \eta) &= \sum_{\boldsymbol{x} \in \mathcal{X}^n} P_{X_H^n | H} (\boldsymbol{x} | \eta)  \, W_{\rm rt}^n(\boldsymbol{z}, \eta | \boldsymbol{x}, \boldsymbol{y}, \eta)  \nonumber\\
        &= \sum_{\boldsymbol{x} \in \mathcal{X}^n} P_{X_\eta^n} (\boldsymbol{x}) \,  W_\eta^n(\boldsymbol{z}| \boldsymbol{x} , \boldsymbol{y}) \nonumber\\
        &= P_{Z_\eta^n | Y_\eta^n} (\boldsymbol{z} | \boldsymbol{y}) \label{eq:P8-4} 
    \end{align}
    and similarly,
     \begin{align}
        P_{Z^n H | X_H^n H} (\boldsymbol{z}, \eta | \boldsymbol{x}, \eta) &= P_{Z_\eta^n | X_\eta^n} (\boldsymbol{z} | \boldsymbol{x}),  \label{eq:P8-5}  \\
        P_{Z^n H | H} (\boldsymbol{z}, \eta | \eta) &= P_{Z_\eta^n} (\boldsymbol{z}). \label{eq:P8-6} 
    \end{align}
Then, we can argue similarly by substituting \eqref{eq:CSIRT-channel}, \eqref{eq:P8-3}, and \eqref{eq:P8-4}--\eqref{eq:P8-6} into \eqref{eq:P9-78}, obtaining the equation 
\begin{align}
    J_{\boldsymbol{W}_{\rm rt}} (R_1, R_2 | \boldsymbol{X}_H , \boldsymbol{Y}_H, H) = \tilde{J}_{\boldsymbol{W}} (R_1, R_2 | \boldsymbol{X}_H , \boldsymbol{Y}_H, H),
\end{align}
thereby completing the proof.
\qed

\medskip
As natural counterparts of \eqref{eq:P9-1} and \eqref{eq:P9-55} in the scenario of no-CSI and CSIR, we obtain formulas for $C_{\Gamma_1, \Gamma_2}^{\rm tx}(\varepsilon \midd \boldsymbol{W})$ and $C_{\Gamma_1, \Gamma_2}^{\rm rt}(\varepsilon \midd \boldsymbol{W})$ from \eqref{eq:P9-70b} as follows. 
\begin{corollary}
Fix $\varepsilon \in [0, 1)$ arbitrarily. For a mixed MAC $\boldsymbol{W}$ with general components $\boldsymbol{W}_{\eta} = \{ W_{\eta}^n \}_{n=1}^\infty$, it holds that
\begin{align}
    C_{\Gamma_1, \Gamma_2}^{\rm tx}(\varepsilon \midd \boldsymbol{W}) &= \bigcup_{\substack{\boldsymbol{X}_H - H - \boldsymbol{Y}_H: \\ \boldsymbol{X}_H \in \mathcal{S}_{\Gamma_1},  \boldsymbol{Y}_H \in \mathcal{S}_{\Gamma_2}}} \mathrm{Cl} \{ (R_1, R_2) \midd R_1 \ge 0, R_2 \ge 0,  J_{\boldsymbol{W}_{\rm tx}}(R_1, R_2 | \boldsymbol{X}_H, \boldsymbol{Y}_H, H) \le \varepsilon \}, \label{eq:P9-72} \\
     C_{\Gamma_1, \Gamma_2}^{\rm rt}(\varepsilon \midd \boldsymbol{W}) &= \bigcup_{\substack{\boldsymbol{X}_H - H - \boldsymbol{Y}_H: \\ \boldsymbol{X}_H \in \mathcal{S}_{\Gamma_1},  \boldsymbol{Y}_H \in \mathcal{S}_{\Gamma_2}}} \mathrm{Cl} \{ (R_1, R_2) \midd R_1 \ge 0, R_2 \ge 0,  J_{\boldsymbol{W}_{\rm rt}}(R_1, R_2 | \boldsymbol{X}_H, \boldsymbol{Y}_H, H) \le \varepsilon \}. \label{eq:P9-73}
\end{align}
\end{corollary}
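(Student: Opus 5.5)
The plan is to read off the Corollary from results already established in this section, with no new coding argument. The starting point is the chain \eqref{eq:P8-CSIT-CSIRT-formula}: by the outer bound \eqref{eq:P8-outer} for CSIRT, the inner bound \eqref{eq:P8-inner} for CSIT (proved via Lemma~\ref{lem:CSIT-error-bound}), and the sandwich \eqref{eq:CSIT-relation2}, we already have
\begin{align}
C_{\Gamma_1, \Gamma_2}^{\rm tx}(\varepsilon \midd \boldsymbol{W}) = C_{\Gamma_1, \Gamma_2}^{\rm rt}(\varepsilon \midd \boldsymbol{W}) = \tilde{B}(\varepsilon \midd \boldsymbol{W}),
\end{align}
where $\tilde{B}$ is the union in \eqref{eq:P8-1b} over all $\boldsymbol{X}_H - H - \boldsymbol{Y}_H$ with $\boldsymbol{X}_H \in \mathcal{S}_{\Gamma_1}$ and $\boldsymbol{Y}_H \in \mathcal{S}_{\Gamma_2}$ of the closed sets defined through $\tilde{J}_{\boldsymbol{W}}$.

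The second step invokes the preceding Lemma, namely \eqref{eq:P9-70b}. For each fixed admissible triple $(\boldsymbol{X}_H, \boldsymbol{Y}_H, H)$ it gives $J_{\boldsymbol{W}_{\rm tx}} = J_{\boldsymbol{W}_{\rm rt}} = \tilde{J}_{\boldsymbol{W}}$ as functions of $(R_1,R_2)$. Hence, term by term in the union, the sets $\mathrm{Cl}\{(R_1,R_2) \midd R_1, R_2 \ge 0,\ \tilde{J}_{\boldsymbol{W}} \le \varepsilon\}$ coincide with the corresponding sets defined by $J_{\boldsymbol{W}_{\rm tx}}$, and with those defined by $J_{\boldsymbol{W}_{\rm rt}}$. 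Both unions are taken over the same index class, so substituting into $\tilde{B}(\varepsilon \midd \boldsymbol{W})$ yields \eqref{eq:P9-72} and \eqref{eq:P9-73} at once.

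The only point needing care is that the index classes really match. In \eqref{eq:P8-1b} the union ranges over conditionally independent pairs given $H$ satisfying the per-$\eta$ cost constraints. These are exactly the input laws induced by CSIT/CSIRT codes, and they are the ones for which the Lemma \eqref{eq:P9-70b} was stated, so no enlargement or restriction of the union occurs. I also note that the countability of $\mathcal{H}$, which was needed for \eqref{eq:P8-inner}, is inherited implicitly through \eqref{eq:P8-CSIT-CSIRT-formula}. I do not expect a genuine obstacle: the whole argument is a substitution of equal functions inside an already established formula.
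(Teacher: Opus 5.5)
Your proposal is correct and is exactly the paper's route. The paper obtains the Corollary by combining $C_{\Gamma_1,\Gamma_2}^{\rm tx}(\varepsilon \midd \boldsymbol{W}) = C_{\Gamma_1,\Gamma_2}^{\rm rt}(\varepsilon \midd \boldsymbol{W}) = \tilde{B}(\varepsilon \midd \boldsymbol{W})$ from \eqref{eq:P8-CSIT-CSIRT-formula} with the identity $J_{\boldsymbol{W}_{\rm tx}} = J_{\boldsymbol{W}_{\rm rt}} = \tilde{J}_{\boldsymbol{W}}$ of \eqref{eq:P9-70b}, substituted term by term into the union \eqref{eq:P8-1b}; your remark that the countability of $\mathcal{H}$ is inherited through \eqref{eq:P8-inner} also matches the paper, which explicitly leaves the non-countable case open.
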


\begin{remark}
    An interesting problem here is whether formulas \eqref{eq:P9-72} and \eqref{eq:P9-73} continue to be valid also with non-countable $\mathcal{H}$.
    \qed
\end{remark}

\medskip
We are now in a position to give a counter example as suggested in Remark \ref{rem:CSI-availability}.

\begin{figure}[ht]  
\begin{center}
 \vspace*{5ex}
    \includegraphics[height=0.35\textheight]{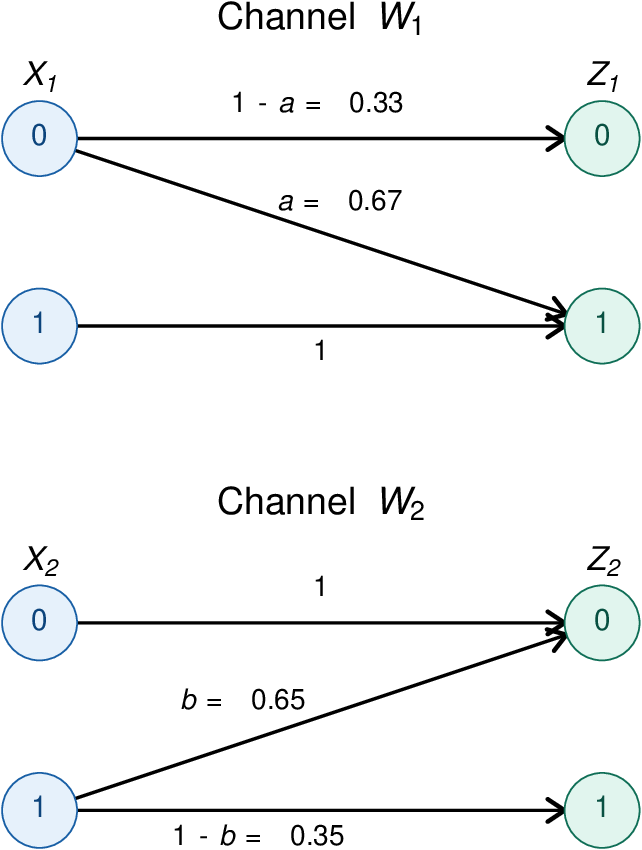}
    \caption{Bipartite transition diagrams of the two component Z-channels $W_1$ and $W_2$.} \label{fig:z_channel_diagrams}
\end{center}
\end{figure}

\begin{figure}[ht] 
\begin{center}
    \includegraphics[height=0.35\textheight]{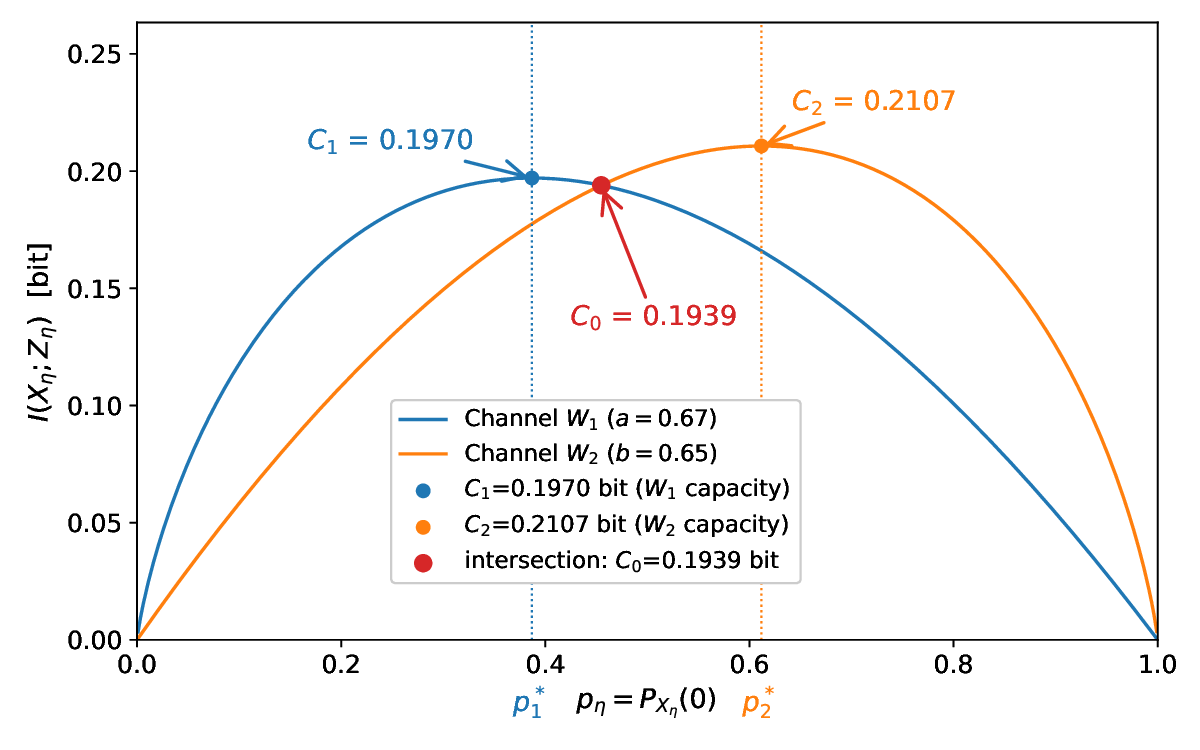}
    \caption{Mutual information $I(X_\eta;Z_\eta)$ as a function of the input distribution $p_\eta = P_{X_\eta}(0)$ for the component channels $W_1$ (blue, $\eta=1$) and $W_2$ (orange, $\eta=2$). The channel capacities are $C_1 \simeq 0.1970$ and $C_2 \simeq 0.2107$, achieved at $p_1^* \simeq 0.387$ and $p_2^* \simeq 0.612$, respectively. The height of the intersection point of the two curves is $C_0 \simeq 0.1939$.} \label{fig:mixed_channel_capacity}
\end{center}
\end{figure} 

\noindent
\textbf{Counter Example}

It suffices to show $C_{\Gamma}^{\rm no}(\varepsilon \midd \boldsymbol{W}) < C_{\Gamma}^{\rm tx}(\varepsilon \midd \boldsymbol{W})$ for some $ 0 \le \varepsilon < 1$ for a single-user mixed memoryless channel $\boldsymbol{W}$ with finite alphabets.
Let us consider the simple case where $\mathcal{H} = \{1,2\}$ and $\mathcal{X} = \mathcal{Z} = \{0,1\}$.
We assume that component channels $W_1$ and $W_2$ are Z-channels with $W_1(1\mid 0) = a \equiv 0.67$ and $W_2(0\mid 1) = b \equiv 0.65$ (without cost constraint), respectively (see Fig.\ \ref{fig:z_channel_diagrams}). 
Also assume that $w(1) < w(2)$.
Figure \ref{fig:mixed_channel_capacity} shows the mutual information $I(X_1; Z_1)$ and $I(X_2; Z_2)$ as a function of $p_\eta = P_{X_\eta} (0)$ for $\eta = 1, 2$, where the value $C_\eta$ expresses the channel capacity of $W_\eta$ and the height of the intersection point of the two curves is denoted by $C_0$.
Here, $p_\eta^*$ denotes the capacity-achieving value of $p_\eta$. 
The channel capacities are $C_1 \simeq 0.1970$ and $C_2 \simeq 0.2107$ while the height of the intersection is $C_0 \simeq 0.1939$. 

First, we use the already established information spectrum formula by Yagi et al.\ \cite[Theorem 1]{YHN2016} to compute $C_{\Gamma}^{\rm no}(\varepsilon | \boldsymbol{W})$.
Then, it is not difficult to check that
\begin{align}
C_{\Gamma}^{\rm no}(\varepsilon | \boldsymbol{W}) &=\sup_{\substack{X :  \mathbb{E} \, c_1(X) \le \Gamma }}
 \left\{R \, \left| \, 
 \int_{\{\eta \, | \, I(X;Z_{\eta})\le R \}} dw(\eta)\le \varepsilon \right.\right\} \label{eq:P9-80b} \\
&= \left\{ 
\begin{array}{ll}
     C_0 &  \text{for } 0 \le \varepsilon < w(1), \\
     C_2 &  \text{for } w(1) \le \varepsilon < 1.
\end{array} \right. 
\label{eq:P9-80c}
\end{align}

On the other hand, for the case of CSIT, we use the single-user version of formula \eqref{eq:P8-CSIT-CSIRT-formula} with $\tilde{B}(\varepsilon | \boldsymbol{W})$ as defined in \eqref{eq:P8-1a} and \eqref{eq:P8-1b}.
Specifically, in this Z-channel case, it reduces to
\begin{align}
  C_{\Gamma}^{\rm tx} (\varepsilon | \boldsymbol{W}) = \sup_{\boldsymbol{X}_H} \{ R \midd R \ge 0, \tilde{J}_{\boldsymbol{W}} (R | \boldsymbol{X}_H, H) \le \varepsilon \}, \label{eq:P9-81c}
\end{align}
where 
\begin{align}
    \tilde{J}_{\boldsymbol{W}} (R | \boldsymbol{X}_H, H)  \equiv \limsup_{n \to \infty} \sum_{\eta \in \mathcal{H}} w(\eta) \Pr \left\{   \frac{1}{n} \log \frac{W_\eta^n(Z_\eta^n | X_\eta^n)}{P_{Z_\eta^n} (Z_\eta^n ) } \le R \right\}. \label{eq:P9-82}
\end{align}
We choose $X_\eta^n$ on the RHS of \eqref{eq:P9-82} so that
$X_\eta^n$ ($\eta = 1, 2$) are stationary and memoryless mutually independent inputs subject to $p_\eta = P_{X_\eta}(0)$, respectively.
Then, \eqref{eq:P9-82} is rewritten as
\begin{align}
    \tilde{J}_{\boldsymbol{W}} (R | \boldsymbol{X}_H, H)  &=  \sum_{\eta \in \mathcal{H}} w(\eta) \lim_{n \to \infty} \Pr \left\{   \frac{1}{n} \log \frac{W_\eta^n(Z_\eta^n | X_\eta^n)}{P_{Z_\eta^n} (Z_\eta^n ) } \le R \right\} \nonumber\\
    & = \sum_{\eta \in \mathcal{H}} w(\eta) \mathbbm{1}[I(X_\eta; Z_\eta) \le R] \nonumber\\
    & = w(1) \mathbbm{1}[I(X_1; Z_1) \le R] + w(2) \mathbbm{1}[I(X_2; Z_2) \le R],     \label{eq:P9-83}
\end{align}
from which together with \eqref{eq:P9-81c}, it follows that
\begin{align}
  C_{\Gamma}^{\rm tx} (\varepsilon | \boldsymbol{W})\ge \sup \{ R \midd  w(1) \mathbbm{1}[I(X_1; Z_1) \le R] + w(2) \mathbbm{1}[I(X_2; Z_2) \le R]  \le \varepsilon \}. \label{eq:P9-84}
\end{align}
Since the RHS of \eqref{eq:P9-84} can be computed (cf.\ Fig.\ \ref{fig:mixed_channel_capacity}) as a function of $p_1$ and $p_2$, it is not difficult to check that \eqref{eq:P9-84} is evaluated as
\begin{align}
& C_{\Gamma}^{\rm tx}(\varepsilon | \boldsymbol{W}) \ge\left\{ 
\begin{array}{ll}
     C_1 &  \text{for } 0 \le \varepsilon < w(1), \\
     C_2 &  \text{for } w(1) \le \varepsilon < 1.
\end{array} \right. 
\label{eq:P9-81b}
\end{align}
Thus, since $C_1 > C_0$, this gives an example of mixed channels for which the $\varepsilon$-capacities do not coincide.
\qed

%========================================================
%===================== Section 10 ========================
%========================================================
\section{$K$-User Quasi-Static Fading Gaussian MAC} \label{sec:P9}

It is straightforward to extend Theorems \ref{thm:GQS-fading_MAC} and \ref{thm:QS-Gaussian-noCSI-CSIRT} for the two-user quasi-static fading Gaussian MACs discussed in Sec.\ \ref{sec:P5} to the case of $K$ users.

Let $M_n^{(k)}$ and  $X_k^n = \big(X_{k,1}^{(n)}, X_{k,2}^{(n)}, \ldots X_{k,n}^{(n)}\big)$ be the uniformly distributed message and the corresponding codeword at encoder $k \in = [1 : K] = \{1, \ldots, K\}$.
We assume that each codeword satisfies
\begin{align}
    \Pr \left\{ \sum_{i=1}^n X_{k,i}^2 \le P_k \right\} = 1 ~~\text{for } k = 1, 2, \ldots, K,
\end{align}
which is referred to as \emph{power constraint} $\boldsymbol{P} = (P_1, P_2, \ldots, P_K)$.

The tuple of fading coefficients is denoted by $h = (h_1, h_2, \ldots, h_K)$ which is randomly generated prior to encoding.
As in Sec.\ \ref{sec:P5}, the set of CSIs is denoted by $\mathcal{H}$. 
When $\eta = (h, \theta) \in \mathcal{H}$, the output of component channel $\boldsymbol{W}_\eta = \{ W_\eta^n \}_{n=1}^\infty$ is given by
\begin{align}
    Z_\eta^n = h_1 X_1^n + h_2 X_2^n + \cdots + h_K X_K^n + V_\theta^n,
\end{align}
where $V_\theta^n = \big( V_{\theta, 1}^{(n)}, V_{\theta, 2}^{(n)}, \ldots, V_{\theta, n}^{(n)} \big)$ is an additive Gaussian noise sequence with $V_{\theta, i}^{(n)} \sim \mathcal{N}(0, N_\theta)$.
The channel law of the quasi-static fading Gaussian MAC is characterized as a mixed MAC with probability measure $w$ on $\mathcal{H}$ (as was discussed in Sec.\ \ref{sec:P5}) and is denoted by $\boldsymbol{W} = \{W^n\}_{n=1}^\infty$.
As in Sec.\ \ref{sec:P7}, the $\varepsilon$-capacity regions with power constraint $\boldsymbol{P}$ are defined analogously and denoted by $C_{\boldsymbol{P}}^{\rm no}(\varepsilon \midd \boldsymbol{W})$, $C_{\boldsymbol{P}}^{\rm rx}(\varepsilon \midd \boldsymbol{W})$, $C_{\boldsymbol{P}}^{\rm tx}(\varepsilon \midd \boldsymbol{W})$, and $C_{\boldsymbol{P}}^{\rm rt}(\varepsilon \midd \boldsymbol{W})$ depending on availability of CSI $\eta \in \mathcal{H}$.

Let $R_k$ denote the nonnegative rate for user $k \in [1:K]$. For a nonempty subset $\mathcal{S} \subseteq [1:K]$, let $R_\mathcal{S}$ denote the sum of rates 
\begin{align}
    R_\mathcal{S} = \sum_{k \in \mathcal{S}} R_k,
\end{align}
and the overall sum rate is denoted  simply by $\boldsymbol{R} = (R_1, \ldots, R_K)$.
For $\eta = (h,\theta) \in \mathcal{H}$, the 0-capacity region with power constraint $\boldsymbol{P} = (P_1, P_2, \ldots, P_K)$ with component Gaussian MAC $\boldsymbol{W}_\eta = \{W_\eta\}$ is given by
\begin{align}
    \Pi_{\boldsymbol{P}, \eta} = \Big\{\boldsymbol{R} \, \Big| \, & R_\mathcal{S} \le \frac{1}{2} \log \Big( 1 + \frac{\sum_{k \in \mathcal{S}} h_k^2 P_k}{N_\theta} \Big),\nonumber\\
 & ~\text{for all nonempty } \mathcal{S} \subseteq [1:K]  \Big\}. \label{eq:k-user-capacity}
\end{align}
Let $H$ denote the random variable corresponding to the CSI taking values in $\mathcal{H}$. We establish the following theorem for which the proof is similar to the one for Theorems \ref{thm:GQS-fading_MAC} and \ref{thm:QS-Gaussian-noCSI-CSIRT} (see also Remark \ref{rem:alternative-formula}).
\begin{theorem} \label{thm:K-user-MAC}
Fix $\varepsilon \in [0,1)$ arbitrarily. For a $K$-user quasi-static fading Gaussian MAC $\boldsymbol{W}$, the $\varepsilon$-capacity regions with power constraint $\boldsymbol{P} = (P_1, P_2, \ldots, P_K)$ are given by 
\begin{align}
    C_{\boldsymbol{P}}^{\rm no}(\varepsilon \midd \boldsymbol{W}) &= C_{\boldsymbol{P}}^{\rm rx}(\varepsilon \midd \boldsymbol{W}) = C_{\boldsymbol{P}}^{\rm tx}(\varepsilon \midd \boldsymbol{W}) = C_{\boldsymbol{P}}^{\rm rt}(\varepsilon \midd \boldsymbol{W}) \nonumber\\
    &= \mathrm{Cl} \left\{\boldsymbol{R} \, \left| \, \Pr\{ \boldsymbol{R} \not\in \Pi_{\boldsymbol{P}, H} \} 
\le \varepsilon \right. \right\}. \label{eq:K-GQS-fading_MAC}
\end{align}
\end{theorem}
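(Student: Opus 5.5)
The plan is to copy the two-user template of Theorems \ref{thm:GQS-fading_MAC} and \ref{thm:QS-Gaussian-noCSI-CSIRT}, replacing the three rate constraints by the $2^K-1$ constraints indexed by nonempty $\mathcal{S}\subseteq[1:K]$. Write $B_K(\varepsilon\midd\boldsymbol{W})$ for the set $\{\boldsymbol{R}\,|\,\Pr\{\boldsymbol{R}\notin\Pi_{\boldsymbol{P},H}\}\le\varepsilon\}$. By the inclusions analogous to \eqref{eq:CSIR-relation1} and \eqref{eq:CSIT-relation1}, it suffices to prove two things: the direct part $\mathrm{Cl}\,B_K(\varepsilon\midd\boldsymbol{W})\subseteq C_{\boldsymbol{P}}^{\rm no}(\varepsilon\midd\boldsymbol{W})$, and the CSIRT converse $C_{\boldsymbol{P}}^{\rm rt}(\varepsilon\midd\boldsymbol{W})\subseteq \mathrm{Cl}\,B_K(\varepsilon\midd\boldsymbol{W})$. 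Together these sandwich all four regions.

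For the direct part, I would first state the $K$-user version of Lemma \ref{lem:finite_length_UB} (Han's random coding bound). In this version the error event is the union over $\mathcal{S}$ of $\{\frac1n\log\frac{W^n}{P_{Z^n|X^n_{\mathcal{S}^c}}}\le\frac1n\log\prod_{k\in\mathcal{S}}M_n^{(k)}+\gamma\}$, and the additive term is $(2^K-1)e^{-n\gamma}$. The proof is the same joint-typicality union bound, now split over the patterns of wrongly decoded users. I would take i.i.d.\ inputs $\mathcal{N}(0,P_k-\rho)$ and truncate them to the power balls $A_n^{(k)}$ as in \eqref{eq:new_X2}. The lower bounds \eqref{eq:P5-rel1}--\eqref{eq:P5-rel3} become $P_{Z^n|X^n_{\mathcal{S}^c}}\ge\prod_{k\in\mathcal{S}}\gamma_n^{(k)}\,P_{\overline Z^n|\overline X^n_{\mathcal{S}^c}}$. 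I then split $\mathcal{H}=\mathcal{H}_1\cup\mathcal{H}_2\cup E$ and pass from the mixture $W^n$ to the component $W_\eta^n$.
\begin{itemize}
\item On the atoms $\mathcal{H}_1$ this uses \eqref{eq:P5-6} and Lemma \ref{hodai:8}.
\item On $\mathcal{H}_2$ it uses the quantization of $(h,\theta)$, now into $K$-dimensional cells of side $a_0/n^4$ times $\Theta$-intervals. The bounds \eqref{eq:P5-16} and \eqref{eq:mainbound} carry over with $\sum_k h_kx_{k,i}$, using $(\sum_k x_{k,i})^2\le K\sum_k x_{k,i}^2$. Each cell now has mass at least $c_\eta n^{-4(K+1)}$, which is still polynomial, so $\frac1n\log$ of it vanishes.
\end{itemize}
The weak law of large numbers applied to each of the $2^K-1$ component information densities then gives indicator limits. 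Choosing $\rho$ uniformly in $\eta$ via \eqref{eq:space_assumption}, Fatou's lemma yields $\limsup\varepsilon_n\le\Pr\{\boldsymbol{R}+\gamma\boldsymbol{1}\notin\Pi_{\boldsymbol{P},H}\}$ up to $\gamma$-slack, and closure finishes the direct part.

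For the converse I would apply the $K$-user version of Lemma \ref{lemma:finite_length_LB2}, with $Q^n_{\eta,\mathcal{S}}=P_{Z_\eta^n|X^n_{\eta,\mathcal{S}^c}}$. Its proof is the usual Verdú–Han lower bound, applied per $\eta$ and per subset, with additive term $(2^K-1)e^{-n\gamma}$. Arguing as for \eqref{eq:P5-36}--\eqref{eq:P5-38}, I get $\varepsilon\ge w\{\eta\,|\,\exists\mathcal{S}:\ \overline I(\boldsymbol{X}_{\mathcal{S}};\boldsymbol{Z}_\eta|\boldsymbol{X}_{\mathcal{S}^c})\le R_{\mathcal{S}}-c\gamma\}$. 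It then remains to show, for every independent input tuple satisfying the power constraints, that $\overline I(\boldsymbol{X}_{\mathcal{S}};\boldsymbol{Z}_\eta|\boldsymbol{X}_{\mathcal{S}^c})\le\frac12\log(1+\sum_{k\in\mathcal{S}}h_k^2P_k/N_\theta)$. This is the $K$-user strong converse extension of Theorem \ref{thm:StrongC_Gaussian_MAC}.

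I expect the main obstacle to be this last step for $|\mathcal{S}|\ge2$. Conditioned on $\boldsymbol{X}_{\mathcal{S}^c}$, the subtraction of $\sum_{k\notin\mathcal{S}}h_kx_k$ reduces the problem to a $|\mathcal{S}|$-user sum-rate bound with inputs $h_kX_k^n$. For this I would redo the diffuse–exceptional decomposition with $S_n=\sum_{k\in\mathcal{S}}h_k^2K_{X_k^n}$. I would then put all $|\mathcal{S}|$ scaled means into $E_n$, so that $r_n\le n\sum h_k^2P_k/b_n+|\mathcal{S}|$, and use the reference law $\mathcal N(0,aI_{D_n})\otimes\mathcal N(0,bI_{E_n})$ with $a=\sum_{k\in\mathcal{S}}h_k^2P_k+N_\theta$. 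The pathwise identity \eqref{eq:ell_n_exact} must be re-derived. The cross term becomes $\sum_{k<l}\langle X_{k,D},X_{l,D}\rangle$ with zero mean. Its variance is $\sum_{k<l}\mathrm{tr}(A_kA_l)=O(n b_n)$ by (T3)--(T5) applied pairwise, using pairwise independence. The exceptional part needs a nonpositive quadratic remainder generalizing $Q_n$. I would obtain it by completing the square in the $|\mathcal{S}|+1$ vectors on $E_n$, choosing $b$ large enough, for instance $b=|\mathcal{S}|a+N_\theta$, that the quadratic form is negative semidefinite. Chebyshev and Lemma \ref{lem:div_spectrum} then conclude as in \eqref{eq:final_result}, and substituting these bounds gives $(\boldsymbol{R}-c\gamma\boldsymbol{1})\in B_K(\varepsilon\midd\boldsymbol{W})$, hence the theorem.
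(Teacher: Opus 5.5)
Your proposal is correct and follows the paper's route. The paper proves Theorem~\ref{thm:K-user-MAC} by rerunning the proofs of Theorems~\ref{thm:GQS-fading_MAC} and \ref{thm:QS-Gaussian-noCSI-CSIRT} with $K$-user versions of the general formula/random-coding bound and of the spectral sup-information bounds behind Theorem~\ref{thm:StrongC_Gaussian_MAC}, deferring the details to Yagi--Han; this is exactly your sandwich of no-CSI achievability against the CSIRT converse. Your choice $b=|\mathcal{S}|a+N_\theta$ is precisely the threshold that makes the exceptional-part quadratic form negative semidefinite (it reduces to $b=2a+N$ when $K=2$). The cross terms $\Xi_{kl}$ are uncorrelated by mutual (not merely pairwise) independence of the inputs, which the MAC inputs have, so your $O(nb_n)$ variance bound goes through.
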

\noindent\textit{Proof:}~~We need here the $K$-user versions of Theorems \ref{thm:general_formula} and \ref{thm:StrongC_Gaussian_MAC}. However, the proofs of them are involved and intractable compared to the two-user case, so that we omit them here.  For details, see Yagi and Han \cite{Yagi-Han2026}.
\qed

\medskip
Now, as a special case of Theorem \ref{thm:K-user-MAC}, we can characterize the 0-capacity region as follows: let 
\begin{align}
    G_{\boldsymbol{P}, \eta}(\mathcal{S}) \equiv \frac{1}{2} \log \Big( 1 + \frac{\sum_{k \in \mathcal{S}} h_k^2 P_k}{N_\theta} \Big) \label{eq:K-GQS-capacity-function}
\end{align}
for $\eta = (h, \theta) \in \mathcal{H}$. Then, we have
\begin{corollary}[Delay-Limited Capacity Region] \label{coro:K-user-MAC}
For a $K$-user quasi-static fading Gaussian MAC $\boldsymbol{W}$, the $0$-capacity regions with power constraint $\boldsymbol{P} = (P_1, P_2, \ldots, P_K)$ are given by 
\begin{align}
    C_{\boldsymbol{P}}^{\rm no}(0 \midd \boldsymbol{W}) &= C_{\boldsymbol{P}}^{\rm rx}(0 \midd \boldsymbol{W}) = C_{\boldsymbol{P}}^{\rm tx}(0 \midd \boldsymbol{W}) = C_{\boldsymbol{P}}^{\rm rt}(0 \midd \boldsymbol{W}) \nonumber\\
    &= \mathrm{Cl} \Big\{\boldsymbol{R} \, \Big| \, R_\mathcal{S} \le w\text{-ess.inf }  G_{\boldsymbol{P},H}(\mathcal{S}) ~\text{for all nonempty } \mathcal{S} \subseteq [1:K] \Big\}. \label{eq:K-GQS-fading_MAC2}
\end{align}
\end{corollary}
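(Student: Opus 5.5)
The plan is to obtain Corollary \ref{coro:K-user-MAC} directly from Theorem \ref{thm:K-user-MAC} by setting $\varepsilon=0$. We then show that the resulting region coincides with the essential-infimum region, in the same way Theorem \ref{thm:0-cap_mixed_Gaussian_MAC} was read off from Theorem \ref{thm:GQS-fading_MAC}.

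\emph{Reduction.} By Theorem \ref{thm:K-user-MAC} with $\varepsilon=0$, all four regions are equal to
\begin{align}
\mathrm{Cl}\,\big\{\boldsymbol{R} \,\big|\, \Pr\{\boldsymbol{R}\notin \Pi_{\boldsymbol{P},H}\}=0\big\}. \nonumber
\end{align}
By \eqref{eq:k-user-capacity}, the event $\{\boldsymbol{R}\notin\Pi_{\boldsymbol{P},H}\}$ is the finite union over nonempty $\mathcal{S}\subseteq[1:K]$ of the events $\{R_\mathcal{S} > G_{\boldsymbol{P},H}(\mathcal{S})\}$. This union is $w$-null if and only if each of the $2^K-1$ events is $w$-null.

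\emph{Essential infimum.} By definition of $w$-ess.inf, $\Pr\{G_{\boldsymbol{P},H}(\mathcal{S})<R_\mathcal{S}\}=0$ holds if and only if $R_\mathcal{S}\le w\text{-ess.inf}\,G_{\boldsymbol{P},H}(\mathcal{S})$. Hence the set inside the closure is exactly the polyhedral set on the right-hand side of \eqref{eq:K-GQS-fading_MAC2}. That set is closed, being a finite intersection of closed half-spaces with $\mathbb{R}_+^K$, so the closure operation does nothing and the equality follows.

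\emph{Main obstacle.} The only delicate point is the strict versus non-strict inequality convention in the definition of $\Pi_{\boldsymbol{P},\eta}$. The condition ``$\boldsymbol{R}\notin\Pi$'' uses strict inequalities, and these match the ess.inf characterization directly. If one instead worked with the ``$\le$'' form of the outage event, as in \eqref{eq:GQS-fading_MAC}, one would invoke the closure argument of Remark \ref{remark:inner_bound2}, adapted verbatim to $K$ coordinates. That argument perturbs $\boldsymbol{R}$ by $-2\gamma$ in each coordinate and uses closedness to show the two versions have the same closure. Everything else is bookkeeping over the $2^K-1$ subsets.
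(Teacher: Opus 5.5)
Your proposal is correct and follows the paper's route. The paper obtains this corollary simply by setting $\varepsilon=0$ in Theorem~\ref{thm:K-user-MAC}, just as Theorem~\ref{thm:0-cap_mixed_Gaussian_MAC} is read off from Theorem~\ref{thm:GQS-fading_MAC}; your steps (a finite union of null events, $w(f<c)=0$ for $c=w\text{-ess.inf}\,f$, and closedness of the resulting polytope) just make that one-line step explicit.

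Working directly with the strict-inequality event $\boldsymbol{R}\notin\Pi_{\boldsymbol{P},H}$ is the right choice. The side remark that the ``$\le$'' form could be handled by adapting Remark~\ref{remark:inner_bound2} verbatim fails in degenerate cases. For example, if some $h_k=0$ with positive probability, the ``$\le$'' set with $\varepsilon=0$ is empty, yet the origin is always $0$-achievable.
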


\begin{remark}
    It is well-known that $\Pi_{\boldsymbol{P}, \eta}$ as defined in \eqref{eq:k-user-capacity} forms a polymatroid. When $N_\theta$ is a constant independent of $\theta$, Tse and Hanly \cite{Tse-Hanly98} (also, cf.\ Gallager \cite{Gallager94}, Shamai and Wyner \cite{SW97}) have characterized the structure of a kind of \textit{ergodic  capacity} region termed the \textit{throughput capacity} region for the fading Gaussian MAC using the theory of polymatroids.
    Note that the throughput capacity is defined under the assumption of ergodic fading, and is therefore not directly applicable to the quasi-static fading scenario studied in this paper. 
    It should be emphasized here that a technological ancestor of \eqref{eq:K-GQS-fading_MAC} had earlier been established by Tse, Viswanath, and Zheng \cite{TVZ2004} in terms of diversity-multiplexing tradeoff.

    From the historical point of view, the polymatroid (or contra-polymatroid) structure latent  in information theory was discovered earlier by Fujishige \cite{Fujishige78} and Han \cite{Han79a,Han79b}, and subsequently developed by Han \cite{Han80}, Han and Kobayashi \cite{Han-Kobayashi80}, all of which originate from the combinatorial concept invented by Edmonds \cite{Edmonds70}. Polymatroid structure provides various kinds of advantages for coding problems in network information theory. For example, it enables us to easily compute the maximum of a linear function $\sum_{k=1}^K \mu_k R_k$ that is attained at  extremal points of the polymatroid, which are easily (successively) searched and computed (cf.\ \cite{Tse-Hanly98}). However, we do not touch this problem here. 
    \qed
\end{remark}

\begin{remark}
We notice that throughout this paper we have confined ourselves to within \textit{fixed-rate coding}. On the other hand, if \textit{variable-rate coding} is also allowed, we may enlarge the 0-capacity region \eqref{eq:K-GQS-fading_MAC2}. This is indeed possible, for example, in the case of CSIRT, because we may vary \textit{coding rate} depending on the parameter $\eta \in \mathcal{H}$, which is denoted by $\boldsymbol{R}_\eta = (R_{1, \eta}, R_{2,\eta}, \ldots, R_{K,\eta})$, where the overall coding rate $\boldsymbol{R}$ is given by
\begin{align}
    \boldsymbol{R} = \mathbb{E} \boldsymbol{R}_H = \int_H \boldsymbol{R}_\eta \, dw(\eta). 
\end{align}
Instead of \eqref{eq:K-GQS-capacity-function}, let us define
\begin{align}
    G_{\boldsymbol{P}}(\mathcal{S}) \equiv \frac{1}{2} \mathbb{E} \log \Big( 1 + \frac{\sum_{k \in \mathcal{S}} H_k^2 P_k}{N_{\tilde{\Theta}}} \Big), \label{eq:K-GQS-capacity-function2}
\end{align}
where $H = (H_1, H_2, \ldots, H_K, \tilde{\Theta})$ and $\tilde{\Theta}$ is the random variable taking values in $\Theta$.
Then, in a manner similar to the proof of Theorem \ref{thm:QS-Gaussian-noCSI-CSIRT}, we obtain the 0-capacity region, denoted by $\tilde{C}_{\boldsymbol{P}}^{\rm rt} (0 \midd \boldsymbol{W})$, as follows:
\begin{theorem}[CSIRT-Variable-Rate Coding] \label{thm:k-user-vr-capacity}
 For a $K$-user quasi-static fading Gaussian MAC $\boldsymbol{W}$,
    \begin{align}
        \tilde{C}_{\boldsymbol{P}}^{\rm rt} (0 \midd \boldsymbol{W}) = \Big\{\boldsymbol{R} \, \Big| \, & R_\mathcal{S} \le G_{\boldsymbol{P}}(\mathcal{S}) ~\text{for all nonempty } \mathcal{S} \subseteq [1:K]  \Big\}, \label{eq:k-user-vr-capacity}
    \end{align}
    which exactly coincides with the \textit{ergodic capacity} region for mixed MACs (\cite{BPS98}, \cite{EGL2010}, \cite{Gallager94}, \cite{Tse-Hanly98}). 
\end{theorem}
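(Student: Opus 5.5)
Theorem \ref{thm:k-user-vr-capacity} splits into an achievability part and a converse part, both run per realization $\eta$ of the CSI. This is possible because under CSIRT the encoders and the decoder know $H=\eta$, so the rate vector $\boldsymbol{R}_\eta$ and the code may depend on $\eta$. The target region is the convex set of $\boldsymbol{R}$ satisfying $R_\mathcal{S}\le G_{\boldsymbol P}(\mathcal S)$ for every nonempty $\mathcal S\subseteq[1:K]$, where $G_{\boldsymbol P}(\mathcal S)=\mathbb{E}\, G_{\boldsymbol P,H}(\mathcal S)$.

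\emph{Direct part.} Fix $\boldsymbol R$ in the interior of the target region. We must construct a measurable allocation $\eta\mapsto\boldsymbol R_\eta\in\Pi_{\boldsymbol P,\eta}$ with $\int\boldsymbol R_\eta\,dw(\eta)=\boldsymbol R$.
\begin{enumerate}
\item Use the polymatroid structure of $\Pi_{\boldsymbol P,\eta}$ (Remark following Corollary \ref{coro:K-user-MAC}). By Edmonds' theorem, the Aumann integral $\int\Pi_{\boldsymbol P,\eta}\,dw(\eta)$ equals the polymatroid with rank function $\mathbb{E}\,G_{\boldsymbol P,H}(\mathcal S)$. This holds because the extreme points of each $\Pi_{\boldsymbol P,\eta}$ are indexed by the $K!$ permutations (successive decoding orders) and depend measurably on $\eta$. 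Maximizing the linear functional $\sum_k\mu_k R_k$ is attained at the same ordering for every $\eta$, so support functions add under integration. This is the Tse--Hanly argument.
\item Given the allocation, for each $\eta$ use a code for the stationary memoryless Gaussian component $\boldsymbol W_\eta$ at rates $\boldsymbol R_\eta-\gamma$ with error probability $\varepsilon_{\eta,n}\to0$ (Theorem \ref{thm:0-cap_Gaussian_MAC} in its $K$-user form).
\item To pass the limit through the integral, argue as in the direct part of Theorem \ref{thm:noCSI-CSIRT}(ii): either quantize $\mathcal H$ via assumptions (A1)--(A2) as in the proof of Theorem \ref{thm:GQS-fading_MAC}, or apply dominated convergence to $\int\varepsilon_{\eta,n}\,dw(\eta)$, since $\varepsilon_{\eta,n}\le1$ and $\varepsilon_{\eta,n}\to0$ pointwise.
\end{enumerate}
The realized average rate is $\boldsymbol R-\gamma\boldsymbol 1$. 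Letting $\gamma\to0$ and using closedness of the capacity region gives achievability.

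\emph{Converse.} Take CSIRT variable-rate codes with $\varepsilon_n\to0$ and average rate $\boldsymbol R$. The plan has two steps.
\begin{enumerate}
\item Obtain a per-$\eta$ statement. Use Lemma \ref{lemma:finite_length_LB2}, adapted to $\eta$-dependent $M_{n,\eta}^{(k)}$, with $Q$ chosen as the true conditional output laws, together with the strong converse bounds \eqref{eq:P4-3a}--\eqref{eq:P4-3c} in their $K$-user form. Conclude that along a subsequence, for $w$-a.e.\ $\eta$ with $\varepsilon_{\eta,n}\to0$, the limit rates satisfy $\limsup_n R_{\mathcal S,\eta}^{(n)}\le G_{\boldsymbol P,\eta}(\mathcal S)+\gamma$.
\item Integrate over $\eta$. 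The main obstacle is that the rates $R_{\mathcal S,\eta}^{(n)}$ may be unbounded on sets of vanishing probability, while $\varepsilon_n\to0$ only gives $\varepsilon_{\eta,n}\to0$ in $L^1(w)$. To handle this, first use the Fano-type bound in \eqref{eq:bound_M1} per $\eta$, which gives
\begin{align}
(1-\varepsilon_{\eta,n})R_{\mathcal S,\eta}^{(n)}\le G_{\boldsymbol P,\eta}(\mathcal S)+\tfrac1n.
\end{align}
This bound is uniform in the power constraint and uses $\varepsilon_{\eta,n}\log M$ control. Then pass to a subsequence on which $\varepsilon_{\eta,n}\to0$ a.e., and apply Fatou's lemma and uniform integrability, using $\mathbb E\,G_{\boldsymbol P,H}<\infty$, to obtain
\begin{align}
R_\mathcal S=\lim_n\int R_{\mathcal S,\eta}^{(n)}\,dw(\eta)\le\int G_{\boldsymbol P,\eta}(\mathcal S)\,dw(\eta)=G_{\boldsymbol P}(\mathcal S).
\end{align}
\end{enumerate}
If uniform integrability fails, the fallback is to truncate on the sets where $\varepsilon_{\eta,n}\ge\delta$ and bound their contribution by $\varepsilon_n/\delta$ times the maximal rate. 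This in turn requires restricting to codes whose rates are bounded by a multiple of $G_{\boldsymbol P,\eta}$, which is justified because excess rate only increases error.

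Finally, the identification with the ergodic capacity region follows by comparing the resulting polymatroid with the Tse--Hanly throughput region.
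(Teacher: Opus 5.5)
Your direct part is sound and follows the paper's route of selecting a component code according to the observed CSI. The polymatroid-sum step you add is a detail the paper leaves implicit, and it is correct: it builds an allocation $\eta\mapsto\boldsymbol R_\eta\in\Pi_{\boldsymbol P,\eta}$ with $\int\boldsymbol R_\eta\,dw(\eta)=\boldsymbol R$, for instance by averaging the greedy vertices that share a decoding order. A minor fix there is to use $(\boldsymbol R_\eta-\gamma\boldsymbol 1)^+$.

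\textbf{The gap is in Step 2 of your converse, and it cannot be closed.} In the formulation you adopt, the per-state rates $R^{(n)}_{\mathcal S,\eta}$ may vary with $n$ and $R_{\mathcal S}=\lim_n\int R^{(n)}_{\mathcal S,\eta}\,dw(\eta)$. Under that formulation the converse is false whenever $w$ is not finitely supported (e.g.\ Rayleigh fading):
\begin{itemize}
\item Take $A_n\subseteq\mathcal H$ with $w(A_n)=\delta_n\to0$.
\item Send a single message off $A_n$.
\item On $A_n$, let user 1 send $\lceil e^{nc/\delta_n}\rceil$ messages with any power-feasible codebook and an arbitrary decoder.
\item Then $\varepsilon_n\le\delta_n\to0$, while $\int R^{(n)}_{1,\eta}\,dw(\eta)\ge c$ for every $c$.
\end{itemize}
This example shows why each of your tools fails:
\begin{itemize}
\item The Fano bound $(1-\varepsilon_{\eta,n})R^{(n)}_{\mathcal S,\eta}\le G_{\boldsymbol P,\eta}(\mathcal S)+\frac1n$ is vacuous on $A_n$, where $\varepsilon_{\eta,n}\approx1$.
\item Fatou's lemma points the wrong way. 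You would need reverse Fatou, i.e.\ an integrable dominating function for $R^{(n)}_{\mathcal S,H}$. Nothing supplies one, and $\mathbb E\,G_{\boldsymbol P,H}<\infty$ does not dominate the rates where errors are large.
\item The fallback ``restrict to rates bounded by a multiple of $G_{\boldsymbol P,\eta}$, since excess rate only increases error'' is not a valid reduction. Lowering the rate on $A_n$ changes the very average rate you must bound, and the extra error it would remove costs nothing in $\varepsilon_n$.
\end{itemize}

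\textbf{The repair is to use the paper's definition.} There, $\boldsymbol R_\eta$ is the rate used in state $\eta$ and does not depend on $n$, so $\liminf_n\frac1n\log M^{(k)}_{n,\eta}\ge R_{k,\eta}$ for each $\eta$, and $\boldsymbol R=\int\boldsymbol R_\eta\,dw(\eta)$. With this definition your Step 1 already finishes the proof, and Step 2 becomes trivial:
\begin{enumerate}
\item The Fatou argument used in the converse of Theorem \ref{thm:0-capacity_region} (from \eqref{eq:P5-0c} to \eqref{eq:tilde_e}) gives $\liminf_n\varepsilon_{\eta,n}=0$ for $w$-a.e.\ $\eta$. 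Your $L^1$-to-a.e.\ subsequence argument works equally well.
\item Along the corresponding subsequence, the per-state Fano bound yields $R_{\mathcal S,\eta}\le G_{\boldsymbol P,\eta}(\mathcal S)$, i.e.\ $\boldsymbol R_\eta\in\Pi_{\boldsymbol P,\eta}$ a.e.
\item Monotonicity of the integral then gives $R_{\mathcal S}=\int R_{\mathcal S,\eta}\,dw(\eta)\le G_{\boldsymbol P}(\mathcal S)$.
\end{enumerate}
This needs no uniform integrability, no $K$-user strong converse, and no Lemma \ref{lemma:finite_length_LB2}. Fano per state suffices once the per-state error vanishes.
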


\begin{remark}
    The achievability proof of Theorem \ref{thm:k-user-vr-capacity} is straightforward because the encoder and decoder can choose the MAC code according to the observed CSI (fixed-rate) in a similar way to the two-user case as in the proof of Lemma \ref{lem:CSIT-error-bound}. As for the converse part of Theorem  \ref{thm:k-user-vr-capacity}, we need the technique used from \eqref{eq:P5-0c} to \eqref{eq:tilde_e} in the converse proof of Theorem \ref{thm:0-capacity_region}.
  \qed
\end{remark}

It is an easy task to check that the RHS of \eqref{eq:k-user-vr-capacity} also forms a polymatroid \cite{Gallager94}.
This form of polymatroid appeared in Biglieri et al.\ \cite{BPS98}, Effros et al.\ \cite{EGL2010}, Shamai and Wyner \cite{SW97}, and Tse and Hanly \cite{Tse-Hanly98}, etc.
It is obvious that the region on the RHS of \eqref{eq:k-user-vr-capacity} strictly contains both regions of \eqref{eq:K-GQS-fading_MAC} and \eqref{eq:K-GQS-fading_MAC2}. 
The optimal power control problem using the polymatroid structure \eqref{eq:k-user-vr-capacity} is treated in \cite{Tse-Hanly98}.
    \qed
\end{remark}

%========================================================
%===================== Section 11 =======================
%========================================================
\section{Concluding Remarks}

So far, we have presented a unified information-spectrum treatment of the $\varepsilon$-capacity region for a broad class of mixed multiple-access channels with cost constraints, organized around a single foundational formula (Theorem \ref{thm:general_formula}) from which all subsequent characterizations---for additive MACs, finite-alphabet memoryless MACs, Gaussian MACs, and quasi-static fading Gaussian MACs under all four CSI scenarios---are derived as specializations.
The established MAC results (Theorems \ref{thm:P2}--\ref{thm:k-user-vr-capacity}) include previous related results such as in Biglieri et al.\ \cite{BPS98}, Caire et al.\ \cite{CTB99}, Effros et al.\ \cite{EGL2010}, Ozarow et al.\ \cite{OSW94}, Shamai and Wyner \cite{SW97}, as special cases.

Thus, although it seems that everything has been well done for mixed MACs with Gaussian components, unfortunately it lacks the $\varepsilon$-capacity region formula (with finite alphabets) extending Theorem \ref{thm:0-capacity_region} for $\varepsilon = 0$.
Why does the $\varepsilon$-version of Theorem \ref{thm:0-capacity_region} not hold in the finite alphabet case, while Theorem \ref{thm:GQS-fading_MAC} for mixed Gaussian MACs does?
The reason is that, while Gaussian MACs are simple and highly structured, finite-alphabet MACs are highly involved and not simply structured.

This observation motivates us to think about the following question: What happens to those theorems established in this paper for Gaussian MACs if we confine everything to within the finite-alphabet regime? 
However, this question may cause a bulk of intractable difficult problems beyond those Gaussian cases treated here. Almost all of them remain to be solved in the future.

However, we may simply suggest a finite-alphabet example in the single-user case as follows.
Let us consider a mixed memoryless channel with components $W_\eta : \mathcal{X} \to \mathcal{Z}$ generated by $\eta$ that is taken by $H$ with probability $P_H(\eta) = w(\eta)$.
We, for simplicity, introduce a class of mixed channels called the \textit{well-ordered} mixed channels (WOMC) as follows:
\begin{definition}[{\cite{YHN2016}}]
Let $c_{\eta, \Gamma}$ denote the capacity of component channel $W_\eta$ with cost constraint $\Gamma$, that is,
\begin{align}
    c_{\eta, \Gamma} &= \max_{P: \, \mathbb{E} c(X_P) \le \Gamma} I(P, W_\eta) \nonumber\\
    & \equiv I(P_\eta, W_\eta ),
\end{align}
where $I(P, W_\eta)$ denotes the mutual information given by the joint probability distribution $P(x) W_\eta(z | x)$, and $P_\eta$ denotes the capacity-achieving input distribution for $W_\eta$.
If $c_{\eta, \Gamma} \le I(P_\eta, W_{\eta'})$ for all $\eta' \in \mathcal{H}$ such that $c_{\eta, \Gamma} \le c_{\eta', \Gamma}$, then the mixed channel with components $W_\eta$ is said to be well-ordered.
    \qed
\end{definition}
It should be noted that this class typically includes 1) mixed channels with binary symmetric channel (BSC) with crossover probability $\eta$ as components \cite{Kieffer2007}, \cite{PPV2011}, 2) mixed channels with binary erasure channel (BEC) with erasure probability $\eta$ as components, and 3) single-user quasi-static fading Gaussian channels with fading coefficient $\eta$. 
Then, with WOMCs, Yagi et al.\ \cite[Theorem 3]{YHN2016} claim that \eqref{eq:P9-80b} reduces to 
\begin{align}
    C_{\Gamma}^{\rm no}(\varepsilon \midd \boldsymbol{W}) &= \sup \left\{R \, \left| \, 
 \int_{\{\eta \, | \,   c_{\eta, \Gamma} \le R \}} dw(\eta)\le \varepsilon \right.\right\}  \nonumber\\
 &= \sup \left\{R \, \left| \, 
 \Pr \{ c_{H, \Gamma} \le R \}  \le \varepsilon \right.\right\}. \label{eq:P9-85}
\end{align}
The RHS of \eqref{eq:P9-85} is nothing but the $\varepsilon$-outage capacity in the case of finite alphabets, and \eqref{eq:P9-85} corresponds to Corollary \ref{coro:GQS-fading_ch} in the case of Gaussian mixed channels.

Furthermore, we can also derive the finite-alphabet counterpart of Theorem \ref{thm:QS-Gaussian-noCSI-CSIRT}, that is;
\begin{theorem} \label{thm:well-ordered}
    For a WOMC $\boldsymbol{W}$, it holds that
    \begin{align}
    C_{\Gamma}^{\rm no}(\varepsilon \midd \boldsymbol{W}) &=  C_{\Gamma}^{\rm rx}(\varepsilon \midd \boldsymbol{W}) = C_{\Gamma}^{\rm tx}(\varepsilon \midd \boldsymbol{W}) = C_{\Gamma}^{\rm rt}(\varepsilon \midd \boldsymbol{W}) \nonumber\\
    &= \sup \left\{R \, \left| \, 
 \Pr \{ c_{H, \Gamma} \le R \}  \le \varepsilon \right.\right\}. \label{eq:well-ordered}
\end{align}
\end{theorem}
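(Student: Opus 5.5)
The plan is to sandwich all four capacities between the no-CSI value and the CSIRT value. By \eqref{eq:CSIR-relation2} and \eqref{eq:CSIT-relation2}, specialized to the single-user case, we have $C_{\Gamma}^{\rm no}\subseteq C_{\Gamma}^{\rm rx}, C_{\Gamma}^{\rm tx}\subseteq C_{\Gamma}^{\rm rt}$. By \eqref{eq:P9-85}, which is Yagi et al.'s result for WOMCs and which we take as given, $C_{\Gamma}^{\rm no}(\varepsilon|\boldsymbol{W})$ already equals the outage expression $C^\ast\equiv\sup\{R\,|\,\Pr\{c_{H,\Gamma}\le R\}\le\varepsilon\}$. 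So it suffices to prove the converse $C_{\Gamma}^{\rm rt}(\varepsilon|\boldsymbol{W})\le C^\ast$.

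For this converse I would follow the proof of Theorem \ref{thm:QS-Gaussian-noCSI-CSIRT}, replacing its Gaussian ingredient by a finite-alphabet one. Take an $\varepsilon$-achievable rate $R$ under CSIRT with codes $\{C_{\eta,n}\}$. Apply the single-user version of Lemma \ref{lemma:finite_length_LB2} with $Q_\eta^n=P_{Z_\eta^n}$. By Remark \ref{rem:CSIRT-Converse}, this lemma needs no stationarity, so it holds for general $\mathcal{H}$. Arguing exactly as in the derivation of \eqref{eq:P5-38}, using the sets $S_{\eta,n}$ and the convergence $\xi_{\eta,n}\to0$, we obtain
\begin{align}
\varepsilon\ \ge\ \int_{\{\eta\,|\,\overline{I}(\boldsymbol{X}_\eta;\boldsymbol{Z}_\eta)\le R-3\gamma\}}dw(\eta). \nonumber
\end{align}
Here $\boldsymbol{X}_\eta$ is the uniform distribution on the $\eta$-dependent codebook.

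It then remains to show $\overline{I}(\boldsymbol{X}_\eta;\boldsymbol{Z}_\eta)\le c_{\eta,\Gamma}$ for every $\boldsymbol{X}_\eta\in\mathcal{S}_\Gamma$. Combined with the display, this gives $\Pr\{c_{H,\Gamma}\le R-3\gamma\}\le\varepsilon$; letting $\gamma\to0$ and using the closure remark (Remark \ref{remark:inner_bound2}) to pass between strict and non-strict inequalities yields $R\le C^\ast$. This bound is the strong converse property for the single stationary memoryless finite-alphabet channel $W_\eta$ under an additive cost constraint, in the information spectrum sense. I would prove it as in part (a) of the proof of Theorem \ref{thm:StrongC_Gaussian_MAC}. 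Replace $P_{Z_\eta^n}$ by the product reference measure $(P_\eta W_\eta)^n$, the capacity-achieving output distribution; Lemma \ref{lem:div_spectrum} justifies this swap. For a fixed $\boldsymbol{x}$ with $\frac1n c(\boldsymbol{x})\le\Gamma$, the density $\log\frac{W_\eta^n(Z^n|\boldsymbol{x})}{(P_\eta W_\eta)^n(Z^n)}$ is a sum of independent terms. Each term has mean $D(W_\eta(\cdot|x_i)\|P_\eta W_\eta)$, and the Kuhn--Tucker conditions for the cost-constrained capacity give $D(W_\eta(\cdot|x)\|P_\eta W_\eta)\le c_{\eta,\Gamma}+\lambda(c(x)-\Gamma)$ with $\lambda\ge0$, so averaging under the cost constraint bounds the mean by $c_{\eta,\Gamma}$. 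The variance of each term is uniformly bounded by a constant depending on $|\mathcal{X}|$ and $|\mathcal{Z}|$. Chebyshev's inequality, applied uniformly in $\boldsymbol{x}$, then gives the bound.

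The main obstacle I expect is a technical point in this Kuhn--Tucker step. Output letters where $P_\eta W_\eta(z)=0$ but $W_\eta(z|x)>0$ would make the divergence infinite, but the KKT characterization of the optimal output distribution rules this out for every admissible $x$. If the cost constraint is inactive, we simply take $\lambda=0$.

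I note that with CSIRT the well-ordered property seems not to be needed at all: the converse only uses each component's own capacity. The WOMC hypothesis enters solely through \eqref{eq:P9-85}, which makes the no-CSI lower bound meet the CSIRT upper bound. I would check the measurability of $\eta\mapsto c_{\eta,\Gamma}$, or treat countable $\mathcal{H}$ first.
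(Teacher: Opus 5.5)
Your proposal is correct and follows the paper's own route. The paper sandwiches the four capacities via the inclusions, takes $C_\Gamma^{\rm no}(\varepsilon|\boldsymbol{W})$ from \eqref{eq:P9-85}, and bounds $C_\Gamma^{\rm rt}(\varepsilon|\boldsymbol{W})$ by the outage expression using Lemma \ref{lemma:finite_length_LB2} together with the per-component strong converse $\overline{I}(\boldsymbol{X}_\eta;\boldsymbol{Z}_\eta)\le c_{\eta,\Gamma}$. It cites that strong converse from Han's Theorem 3.7.2, whereas you derive it via Kuhn--Tucker and Chebyshev; the per-letter variance constant there depends on $W_\eta$, not only on $|\mathcal{X}|$ and $|\mathcal{Z}|$, which is harmless since the bound is needed only for each fixed $\eta$.
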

\textit{Proof:} ~~~The proof is similar to that of Theorem \ref{thm:QS-Gaussian-noCSI-CSIRT} where the point is to show $C_{\Gamma}^{\rm no}(\varepsilon \midd \boldsymbol{W}) \ge C_{\Gamma}^{\rm rt}(\varepsilon \midd \boldsymbol{W})$ using, in addition to \eqref{eq:P9-85}, the information spectrum argument based on the strong converse property of finite-alphabet memoryless channels with cost constraint (cf.\ Han \cite[Theorem 3.7.2]{Han2003}).
\qed

Thus, it turns out that the strong converse property (\textit{not} the $\varepsilon$-strong converse) plays the crucial role in both cases of Theorem \ref{thm:QS-Gaussian-noCSI-CSIRT} and Theorem \ref{thm:well-ordered}.

\medskip
Finally, we emphasize that the fading model treated in Sections \ref{sec:P5}--\ref{sec:P9} is quasi-static: the channel state is drawn once and held fixed over the entire codeword. A natural next step is to extend the present framework to block-fading models in which the state changes over several coherence blocks within a codeword, interpolating between the quasi-static and ergodic regimes. We expect the information-spectrum method to remain applicable, since the spectral-inf/sup mutual informations are defined for arbitrary state processes. However, the explicit evaluation---the ``information spectrum calculus" referred to in Sec. \ref{subsec:perspective}---would need to be carried out afresh for a state process with nontrivial temporal correlation.

We hope that the unified perspective offered here---in which capacity-region problems for nonergodic mixed MACs are systematically reduced to computations within a single information-spectrum formula---proves useful as a template for resolving these and related open problems.

%=========================================================================
%---------------------- Begin of Appendix --------------------------------
%=========================================================================

% \bigskip
% \begin{center}
%  \textbf{\LARGE Appendix}
% \end{center}

\appendices

%=========================================================================
%==================== Appendix A: Proof of Lemma 1 =======================
%=========================================================================
\section{Proof of Lemma \ref{lem:div_spectrum}} \label{append:proof_lem_div}

Let $T_n \subseteq \mathcal{Z}^n$ be defined as
\begin{align}
    T_n \equiv \left\{ \boldsymbol{z} \in \mathcal{Z}^n \, \Big| \, \frac{1}{n} \log \frac{1}{P_{U^n}(\boldsymbol{z})} \ge \frac{1}{n} \log \frac{1}{P_{Z^n}(\boldsymbol{z})} - \gamma \right\}. 
\end{align}
Then, for any $\boldsymbol{z} \in T_n^c$ it holds that
\begin{align}
    \frac{P_{Z^n}(\boldsymbol{z})}{P_{U^n}(\boldsymbol{z})} < e^{- n\gamma},  
\end{align}
which yields
\begin{align}
     \Pr \{ Z^n \in T_n^c \} &= \sum_{\boldsymbol{z} \in T_n^c} P_{Z^n} (\boldsymbol{z}) \nonumber\\
     & \le \sum_{\boldsymbol{z} \in T_n^c} P_{U^n} (\boldsymbol{z}) \cdot e^{- n\gamma} \nonumber\\
     & \le e^{- n\gamma}.
\end{align}
Hence, we obtain
\begin{align}
    \Pr \{ Z^n \in T_n \} &= 1 - \Pr \{ Z^n \in T_n^c \} \nonumber\\
    & \ge 1 - e^{- n\gamma},
\end{align}
completing the proof of Lemma \ref{lem:div_spectrum}.
\qed

%==============================================================
%=============== Appendix B: Proof of Lemma 3 =================
%==============================================================
\section{Proof of Lemma \ref{hodai:1}} \label{appendix:proof_lemma_3}

In order to prove Lemma \ref{hodai:1}, we need the following two lemmas:
\begin{lemma}[{Han \cite[The proof of Lemma 3.3.3]{Han2003}}] \label{hodai:5}
Suppose that alphabets $\mathcal{X}, \mathcal{Y}$ and $\mathcal{Z}$ are finite.
There exists a subset $\Theta_n^* \subseteq \Theta$ such that $\lim_{n\to\infty}\int_{\Theta-\Theta^*_n}dw(\theta)= 0$ and
\begin{align}\label{eq:77}
W^n_{\theta}(\boldsymbol{z}|\boldsymbol{x}, \boldsymbol{y}) \le e^{\sqrt{n}}W^n(\boldsymbol{z}|\boldsymbol{x}, \boldsymbol{y})\quad (\forall \boldsymbol{x}\in \mathcal{X}^n, \forall \boldsymbol{y}\in \mathcal{Y}^n, \forall \boldsymbol{z}\in \mathcal{Z}^n; \forall \theta\in \Theta_n^*).
\end{align}
\end{lemma}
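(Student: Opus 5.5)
The plan is to exploit finiteness of the alphabets: the mixture will be split into countably many ``cells'' of nearly identical component channels, each of which carries enough weight for the mixture to dominate any member of the cell. Fix $n$. Since $\mathcal{X},\mathcal{Y},\mathcal{Z}$ are finite, each single-letter channel $W_\theta$ is a point of the compact simplex-product $\Delta\equiv\mathcal{P}(\mathcal{Z})^{\mathcal{X}\times\mathcal{Y}}$, of dimension $D=|\mathcal{X}||\mathcal{Y}|(|\mathcal{Z}|-1)$. I will partition $\Delta$ into $m_n$ measurable cells $A_1,\dots,A_{m_n}$ of small diameter, and set $\Theta_j=\{\theta \mid W_\theta\in A_j\}$. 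Inside one cell I want a uniform multiplicative comparison: $W_\theta^n(\boldsymbol{z}|\boldsymbol{x},\boldsymbol{y})\le e^{\sqrt n/2}W_{\theta'}^n(\boldsymbol{z}|\boldsymbol{x},\boldsymbol{y})$ for all $\theta,\theta'\in\Theta_j$ and all sequences.

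The multiplicative comparison is the main obstacle, because entries of $W_\theta$ may be zero or tiny, so additive closeness does not give ratio bounds. My first attempt will be to use cells that are small in the log-ratio sense away from the boundary. Concretely, each cell will also fix the support pattern $\{(x,y,z) \mid W_\theta(z|x,y)>0\}$ (finitely many patterns). On the support, I will use logarithmic bins for values in $[e^{-\sqrt n/4},1]$, with ratio width $e^{1/(2\sqrt n)}$ per letter, so that an $n$-fold product yields the factor $e^{\sqrt n/2}$. Values below $e^{-\sqrt n/4}$ get a separate treatment, since per-letter ratio control then fails. If that fails, the fallback is the approach of Han's Lemma~3.3.3: instead of comparing the components pairwise, compare $W_\theta^n$ directly to the mixture.

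The mixture comparison runs as follows. For $\theta\in\Theta_j$,
\[
W^n\ge\int_{\Theta_j}W^n_{\theta'}\,dw(\theta')\ge w(\Theta_j)\,e^{-\sqrt n/2}\,W_\theta^n .
\]
It therefore suffices to keep only cells with $w(\Theta_j)\ge e^{-\sqrt n/2}$. Accordingly, I will define $\Theta_n^*$ as the union of these heavy cells. The discarded mass is at most $m_n e^{-\sqrt n/2}$, where $m_n$ is the number of cells.

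With logarithmic bins of width $1/(2\sqrt n)$ over a range $\sqrt n/4$, one gets $m_n\le 2^{|\mathcal{X}||\mathcal{Y}||\mathcal{Z}|}(n/2+1)^{D}$, so $m_ne^{-\sqrt n/2}\to0$. This establishes \eqref{eq:77} with the required measure condition.

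What remains is the tiny-value issue. My plan is to perturb rather than bin the tiny values: replace each tiny entry by its bin representative and accept it only when it never actually occurs. The honest fallback is to cite the referenced proof of Han's Lemma~3.3.3, which handles the finite-alphabet case with a type-counting argument. In that argument one applies the comparison only on joint types, i.e.\ one compares $\prod_{x,y,z}W_\theta(z|x,y)^{nP(x,y,z)}$ across a cell, using the polynomial number of types. This is the step I expect to need the most care.
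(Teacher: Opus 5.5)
Your cell construction is sound where it applies. With a common support pattern and per-letter log-ratio at most $1/(2\sqrt n)$, two components in one cell satisfy $W_\theta^n\le e^{\sqrt n/2}W_{\theta'}^n$ everywhere. The mixture bound $W^n\ge w(\Theta_j)e^{-\sqrt n/2}W_\theta^n$ is right, and discarding light cells costs at most $m_ne^{-\sqrt n/2}\to0$. (Small correction: you must log-bin all $|\mathcal{X}||\mathcal{Y}||\mathcal{Z}|$ entries, not only $D$ free coordinates. The entry fixed by normalization gets only additive control, which says nothing about ratios when it is small. The count stays polynomial.)

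The gap is exactly the step you flag: components with some positive entry below $e^{-\sqrt n/4}$ lie in no cell, and your remedy for them fails. Inequality \eqref{eq:77} must hold for every $(\boldsymbol{x},\boldsymbol{y},\boldsymbol{z})$, including sequences in which the tiny letter occurs, so you cannot ``accept'' a representative only when that letter never occurs. Lumping tiny values into one bin also destroys ratio control: entries $e^{-\sqrt n/4}$ and $e^{-n^2}$ give $n$-letter probabilities differing by $e^{n^2-\sqrt n/4}$ after a single occurrence. Deferring to Han is not a proof, and describing his argument as a comparison ``across a cell'' reimports the same difficulty.

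The missing idea is that the lemma requires only $w(\Theta-\Theta_n^*)\to0$, with no rate, so these components can simply be thrown away. Let $\mu(\theta)>0$ be the smallest positive entry of $W_\theta$. The sets $\{\theta\mid\mu(\theta)<e^{-\sqrt n/4}\}$ decrease to $\emptyset$, so by continuity of $w$ their mass tends to $0$. Taking $\Theta_n^*$ to be the union of heavy cells built from the remaining $\theta$ completes your proof.

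Alternatively, the type-class route needs neither cells nor any special handling of small entries. For a joint type $T$, $g_T(\theta)\equiv W_\theta^n(\boldsymbol{z}|\boldsymbol{x},\boldsymbol{y})$ is the same for all sequences of type $T$. Let $B_T$ be the set of $\theta$ with $g_T(\theta)>0$ and $w\{\theta'\mid g_T(\theta')\ge e^{-\sqrt n/2}g_T(\theta)\}<e^{-\sqrt n/2}$. Since $s\mapsto w\{\theta'\mid g_T(\theta')\ge s\}$ is nonincreasing, $B_T=\{\theta\mid g_T(\theta)\in I\}$ for an interval $I$ unbounded above. Moreover $w\{g_T\ge s\}<e^{-\sqrt n/2}$ for every $s\in I$, whence $w(B_T)\le e^{-\sqrt n/2}$. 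The union $\bigcup_T B_T$ has mass at most $(n+1)^{|\mathcal{X}||\mathcal{Y}||\mathcal{Z}|}e^{-\sqrt n/2}\to0$. Outside it, integrating over $\{\theta'\mid g_T(\theta')\ge e^{-\sqrt n/2}g_T(\theta)\}$ gives $W^n\ge e^{-\sqrt n}W_\theta^n$ directly.
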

\begin{lemma}\label{hodai:8} For $\gamma>0$ and for all $\boldsymbol{x} \in \mathcal{X}^n, \boldsymbol{y} \in \mathcal{Y}^n$,
\begin{align}
\Pr\left\{\frac{1}{n}\log P_{Z^n_{\theta}|Y^n}(Z^n_{\theta}|\boldsymbol{y}) -\frac{1}{n}\log P_{Z^n|Y^n}(Z^n_{\theta}|\boldsymbol{y}) 
\ge -\gamma\right\} &\ge 1-e^{-n\gamma}\label{eq:78},\\
\Pr\left\{\frac{1}{n}\log P_{Z^n_{\theta}|X^n}(Z^n_{\theta}|\boldsymbol{x}) -\frac{1}{n}\log P_{Z^n|X^n}(Z^n_{\theta}|\boldsymbol{x}) 
\ge -\gamma\right\} &\ge 1-e^{-n\gamma}\label{eq:79},\\
\Pr\left\{\frac{1}{n}\log P_{Z^n_{\theta}}(Z^n_{\theta}) -\frac{1}{n}\log P_{Z^n}(Z^n_{\theta}) 
\ge -\gamma\right\} &\ge 1-e^{-n\gamma}\label{eq:80}.
\end{align}
\end{lemma}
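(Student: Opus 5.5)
The statement to prove is Lemma~\ref{hodai:8}. The plan is to apply Lemma~\ref{lem:div_spectrum} directly, conditionally on the fixed input sequence(s), with the two measures being the output law of the component channel and the output law of the mixed channel. The key observation is that in each inequality the random variable is $Z_\theta^n$, the output of the component channel $W_\theta^n$. The ``reference'' densities $P_{Z^n|Y^n}(\cdot|\boldsymbol{y})$, $P_{Z^n|X^n}(\cdot|\boldsymbol{x})$ and $P_{Z^n}(\cdot)$ are simply other probability distributions on $\mathcal{Z}^n$. So each claim has the form of Lemma~\ref{lem:div_spectrum}: for a random variable with law $P$ and any other law $Q$ on the same finite space,
\begin{align}
\Pr\left\{\frac{1}{n}\log\frac{P(Z)}{Q(Z)}\ge -\gamma\right\}\ge 1-e^{-n\gamma}. \nonumber
\end{align}

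Concretely, for \eqref{eq:78} fix $\boldsymbol{y}$. Let $P=P_{Z_\theta^n|Y^n}(\cdot|\boldsymbol{y})$; this is the law of $Z_\theta^n$ given $Y^n=\boldsymbol{y}$, and the probability in \eqref{eq:78} is taken under it. Let $Q=P_{Z^n|Y^n}(\cdot|\boldsymbol{y})$. Define the bad set
\begin{align}
T^c=\{\boldsymbol{z} : P(\boldsymbol{z})<e^{-n\gamma}Q(\boldsymbol{z})\}. \nonumber
\end{align}
Then $P(T^c)\le e^{-n\gamma}Q(T^c)\le e^{-n\gamma}$, exactly as in Appendix~\ref{append:proof_lem_div}. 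The inequalities \eqref{eq:79} and \eqref{eq:80} follow in the same way: for \eqref{eq:79}, fix $\boldsymbol{x}$ and take $P=P_{Z_\theta^n|X^n}(\cdot|\boldsymbol{x})$, $Q=P_{Z^n|X^n}(\cdot|\boldsymbol{x})$; for \eqref{eq:80}, take the unconditional laws $P=P_{Z_\theta^n}$, $Q=P_{Z^n}$. No channel structure is used, only that both are probability distributions. Hence the bound also holds for continuous alphabets, replacing pmfs by densities and sums by integrals, which is how it is later invoked in the Gaussian section.

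The only point needing care is interpretive rather than technical. The probability in each inequality is with respect to the conditional law of $Z_\theta^n$ given the fixed $\boldsymbol{y}$ (resp.\ $\boldsymbol{x}$), with the other input still random. Also, points where $Q=0$ automatically belong to $T$ (or carry $P$-mass zero in the bad set), so they cause no trouble. Having stated this, the proof is a three-line reduction to Lemma~\ref{lem:div_spectrum}.
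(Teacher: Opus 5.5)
Your proposal is correct and is essentially the paper's proof. The paper defines the good set $S_{1,n}(\boldsymbol{y})$ where the log-ratio is at least $-\gamma$. It notes that on the complement $P_{Z_\theta^n|Y^n}(\boldsymbol{z}|\boldsymbol{y}) \le e^{-n\gamma} P_{Z^n|Y^n}(\boldsymbol{z}|\boldsymbol{y})$, and summing gives the $e^{-n\gamma}$ bound. In other words, the paper repeats the argument of Lemma~\ref{lem:div_spectrum} inline instead of citing it, and treats the other two inequalities the same way.
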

\noindent
{\em Proof of Lemma \ref{hodai:8}:}
\quad Set
\begin{align}\label{eq:81-1}
S_{1,n}(\boldsymbol{y}) \equiv \left\{\boldsymbol{z}\in \mathcal{Z}^n\left|\frac{1}{n}\log P_{Z^n_{\theta}|Y^n}(\boldsymbol{z}|\boldsymbol{y}) -\frac{1}{n}\log P_{Z^n|Y^n}(\boldsymbol{z}|\boldsymbol{y}) 
\ge -\gamma\right.\right\}.
\end{align}
With the complement $S^c_{1,n}(\boldsymbol{y})$, we see that if $\boldsymbol{z} \in S^c_{1,n}(\boldsymbol{y})$ then $P_{Z^n_{\theta}|Y^n}(\boldsymbol{z}|\boldsymbol{y}) \le e^{-n\gamma}P_{Z^n|Y^n}(\boldsymbol{z}|\boldsymbol{y})$.
Hence,
\begin{align}
\Pr\{Z_{\theta}^n\in S^c_{1,n}(\boldsymbol{y})\}&=\sum_{\boldsymbol{z}\in S^c_{1,n}(\boldsymbol{y})}P_{Z^n_{\theta}|Y^n}(\boldsymbol{z}|\boldsymbol{y})\nonumber \\
&\le \sum_{\boldsymbol{z}\in S^c_{1,n}(\boldsymbol{y})}e^{-n\gamma}P_{Z^n|Y^n}(\boldsymbol{z}|\boldsymbol{y}) \le e^{-n\gamma},\label{eq:82}
\end{align}
which means \eqref{eq:78}. The proofs of \eqref{eq:79} and \eqref{eq:80} are similar and therefore omitted.
\qed

Now, with $\theta \in \Theta_n^*$ as given in Lemma \ref{hodai:5}, \eqref{eq:7} can be evaluated as follows:
\begin{align}\label{eq:83}
F_{\theta,n}(R_1, R_2|\boldsymbol{X}, \boldsymbol{Y}) 
&=
\Pr\left\{\left[\frac{1}{n} \log \frac{W^n(Z_{\theta}^n|X^n, Y^n)}{P_{Z^n|Y^n}(Z_{\theta}^n|Y^n)}\le R_1\right.\right.\nonumber\\
&\qquad \qquad  \text{or } \frac{1}{n} \log \frac{W^n(Z_{\theta}^n|X^n, Y^n)}{P_{Z^n|X^n}(Z_{\theta}^n|X^n)}\le R_2\nonumber \\
&\qquad \qquad  \text{or } \left.\left.\frac{1}{n} \log \frac{W^n(Z_{\theta}^n|X^n, Y^n)}{P_{Z^n}(Z_{\theta}^n)}\le R_1+R_2\right]
\cap [Z^n_{\theta} \in S_{1,n}(Y^n)]\right\}\nonumber \\
& \qquad +
\Pr\left\{\left[\frac{1}{n} \log \frac{W^n(Z_{\theta}^n|X^n, Y^n)}{P_{Z^n|Y^n}(Z_{\theta}^n|Y^n)}\le R_1\right.\right.\nonumber\\
&\qquad \qquad  \text{or } \frac{1}{n} \log \frac{W^n(Z_{\theta}^n|X^n, Y^n)}{P_{Z^n|X^n}(Z_{\theta}^n|X^n)}\le R_2\nonumber \\
& \qquad \qquad  \text{or } \left.\left.\frac{1}{n} \log \frac{W^n(Z_{\theta}^n|X^n, Y^n)}{P_{Z^n}(Z_{\theta}^n)}\le R_1+R_2\right]
\cap [Z^n_{\theta} \in S^c_{1,n}(Y^n)]\right\}\nonumber \\
&\le
\Pr\left\{\left[\frac{1}{n} \log \frac{W^n(Z_{\theta}^n|X^n, Y^n)}{P_{Z^n|Y^n}(Z_{\theta}^n|Y^n)}\le R_1\right.\right.\nonumber\\
&\qquad \qquad  \text{or } \frac{1}{n} \log \frac{W^n(Z_{\theta}^n|X^n, Y^n)}{P_{Z^n|X^n}(Z_{\theta}^n|X^n)}\le R_2\nonumber \\
& \qquad \qquad  \text{or } \left.\left.\frac{1}{n} \log \frac{W^n(Z_{\theta}^n|X^n, Y^n)}{P_{Z^n}(Z_{\theta}^n)}\le R_1+R_2\right]
\cap [Z^n_{\theta} \in S_{1,n}(Y^n)]\right\}\nonumber \\
& \qquad + \Pr\{Z^n_{\theta} \in S^c_{1,n}(Y^n)\}\nonumber \\
&\stackrel{(a)}{\le}
\Pr\left\{\left[\frac{1}{n} \log \frac{W^n(Z_{\theta}^n|X^n, Y^n)}{P_{Z^n|Y^n}(Z_{\theta}^n|Y^n)}\le R_1\right.\right.\nonumber\\
& \qquad \qquad  \text{or } \frac{1}{n} \log \frac{W^n(Z_{\theta}^n|X^n, Y^n)}{P_{Z^n|X^n}(Z_{\theta}^n|X^n)}\le R_2\nonumber \\
& \qquad \qquad  \text{or } \left.\left.\frac{1}{n} \log \frac{W^n(Z_{\theta}^n|X^n, Y^n)}{P_{Z^n}(Z_{\theta}^n)}\le R_1+R_2\right]
\cap [Z^n_{\theta} \in S_{1,n}(Y^n)]\right\}
 + e^{-n\gamma}\nonumber \\
&=  \Pr\left\{\left[\frac{1}{n} \log W^n(Z_{\theta}^n|X^n, Y^n)-\frac{1}{n}\log P_{Z^n|Y^n}(Z_{\theta}^n|Y^n)\le R_1\right.\right.\nonumber\\
& \qquad \qquad  \text{or } \frac{1}{n} \log \frac{W^n(Z_{\theta}^n|X^n, Y^n)}{P_{Z^n|X^n}(Z_{\theta}^n|X^n)}\le R_2\nonumber \\
& \qquad \qquad  \text{or } \left.\left.\frac{1}{n} \log \frac{W^n(Z_{\theta}^n|X^n, Y^n)}{P_{Z^n}(Z_{\theta}^n)}\le R_1+R_2\right]
\cap [Z^n_{\theta} \in S_{1,n}(Y^n)]\right\}
 + e^{-n\gamma}\nonumber \\
 &\stackrel{(b)}{\le}  \Pr\left\{\left[\frac{1}{n} \log W_{\theta}^n(Z_{\theta}^n|X^n, Y^n)-\frac{1}{n}\log P_{Z_{\theta}^n|Y^n}(Z_{\theta}^n|Y^n)\le R_1+\gamma+\frac{1}{n}\log\sqrt{n}
 \right.\right.\nonumber \\
&\qquad \qquad  \text{or } \frac{1}{n} \log \frac{W^n(Z_{\theta}^n|X^n, Y^n)}{P_{Z^n|X^n}(Z_{\theta}^n|X^n)}\le R_2\nonumber \\
&\qquad \qquad  \text{or } \left.\left.\frac{1}{n} \log \frac{W^n(Z_{\theta}^n|X^n, Y^n)}{P_{Z^n}(Z_{\theta}^n)}\le R_1+R_2\right]
\cap [Z^n_{\theta} \in S_{1,n}(Y^n)]\right\}
 + e^{-n\gamma} \nonumber
\end{align}
\begin{align}
\phantom{F_{\theta,n}(R_1, R_2|\boldsymbol{X}, \boldsymbol{Y})} &\le  \Pr\left\{\left[\frac{1}{n} \log \frac{W_{\theta}^n(Z_{\theta}^n|X^n, Y^n)}{P_{Z_{\theta}^n|Y^n}(Z_{\theta}^n|Y^n)}\le R_1+2\gamma \right.\right.\nonumber\\
&\qquad \qquad  \text{or } \frac{1}{n} \log \frac{W^n(Z_{\theta}^n|X^n, Y^n)}{P_{Z^n|X^n}(Z_{\theta}^n|X^n)}\le R_2\nonumber \\
&\qquad \qquad  \text{or } \left.\left.\frac{1}{n} \log \frac{W^n(Z_{\theta}^n|X^n, Y^n)}{P_{Z^n}(Z_{\theta}^n)}\le R_1+R_2\right]
\cap [Z^n_{\theta} \in S_{1,n}(Y^n)]\right\}
 + e^{-n\gamma}\nonumber \\
 &\le  \Pr\left\{\frac{1}{n} \log \frac{W_{\theta}^n(Z_{\theta}^n|X^n, Y^n)}{P_{Z_{\theta}^n|Y^n}(Z_{\theta}^n|Y^n)}\le R_1+2\gamma \right.\nonumber\\
&\qquad \qquad  \text{or } \frac{1}{n} \log \frac{W^n(Z_{\theta}^n|X^n, Y^n)}{P_{Z^n|X^n}(Z_{\theta}^n|X^n)}\le R_2\nonumber \\
&\qquad \qquad  \text{or } \left. \frac{1}{n} \log \frac{W^n(Z_{\theta}^n|X^n, Y^n)}{P_{Z^n}(Z_{\theta}^n)}\le R_1+R_2
\right\}
 + e^{-n\gamma}
\end{align}
for $n\ge n_0$ large enough, where $(a)$ is due to (\ref{eq:82}); $(b)$ is due to (\ref{eq:77}), (\ref{eq:78}) and (\ref{eq:81-1}).

Next, set
\begin{align}
S_{2,n}(\boldsymbol{x}) &\equiv \left\{\boldsymbol{z}\in \mathcal{Z}^n\left|\frac{1}{n}\log P_{Z^n_{\theta}|X^n}(\boldsymbol{z}|\boldsymbol{x}) -\frac{1}{n}\log P_{Z^n|X^n}(\boldsymbol{z}|\boldsymbol{x}) 
\ge -\gamma\right.\right\},\label{eq:84}\\
S_{3,n} &\equiv \left\{\boldsymbol{z}\in \mathcal{Z}^n\left|\frac{1}{n}\log P_{Z^n_{\theta}}(\boldsymbol{z}) -\frac{1}{n}\log P_{Z^n}(\boldsymbol{z}) 
\ge -\gamma\right.\right\}.\label{eq:85}
\end{align}
Then, repeating two times more the same procedure as above with (\ref{eq:79}), (\ref{eq:80}),  (\ref{eq:84}),  (\ref{eq:85})
instead of  (\ref{eq:78}) and (\ref{eq:81-1}),  we can establish the claim of Lemma \ref{hodai:1}.
\qed

%============================================================================
%======================= Appendix C: Proof of Lemma 5=======================
%============================================================================
\section{Proof of Lemma \ref{lem:trace}}
\label{app:trace_lemma}

This appendix proves the five trace relations in Lemma~\ref{lem:trace}.

\medskip
\noindent\textit{Proof of (T1): $\mathrm{tr}(S_n)\le n(P_1+P_2)$.}
By linearity of the trace, $\mathrm{tr}(S_n)=\mathrm{tr}(K_{X^n})+\mathrm{tr}(K_{Y^n})$. For any random vector $U^n$ with covariance matrix $K_{U^n}$ and mean $\mu^n$, the identity $\mathrm{tr}(K_{U^n})=\mathbb E[\|U^n\|^2]-\|\mu^n\|^2$ gives $\mathrm{tr}(K_{U^n})\le\mathbb E[\|U^n\|^2]$, since $\|\mu^n\|^2\ge0$. Applying this inequality to $U^n=X^n$ and $U^n=Y^n$, and using the (almost sure) power constraints $\|X^n\|^2\le nP_1$, $\|Y^n\|^2\le nP_2$, we obtain
\begin{align}
\mathrm{tr}(K_{X^n})\ \le\ \mathbb E[\|X^n\|^2]\ \le\ nP_1,\qquad \mathrm{tr}(K_{Y^n})\ \le\ \mathbb E[\|Y^n\|^2]\ \le\ nP_2.
\end{align}
Adding these two inequalities gives $\mathrm{tr}(S_n)\le n(P_1+P_2)$.

\medskip
\noindent\textit{Proof of (T2): $\mathrm{tr}(A_n)\le nP_1$ and $\mathrm{tr}(B_n)\le nP_2$.}
By the cyclic property of the trace and the idempotence of the orthogonal projection (i.e., $\mathcal P_{D_n}^2=\mathcal P_{D_n}$), we have
\begin{align}
\mathrm{tr}(A_n)\ =\ \mathrm{tr}(\mathcal P_{D_n}K_{X^n}\mathcal P_{D_n})\ =\ \mathrm{tr}(K_{X^n}\mathcal P_{D_n}^2)\ =\ \mathrm{tr}(K_{X^n}\mathcal P_{D_n})\ =\ \mathbb E[\|X_D^n\|^2]-\|\mathcal P_{D_n}\mu_1^n\|^2.
\end{align}
Since $\mathcal P_{D_n}\mu_1^n=\boldsymbol{0}$ (cf.\ \eqref{eq:zero_mean_diffuse}), we have $\mathrm{tr}(A_n)=\mathbb E[\|X_D^n\|^2]$. Since orthogonal projection does not increase the norm, $\|X_D^n\|^2=\|\mathcal P_{D_n}X^n\|^2\le\|X^n\|^2\le nP_1$ almost surely, and hence $\mathrm{tr}(A_n)\le nP_1$. An analogous argument gives $\mathrm{tr}(B_n)\le nP_2$.

\medskip
\noindent\textit{Proof of (T3): $\mathrm{tr}(A_n^2)\le b_n\,\mathrm{tr}(A_n)$ and $\mathrm{tr}(B_n^2)\le b_n\,\mathrm{tr}(B_n)$.}
Let $\lambda_1,\ldots,\lambda_{d_n}\ge0$ denote the eigenvalues of $A_n$. Since $A_n\preceq b_nI_{D_n}$, also $\lambda_i\le b_n$ for every $i=1,\ldots,d_n$, we have
\begin{align}
\mathrm{tr}(A_n^2)\ =\ \sum_{i=1}^{d_n}\lambda_i^2\ \le\ b_n\sum_{i=1}^{d_n}\lambda_i\ =\ b_n\,\mathrm{tr}(A_n).
\end{align}
An analogous argument gives $\mathrm{tr}(B_n^2)\le b_n\,\mathrm{tr}(B_n)$.

\medskip
\noindent\textit{Proof of (T4): $\mathrm{tr}(A_nB_n)\le\sqrt{\mathrm{tr}(A_n^2)\,\mathrm{tr}(B_n^2)}\le\big(\mathrm{tr}(A_n^2)+\mathrm{tr}(B_n^2)\big)/2$.}
Since $A_n,B_n\succeq0$, we have $\mathrm{tr}(A_nB_n)\ge0$, so it suffices to bound $|\mathrm{tr}(A_nB_n)|$. The first inequality is the Cauchy-Schwarz inequality for the trace (Frobenius) inner product $\langle M,M'\rangle\equiv\mathrm{tr}(MM')$ on symmetric matrices, applied to $M=A_n,M'=B_n$ (see, e.g., \cite[Sec.~5.2]{Horn-Johnson2013} for the Cauchy-Schwarz inequality with respect to the Frobenius inner product):
\begin{align}
\mathrm{tr}(A_nB_n)\ \le\ \sqrt{\mathrm{tr}(A_n^2)}\,\sqrt{\mathrm{tr}(B_n^2)}.
\end{align}
The second inequality is the elementary arithmetic mean-geometric mean (AM-GM) inequality $\sqrt{xy}\le(x+y)/2$ for $x,y\ge0$, applied with $x=\mathrm{tr}(A_n^2)\ge0$ and $y=\mathrm{tr}(B_n^2)\ge0$.

\medskip
\noindent\textit{Proof of (T5): $\mathbb V(\Xi_n)=\mathrm{tr}(A_nB_n)$.}
Since $\Xi_n=\langle X_D^n,Y_D^n\rangle=(X_D^n)^\top Y_D^n$ and $\mathbb E[\Xi_n]=0$, it holds that $\mathbb V(\Xi_n)=\mathbb E[\Xi_n^2]$. Since $X^n\perp Y^n$ (hence $X_D^n\perp Y_D^n$), conditioning on $X_D^n$ and using $\mathbb E[Y_D^n(Y_D^n)^\top]=B_n$ (the covariance of $Y_D^n$, which has zero mean by \eqref{eq:zero_mean_diffuse}),
\begin{align}
\mathbb E[\Xi_n^2]\ =\ \mathbb E\big[(X_D^n)^\top Y_D^n(Y_D^n)^\top X_D^n\big]\ =\ \mathbb E_{X_D^n}\Big[(X_D^n)^\top\,\mathbb E_{Y_D^n}\big[Y_D^n(Y_D^n)^\top\big]\,X_D^n\Big]\ =\ \mathbb E\big[(X_D^n)^\top B_nX_D^n\big].
\end{align}
Using the trace identity $\boldsymbol{v}^\top M\boldsymbol{v}=\mathrm{tr}(M\boldsymbol{v}\boldsymbol{v}^\top)$ for any vector $\boldsymbol{v}$ and square matrix $M$, and then the linearity of trace and expectation, we have
\begin{align}
\mathbb E\big[(X_D^n)^\top B_nX_D^n\big]\ =\ \mathbb E\big[\mathrm{tr}\big(B_nX_D^n(X_D^n)^\top\big)\big]\ =\ \mathrm{tr}\big(B_n\,\mathbb E[X_D^n(X_D^n)^\top]\big)\ =\ \mathrm{tr}(B_nA_n)\ =\ \mathrm{tr}(A_nB_n),
\end{align}
where the third equality uses $\mathbb E[X_D^n(X_D^n)^\top]=A_n$, and the last equality uses the cyclic property of the trace. Combining the two displays gives $\mathbb V(\Xi_n)=\mathrm{tr}(A_nB_n)$.
\qed

%============================================================================
%======================= Appendix D: =======================
%============================================================================
\section{Derivation of Equation \eqref{eq:ell_n_exact}}
\label{app:ell_n_decomposition}

In this appendix, we derive \eqref{eq:ell_n_exact} by decomposing
$\ell_n=\ell_n(Z^n;X^nY^n)$ pathwise into its diffuse and exceptional parts.

Since $Z^n=X^n+Y^n+V^n$ with $V^n\sim \mathcal{N}(\boldsymbol{0},NI_n)$, the conditional density $W^n(\cdot\mid X^n,Y^n)$ is an isotropic Gaussian density centered at $X^n+Y^n$ with covariance $NI_n$. Since $z^n = z_D^n + z_E^n$ for any $z^n\in\mathbb R^n$, where $z_D^n = \mathcal{P}_{D_n} z^n$ and $z_E^n = \mathcal{P}_{E_n} z^n$, and an analogous relation holds for $X^n$ and $Y^n$, by the Pythagorean theorem, $\|z^n-X^n-Y^n\|^2=\|z_D^n-X_D^n-Y_D^n\|^2+\|z_E^n-X_E^n-Y_E^n\|^2$.
Thus, we have
\begin{align}
W^n(z^n\mid X^n,Y^n)\ =\ W_D^n(z_D^n\mid X_D^n,Y_D^n)\cdot W_E^n(z_E^n\mid X_E^n,Y_E^n),
\end{align}
where $W_D^n$ and $W_E^n$ are the corresponding isotropic Gaussian densities of variance $N$ on $D_n$ and $E_n$, respectively. The reference distribution $P_{\tilde Z^n}=\mathcal{N}(\boldsymbol{0},aI_{D_n})\otimes \mathcal{N}(\boldsymbol{0},bI_{E_n})$ has, by definition, the product density
\begin{align}
    P_{\tilde{Z}^n}(z^n) = P_{\tilde{Z}_D^n}(z_D^n) \cdot P_{\tilde{Z}_E^n}(z_E^n),
\end{align}
where  
\begin{align}
    P_{\tilde{Z}_D^n}(z_D^n) &\equiv \frac1{(2\pi a)^{d_n/2}}\exp\!\Big(-\frac{\|z_D^n\|^2}{2a}\Big), \\
      P_{\tilde{Z}_E^n}(z_E^n) & \equiv  \frac1{(2\pi b)^{r_n/2}}\exp\!\Big(-\frac{\|z_E^n\|^2}{2b}\Big)
\end{align}
are the density functions of $\tilde{Z}_{D}^n \sim \mathcal{N}(\boldsymbol{0}, a I_{D_n})$ and $\tilde{Z}_{E}^n \sim \mathcal{N}(\boldsymbol{0}, b I_{E_n})$, respectively.
Hence, $\ell_n$ decomposes as
\begin{align}
    \ell_n(Z^n; X^n Y^n) = \ell_{n,D} (Z_D^n; X_D^n Y_D^n)  + \ell_{n,E}(Z_E^n; X_E^n Y_E^n),
\end{align}
where 
\begin{align}
\ell_{n,D}(Z_D^n; X_D^n Y_D^n)\equiv\log\frac{W_D^n(Z_D^n\mid X_D^n,Y_D^n)}{P_{\tilde{Z}_D^n}(Z_D^n)},\qquad \ell_{n,E}(Z_E^n; X_E^n Y_E^n) \equiv\log\frac{W_E^n(Z_E^n\mid X_E^n,Y_E^n)}{P_{\tilde{Z}_E^n}(Z_E^n)}. \label{eq:app_ell_split}
\end{align}
As with $\ell_n$, we suppress the arguments of $\ell_{n,D}$ and $\ell_{n,E}$ in what follows.

\medskip
First, we evaluate the diffuse part $\ell_{n,D}$. Writing out the two Gaussian densities in \eqref{eq:app_ell_split} and using $Z_D^n=X_D^n+Y_D^n+V_D^n$,
\begin{align}
\ell_{n,D}\ =\ \frac{d_n}{2}\log\frac{a}{N}+\frac{\|X_D^n+Y_D^n+V_D^n\|^2}{2a}-\frac{\|V_D^n\|^2}{2N}. \label{eq:app_ellD}
\end{align}
Expanding $\|X_D^n+Y_D^n+V_D^n\|^2$ by bilinearity of the inner product, we have
\begin{align}
\|X_D^n+Y_D^n+V_D^n\|^2\ =\ \|X_D^n\|^2+\|Y_D^n\|^2+\|V_D^n\|^2+2\Xi_n+2\langle X_D^n+Y_D^n,V_D^n\rangle,
\end{align}
where $\Xi_n=\langle X_D^n,Y_D^n\rangle$, so that
\begin{align}
\ell_{n,D}\ &=\ \frac{d_n}{2}\log\frac{a}{N}+\frac{\|X_D^n\|^2+\|Y_D^n\|^2}{2a}+\frac{\Xi_n}{a}+\frac{\langle X_D^n+Y_D^n,V_D^n\rangle}{a}+\|V_D^n\|^2\Big(\frac1{2a}-\frac1{2N}\Big) \nonumber \\
 &=\ \frac{d_n}{2}\log\frac{a}{N}+\frac{\|X_D^n\|^2+\|Y_D^n\|^2}{2a} + R_n, \label{eq:app_ellD_expanded}
\end{align}
where $R_n$ is defined in \eqref{eq:Rn_def}.

\medskip
Next, we evaluate the exceptional part $\ell_{n,E}$.
Similarly, we have
\begin{align}
\ell_{n,E}\ =\ \frac{r_n}{2}\log\frac{b}{N}+\frac{\|X_E^n+Y_E^n+V_E^n\|^2}{2b}-\frac{\|V_E^n\|^2}{2N}. \label{eq:app_ellE_raw}
\end{align}
With the value of $b = 2a+N$, the following identity is easily verified  
by direct computation of the coefficients of $\|X_E^n\|^2+\|Y_E^n\|^2$, $\langle X_E^n,Y_E^n\rangle$, $\langle X_E^n+Y_E^n,V_E^n\rangle$, and $\|V_E^n\|^2$ on both sides:
\begin{align}
\frac{\|X_E^n+Y_E^n+V_E^n\|^2}{2b}-\frac{\|V_E^n\|^2}{2N}-\frac{\|X_E^n\|^2+\|Y_E^n\|^2}{2a}\ =\ -\frac{\|X_E^n-Y_E^n\|^2}{4a}-\frac{N}{4ab}\Big\|X_E^n+Y_E^n-\frac{2a}{N}V_E^n\Big\|^2. \label{eq:app_square_completion}
\end{align} 
 Rearranging \eqref{eq:app_square_completion} and substituting back into \eqref{eq:app_ellE_raw},
\begin{align}
\ell_{n,E}\ =\ \frac{r_n}{2}\log\frac{b}{N}+\frac{\|X_E^n\|^2+\|Y_E^n\|^2}{2a}+Q_n, \label{eq:app_ellE}
\end{align}
where $Q_n$ is as defined in \eqref{eq:Qn_def}.  

\medskip
By the Pythagorean theorem, 
\begin{align}
\frac{\|X_D^n\|^2+\|Y_D^n\|^2}{2a}+\frac{\|X_E^n\|^2+\|Y_E^n\|^2}{2a}\ =\ \frac{\|X^n\|^2+\|Y^n\|^2}{2a}.\label{eq:identity}
\end{align}
Adding \eqref{eq:app_ellD_expanded} and \eqref{eq:app_ellE} and using \eqref{eq:identity}, we obtain
\begin{align}
\ell_n\ =\ \ell_{n,D}+\ell_{n,E}\ =\ \frac{d_n}{2}\log\frac{a}{N}+\frac{r_n}{2}\log\frac{b}{N}+\frac{\|X^n\|^2+\|Y^n\|^2}{2a}+R_n+Q_n,
\end{align}
which is exactly \eqref{eq:ell_n_exact}.
\qed

%============================================================================
%======================= Appendix E: =======================
%============================================================================
\section{Derivation of Equation \eqref{eq:cond_var_formula}}
\label{app:cond_var_derivation}

In this appendix, we derive \eqref{eq:cond_var_formula}.
Let \begin{align}
 \Lambda_n^{(1)} \equiv \frac{\langle X_D^n+Y_D^n,V_D^n\rangle}{a}, \qquad \Lambda_n^{(2)} \equiv \|V_D^n\|^2\Big(\frac1{2a}-\frac1{2N}\Big),
\end{align}
where it should be noted that $\Lambda_n^{(1)}$ is linear in $V_D^n$ and $\Lambda_n^{(2)}$ is quadratic in $V_D^n$.
Then, by \eqref{eq:Rn_def}, $R_n$ decomposes as  
\begin{align}
R_n-\frac{\Xi_n}{a}\ =\ \Lambda_n^{(1)} +  \Lambda_n^{(2)}.
\end{align}
We will evaluate the variances of $\Lambda_n^{(1)} +  \Lambda_n^{(2)}$, $\Lambda_n^{(1)}$ and $ \Lambda_n^{(2)}$. In this appendix, we denote $s^n \equiv X_D^n + Y_D^n$ for brevity.

\medskip
\begin{enumerate}
\item[(a)] \textit{Variance of $\Lambda_n^{(1)} + \Lambda_n^{(2)}$:}~~
Conditioned on $(X^n,Y^n)$, the sum $s^n$ is a constant, so that
$\Lambda_n^{(1)} =\langle s^n,V_D^n\rangle/a$ is an odd function of $V_D^n$, while
$\Lambda_n^{(2)}$ is an even function of $V_D^n$. Since $V_D^n\sim\mathcal N(\boldsymbol 0,NI_{D_n})$
is symmetric, the negated $-V_D^n$ has the same distribution as $V_D^n$, so that
$\mathbb E[\Lambda_n^{(1)}\mid X^n,Y^n]=0$ and
\begin{align}
\mathbb E\big[\Lambda_n^{(1)}\cdot\ \Lambda_n^{(2)}\mid X^n,Y^n\big]
\ =\ \frac1a\Big(\frac1{2a}-\frac1{2N}\Big)\,
\mathbb E\!\left[\langle s^n,V_D^n\rangle\|V_D^n\|^2 \mid X^n,Y^n \right]\ =\ 0,
\end{align}
because the integrand $\langle s^n,v\rangle\|v\|^2$ is an odd function of $v$.
Hence, the conditional covariance of $\Lambda_n^{(1)}$ and $\Lambda_n^{(2)}$ vanishes, and
\begin{align}
\mathbb V(\Lambda_n^{(1)}+\Lambda_n^{(2)}\mid X^n,Y^n)\ =\ \mathbb V(\Lambda_n^{(1)}\mid X^n,Y^n)+\mathbb V(\Lambda_n^{(2)}\mid X^n,Y^n). \label{eq:app_cv_split2}
\end{align}

\item[(b)] \textit{Variances of  $\Lambda_n^{(1)}$ and  $\Lambda_n^{(2)}$:}~~
Since the sum $s^n = X_D^n+Y_D^n$ is a constant given $(X^n,Y^n)$, we have $\Lambda_n^{(1)}=\langle s^n,V_D^n\rangle/a$. Since $V_D^n\sim \mathcal{N}(\boldsymbol{0},NI_{D_n})$, for any $s^n \in D_n$, it holds that
\begin{align}
\mathbb V(\langle s^n,V_D^n\rangle\mid X^n,Y^n)\ ={s^n}^\top(N\mathcal P_{D_n})s^n =\ N\|s^n\|^2,
\end{align}
where the last equality follows from $\mathcal P_{D_n}s^n=s^n$.
 Hence, we have
\begin{align}
\mathbb V(\Lambda_n^{(1)}\mid X^n,Y^n)\ =\ \frac{N\|X_D^n+Y_D^n\|^2}{a^2}. \label{eq:app_cv_term1}
\end{align}

In an orthonormal basis of $D_n$, the coordinates of
$V_D^n/\sqrt N$ are independent standard Gaussian variables,
even when conditioned on $(X^n,Y^n)$.
Therefore, $\|V_D^n\|^2/N\sim\chi^2_{d_n}$, where $\chi^2_{d_n}$ is the chi-squared distribution with $d_n$ degrees of freedom whose variance is $2d_n$.
 Hence, $\mathbb V(\|V_D^n\|^2)=2N^2d_n$, and since $\Lambda_n^{(2)}$ is $\|V_D^n\|^2$ times the constant $(1/(2a)-1/(2N))$, we have
\begin{align}
\mathbb V(\Lambda_n^{(2)}\mid X^n,Y^n)\ =\ 2N^2d_n\Big(\frac1{2a}-\frac1{2N}\Big)^2. \label{eq:app_cv_term2}
\end{align}
\end{enumerate}

Finally, plugging \eqref{eq:app_cv_term1} and \eqref{eq:app_cv_term2} into \eqref{eq:app_cv_split2}, we obtain \eqref{eq:cond_var_formula}. 
\qed

%============================================================================
%======================= Appendix F: Proof of Lemma 7 =======================
%============================================================================
\section{Proof of Lemma \ref{lemma:finite_length_LB}} \label{sec:proof_lemma_LB}

For any given $(n,M_n^{(1)},M_n^{(2)},\varepsilon_n)$ MAC code
\begin{align}
C_n^{(1)} \times C_n^{(2)}
=
\{(\boldsymbol{u}_j,\boldsymbol{v}_k) \, | \,
j=1,\ldots,M_n^{(1)},\;
k=1,\ldots,M_n^{(2)}\},
\end{align}
we denote by $D_{jk} \subseteq \mathcal{Z}^n$ the decoding region of $(\boldsymbol{u}_j, \boldsymbol{v}_k)$. 
Let $\varepsilon_{\eta, n}$ be defined as
\begin{align}
\varepsilon_{\eta, n} = \frac{1}{M_n^{(1)}M_n^{(2)}}
\sum_{j=1}^{M_n^{(1)}}\sum_{k=1}^{M_n^{(2)}}
W_\eta^n(D_{jk}^c|\boldsymbol{u}_j,\boldsymbol{v}_k).
\end{align}
Since $(X^n, Y^n)$ is uniformly distributed on the codebook $C_n^{(1)} \times C_n^{(2)}$, the probability of decoding error is evaluated as
\begin{align}
\varepsilon_n
&=
\frac{1}{M_n^{(1)}M_n^{(2)}}
\sum_{j=1}^{M_n^{(1)}}\sum_{k=1}^{M_n^{(2)}}
W^n(D_{jk}^c|\boldsymbol{u}_j,\boldsymbol{v}_k)
\nonumber\\
&=
\frac{1}{M_n^{(1)}M_n^{(2)}}
\sum_{j=1}^{M_n^{(1)}}\sum_{k=1}^{M_n^{(2)}}
\int_{\mathcal{H}}
W_\eta^n(D_{jk}^c|\boldsymbol{u}_j,\boldsymbol{v}_k) dw(\eta)
\nonumber\\
&=
\int_{\mathcal{H}}dw(\eta)
\left\{
\frac{1}{M_n^{(1)}M_n^{(2)}}
\sum_{j=1}^{M_n^{(1)}}\sum_{k=1}^{M_n^{(2)}}
W_\eta^n(D_{jk}^c|\boldsymbol{u}_j,\boldsymbol{v}_k)
\right\} \nonumber\\
&= \int_{\mathcal{H}} \varepsilon_{\eta, n} \, dw(\eta). \label{eq:A-0}
\end{align}

Define with $\gamma >0$ the sets
\begin{align}
B_{\eta,jk}^{(1)}
&= \left\{
 \boldsymbol{z} \in\mathcal{Z}^n
\, \bigg| \,
\frac{1}{n} 
\log
\frac{W_\eta^n( \boldsymbol{z} |\boldsymbol{u}_j,\boldsymbol{v}_k)}
{Q_{\eta, 1}^n(\boldsymbol{z} | \boldsymbol{v}_k)}
\le
\frac{1}{n} \log M_n^{(1)}- \gamma
\right\} , \\
B_{\eta,jk}^{(2)}
&= \left\{
 \boldsymbol{z} \in\mathcal{Z}^n
\, \bigg| \, \frac{1}{n} \log \frac{W_\eta^n( \boldsymbol{z} |\boldsymbol{u}_j,\boldsymbol{v}_k)}{Q_{\eta, 2}^n(\boldsymbol{z} | \boldsymbol{u}_j)}
\le
\frac{1}{n} \log M_n^{(2)}- \gamma \right\} , \\
B_{\eta,jk}^{(3)}
&= \left\{
 \boldsymbol{z} \in\mathcal{Z}^n \, \bigg| \, \frac{1}{n} \log \frac{W_\eta^n( \boldsymbol{z} |\boldsymbol{u}_j,\boldsymbol{v}_k)}{Q_{\eta, 3}^n(\boldsymbol{z})}
\le
\frac{1}{n} \log \big(M_n^{(1)}M_n^{(2)} \big)- \gamma \right\},
\end{align}
and 
\begin{align}
B_{\eta,jk} &= B_{\eta,jk}^{(1)} \cup B_{\eta,jk}^{(2)} \cup B_{\eta,jk}^{(3)}.
\end{align}
Then, the term $\varepsilon_{\eta, n}$ can be bounded as
\begin{align}
\varepsilon_{\eta, n}
&\ge
\frac{1}{M_n^{(1)}M_n^{(2)}}
\sum_{j=1}^{M_n^{(1)}}\sum_{k=1}^{M_n^{(2)}}
W_\eta^n(D_{jk}^c\cap B_{\eta,jk}
|\boldsymbol{u}_j,\boldsymbol{v}_k)
\nonumber\\
&=
\frac{1}{M_n^{(1)}M_n^{(2)}}
\sum_{j=1}^{M_n^{(1)}}\sum_{k=1}^{M_n^{(2)}}
W_\eta^n(B_{\eta,jk}
|\boldsymbol{u}_j,\boldsymbol{v}_k)
\nonumber\\
&\quad
-
\frac{1}{M_n^{(1)}M_n^{(2)}}
\sum_{j=1}^{M_n^{(1)}}\sum_{k=1}^{M_n^{(2)}}
W_\eta^n(D_{jk}\cap B_{\eta,jk}
|\boldsymbol{u}_j,\boldsymbol{v}_k) \nonumber\\
&\ge 
\frac{1}{M_n^{(1)}M_n^{(2)}}
\sum_{j=1}^{M_n^{(1)}}\sum_{k=1}^{M_n^{(2)}}
W_\eta^n(B_{\eta,jk}
|\boldsymbol{u}_j,\boldsymbol{v}_k)
\nonumber\\
&\quad
-
\sum_{\ell=1}^3  \left\{ \frac{1}{M_n^{(1)}M_n^{(2)}}
\sum_{j=1}^{M_n^{(1)}}\sum_{k=1}^{M_n^{(2)}}
W_\eta^n \big( D_{jk}\cap B_{\eta,jk}^{(\ell)}
|\boldsymbol{u}_j,\boldsymbol{v}_k \big) \right\}, \label{eq:A-2}
\end{align}
where the equality follows from the relation
\begin{align}
D_{jk}^c \cap B_{\eta,jk}
=
B_{\eta,jk}\setminus (D_{jk}\cap B_{\eta,jk}),
\end{align}
and the last inequality is due to the union bound.
We focus on the second term in the last step of \eqref{eq:A-2}.
By definition, it holds that
\begin{align}
\frac{1}{M_n^{(1)}}
W_\eta^n(\boldsymbol{z}|\boldsymbol{u}_j,\boldsymbol{v}_k)
\le
Q_{\eta, 1}^n(\boldsymbol{z} | \boldsymbol{v}_k )e^{-n\gamma} ~ \mbox{ for } ~ \boldsymbol{z} \in B_{\eta,jk}^{(1)}, \\
\frac{1}{M_n^{(2)}}
W_\eta^n(\boldsymbol{z}|\boldsymbol{u}_j,\boldsymbol{v}_k)
\le
Q_{\eta, 2}^n(\boldsymbol{z} | \boldsymbol{u}_j ) e^{-n\gamma} ~ \mbox{ for } ~ \boldsymbol{z} \in B_{\eta,jk}^{(2)}, \\
\frac{1}{M_n^{(1)}M_n^{(2)}}
W_\eta^n(\boldsymbol{z}|\boldsymbol{u}_j,\boldsymbol{v}_k)
\le
Q_{\eta, 3}^n(\boldsymbol{z})e^{-n\gamma} ~ \mbox{ for } ~ \boldsymbol{z} \in B_{\eta,jk}^{(3)}.
\end{align}
Hence, we have 
\begin{align}
&\frac{1}{M_n^{(1)}M_n^{(2)}}
\sum_{j=1}^{M_n^{(1)}}\sum_{k=1}^{M_n^{(2)}}
W_\eta^n \big( D_{jk}\cap B_{\eta,jk}^{(1)}
|\boldsymbol{u}_j,\boldsymbol{v}_k \big)
\nonumber\\
&=
\frac{1}{M_n^{(1)}M_n^{(2)}}
\sum_{j=1}^{M_n^{(1)}}\sum_{k=1}^{M_n^{(2)}}
\sum_{\boldsymbol{z} \in D_{jk}\cap B_{\eta,jk}^{(1)}}
W_\eta^n(\boldsymbol{z}|\boldsymbol{u}_j,\boldsymbol{v}_k)
\nonumber \\
&\le e^{-n\gamma}
\frac{1}{M_n^{(2)}}  \sum_{j=1}^{M_n^{(1)}} \sum_{k=1}^{M_n^{(2)}}
\sum_{\boldsymbol{z} \in D_{jk}\cap B_{\eta,jk}^{(1)}}
Q_{\eta, 1}^n(\boldsymbol{z} | \boldsymbol{v}_k) \nonumber\\
&\le e^{-n\gamma}
\frac{1}{M_n^{(2)}} \sum_{j=1}^{M_n^{(1)}}\sum_{k=1}^{M_n^{(2)}}
Q_{\eta, 1}^n(D_{jk} | \boldsymbol{v}_k)
= e^{-n\gamma},
\label{eq:A-4a}
\end{align}
where the last equality is due to the fact that the decoding regions $\{ D_{jk} \}$ are disjoint, and in an analogous way, we also have
\begin{align}
&\frac{1}{M_n^{(1)}M_n^{(2)}}
\sum_{j=1}^{M_n^{(1)}}\sum_{k=1}^{M_n^{(2)}}
W_\eta^n \big( D_{jk}\cap B_{\eta,jk}^{(2)}
|\boldsymbol{u}_j,\boldsymbol{v}_k \big) \le e^{-n\gamma}, \label{eq:A-4b} \\
&\frac{1}{M_n^{(1)}M_n^{(2)}}
\sum_{j=1}^{M_n^{(1)}}\sum_{k=1}^{M_n^{(2)}}
W_\eta^n \big( D_{jk}\cap B_{\eta,jk}^{(3)}
|\boldsymbol{u}_j,\boldsymbol{v}_k \big) \le e^{-n\gamma}. \label{eq:A-4c}
\end{align}
Plugging \eqref{eq:A-4a}--\eqref{eq:A-4c} into \eqref{eq:A-2} yields
\begin{align}
\varepsilon_{\eta, n} \ge
\frac{1}{M_n^{(1)}M_n^{(2)}}
\sum_{j=1}^{M_n^{(1)}}\sum_{k=1}^{M_n^{(2)}}
W_\eta^n(B_{\eta,jk}
|\boldsymbol{u}_j,\boldsymbol{v}_k)
-
3 e^{-n\gamma}.
\end{align}
Thus, the LHS of \eqref{eq:A-0} is lower bounded as
\begin{align}
\varepsilon_n
\ge
\int_\mathcal{H} dw(\eta)
\left\{
\frac{1}{M_n^{(1)}M_n^{(2)}}
\sum_{j=1}^{M_n^{(1)}}\sum_{k=1}^{M_n^{(2)}}
W_\eta^n(B_{\eta,jk}
|\boldsymbol{u}_j,\boldsymbol{v}_k)
\right\}
-
3 e^{-n\gamma},
\end{align}
which is equivalent to \eqref{eq:lemma:finite_length_LB}.
\qed

\section*{Acknowledgements}
\textit{Generative-AI use disclosure.} The authors used Anthropic's Claude (Sonnet 5) during the preparation of this manuscript. This system assisted in the presentation and organization of several Remarks including Concluding Remarks as well as in editing the English sentences.

The authors are very grateful to Vincent Tan for reading through the first draft and pointing out a technical flaw in the proof of Theorem \ref{thm:StrongC_Gaussian_MAC}, which led us to substantially improve it.

%============================================================
%======================= References =========================
%============================================================

\end{document}